\documentclass[11pt]{article}
\usepackage{amsmath}
\usepackage{amssymb}
\usepackage{amsthm,thmtools,thm-restate}
\usepackage{graphicx} % Required for inserting images
\usepackage{fullpage}
\usepackage{hyperref}
\usepackage{xcolor}
\usepackage{colortbl}
\usepackage{multicol}
\usepackage{url}
\usepackage[utf8]{inputenc}
\usepackage[margin=1in]{geometry}
\usepackage{cleveref}
\usepackage{doi}
\usepackage{lineno}
\usepackage{algorithm}
\usepackage{algpseudocode}

\usepackage{tikz}
\usetikzlibrary{matrix,calc,arrows,positioning,graphs}

\newif\iflong
\newif\ifshort

\longtrue
    
\iflong
\else
\shorttrue
\fi

\usepackage{amsmath,amssymb,amsfonts,amsthm}

\usepackage{graphicx}

\usepackage{mathtools}
\usepackage{xspace}
\usepackage{enumerate}
\usepackage{todonotes}
\usepackage{gosip} % MR- Do we need that??
\usepackage{url}
\usepackage{hyperref}
\usepackage{comment}

\newtheorem{theorem}{Theorem}
\numberwithin{theorem}{section}
\newtheorem{lemma}[theorem]{Lemma}

\newtheorem{observation}[theorem]{Observation}

\newtheorem{proposition}[theorem]{Proposition}
\newtheorem{corollary}[theorem]{Corollary}

\newtheorem*{claim*}{Claim}

\newenvironment{claimproof}[1]{\par\noindent{\emph{Proof.}}\space#1}{\hfill $\blacksquare$}

\theoremstyle{definition}
\newtheorem{definition}[theorem]{Definition}

\newtheorem{example}[theorem]{Example}

\DeclareMathOperator{\var}{var}

\usepackage{graphicx}

\definecolor{ColorM}{RGB}{202,132,72}
\definecolor{IndianRed}{RGB}{205,92,92}

\newcommand{\cb}[1]{\textcolor{blue}{#1}}
\newcommand{\cg}[1]{\textcolor{green}{#1}}
\newcommand{\cy}[1]{\textcolor{yellow!80!black}{#1}}
\newcommand{\crr}[1]{\textcolor{red}{#1}}

\newcommand{\disj}{\textsf{disj}}

\newcommand{\mnu}{\text{N}}

\newcommand{\bigoh}{\mathcal{O}}

\newcommand{\SB}{\{\,}
\newcommand{\SM}{\;{|}\;}
\newcommand{\SE}{\,\}}
\def\hy{\hbox{-}\nobreak\hskip0pt}
\newcommand{\CCC}{\mathcal{C}}

\newcommand{\TTT}{\mathcal{T}}

\newcommand{\vars}{\ensuremath{V}}

\newcommand{\pref}{\ensuremath{\mathcal{Q}}}

\newcommand{\magic}{\textsf{part}}

\newcommand{\pies}{\textsf{RED}}
\newcommand{\piesup}{\textsf{REDup}}
\newcommand{\del}{\textsf{SEL}}

\newcommand{\ariti}{\SA}
\newcommand{\SA}{\Delta}
\newcommand{\xor}{\oplus}
\newcommand{\ass}{\textsf{assign}}
\newcommand{\propT}[1]{#1^{\omega}}

\newcommand{\repl}[1]{[#1]}

\newcommand{\NUMCNF}[1]{\textup{#1-CNF}\xspace}
\newcommand{\TCNF}{\NUMCNF{2}}
\newcommand{\DCNF}{\NUMCNF{\ensuremath{d}}}
\newcommand{\CNF}{\textup{CNF}\xspace}
\newcommand{\DNF}{\textup{DNF}\xspace}
\newcommand{\AFF}{\textup{AFF}\xspace}
\newcommand{\HORN}{\textup{HORN}\xspace}
\newcommand{\THORN}{\textup{2-HORN}\xspace}

\newcommand{\DUALHORN}{\textup{DualHORN}\xspace}

\newcommand{\QBF}{\textsc{QBF}\xspace}

\newcommand{\sizeOf}[1]{\|#1\|}

\newcommand{\cnfpropname}{{\textsf{prop}}}
\newcommand{\cnfprop}[1]{{\textsf{prop}}({#1})}
\newcommand{\CL}{C_\mathcal{L}}

\newcommand{\Asize}{2k^2 + k}
\newcommand{\DDD}{\mathcal{D}}
\newcommand{\primG}{G_{\textsf{P}}}
\newcommand{\primincG}{G_{\textsf{PI}}}

\newcommand{\RU}{U}

\crefname{figure}{figure}{figures}
\Crefname{figure}{Figure}{Figures}
\crefname{theorem}{theorem}{theorems}
\Crefname{theorem}{Theorem}{Theorems}
\crefname{lemma}{lemma}{lemmas}
\Crefname{lemma}{Lemma}{Lemmas}
\crefname{definition}{definition}{definitions}
\Crefname{definition}{Definition}{Definitions}

\newcommand{\drgiht}[3]{
    \draw[->, thick, #3]
    (#1.north east) to[out=30, in=150] (#2.north west);
}
\newcommand{\dleft}[3]{
    \draw[->, thick, #3]
    (#1.south west) to[out=210, in=330] (#2.south east);
}
\newcommand{\dposlit}[2]{
    \node[left=1.5pt, #2] at (#1) {\texttt{+}};
}
\newcommand{\dneglit}[2]{
    \node[left=1.5pt, #2] at (#1) {\texttt{\textbf{-}}};
}
\newcommand{\inittikz}{
    \node (A) at (0,0) {$x_1$};
    \node (B) at (1,0) {$x_2$};
    \node (C) at (2,0) {$x_3$};
    \node (D) at (3,0) {$y_1$};
    \node (E) at (4,0) {$y_2$};
    \node (F) at (5,0) {$y_3$};
    \node (G) at (6,0) {$z_1$};
    \node (H) at (7,0) {$z_2$};
    \node (I) at (8,0) {$z_3$};
}
\newcommand{\inittikzIV}{
    \node (A) at (0,0) {$x_1$};
    \node (B) at (1,0) {$x_2$};
    \node (C) at (2,0) {$x_3$};
    \node (A1) at (3,0) {$a_1$};
    \node (A2) at (4,0) {$a_2$};
    \node (A3) at (5,0) {$a_3$};
    \node (A4) at (6,0) {$a_4$};
    \node (D) at (7,0) {$y_1$};
    \node [fill=yellow!30, inner sep=1pt, rounded corners] at (8,0) {$y_2$};
    \node (E) at (8,0) {$y_2$};
    \node (F) at (9,0) {$y_3$};
    \node (G) at (10,0) {$z_1$};
    \node (H) at (11,0) {$z_2$};
    \node (I) at (12,0) {$z_3$};
}

\title{Backdoors for Quantified Boolean Formulas}

\author{
Leif Eriksson\thanks{Link{\"o}ping University, Sweden, \texttt{leif.eriksson@liu.se}}
\and 
Victor Lagerkvist\thanks{Link{\"o}ping University, Sweden, \texttt{victor.lagerkvist@liu.se}}
\and
Sebastian Ordyniak\thanks{University of Leeds, UK, \texttt{sordyniak@gmail.com}}
\and
George Osipov\thanks{Royal Holloway, University of London, UK, \texttt{george.osipov@pm.me}}
\and 
Fahad Panolan\thanks{University of Leeds, UK, \texttt{fahad.panolan@gmail.com}}
\and
Mateusz Rychlicki\thanks{University of Leeds, UK, \texttt{mkrychlicki@gmail.com}}\\
}

\date{}

\begin{document}

\maketitle

\thispagestyle{empty}

\begin{abstract}
%    \todo{GO: Uncommented the long abstract and commented out the short one.}
    The {\em quantified Boolean formula} problem is a well-known PSpace-complete problem with rich expressive power, and is generally viewed as the SAT analogue for PSpace. Given that many problems today are solved in practice by reducing to SAT, and then using highly optimized SAT solvers, it is natural to ask whether problems in PSpace are amenable to this approach. While SAT solvers exploit hidden structural properties, such as \emph{backdoors} to tractability, backdoor analysis for QBF is comparatively very limited.

    We present a comprehensive study of the complexity of QBF parameterized by backdoor size
    to the largest tractable syntactic classes: HORN, 2-SAT, and AFFINE.
    While SAT is in FPT under this parameterization, we prove that QBF remains PSpace-hard even on formulas with backdoors of constant size.
    Parameterizing additionally by the quantifier depth, we design FPT-algorithms 
    for the classes 2-SAT and AFFINE, and show that 3-HORN is W[1]-hard. 

    As our next contribution, we vastly extend the applicability of QBF backdoors not only for the syntactic classes defined above but also for tractable classes defined via structural restrictions, such as formulas with bounded incidence treewidth and quantifier depth.
    To this end, we introduce \emph{enhanced backdoors}:
    these are separators $S$ of size at most $k$ in the primal graph
    such that $S$ together with all variables contained in any purely universal component of the primal graph minus $S$ is a backdoor. We design FPT-algorithms with respect to $k$ for both evaluation and detection of enhanced backdoors to all tractable classes of QBF listed above and more.
    
    The techniques used in polynomial-time algorithms for QBF as well as
    FPT-results for SAT backdoors are mostly inapplicable in our setting.
    A simple yet elucidating first step for all our results is a reduction from formulas with backdoors of size $k$
    to formulas which are disjunctions of at most $2^k$ members of the tractable base class.
    We refer to the latter as \emph{disjuncts}.
    Our algorithms and lower bounds work by ``trading off'' the number of disjuncts for quantifier depth.
    For the PSpace-hardness results, we develop a \emph{squishing procedure} which can be seen as the ``inverse'' of quantifier elimination and allows us to drastically decrease the number of disjuncts by adding only two quantifier blocks.
    For our algorithms for 2-SAT and AFFINE, we use delicate and tailor-made procedures to replace 
    the variables of one (quantifier) block with a bounded number of novel and existing variables at the expense of adding only a bounded number of new disjuncts. In the case of AFFINE we can achieve this for the innermost quantifier block, whereas in the case of 2-SAT this is only achievable for the innermost (mostly second-to-last) existential block.
   To make enhanced backdoors applicable to QBF, we design the novel procedure of \emph{elimination of guarded universal sets}, 
    which allows us to reduce an arbitrarily large set $Y$ of universal variables to one whose size is bounded by the number of its neighbors in the primal graph plus the maximum number of variables from $Y$ that occur in any clause.
    Lastly, for detecting enhanced backdoors we use a mix of well-known FPT-techniques such as important separators as well as the randomized contraction and graph unbreakability framework.
  \end{abstract}    

\pagebreak
\setcounter{tocdepth}{2} 
\tableofcontents
\thispagestyle{empty}

% \listoftodos

\pagebreak
\setcounter{page}{1}

\section{Introduction}

The {\em quantified Boolean formula} problem (QBF) is a generalization of the well-known NP-complete Boolean {\em satisfiability} problem (SAT). Here, the objective is to find the truth value of a given quantified Boolean formula $Q_1 x_1 \ldots Q_n x_n . \phi(x_1,\ldots,x_n)$, where $Q_i \in \{\forall, \exists\}$ for every $1 \leq i \leq n$.  
QBF is a canonical PSpace-complete problem, playing a similar role for
PSpace as SAT does for NP, by both serving as the starting point for
countless hardness proofs, as well as being a  natural problem to
reduce to. The distinction between NP and PSpace is important since
many problems in logic and artificial intelligence, e.g., model
checking,  non-monotonic reasoning, training or deciding properties of
neural networks, and symbolic planning~\cite{qbfsurvey}, are harder
than NP, and seem better suited to being solved via QBF.

Before digging deeper into QBF let us briefly explain the practical and theoretical importance of SAT. Due to the rapid advances of SAT solvers~\cite{DBLP:conf/lics/FichteGHSO23} there has been a surge of interest in reducing problems {\em to} SAT, and many problems in NP can be solved efficiently in practice with off-the-shelf SAT solvers. 
There has simultaneously been intense theoretical research on solving SAT faster, both via unconditional, exponential algorithms~\cite{WangW16}, as well as faster algorithms in the parameterized complexity setting. Here, the goal is either to prove \emph{fixed parameter tractability} (FPT) or conditional hardness (e.g., W[1]-hardness).
For problems in NP there exists a wealth of FPT-results~\cite{parameterizedalgorithmsbook} but for larger classes, and in particular PSpace, the FPT landscape is comparatively sparse.

A prominent FPT approach for SAT and many other problems is the {\em backdoor} approach where the goal is to extend tractable fragments (e.g., 2-SAT or Horn) by identifying a set of variables $B$ such that one reaches a tractable class once all variables in $B$ are assigned/removed. Thus, $|B|$  intuitively measures the distance to a tractable class, and was 
introduced by Williams et al. \cite{WilliamsGS03} and Crama et al. \cite{CramaEH97}, with the motivation of providing a theoretical justification of how SAT solvers in practice frequently beat the theoretically expected bound of $2^n$.
More specifically, let us fix a {\em base class} $\CCC$ of \CNF formulas which is solvable in polynomial time. For a \CNF formula $\phi$, a subset $B$ of the variables in $\phi$ is a {\em (strong) backdoor} if $\phi[\tau] \in \CCC$
for every $\tau \colon B \to \{0,1\}$, where $\phi[\tau]$ is the reduced formula obtained by simplifying according to the partial truth assignment $\tau$. 
Moreover, a subset $B$ is a {\em (variable) deletion backdoor} if we
instead require that $\phi - B \in \CCC$, where $\phi - B$ is the
formula obtained by removing all occurrences of variables in
$B$. 
For SAT, the {\em backdoor evaluation} problem parameterized by the size of the backdoor $B$ is then trivial since we can determine the satisfiability of $\phi$ by enumerating all $2^{|B|}$ possible assignments, and the theoretically interesting question is rather whether a backdoor exists ({\em backdoor detection}).
This question has been extensively studied with respect to many tractable base classes (see, e.g.,~\cite{FominLMRS15,GaspersOrdyniakRamanujanSaurabhSzeider13,
GaspersSzeider13, LokshtanovP022,RazgonOSullivan09,Szeider07a} and the references within) and there are many positive FPT-results.
Again, we stress that backdoors to tractable classes are widely seen as the theoretical underpinning of the SAT-based revolution for tackling NP-complete problems.

With this in mind we now return to QBF.
Like in the case of SAT, QBF is also solvable in polynomial time when
the Boolean formula is \TCNF, \HORN/\DUALHORN (at most one positive/negative literal), or expressible as a system of linear equations (\AFF)~\cite{DBLP:books/daglib/0004131}. 
Hence, we have some reasonable backdoor classes to target, and to make our question more precise we introduce a little bit of notation. First, we often represent conjunctive formulas by sets, and for a class of formulas $\CCC$ and $d \geq 1$ we write $d$-$\CCC$ for the formulas in $\CCC$ of arity at most $d$. We then write $\QBF(\CCC)$ for the computational problem of deciding whether a QBF formula $\cQ . \phi$ where $\phi \in \CCC$, is true. 
The basic notion of a backdoor for QBF (with respect to some
underlying tractable class of $\QBF(\CCC)$ formulas $\cB$) is then
defined analogously as for SAT, e.g., $B$ is a {\em (strong) backdoor} of
$\Psi = \cQ . \phi$ if $\Psi[\tau] \in \cB$ for every $\tau \colon B
\to \{0,1\}$\footnote{In the sequel we often do not make a sharp
  distinction between the class of formulas $\CCC$ and the class of
  $\QBF(\CCC)$ formulas that it induces.}.
As for SAT, the notions of deletion backdoors and backdoors coincide for the base classes \TCNF and \HORN, but
differ for \AFF as well as all classes defined via
structural restrictions considered in this paper.

Similarly to SAT we then have two fundamental problems: backdoor
evaluation and backdoor detection. However, since our backdoor notion
is purely syntactic, it is easy to see that (deletion) backdoor detection for $\QBF$ typically coincides with (deletion) backdoor detection for SAT (however, interesting differences arise if we consider alternative types of backdoor classes, and we return to this in Section~\ref{ssec:enh}). Here, the
problem is FPT for $\TCNF$ and
$\HORN$~\cite{DBLP:conf/birthday/GaspersS12}, and for $\AFF$ if
additionally parameterized by the maximum arity of any involved
relation~\cite[Theorem 4]{DBLP:conf/dagstuhl/GaspersOS17}. Thus, for
those classes, only the evaluation problem remains, which in contrast to SAT is far from straightforward since QBF, generally, is incompatible with branching, except in very restricted cases. For example, branching on the backdoor variables can be done if the backdoor $B$ completely covers a prefix of the input variables. 
This was also observed by Samer and Szeider~\cite{SamerSzeider07a} who
considered backdoors for QBF to $\TCNF$ and $\HORN$ under the
assumption that the backdoor is closed with respect to ``variable
dependencies''.
This is, for example, applicable to instances
where any given backdoor variable implies that also every variable to
the left of it in the prefix is inside the backdoor, but seems too
restrictive to provide a {\em general} method for solving QBF. We
make this clear in Section~\ref{sec:comp-dep-back}, where we provide a
class of instances (with only one quantifier alternation) with strong backdoors of size one, but where the dependency backdoors have a linear amount of variables.

Hence, it is fair to ask (as done by Williams~\cite{williams2013backdoor}) 
whether backdoors/branching and QBF are incompatible, and if the lack
of FPT-results  explains why QBF is hard in practice. In an early paper, Williams~\cite{williams03} (in a slightly different context) even explicitly states the following.

\begin{quote}
\emph{One hard problem that probably cannot be computed with it [William's framework] is solving quantified Boolean formulas. This is because in QBFs, it seems crucial to maintain the fixed variable ordering given by the quantifiers.}
\end{quote}

However, on a more positive note, Williams also asks whether this can be circumvented in the bounded quantifier alternation setting, but this (neither with respect to upper, nor lower, bounds) has not been systematically addressed in the literature and remains a blind spot.
Thus, could backdoors to e.g. $\TCNF$, $\HORN$, or $\AFF$ provide new FPT-algorithms for QBF? If not, can we prove hardness, and if so, consider additional structural restrictions (e.g., bounded quantifier alternations) which give FPT but which are still milder than variable dependency schemes? If successful, could there exist even broader classes (e.g., defined via graph width parameters) suitable as backdoors?
As we will now see, we successfully resolve all of these questions and give a fresh picture of the applicability of backdoors to QBF.

\subsection{Our Contributions}

We first consider the backdoor evaluation problem for QBF with
$\TCNF$, $\HORN$, and $\AFF$, as tractable base classes. We begin
(summarized in Section~\ref{ssec:disjunct}) by proving that the
existence of a backdoor is far from sufficient to solve QBF
efficiently. In fact, backdoor evaluation is not only unlikely to be
FPT but is para-PSpace-complete, i.e., a backdoor does not
change the complexity classification at all. Given the simplicity of this question and
the ubiquity of backdoor results, one could perhaps expect this
hardness result to be readily available in the literature. However, no
such results exist, and their absence has a natural explanation: as
far as we know it cannot be done by any existing techniques and to
prove this we develop the novel technique of {\em squishing}.

We then explore using additional restrictions to obtain
FPT-algorithms for QBF using backdoors. As a first step in this direction, we show
in~\Cref{ssec:alt} that QBF becomes FPT parameterized by backdoor size
for the base classes \TCNF{} and \AFF{}, if additionally parameterized by the number
of quantifier alternations.
But surprisingly remains W$[1]$\hy hard for
the base class $3$-\HORN. We then develop a novel and very general
solution paradigm for QBF in \Cref{ssec:enh} that informally allows us to
efficiently eliminate an
arbitrarily large set of universal quantifiers as long as the set has a
small separator from the remaining instance. This allows us to introduce
so-called \emph{enhanced backdoors} that can be arbitrarily smaller than
backdoors but can still be applied to the same base classes. We consider both strong and deletion variants of these backdoors but concentrate on deletion backdoors for structural classes such as bounded treewidth. 

As a consequence, we are able to extend our FPT-results from
\Cref{ssec:alt} to \QBF{}s with an unbounded number of quantifier alternations.
More importantly, by
observing that many structural base classes are amenable to the  
deletion backdoor approach, this allows us to greatly generalize many of the
most important known tractable fragments for \QBF.
For instance, using enhanced backdoors we are able to extend the
prominent tractable fragment of QBF formulas with bounded number of quantifier
alternations and bounded incidence treewidth to a fragment that also
contains some QBF formulas with an arbitrary number of quantifier alternations
and unbounded incidence treewidth.
We complement this with novel FPT-algorithms for enhanced backdoor {\em detection} and thus prove that enhanced backdoors can both be applied and detected efficiently.

\subsubsection{Lower Bounds}
\label{ssec:disjunct}

We begin by considering the problems $\QBF(\CNF)$, in the following
denoted by $\QBF$, parameterized by a backdoor to $\CCC$, where
$\CCC\in \{\TCNF, \HORN,\AFF\}$. %, with the goal of proving lower bounds.
%\victor{I'll fix this later.}
Here, a likely explanation for the current absence of lower/upper bounds is that existing techniques in the literature simply fall short. They are either concentrated on the detection problem (since, as discussed, the backdoor evaluation problem is typically trivial for satisfiability and constraint satisfaction problems) or have been developed for problems structurally too far away from QBF (e.g., planning~\cite{Kronegger_Ordyniak_Pfandler_2014,Kronegger_Ordyniak_Pfandler_2015}, linear temporal logic~\cite{meier_et_al:LIPIcs.IPEC.2016.23}, and disjunctive logic programs~\cite{10.1145/2818646}). To prove our negative result we introduce a novel technique referred to as {\em squishing}.
In short, we first reduce the given $\QBF$ instance to a disjunctive problem by quantifier elimination of the backdoor variables, and the basic idea in the squishing technique then allows us to bound, or squish, the number of disjuncts (visualized as ``width'') while simultaneously increasing their size (visualized as ``height''). The details are by no means simple and we describe them in Section~\ref{sec:intro_squishing}.

\begin{theorem} 
\label{thm:HardCNF}
  If $\CCC$ is \TCNF, \HORN, or \AFF then
  $\QBF$ is para-PSpace-hard
  parameterized by the backdoor size to $\QBF(\CCC)$.  %Similarily, \QCSP is para-PSpace-hard parameterized by the deletion backdoor size to $\AFF$.
\end{theorem}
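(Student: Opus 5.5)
The plan is to reduce from $\QBF$ on 3-CNF matrices, which is PSpace-complete, to instances whose backdoor to $\QBF(\CCC)$ has \emph{constant} size. Since the parameter is then a constant on PSpace-hard instances, this gives para-PSpace-hardness. The reduction has two layers: an inner normal form (``disjunction of few members of $\CCC$'') and an outer translation of that normal form back into a single CNF with a constant-size backdoor.

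\emph{Step 1 (from disjuncts to backdoors).} I first show that a formula $\cQ.\,(\phi_1\vee\dots\vee\phi_c)$ with every $\phi_i\in\CCC$ can be rewritten with a backdoor of size $\lceil\log c\rceil$.
\begin{itemize}
\item Introduce fresh selector variables $b_1,\dots,b_{\lceil \log c\rceil}$, existentially quantified in the \emph{innermost} block.
\item Weaken every clause of $\phi_i$ by the literals expressing ``the selector pattern is not $i$''.
\item For every assignment $\tau$ of the selectors, all clauses belonging to other disjuncts become satisfied and vanish. What remains is exactly $\phi_i$, which lies in $\CCC$, or the empty formula if $\tau$ encodes no index.
\item Since the selectors are innermost existentials, choosing $\tau$ amounts to choosing a disjunct after all other variables are fixed. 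So the transformed formula is equivalent to the original.
\end{itemize}
For $\AFF$ I would treat the matrix as a CNF representation of linear equations: once the selector literals are removed, the residue is an affine system, so the strong-backdoor definition is still met. This is the ``simple first step'' from the abstract, used in reverse.

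\emph{Step 2 (naive disjunct form).} Negating a 3-CNF $\QBF$, by dualizing the prefix and the matrix, gives a prefix followed by a 3-DNF $t_1\vee\dots\vee t_m$. Each term $t_j$ is a conjunction of unit clauses and therefore lies in $\TCNF$, $\HORN$ and $\AFF$ simultaneously. So PSpace-hardness already holds for disjunctions of members of $\CCC$, but with $m$ unbounded. The real work is to \emph{squish} $m$ down to a constant.

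\emph{Step 3 (squishing).} The idea is to move the disjunction into extra quantifier blocks. I would append an existential block choosing an index $j$, for example a one-hot vector $y_1,\dots,y_m$, and add the constraints $y_j\rightarrow t_j$. These are 2-clauses $(\neg y_j\vee \ell)$, which are Horn as well as 2-CNF. The difficulty is the constraint ``some $y_j$ is true'', i.e.\ the long clause $(y_1\vee\dots\vee y_m)$, which does not belong to $\CCC$. I would handle it as follows.
\begin{itemize}
\item Add a universal block followed by an existential block, i.e.\ the ``two extra quantifier blocks'' of the abstract.
\item Split the requirement into a constant number of disjuncts. One disjunct enforces the per-term implications together with a chain or propagation gadget, expressed in $\CCC$ using the new universal variables. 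Another disjunct lets the existential player win whenever the universal player's challenge is inconsistent.
\end{itemize}
For $\AFF$ the gadget changes: the selection constraint itself becomes affine, via $\bigoplus_j y_j=1$ or a binary index encoding, and the implications must instead be simulated by a constant number of affine disjuncts. I expect a different gadget per class, while keeping the same overall shape.

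\emph{Main obstacle.} Step 3 is the crux. I have to verify correctness in both directions of the game, showing that the added $\forall\exists$ blocks cannot be exploited by either player, while keeping every disjunct inside $\CCC$ and the number of disjuncts bounded by an absolute constant independent of $m$. Only then does Step 1 yield a constant-size backdoor and hence the theorem. The natural first attempt is to encode the index in binary and let the universal player challenge individual bits, but I am not sure that exact construction works for all three classes.
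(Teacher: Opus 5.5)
Your Steps 1 and 2 are sound and match the paper's Lemma~\ref{lem:from-disjunct-to-backdoor} and Lemma~\ref{lem:dnf-to-disjunct}, with one small repair. If $c$ is not a power of two, a selector pattern that encodes no index satisfies every weakened clause, so the existential player picks it and wins; pad the disjunction with copies of $\phi_1$ first, as the paper does.

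The genuine gap is Step 3. It is a wish rather than a construction, and the shape you ask of it cannot exist. You want a polynomial-time transformation that appends a constant number of blocks (your $\exists Y\,\forall W\,\exists\cdots$) and outputs a constant number $c$ of disjuncts in $\CCC$. Feed it the negation $\forall X.\,t_1\vee\dots\vee t_m$ of a 3-SAT instance. The output is a $c$-\textsc{Disjunct}-$\QBF(\CCC)$ formula with $O(1)$ quantifier alternations. For $\CCC\in\{\TCNF,\AFF\}$, Theorems~\ref{the:tcnf-tract} and~\ref{the:tract-aff} decide such formulas in polynomial time, since they are FPT in the number of disjuncts plus alternations. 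So your Step 3 would imply P$=$NP. The ``two quantifier blocks'' in the abstract are paid per squishing round, and the number of rounds must grow with $m$.

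The missing idea is the squishing lemma (Lemma~\ref{lem:squish}). It never removes the disjunction, so no ``some $y_j$ is true'' clause is ever needed; it only compresses the disjunction's width.
\begin{enumerate}
\item Using Lemma~\ref{lem:split}, cut each disjunct $\phi_i$ (a conjunction of unit and equality atoms) into $2^p$ pieces. Do this by rerouting one literal of each atom through a chain of equalities $y_{i,1}=\dots=y_{i,2^p-1}=\ell$ over fresh innermost existential variables. Any proper subset of the pieces is winnable by the existential player, while all pieces together are equivalent to $\phi_i$.
\item Match the $K=\binom{k}{2^p}$ disjuncts with the $2^p$-subsets $S_i\subseteq[k]$. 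Place the $j$-th piece of $\phi_i$ into the new disjunct $\phi'_s$, where $s$ is the $j$-th element of $S_i$.
\item Add the counting gadget $\exists Z\,\forall W.\bigvee_{t\in T}\bigwedge_{\ell\in[p]}(z_{t,\ell}=w_\ell)$, which is true iff $|T|\ge 2^p$. It forces the existential player to satisfy at least $2^p$ of the $\phi'_s$, hence some full $S_i$, hence all pieces of some $\phi_i$.
\end{enumerate}
Conversely, satisfying $\phi_i$ lets the existential player satisfy every $\phi'_s$ with $s\in S_i$. The pieces of any other $\phi_{i'}$ occurring there form a proper subset, because $|S_i\cap S_{i'}|<2^p$. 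Your binary-challenge intuition does appear here, but with the existential player assigning codes and the universal player choosing $\mathbf w$, so it counts satisfied disjuncts rather than selecting one.

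One round turns $K$ disjuncts into $k=O(\log K)$ disjuncts, multiplies the size by $2^p\le k$, and adds two blocks. Iterating $O(\log^* m)$ rounds reaches $4$ disjuncts with polynomial total size (Lemma~\ref{lem:squick}). Only unit and equality atoms ever occur, so the result lies in $\TCNF\cap\HORN\cap\AFF$ at once and no class-specific gadgets are needed. Your Step 1 with $c=4$ then yields backdoor size $2$.
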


The above theorem implies that $\QBF$ is PSpace-hard even when the
input formula has a backdoor to  \TCNF, \HORN, or \AFF, of constant size. Hence, evaluation of backdoors is not only unlikely to be FPT in the QBF setting but is {\em extremely} far from being so, in the sense that the parameter does not affect the complexity classification at all.

\subsubsection{Backdoors and Number of Quantifier Alternations} \label{ssec:alt}

Because of \Cref{thm:HardCNF} we need to consider further
restrictions than just backdoor size in order to obtain
fixed-parameter tractability. 
A natural additional restriction is the number of quantifier
alternations, which has been considered for $\QBF$ before, e.g.,
when showing that $\QBF$ is
fixed-parameter tractable parameterized by incidence treewidth plus
the number of quantifier alternations~\cite{DBLP:conf/stacs/CapelliM19}.
We show that $\QBF$ is fixed-parameter tractable parameterized by the size
of a backdoor to $\QBF(\TCNF)$ and $\QBF(\AFF)$ plus the number
of quantifier alternations, but surprisingly  remains
W$[1]$\hy hard for the base class of $\QBF(3\hy\HORN)$ formulas.
It is well known that parameterizing $\QBF$ by the number of
quantifier alternations alone is not sufficient because even
restricting this number to a constant results
in complete problems for complexity classes in the polynomial
hierarchy (indeed, even zero quantifier alternations is still NP-hard).
\begin{restatable}{theorem}{thmintrotractcnf}\label{thm:introtractcnf}
  $\QBF$ is FPT parameterized by the number of quantifier alternations plus 
  backdoor size to $\QBF(\CCC)$ for every $\CCC \in
  \{\TCNF,\AFF\}$. However, $\QBF$ is W$[1]$\hy hard parameterized by
  the number of quantifier alternations plus backdoor size to $\QBF(3\hy\HORN)$.
\end{restatable}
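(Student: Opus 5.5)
The plan has two halves: an FPT algorithm for \TCNF{} and \AFF{}, and a W[1]-hardness reduction for $3$-\HORN. Both start from the reduction sketched in the abstract. Given a backdoor $B$ of size $k$, we move the variables of $B$ to the innermost position. For each $\tau\colon B\to\{0,1\}$ let $\phi_\tau=\phi[\tau]\in\CCC$. Then $\phi$ is equivalent to $\bigvee_{\tau}\bigl(\phi_\tau\wedge \ell_\tau\bigr)$, where $\ell_\tau$ is the conjunction of literals fixing $B$ to $\tau$. Eliminating the $B$-variables, existential or universal, at their original prefix positions is the delicate part. Instead, we keep them quantified where they are and view the matrix as a disjunction of at most $2^k$ disjuncts, each in $\CCC$ after adding unit literals, which preserves membership in \TCNF{} and \AFF. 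So it suffices to solve QBF whose matrix is a disjunction of $m\le 2^k$ formulas from $\CCC$, parameterized by $m$ plus the number $q$ of quantifier blocks.

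For the algorithm we proceed block by block from the inside out, and aim for a lemma of the following form. If the innermost block is $Y$, we can replace it by a block of $f(m)$ variables, some fresh and some existing, so that the result is equivalent and has $g(m)$ disjuncts still in $\CCC$. Once the innermost block has bounded size, we eliminate it by brute-force expansion. Expansion of an existential block turns a disjunction into a larger disjunction. Expansion of a universal block produces a conjunction of disjunctions, which we redistribute into a disjunction of at most $g(m)^{2^{f(m)}}$ disjuncts. Iterating over $q$ blocks gives a tower of height $O(q)$ in $k$, which is still FPT.

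For \AFF, the lemma for the innermost block works as follows. Each disjunct is a linear system. Eliminating an existential block $Y$ from a single system by Gaussian elimination stays affine, and $\exists$ distributes over $\vee$. A universal innermost block is handled by noting that $\forall Y\bigvee_i A_i$ fails iff some assignment to $Y$ violates all $m$ systems. After Gaussian projection, each violation is a disjunction of few parity constraints over $Y$ and the outer variables. We then reason over $\mathrm{GF}(2)$ with only $m$ systems, guessing the dimension and pattern of the relevant affine subspaces (boundedly many types) and introducing new variables for their coefficients. For \TCNF, existential elimination of an innermost block by resolution stays in \TCNF{} and distributes over disjuncts. The hard case is universal blocks. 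As the abstract indicates, we therefore target the innermost existential block, which sits under a trailing universal block. We use the structure of implication graphs to bound how many outer variables matter: for each disjunct, only $O(1)$ "witness" literal paths through the universal block are relevant, and these are encoded by a bounded number of new variables. This \TCNF{} universal-to-existential step is the one I expect to be the main obstacle. The \AFF{} case is cleaner algebraically.

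For W[1]-hardness of $3$-\HORN{} we reduce from Multicolored Clique with parameter $k$, using a constant number of quantifier blocks and a backdoor of size $O(\log k)$ or $O(k)$. The idea is to write $\exists$ selection variables and $\forall$ check variables, and to use the backdoor variables to pick which of boundedly many Horn "checker" disjuncts is active. Each checker enforces, via Horn clauses of arity $3$, that the selected vertices of colours $i$ and $j$ are adjacent. The non-Horn disjunction is obtained only through the backdoor variables. We then verify that assigning the backdoor variables leaves $3$-\HORN{} formulas and that the equivalence holds.
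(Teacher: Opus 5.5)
Your reduction to $2^k$-\textsc{Disjunct}-$\QBF(\CCC)$ via $\bigvee_\tau(\phi[\tau]\wedge\ell_\tau)$, with the prefix unchanged, is exactly the paper's first step. Drop the opening sentence about moving $B$ innermost, since reordering quantifiers is unsound. The inside-out elimination skeleton also matches the paper.

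The gap is that the block-replacement lemma you announce is never established, and your sketches would not give it. For \AFF{} with a universal innermost block $Y$, a violated system does not contribute only ``few'' parity constraints over $Y$. A disjunct can contain arbitrarily many independent equations involving $Y$, and Gaussian elimination does not change this, so the number of offset ``types'' is unbounded. The missing idea is \Cref{lem:affRedundantDisjunct}. Put each disjunct in reduced echelon form with the innermost variables first. Then every disjunct with at least $m$ equations involving $Y$ can be deleted. In any play the universal player falsifies at most $m-1$ equations of the other disjuncts, and one of the $\ge m$ equations of this disjunct is linearly independent of them over $Y$, so all can be falsified at once. Afterwards $\phi[\tau]$ takes at most $2^{m-1}$ values over $\tau\colon Y\to\{0,1\}$. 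Plain quantifier elimination of $Y$, discarding contradictory combinations, then leaves at most $(2^{m-1}+1)^m$ disjuncts, with no fresh variables.

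For \TCNF{}, the ``$O(1)$ witness paths'' remark is not an argument. The paper needs three ingredients:
(i) disjuncts with at least $2m$ unit clauses over $X_q$ are redundant;
(ii) the \del{} construction, where fresh existential selectors fix the set of surviving disjuncts so that all disjuncts of a group share their $X_q$-free clauses;
(iii) splitting a group on any $x\in X_{q-1}$ that occurs with both polarities in clauses touching $X_q$, which terminates because the capped count of $X_q$-unit clauses strictly grows.
After this, every $x\in X_{q-1}$ can be set by a local rule justified by resolution closure. So $X_{q-1}$ is redundant (\Cref{lem:bqa-var-redundant}), and only $2m^2+m$ selectors remain to expand. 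Also, your scheme never shrinks the last universal block. You therefore end with a purely universal disjunction of \TCNF{} formulas on unboundedly many variables, which needs its own FPT algorithm; the paper cites Eriksson et al.

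Your hardness sketch lacks the ingredient that makes the problem hard: nothing forces the existential player to select a vertex of each colour. With existential selection variables $s_v$, the constraint ``some $s_v$ with $v\in V_i$ is true'' is a long positive clause. Horn constraints are closed under componentwise AND; in particular the all-zero selection satisfies every adjacency checker $(\neg s_u\vee\neg s_v)$. A backdoor of size $O(k)$ only yields a disjunction of $2^{O(k)}$ Horn formulas, not one with $|V_i|$ alternatives. Moreover, if the existential player chooses the active checker she simply takes an easy one. If the universal player chooses it, the conjunction of all checkers is already Horn and the backdoor does nothing.

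The paper reverses the roles and reduces from Multicolored Independent Set with a \emph{universal} selection: $\Phi=\forall X\exists Z.\bigwedge_{i\in[k]}\bigwedge_{v\in V_i}(x_v\vee\bigvee_{u\in N_G(v)}\neg x_u\vee\neg z_i)\wedge(z_1\vee\dots\vee z_k)$. The only non-Horn clause is the $k$-ary clause on the backdoor $Z$. ``Colour $i$ contains a selected vertex'' arises as the universal player's falsification of the Horn conjunction attached to $z_i$. Hence $\Phi$ is false iff a multicolored independent set exists. Splitting the long Horn clauses into $3$-\HORN{} with trailing existential auxiliaries gives one alternation and backdoor size $k$.
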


See Table~\ref{tab:1} for a summary of our main results for (strong) backdoors.

\begin{table}[h] 
\caption{A summary of our main results for (strong) backdoor
  evaluation. The second column provides our results parameterized by
  only the size of the backdoor and the third column outlines our
  results when additionally parameterizing by the number of quantifier
  alternations.}
\label{tab:1}
\bigskip
\begin{center}
\begin{tabular}{l|lll}
        \textbf{Base Class} &  \textbf{Only size of the backdoor}  & \textbf{With quantifier alternations}  \\
         \hline \\
$\TCNF$         &   para-PSpace-hard
                              (\Cref{thm:PSpace-hard})
                           & FPT
                             (\Cref{the:tcnf-tract}) \\
($d \geq 3$)$\HORN$  &   para-PSpace-hard       (\Cref{thm:PSpace-hard}) & W[1]-hard       (\Cref{thm:bounded_arity_horn}) \\
$\HORN$   &   para-PSpace-hard       (\Cref{thm:PSpace-hard})  &W[1]-hard       (\Cref{thm:bounded_arity_horn}) \\
($d \geq 2$)-$\AFF$    & para-PSpace-hard (\Cref{cor:bounded_arity_aff})& FPT (\Cref{the:tract-aff}) \\   
$\AFF$    &   para-PSpace-hard
                          (\Cref{thm:PSpace-hard})  &
                                                                      FPT (\Cref{the:tract-aff})
\end{tabular}
\end{center}
\end{table}

\subsubsection{Elimination of Guarded Universal Sets and Enhanced Backdoors} \label{ssec:enh}

Here, we develop a very general solution method for $\QBF$, which we coin
\emph{elimination
of guarded universal sets}, 
that has many important algorithmic consequences for
solving $\QBF$ in general.
Let $\Phi$ be a $\QBF$ formula and let $Y \subseteq
V_\forall(\Phi)$ be a subset of all universal variables occurring in $\Phi$.
The \emph{primal graph}
$\primG(\Phi)$ of $\Phi$ is the undirected graph with a vertex for every variable in $\Phi$ and 
an edge between two variables if they occur in a common clause. 
We denote by $\delta_{\Phi}(Y)$, or simply $\delta(Y)$ if $\Phi$ is
clear from the context, the \emph{boundary (or guard)} of $Y$ for $\Phi$, i.e.,
the set of all neighbors of $Y$ in the primal graph. 
We also denote by
$\SA(\Phi,Y)$ the maximum number of variables of $Y$ in any clause $c$
of $\Phi$, i.e., $\max_{c \in \Phi}|V(c)\cap Y|$, where $V(c)$ denotes
the set of variables in clause $c$.
Here, and in the sequel, for two QBF formulas $\Phi$ and $\Phi'$ we write $\Phi \Leftrightarrow \Phi'$ if $\Phi$ and $\Phi'$ are {\em equivalent} in the sense that $\Phi$ is true if and only if $\Phi'$ is true.
The
following theorem is the technical heart of our elimination of guarded
universal sets method. 
\begin{restatable}[Elimination of Guarded Universal
  Sets]{theorem}{thnewunmixedqbf}\label{th:new_unmixed_qbf}
  Let $\Phi$ be a \QBF formula and let $Y \subseteq
  V_{\forall}(\Phi)$. Then, there is an FPT-algorithm for parameter $|\delta(Y)|+\SA(\Phi,Y)$
  that computes a partial assignment $\beta$ of $Y$
  such that 
  $\Phi \Leftrightarrow \Phi[\beta]$ and
  $|Y \setminus V(\beta)|$ is bounded by a function of $|\delta(Y)| +
  \SA(\Phi,Y)$, where $V(\beta)$ is the set of variables assigned
  a value by $\beta$.
\end{restatable}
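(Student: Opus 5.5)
Let $D=\delta(Y)$, $d=|D|$ and $s=\SA(\Phi,Y)$. The idea is that, once the boundary variables in $D$ are fixed, the universal player can only use $Y$ to falsify clauses. A clause touching $Y$ only involves variables of $Y\cup D$. Hence each clause $c$ containing $Y$-variables defines a ``constraint'' on $Y\cup D$ of arity at most $s$ in $Y$.

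\textbf{Step 1 (reduction to a combinatorial structure).} Fix an assignment $\alpha$ of $D$. Each clause $c$ with $V(c)\cap Y\neq\emptyset$ is then either satisfied by $\alpha$ or reduces to a clause $c_\alpha$ over at most $s$ variables of $Y$. The universal player wins on $\alpha$ if some assignment of $Y$ falsifies some $c_\alpha$. Any single $c_\alpha$ is falsified by exactly one assignment of its $\le s$ variables, and universals are free, so a single falsifiable clause suffices. Ordering matters, however: variables of $Y$ may be quantified before some of $D$, so the universal must commit to values of $Y$ before seeing $\alpha$.

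I would first prove an equivalence lemma. Fixing some $Y$-variables to constants is sound, in the sense that $\Phi \Leftrightarrow \Phi[\beta]$, provided that for every ``relevant strategy situation'' the universal player still has an equally good move. Concretely, I want a set $Y'\subseteq Y$ and an assignment $\beta$ of $Y\setminus Y'$ such that, for every clause $c$ touching $Y\setminus Y'$, there is a ``witness'' clause $c'$ touching only $Y'$ with the following properties. First, $c'$ has the same $D$-part (same literals on $D$). Second, the $Y$-part of $c'$ lies in quantifier positions no earlier, relative to $D$, than the $Y$-part of $c$ (for each variable, the set of $D$-variables quantified before it is a superset). Third, the $Y$-parts of $c$ and $c'$ are structurally similar. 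Then any universal strategy falsifying $c$ can be simulated by falsifying $c'$, and setting the remaining variables arbitrarily is harmless since universals only help by falsifying clauses.

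\textbf{Step 2 (bounding via sunflower/type arguments).} Classify each clause touching $Y$ by its \emph{type}. The type consists of its $D$-literals ($3^{d}$ options), plus, for each of its $\le s$ $Y$-variables, the block position relative to $D$ ($\le d+1$ options) and the sign. This gives $f(d,s)$ types. The difficulty is that clauses of one type may share $Y$-variables, so falsifying one may conflict with falsifying another. I would handle this with a sunflower lemma: among clauses of a type, extract either a sunflower with $s+1$ petals or a small hitting structure. I keep a bounded representative family per type, and let $Y'$ be the union of their variables, with size bounded by $f(d,s)\cdot(s+1)^s s!$. The remaining variables are then assigned to satisfy clauses in which they occur, using consistent signs where possible. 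This step is also where the precise statement of the equivalence lemma has to be worked out.

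\textbf{Main obstacle.} The hardest step is proving soundness of fixing $Y\setminus Y'$ when a discarded variable occurs in clauses of opposite polarity, so that no single value satisfies everything. There is also the need for the universal player's strategy in $\Phi$ to depend on earlier existential choices, including non-$D$ existentials. These do not appear in $Y$-clauses, so I expect the independence given by the guard $D$ to resolve this.

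To handle conflicting polarities I would first try an iterative approach. Repeatedly pick a variable $y$ whose clauses all have representatives disjoint from $y$'s clauses among kept clauses; then set $y$ arbitrarily, since each clause it might falsify already has an independent witness. I would argue via the sunflower kernel that such a $y$ exists whenever $|Y|$ exceeds the bound. Each round is polynomial, giving the FPT algorithm.
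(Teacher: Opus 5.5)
\textbf{The gap is the soundness lemma in your Step~1, and it is not the obstacle you flag.} A per-clause witness (same $\delta(Y)$-literals, positions no earlier, same sign pattern) does not let you simulate a universal winning strategy. Such a strategy typically keeps \emph{several} $Y$-parts falsified simultaneously on a common prefix of the play, in order to punish different later choices of the existential player on $\delta(Y)$. Redirecting each of them to its own witness can then put contradictory demands on a single kept variable. Concretely, let
\[
\Phi=\forall y_1y_2y_3\,\exists x_1x_2.\;(y_1\lor x_1)\land(y_3\lor x_1)\land(\lnot y_2\lor x_2)\land(\lnot y_3\lor x_2)\land(\lnot x_1\lor\lnot x_2)
\]
with $Y=\{y_1,y_2,y_3\}$, so $\delta(Y)=\{x_1,x_2\}$ and $\SA(\Phi,Y)=1$. $\Phi$ is false, since $y_1=0,y_2=1$ forces $x_1=x_2=1$. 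For $Y'=\{y_3\}$, every clause on $y_1$ or $y_2$ has a witness on $y_3$ with the same $\delta(Y)$-literal, position and sign. Your iterative rule may therefore discard $y_1$ and then $y_2$, each time with $y_3$ as the kept representative. But take $\beta=\{y_1\mapsto 1,y_2\mapsto 0\}$, which is exactly what ``assign them to satisfy the clauses in which they occur'' produces. Then $\Phi[\beta]=\forall y_3\exists x_1x_2.\,(y_3\lor x_1)\land(\lnot y_3\lor x_2)\land(\lnot x_1\lor\lnot x_2)$ is true. Replacing $y_1,y_2$ by $n$ copies each shows the same failure for arbitrarily large $|Y|$. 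Every variable here occurs with a single polarity, so settling the ``opposite polarity'' issue would not help. Likewise, the claim that setting the discarded variables arbitrarily is harmless is false.

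What is missing is a \emph{simultaneous} criterion for when fixing variables is safe. The paper obtains it in three steps.
\begin{itemize}
\item It replaces the matrix by $\disj(\phi,\delta(Y))$ (\Cref{lem:equisat}). On the same variables and prefix this is a $2^{|\delta(Y)|}$-disjunct formula in which every constraint meeting $Y$ lies entirely inside $Y$. The universal player must now falsify \emph{all} disjuncts.
\item A single non-adaptive $\beta$ is then safe as soon as it falsifies a set $\Phi_N$ of whole disjuncts and the assigned variables are redundant for the remaining disjuncts (\Cref{lem:beta_application}). The quantifier positions of $Y$ relative to $\delta(Y)$ play no role in this.
\item For arity one, redundancy is certified by a maximum matching between $Y$ and the disjuncts (\Cref{lem:unit_clause}), i.e.\ a system of \emph{distinct} representatives. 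This is exactly what independent per-clause witnesses fail to guarantee. In the example, every maximum matching covers $y_1,y_2,y_3$, so nothing is fixed.
\end{itemize}
Higher arity is handled by greedy falsification (\Cref{lem:greedy_long}). When that gets stuck, the offending disjunct is split on a bounded hitting set via $\disj$ or $\magic$, and the lexicographic arity-count vector bounds the number of rounds (\Cref{lem:new_limit_long}). To rescue your clause-level route, you would need witnesses that can be chosen jointly compatible for every family of $Y$-parts a strategy may hold falsified at once, together with a bound on the size of the family needed. Your proposal supplies neither.
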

Informally, \Cref{th:new_unmixed_qbf} allows us to eliminate an arbitrarily large set $Y$ of universal variables as
long as its boundary $\delta(Y)$ and arity $\SA(\Phi,Y)$ are not too
large; the remaining (bounded by a function in the parameter) variables in $Y\setminus V(\beta)$ can then be eliminated
using standard quantifier elimination or similar methods. This result is clearly already interesting in its own
right as it allows one to greatly simplify $\QBF$ formulas, e.g., it
implies that $\QBF$ can be reduced to the propositional satisfiability problem (and
therefore its complexity drops from \textsc{PSpace} to NP)
in FPT-time parameterized by the size of a
smallest separator between existential and universal variables in
the primal graph of $\Phi$ plus the maximum number of universal
variables in any clause of $\Phi$.
More importantly, 
we show that the result can be used to significantly enhance the
applicability of (deletion) backdoors.
To this aim, say that a class $\CCC$ of
$\QBF$ formulas is \emph{(deletion) extendable} if deciding the validity of
any $\QBF$ formula is
fixed-parameter tractable parameterized by the size of a given (deletion)
backdoor $B$ to $\CCC$. Via Theorem~\ref{thm:introtractcnf} it easily follows that \TCNF and \AFF with bounded quantifier alternations can be extended, but we can also effortlessly extend other tractable fragments from the literature.
\newcommand{\VC}{\mathcal{V}}

\begin{restatable}{theorem}{theextendableclasses}\label{the:extendableclasses}
  The following classes of $\QBF$ formulas are extendable:
  \begin{enumerate}
  \item\label{DCo} the class $\DDD_q(\CCC)$ of $\QBF(\CCC)$ formulas with at most $q$ quantifier
    alternations for $\CCC \in \{\TCNF,\AFF\}$,
  \item\label{DCt} the class $\DDD_{\exists}(\HORN)$ of $\QBF(\HORN)$ formulas
    having only existential variables.
  \end{enumerate}
  The following classes of $\QBF$ formulas are deletion extendable:
  \begin{enumerate}
  \setcounter{enumi}{2}
  \item\label{DCh} the class $\TTT_q^\omega$ of $\QBF$ formulas with at most $q$ quantifier
    alternations and incidence treewidth at most $\omega$,
  \item\label{DCf} the class $\VC_s$ of $\QBF$ formulas with incidence vertex
    cover number at most
    $s$.
  \end{enumerate}
\end{restatable}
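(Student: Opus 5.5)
Recall that a class $\CCC$ of $\QBF$ formulas is (deletion) extendable if deciding the validity of a $\QBF$ formula $\Phi$ is FPT parameterized by $k=|B|$, where $B$ is a (deletion) backdoor of $\Phi$ to $\CCC$ given with the input. The plan is to check this definition separately for each of the four classes. In every item the class parameters $q$, $\omega$, $s$ are constants fixed by the class, so they may appear in the exponent.

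\textbf{Item~\ref{DCo}.} For fixed $q$ and $\CCC\in\{\TCNF,\AFF\}$, let $B$ be a backdoor of $\Phi$ to $\DDD_q(\CCC)$. The first step is to bound the number of alternations of $\Phi$ itself. For every $\tau$, the formula $\Phi[\tau]$ has at most $q$ alternations. Reinserting the $k$ variables of $B$ into the prefix can raise the alternation count by at most $2k$, so $\Phi$ has at most $q+2k$ alternations. Moreover, $B$ is in particular a backdoor of $\Phi$ to $\QBF(\CCC)$. Hence the algorithm of \Cref{thm:introtractcnf}, run with parameter $(q+2k)+k$, decides $\Phi$ in time $f(q+3k)\cdot|\Phi|^{O(1)}$, which is FPT in $k$ for fixed $q$.

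\textbf{Item~\ref{DCt}.} Every $\Phi[\tau]$ is existential Horn, so every universal variable of $\Phi$ lies in $B$; call these $U$, with $|U|\le k$. I will expand all universal quantifiers by standard universal expansion. For each $\alpha\colon U\to\{0,1\}$, take a copy of the matrix under $\alpha$, renaming each existential variable that occurs to the right of a universal variable according to the restriction of $\alpha$ to the universal variables preceding it. This gives an equivalent purely existential CNF $\phi'$ with at most $2^k$ copies of the matrix. The union $B'$ of the copies of $B\setminus U$ has size at most $k2^k$, and it is a backdoor of $\phi'$ to $\HORN$, since each copy becomes Horn once its copy of $B$ is assigned. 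The algorithm then branches over all $2^{k2^k}$ assignments of $B'$ and solves each residual Horn-SAT instance in polynomial time. The total running time is $2^{k2^k}\cdot 2^k\cdot|\Phi|^{O(1)}$.

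\textbf{Items~\ref{DCh} and~\ref{DCf}.} Here $B$ is a deletion backdoor, so $\Phi-B$ lies in the base class. For item~\ref{DCh}, take a tree decomposition of the incidence graph of $\Phi-B$ of width at most $\omega$ and add $B$ to every bag. This is a decomposition of the incidence graph of $\Phi$ of width at most $\omega+k$. The argument from item~\ref{DCo} shows that $\Phi$ has at most $q+2k$ alternations. The known FPT algorithm for $\QBF$ parameterized by incidence treewidth plus the number of alternations~\cite{DBLP:conf/stacs/CapelliM19} then applies with parameter $\omega+q+3k$. A decomposition of that width can be computed in FPT time if one is not given. For item~\ref{DCf}, adding $B$ to a vertex cover of the incidence graph of $\Phi-B$ gives a vertex cover of the incidence graph of $\Phi$ of size at most $s+k$.

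The step I expect to be the main obstacle is then solving $\QBF$ in FPT time parameterized by incidence vertex cover alone, since alternations are unbounded here. I would first cite the known tractability of $\QBF$ under this parameter. Failing a citation, I would prove it directly. Variables outside the cover occur only in the at most $s+k$ cover clauses, so each has one of at most $3^{s+k}$ signed types. Within a maximal block of the prefix containing no cover variable, two variables of the same type are interchangeable, so each such block can be shrunk to at most one representative per type. The quantifier prefix can then be compressed to bounded length, and the resulting bounded-size formula is evaluated by brute force.
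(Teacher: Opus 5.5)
Your proposal is correct and follows the paper's proof item by item. Item~1 goes through \Cref{thm:introtractcnf}, and you usefully make explicit the bound of $q+2k$ alternations that the paper leaves implicit. Item~2 unfolds the universal variables, which must all lie in $B$. The paper instead eliminates all of $B$ by quantifier elimination into a disjunction of existential Horn formulas, but this is the same idea. Items~3 and~4 add $B$ to the decomposition or cover and then cite \cite{DBLP:conf/stacs/CapelliM19} and \cite{DBLP:conf/lics/FichteGHSO23}, respectively. In particular, your primary route for item~4, citing the known FPT result for incidence vertex cover, is exactly what the paper does.

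Do not rely on your fallback sketch for item~4, however, because it is wrong as written. First, two existential variables of the same signed type cannot be reduced to one. In $\exists x\,\exists x'\,\forall a\,\forall b.\,(x\lor x'\lor a)\land(\lnot x\lor\lnot x'\lor b)$ the two clauses form a vertex cover, and $x,x'$ share a type. The formula is true via $x=1$, $x'=0$, but it becomes false once $x'$ is deleted. Second, bounding each quantifier block does not bound the prefix length, since arbitrarily many alternations can occur between two cover variables.

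A correct self-contained argument is a backward dynamic program over the prefix positions. Its state is the assignment to the cover variables quantified so far together with the set of cover clauses already satisfied. This suffices because clauses outside the cover contain only cover variables. It gives at most $2^{s+k}$ states per position.
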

We now show how to use \Cref{th:new_unmixed_qbf} to significantly enhance the applicability of
(deletion) backdoors to extendable classes. For a subset 
$B \subseteq V(\Phi)$ of the variables of a $\QBF$
formula $\Phi$, we denote by $\RU(\Phi,B)$ the union of all
components in $\primG(\Phi)-B$ that only contain universal variables.
For a class $\CCC$ of $\QBF$
formulas, we say that $B \subseteq V(\Phi)$ is an \emph{enhanced
  (deletion) backdoor} to $\CCC$ for $\Phi$ if $B\cup
\RU(\Phi,B)$ is a (deletion) backdoor to $\CCC$ for $\Phi$. 
\begin{restatable}{theorem}{theextdel}\label{the:extdel}
  Let $\CCC$ be a (deletion) extendable class, $\Phi$ be a $\QBF$ formula, and
  let $B \subseteq V(\Phi)$ be an enhanced (deletion) backdoor to $\CCC$ for $\Phi$. Then, deciding the validity of $\Phi$ is
  fixed-parameter tractable parameterized by
  $|B|+\SA(\Phi,\RU(\Phi,B))$.
  %$|B|+\SA(\Phi,Y)$, where $Y=\RU(\Phi,B)$.
\end{restatable}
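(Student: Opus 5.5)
Set $Y=\RU(\Phi,B)$. By definition, every component of $\primG(\Phi)-B$ that lies inside $Y$ contains only universal variables, and any neighbor of such a component outside the component must lie in $B$. Hence $\delta(Y)\subseteq B$ and $|\delta(Y)|\le |B|$. Moreover $Y\subseteq V_\forall(\Phi)$. So \Cref{th:new_unmixed_qbf} applies to $Y$ with parameter $|\delta(Y)|+\SA(\Phi,Y)\le |B|+\SA(\Phi,\RU(\Phi,B))$. In FPT time it gives a partial assignment $\beta$ of $Y$ such that $\Phi\Leftrightarrow\Phi[\beta]$ and $Y':=Y\setminus V(\beta)$ has size at most $f(|B|+\SA(\Phi,Y))$ for some computable $f$.

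Next, pass to $\Phi'=\Phi[\beta]$ and set $B'=B\cup Y'$. Its size is bounded by $|B|+f(\cdot)$, a function of the parameter. I would show that $B'$ is a (deletion) backdoor to $\CCC$ for $\Phi'$. By hypothesis $B\cup Y$ is a (deletion) backdoor to $\CCC$ for $\Phi$.
\begin{itemize}
\item In the strong case, take any $\tau\colon B'\to\{0,1\}$. Combined with $\beta$, it is an assignment of $B\cup Y$, and $\Phi'[\tau]=\Phi[\beta\cup\tau]\in\CCC$.
\item In the deletion case, deleting the variables $V(\beta)$ is already achieved by the simplification $[\beta]$. The only difference is that assigning may remove whole clauses or literals, whereas deletion removes only literals. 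So $\Phi'-B'$ is obtained from $\Phi-(B\cup Y)$ by dropping some clauses.
\end{itemize}
The deletion case therefore needs $\CCC$ to be closed under clause removal, or at least the plain statement that $\Phi[\beta]-B' \in \CCC$ follows from $\Phi-(B\cup Y)\in\CCC$. I would check this for the concrete classes of \Cref{the:extendableclasses}: removing clauses from a formula does not increase incidence treewidth, the number of quantifier alternations, or the vertex cover number. More generally, I would take this closure as a standing assumption implicit in "deletion extendable", or else argue directly that the formula obtained really is $(\Phi-(B\cup Y))$ with clauses satisfied by $\beta$ removed.

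Finally, apply the extendability of $\CCC$ to $\Phi'$ with the given backdoor $B'$. This decides $\Phi'$, and hence $\Phi$, in time $g(|B'|)\cdot|\Phi|^{O(1)}$. Combined with the FPT running time of \Cref{th:new_unmixed_qbf}, this gives FPT in $|B|+\SA(\Phi,\RU(\Phi,B))$.

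The main obstacle is the bookkeeping in the middle step. We must ensure that substituting the partial assignment $\beta$ preserves the backdoor property; this is subtle mainly for deletion backdoors and for prefix-sensitive classes such as $\DDD_q$. For those, fixing universal variables may only merge quantifier blocks, which never increases the alternation count, so that case is fine. The remaining steps are direct applications of \Cref{th:new_unmixed_qbf} and the definition of extendability.
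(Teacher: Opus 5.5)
Your proof is correct and follows the paper's argument. You use $\delta(\RU(\Phi,B))\subseteq B$ to invoke \Cref{th:new_unmixed_qbf}, observe that $B\cup(Y\setminus V(\beta))$ is a (deletion) backdoor of bounded size for $\Phi[\beta]$, and then apply extendability; the closure property you flag in the deletion case is exactly the paper's standing assumption that every class is closed under taking subformulas, so no class-by-class check is needed.
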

Therefore, \Cref{the:extdel} shows that enhanced (deletion) backdoors can be
efficiently evaluated for the same base classes for which (deletion) backdoor
sets can, e.g., the classes provided in
\Cref{the:extendableclasses}. Importantly, \Cref{the:enhresover} greatly enhances the applicability of the
(deletion) backdoor approach, since enhanced (deletion) backdoors can be arbitrarily
smaller than backdoors but still apply for the same base classes.
To actually make use of enhanced
(deletion) backdoors, one still needs to be able to efficiently detect
them and indeed we are able to show that this is the case for all
classes given in \Cref{the:extendableclasses} using a combination of
standard branching, important separators~\cite[Section 8]{DBLP:books/sp/CyganFKLMPPS15}, and 
the randomized contractions and unbreakability framework~\cite{chitnis2016designing,cygan2020randomized,LokshtanovR0Z18}. This allows us to obtain the following FPT-results for \QBF.
\begin{restatable}{theorem}{theenhresover}\label{the:enhresover}
  \QBF{} is fixed-parameter tractable parameterized by either:
  \begin{itemize}
  \item the minimum size of an enhanced backdoor to $\DDD_q(\CCC)$ for
    any $\CCC \in \{\TCNF,d\hy\AFF\}$,
  \item the minimum
    size of an enhanced backdoor to $\DDD_{\exists}(\HORN)$ plus the
    maximum number of universal variables occurring in any clause, or
  \item the minimum
    size of an enhanced deletion backdoor to $\CCC$ for any $\CCC \in
    \{\TTT_q^\omega,\VC_s\}$ plus the maximum number of universal
    variables occurring in any clause.
  \end{itemize}
\end{restatable}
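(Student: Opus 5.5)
The proof splits into two parts: \emph{detection} of a small enhanced (deletion) backdoor, and \emph{evaluation} via \Cref{the:extdel} combined with \Cref{the:extendableclasses}. For evaluation, once we have an enhanced backdoor $B$ of size at most $k$, \Cref{the:extdel} decides validity in FPT-time parameterized by $|B|+\SA(\Phi,\RU(\Phi,B))$. Since $\RU(\Phi,B)$ consists only of universal variables, $\SA(\Phi,\RU(\Phi,B))$ is at most the maximum number of universal variables in any clause. This explains why that quantity appears as an extra parameter in the second and third items. For the first item the base classes are $\TCNF$ and $d\hy\AFF$, so every clause has bounded arity and the $\SA$ term is bounded automatically. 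Thus it suffices to show that, for each listed class, an enhanced (deletion) backdoor of size at most $k$ can be \emph{found} in FPT-time, or that an FPT-computable set of bounded size works equally well.

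For detection we rephrase the condition. $B$ is an enhanced backdoor iff there is a set $U$ such that $B\cup U$ is a backdoor, $U$ is a union of components of $\primG(\Phi)-B$, and all variables in $U$ are universal. Equivalently, $B$ separates the universal set $U$ from the rest, and the remaining part $V(\Phi)\setminus(B\cup U)$ must already be ``backdoor-clean'' apart from $B$.

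Our strategy is iterative branching on obstructions. Obstructions are clauses violating membership in the base class: a clause of length at least three for $\TCNF$, two positive literals for $\HORN$, a non-affine clause for $d\hy\AFF$. Such an obstruction has bounded size, at least for classes of bounded arity. Each of its variables either goes into $B$ or lands in $\RU(\Phi,B)$. In the latter case it must be universal and must be cut off from every existential variable. When we decide that a universal vertex $v$ should lie in $\RU$, we must separate $v$ from the set of existential variables using at most $k$ vertices. By the pushing argument for important separators, we may assume the part of $B$ separating $v$ from the existentials is an important separator. There are at most $4^k$ of these, which gives a bounded branching factor. The classes $\DDD_{\exists}(\HORN)$ and $\DDD_q(\cdot)$ require the non-$U$ part to have no universal variables or to have at most $q$ alternations. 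The alternation count measured on the reduced formula only drops when variables are removed, so this constraint is handled in the same branching.

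The main obstacle is the third item, the structural classes $\TTT_q^\omega$ and $\VC_s$. There, membership is not witnessed by small local obstructions, so the branching above does not apply directly. Here we would use the randomized contractions / unbreakability framework. We first try to find a small separation $(X,Y)$ of the primal graph with both sides large. If one exists, we recurse into a side and replace it by a bounded-size representative, as in the recursive-understanding approach. The property ``$B$ together with the universal components is a deletion backdoor to bounded incidence treewidth / vertex cover'' can be expressed in CMSO together with the quantifier-type labels. Bounded alternation is a prefix property that can be handled by labels, so the existing meta-theorem of Lokshtanov et al.\ for CMSO-expressible deletion problems on unbreakable graphs applies. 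In the unbreakable case, every separator of size at most $k$ leaves only a small side. So the universal components cut off by $B$ are small, and we can enumerate them by brute force combined with the known FPT deletion-backdoor detection for treewidth and vertex cover. I expect the CMSO encoding together with the handling of incidence (rather than primal) treewidth to be the delicate part of the argument.
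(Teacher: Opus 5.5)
Your evaluation step (\Cref{the:extdel} with \Cref{the:extendableclasses}) matches the paper. So does your detection for the syntactic classes: you branch on a bounded obstruction, handle the universal alternative by important $x$--$V_\exists(\Phi)$-separators via the pushing argument of \Cref{lem:enhdetimsep}, and can treat an alternating chain of $q+2$ variables as the obstruction for too many alternations. This is \Cref{lem:EB-QA-B} and \Cref{the:enhdet-con}.

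The real gap is the structural class $\TTT_q^\omega$, where your claim that bounded alternation ``can be handled by labels'' in a CMSO sentence is false. The prefix is a linear order on the variables that is invisible in $\primincG(\Phi)$, and finitely many vertex labels cannot recover it. Take the clauses $(e_i\lor u_i)$, $i\in[m]$ with $m\ge 2$, under the prefixes $\exists e_1\forall u_1\cdots\exists e_m\forall u_m$ and $\exists e_1\cdots e_m\forall u_1\cdots u_m$. They give the same labelled graph, yet for $k=0$, $q=1$, $\omega\ge 1$ the first is a no-instance and the second a yes-instance. Adding the order as a relation instead changes the Gaifman graph, so the unbreakability analysis of $\primincG(\Phi)$ no longer applies. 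The paper keeps alternations out of the MSO sentence entirely (\Cref{the:enhdet-stru-tw}):
it computes an enhanced deletion backdoor $B_1$ to $\mathcal{Q}_q$ by the branching of \Cref{lem:EB-QA-B};
it computes an enhanced deletion backdoor $B_2$ to $\TTT^\omega$ via \Cref{thm:MSOmodelchecking} and self-reducibility;
it outputs $B_1\cup B_2$, of size at most $2k$.
This $2$-approximation suffices because \Cref{the:extdel} only needs \emph{some} small enhanced deletion backdoor. Your own alternation branching would supply $B_1$, so the fix is within reach, but as written the step fails.

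Your unbreakable-case algorithm is also incomplete. The separation given by $B$ only forces \emph{one} of $\RU(\Phi,B)$ and $V(\primG(\Phi))\setminus(B\cup\RU(\Phi,B))$ to have size at most $s(k)$ (\Cref{lem:unbreakable_size_tw}), and you treat only the first case. Moreover, finding a set of size $s(k)$ ``by brute force'' costs $n^{s(k)}$, which is not FPT. The paper covers both cases with a random red/blue colouring (\Cref{lem:FPT-UNB}). With probability at least $2^{-(s(k)+k)}$, $B$ is red and the small side is blue. A backdoor is then read off either from the blue components containing existential variables, or as the boundary of the all-universal blue components of incidence treewidth above $\omega$ together with a treewidth-$\omega$ modulator of the rest.

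Separately, your reason for dropping the $\SA$ term in the first item (``every clause has bounded arity'') does not hold. Membership in $\TCNF$ or $d\hy\AFF$ constrains only $\Phi-(B\cup\RU(\Phi,B))$. A clause $(b\lor y_1\lor\dots\lor y_n)$ with $b\in B$ and $\{y_1,\dots,y_n\}$ a universal component of $\primG(\Phi)-B$ is therefore allowed, and then $\SA(\Phi,\RU(\Phi,B))$ is unbounded. The paper's proof of \Cref{cor:enhdetcont} makes the same assertion (arity at most $|B|+2$, resp.\ $|B|+d$), so you do not deviate from it here. Still, the step seems genuinely to need the $\SA$ term. Apply \Cref{lem:from-disjunct-to-backdoor} to $\forall Y.\bigvee_{i\in[k]}\phi_i$ with arbitrary CNFs $\phi_i$ over universal $Y$. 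This gives $\Phi'=\forall Y\exists Z.\,\phi'$ with $|Z|=\lceil\log k\rceil$, $\RU(\Phi',Z)=Y$ and $Z\cup Y=V(\Phi')$, so $Z$ is an enhanced backdoor to $\DDD_0(\TCNF)$. Falsity of $\Phi'$ is joint satisfiability of the $k$ DNFs $\lnot\phi_i$. That problem is W[1]-hard in $k$: take $\lnot\phi_i=\bigvee_{v\in V_i}\bigl(y_v\land\bigwedge_{u\notin V_i\cup N(v)}\lnot y_u\bigr)$ for a Multicolored Clique instance.
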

Observe that~\Cref{the:enhresover} provides an immediate
generalization of the
prominent
fixed-parameter tractability result for number of quantifier
alternations and incidence treewidth to $\QBF$ formulas with an unbounded
number of quantifier alternations and unbounded incidence treewidth.

Finally, to illustrate the usefulness of \Cref{th:new_unmixed_qbf} for backdoors to the Schaefer
classes, we provide the following consequence of \Cref{the:enhresover}
that shows that unmixed versions of the Schaefer classes are 
amendable to the backdoor approach; a \QBF formula is \emph{unmixed}
if every constraint either only contains universal or only existential variables. This structural restriction is, to the best of our knowledge, novel, and is an interesting complement to our other positive results since (1) we do not need to assume a bounded quantifier prefix, and (2) it is applicable to bounded arity \HORN.
\begin{restatable}{theorem}{thmbackdoorunmixed}
  \label{thm:backdoorunmixed}
  Let $\CCC \in \{\TCNF, d\textit{-}\HORN, d\textit{-}\AFF\}$ for $d \geq 1$. Then, 
  $\QBF$ is FPT parameterized by the minimum size of a backdoor to 
  unmixed $\QBF(\CCC)$, but $\QBF$ is W$[1]$\hy hard parameterized by the
  backdoor size to unmixed $\QBF(\HORN)$. 
\end{restatable}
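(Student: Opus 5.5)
The plan has two halves. For tractability we reduce to \Cref{the:enhresover} (or to \Cref{the:extdel} with \Cref{the:extendableclasses}). For hardness we reduce from the W$[1]$-hard case of \Cref{thm:introtractcnf}.

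For the positive part, let $B$ be a backdoor of size $k$ to unmixed $\QBF(\CCC)$. The first step is to detect $B$, or some backdoor of size $g(k)$. Being unmixed is a purely syntactic condition on each constraint. After assigning $B$, every clause has to be in $\CCC$ and must not mix universal and existential variables. So for bounded arity $d$ we branch on a violating clause: in such a clause at least one of its at most $d$ variables must lie in $B$, which gives a bounded search tree of size $d^k$. For $\TCNF$ we would combine this with the known FPT detection of backdoors to $\TCNF$.

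Next, fix $B$. Let $E=V_\exists(\Phi)\setminus B$ and $Y=V_\forall(\Phi)\setminus B$. Since every reduced clause is unmixed, no clause of $\Phi$ contains both a variable of $E$ and a variable of $Y$ once $B$ is removed. Hence $B$ separates $E$ from $Y$ in $\primG(\Phi)$. Every component of $\primG(\Phi)-B$ is therefore either purely existential or purely universal, so $Y\subseteq \RU(\Phi,B)$. We also have $\SA(\Phi,\RU(\Phi,B))\le d$ for bounded arity (for $\TCNF$, $d=2$). We then apply \Cref{th:new_unmixed_qbf} to $Y$: we have $|\delta(Y)|\le |B|$, and the arity is bounded by $d$, so only boundedly many variables of $Y$ survive.

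It remains to handle the purely universal clauses, and this is the main obstacle. For $d$-$\HORN$ the reduced formula is not in $\DDD_\exists(\HORN)$ as required. I would get around this as follows. After the elimination, $B$ together with the surviving variables of $Y$ has size bounded by a function of $k$ and $d$. This set is a backdoor to purely existential $\HORN$, $\TCNF$ or $\AFF$ formulas, because each universal-only clause becomes trivial once all its variables are assigned. We can then either invoke \Cref{the:extendableclasses}(\ref{DCt}), or, for $\TCNF$ and $d$-$\AFF$, observe that the existential formula has zero alternations and use item (\ref{DCo}). The case to double-check is that $\TCNF$ has arity $2$, so the arity bound causes no problem there. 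The overall argument is an instance of \Cref{the:extdel}, where the enhanced backdoor is $B$ itself with a bounded parameter.

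For hardness with unbounded-arity $\HORN$, the parameter $\SA$ is no longer bounded, so we reduce from the $\QBF(3\hy\HORN)$ W$[1]$-hardness in \Cref{thm:introtractcnf}. The idea is to simulate mixed Horn clauses with large purely universal Horn clauses, so that the backdoor stays small. Concretely, I would take the W$[1]$-hard instances, presumably with two quantifier blocks and a backdoor of size $k$. Each mixed clause would be rewritten by introducing fresh existential copies linked through the backdoor variables, while the universal part is encoded by long universal-only Horn clauses over the universal block. This encoding would be checked against the specific structure of the hardness construction, since a generic rewriting of mixed clauses into unmixed ones is not obviously equivalence-preserving.
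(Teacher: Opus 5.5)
Your tractability half is correct and follows the paper. Because $\Phi-B$ is unmixed, every universal variable outside $B$ lies in $\RU(\Phi,B)$. So $B$ is an enhanced backdoor to the zero-alternation (existential) version of $\CCC$, and your evaluation step is the proof of \Cref{the:extdel} combined with \Cref{the:extendableclasses}. The only difference is detection. The paper never searches for the unmixed backdoor; it invokes \Cref{cor:enhdetcont,cor:enhdetconh}, which are FPT in the minimum enhanced backdoor size, and that size is bounded by your parameter. Your direct branching also works, with one correction. A clause violating the arity bound has more than $d$ variables outside the current partial backdoor, so you must branch on $d+1$ of them, not $d$, and reject clauses with more than $k+d$ variables. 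Your bound $\SA(\Phi,\RU(\Phi,B))\le d$ is in fact sharper than the paper's $|B|+d$.

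The hardness half, however, is not a proof. You propose to start from the $3$-\HORN{} instances behind \Cref{thm:introtractcnf} and rewrite their mixed clauses. You give no construction, and you note yourself that a generic rewriting need not preserve equivalence. That starting point is also backwards. In those instances, the clauses that stay mixed after the backdoor is assigned are produced by the arity-reduction step. That step introduces unboundedly many auxiliary existential variables that share clauses with the universal block, so they can never fit into a small backdoor to unmixed formulas.

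The missing observation is that the formula built in \Cref{thm:bounded_arity_horn} \emph{before} that step already works. Take $\Phi=\forall X\exists Z.\bigwedge_{v}C_v\wedge(z_1\vee\dots\vee z_k)$ with $C_v=x_v\vee\bigvee_{u\in N_G(v)}\neg x_u\vee\neg z_i$ for $v\in V_i$. Every assignment to $Z$ turns each $C_v$ into either $\top$ or the purely universal Horn clause $x_v\vee\bigvee_{u\in N_G(v)}\neg x_u$. It also turns $(z_1\vee\dots\vee z_k)$ into $\top$ or the empty clause. Hence $Z$ is a backdoor of size $k$ to unmixed \HORN, and the correspondence with multicolored independent sets is the one already proved there. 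No simulation is needed. Unbounded arity is exactly what lets each $C_v$ keep a single non-universal variable, and that variable sits in the backdoor.
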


\subsection{Technical Highlights}

To pinpoint and to in some extend eliminate the complications resulting
from the backdoor variables, we start by transforming the given $\QBF$
formula $\Phi=\pref.\phi$ with a backdoor set of size $k$ to a base class $\CCC$ into a
$2^k$-\textsc{Disjunct}-$\QBF(\CCC)$ formula, i.e.,
a $\QBF$ formula of the form $\pref.\bigvee_{i \in [2^k]}\phi_i$ such
that every disjunct $\phi_i$ is a formula in
$\CCC$ with shared quantifier prefix $\pref$.
A similar result can be achieved with standard quantifier elimination
techniques but instead with a $2^{2^k}$-\textsc{Disjunct}-$\QBF(\CCC)$ formula as a result in the worst case. Even more severely, the latter does not necessarily preserve the
influence of partial assignments on both formulas which we need in our elimination of guarded universal sets method (introduced in \Cref{ssec:unmixed-overview}).
Hence, we develop a different transformation that results in \emph{equisatisfiable} formulas, i.e.,
formulas with the same quantifier prefix such that
their matrices have the same set of satisfying assignments, which is based on the following transformation of CNF
formulas $\phi$ for a given  set of variables $X$:
\[ 
  \disj(\phi, X) = \bigvee_{\tau : X \to \{0,1\}} \phi[\tau] \land U_\tau,
\]
where $U_\tau$ is a conjunction of
unit clauses defined as follows:
if $\tau(x) = 1$, then $U_\tau$ contains the clause
$(x)$, otherwise it contains $(\lnot x)$. The following lemma now
shows that $\phi$ and $\disj(\phi,X)$ are equisatisfiable.

\begin{restatable}{lemma}{lemequisat} \label{lem:equisat}
  Let $\phi$ be a conjunctive formula.
  Then, for every $X \subseteq V(\phi)$,
  the formulas $\phi$ and $\disj(\phi,X)$
  are equisatisfiable.
\end{restatable}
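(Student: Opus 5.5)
The plan is to show directly that $\phi$ and $\disj(\phi,X)$ have exactly the same set of satisfying assignments over $V(\phi)$. This is the sense of ``equisatisfiable'' fixed in the text: same variables and prefix, same models of the matrix. Since both formulas are over $V(\phi)$, take an arbitrary total assignment $\alpha\colon V(\phi)\to\{0,1\}$ and prove that $\alpha \models \phi$ iff $\alpha \models \disj(\phi,X)$. One preliminary point: $\phi[\tau]$ may lose variables of $X$, but $U_\tau$ mentions every $x\in X$, so each disjunct is a formula over (a subset of) $V(\phi)$ and $\alpha$ evaluates it.

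The key fact is the standard property of simplification by a partial assignment. If $\alpha$ extends $\tau$, i.e.\ $\alpha|_X=\tau$, then $\alpha\models\phi$ iff $\alpha\models\phi[\tau]$. To see this, recall that $\phi[\tau]$ deletes clauses containing a literal made true by $\tau$ and removes literals made false by $\tau$. A deleted clause is satisfied by $\alpha$ because $\alpha$ agrees with $\tau$. A shortened clause is satisfied by $\alpha$ iff one of its remaining literals is, since the removed literals are false under $\alpha$. I would state this as a short auxiliary observation. Its one subtle case is when $\tau$ falsifies a clause: the empty clause remains, $\phi[\tau]$ is unsatisfiable, and this agrees with $\alpha\not\models\phi$. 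I would also check the same reasoning for non-CNF conjunctive formulas, such as affine constraints, where substitution is likewise sound.

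For the forward direction, suppose $\alpha\models\phi$ and set $\tau:=\alpha|_X$. Then $\alpha\models U_\tau$ by the definition of $U_\tau$, and $\alpha\models\phi[\tau]$ by the observation. Hence the disjunct for $\tau$ is satisfied, and so is $\disj(\phi,X)$.

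For the converse, suppose $\alpha$ satisfies the disjunct $\phi[\tau]\land U_\tau$ for some $\tau$. Because $\alpha\models U_\tau$, it follows that $\alpha|_X=\tau$, so $\alpha$ extends $\tau$. The observation then gives $\alpha\models\phi$.

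Since the two formulas have identical models and the quantifier prefix is unchanged, they are equisatisfiable, and in particular $\pref.\phi\Leftrightarrow\pref.\disj(\phi,X)$. The only real obstacle is stating the substitution observation cleanly in the paper's conventions for $\phi[\tau]$, especially the empty-clause and deleted-variable edge cases. Everything else is immediate.
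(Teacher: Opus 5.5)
Your proof is correct and follows essentially the same route as the paper: in one direction you restrict a model $\alpha$ of $\phi$ to $X$ to select the disjunct $\phi[\tau]\land U_\tau$, and in the other you use $U_\tau$ to force $\alpha|_X=\tau$ and then lift satisfaction of $\phi[\tau]$ back to $\phi$. Your explicit substitution observation (if $\alpha$ extends $\tau$, then $\alpha\models\phi$ iff $\alpha\models\phi[\tau]$, including for non-clausal atoms) cleanly states what the paper uses implicitly when it calls $\phi[\tau]$ a ``subformula'' of $\phi$.
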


Our reduction, given in the next lemma, works for all
base classes that are closed under reduction by partial assignments (which
we assume implicitly since it holds for all classes considered in this paper) where 
we may freely add unit clauses while staying inside the class.
\begin{restatable}{lemma}{lemfrombackdoortodisjunct}\label{lem:from-backdoor-to-disjunct}
  Let $\CCC$ be a class of $\QBF$ formulas 
  that allows adding unit clauses.
  Let $\Phi$ be a $\QBF$ formula with a backdoor
  of size $k$ to $\CCC$. Then, in time $\bigoh(2^k||\Phi||)$ we can
  compute a $2^k$-\textsc{Disjunct}-$\QBF(\CCC)$ formula that is
  equisatisfiable with $\Phi$.
\end{restatable}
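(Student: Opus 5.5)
The plan is to take the backdoor $B$ of size $k$ to $\CCC$ as $X$ and to output $\Phi' = \pref.\disj(\phi,B)$, where $\Phi=\pref.\phi$ and the prefix $\pref$ is left unchanged. Correctness then follows from \Cref{lem:equisat}: $\phi$ and $\disj(\phi,B)$ are equisatisfiable, meaning they have exactly the same satisfying assignments over $V(\phi)$. Since the prefix is identical, the two quantified formulas are equisatisfiable in the paper's sense, and in particular $\Phi \Leftrightarrow \Phi'$. If the backdoor is not given, I would first note that for the classes considered it can be computed by the detection algorithms cited in the introduction. For the lemma itself I would assume $B$ is given, or else absorb detection into the stated running time as the paper does.

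The main work is verifying that each disjunct $\phi[\tau]\land U_\tau$ lies in $\CCC$, so that $\Phi'$ is a genuine $2^k$-\textsc{Disjunct}-$\QBF(\CCC)$ formula. There are exactly $2^{|B|}\le 2^k$ assignments $\tau\colon B\to\{0,1\}$. Since $B$ is a (strong) backdoor, $\Phi[\tau]$ belongs to $\CCC$; equivalently, $\phi[\tau]$ with the induced prefix lies in $\CCC$. By the hypothesis that $\CCC$ allows adding unit clauses, conjoining the unit clauses $U_\tau$ keeps the disjunct in $\CCC$. One subtlety needs care. The variables of $B$ no longer occur in $\phi[\tau]$ but reappear through $U_\tau$, so every disjunct must be read under the full shared prefix $\pref$ rather than the reduced one. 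I would argue that adding unused variables to the prefix does not affect membership in $\CCC$; this is immediate for the syntactic classes \TCNF, \HORN and \AFF. A second point is that if $\phi[\tau]$ contains the empty clause, the disjunct is simply unsatisfiable and can either be dropped or kept as is.

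For the running time, computing $\phi[\tau]$ takes $\bigoh(\|\Phi\|)$ per assignment: remove satisfied clauses and delete falsified literals. Writing $U_\tau$ costs $\bigoh(k)$, so the total is $\bigoh(2^k\|\Phi\|)$, using $k\le\|\Phi\|$.

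I expect the main obstacle to be the bookkeeping of the prefix semantics, namely making precise that evaluating $\pref.\bigvee_\tau(\phi[\tau]\land U_\tau)$ under the shared prefix $\pref$ agrees with evaluating $\pref.\phi$. This is exactly what the ``same satisfying assignments'' form of \Cref{lem:equisat} gives. The quantifier game is played over assignments to $V(\phi)$, and the set of accepting leaves is identical for the two formulas, so their truth values coincide by a straightforward induction on the prefix.
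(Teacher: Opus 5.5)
Your proposal is correct and matches the paper's argument. Output $\pref.\disj(\phi,B)$ with the prefix unchanged. Each disjunct $\phi[\tau]\land U_\tau$ lies in $\CCC$ because $B$ is a backdoor and $\CCC$ allows adding unit clauses, and equisatisfiability follows directly from \Cref{lem:equisat}.

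Your remark that the variables of $B$ are absent from $\phi[\tau]$ is a point the paper passes over silently. The definition of allowing unit clauses only covers $x\in V(\Phi)$, so the disjuncts have to be read under the full prefix, and your treatment of this is sound.
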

Interestingly, there is also a simple reduction in the reverse direction. 
\begin{restatable}{lemma}{lemfromdisjuncttobackdoor}\label{lem:from-disjunct-to-backdoor}
  Let $\CCC$ be a class of $\QBF$ formulas.
  There is a linear-time reduction that takes
  an instance of $k$-\textsc{Disjunct}-$\QBF(\CCC)$
  and outputs
  an equivalent instance of $\QBF$ with
  a backdoor of size $\lceil \log k\rceil$ to $\CCC$.
\end{restatable}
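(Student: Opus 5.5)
We reduce a $k$-\textsc{Disjunct}-$\QBF(\CCC)$ instance $\Psi = \pref.\bigvee_{i\in[k]}\phi_i$ to a single $\QBF$ formula by introducing $m=\lceil \log k\rceil$ fresh \emph{selector} variables $s_1,\dots,s_m$, quantified existentially at the innermost end of the prefix. Each disjunct is assigned a distinct binary string $b^i\in\{0,1\}^m$ for $i\in[k]$; since $2^m\ge k$ this is possible. Let $\lambda_i$ be the set of literals over $s_1,\dots,s_m$ that is falsified exactly by the assignment $b^i$, i.e., $\lambda_i$ contains $\lnot s_j$ if $b^i_j=1$ and $s_j$ otherwise. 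The assignments not equal to any $b^i$ are excluded by clauses $\bigvee_{j} \ell_j$ (one per unused string, each falsified only by that string). The new matrix is
\[
\phi' = \bigwedge_{i\in[k]}\ \bigwedge_{c\in\phi_i} (c\vee \lambda_i) \ \wedge\ \bigwedge_{b\notin\{b^1,\dots,b^k\}} (\text{clause excluding } b),
\]
and the output is $\Phi'=\pref\,\exists s_1\cdots\exists s_m.\phi'$. This is linear size provided $k$ is treated as bounded relative to $\|\Psi\|$; the number of exclusion clauses is $2^m-k<k$, each of size $m$, and every clause of $\phi_i$ grows by $m\le \log k+1$ literals, which I would account for by noting $k\le \|\Psi\|$ and treating the factor as the standard overhead (or, if strict linearity is needed, by replacing unused strings by duplicating some disjunct, e.g.\ setting $\phi_{k+1}=\dots=\phi_{2^m}=\phi_1$, which removes the exclusion clauses entirely — I would take this cleaner route).

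Equivalence: fix any assignment $\sigma$ to the variables of $\pref$. Since the selectors are innermost and existential, $\Phi'$ under $\sigma$ holds iff some $b\in\{0,1\}^m$ makes $\phi'[\sigma\cup b]$ true. Setting $s=b^i$ satisfies every clause $c\vee\lambda_{i'}$ for $i'\ne i$ (as $\lambda_{i'}$ has a literal true under $b^i$), and reduces clauses $c\vee\lambda_i$ to $c$; hence $\phi'[\sigma\cup b^i]\equiv\phi_i[\sigma]$. Thus $\exists s.\phi'[\sigma]\equiv\bigvee_i\phi_i[\sigma]$ pointwise, so the matrices agree on every assignment to the original variables, and the quantifier prefix evaluation gives $\Psi\Leftrightarrow\Phi'$.

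Backdoor: $B=\{s_1,\dots,s_m\}$ has size $\lceil\log k\rceil$. For any $\tau\colon B\to\{0,1\}$, $\tau=b^i$ for some $i$ (after padding), and by the computation above $\Phi'[\tau]=\pref.\phi_i$, possibly with some clauses deleted. The main point to check is that this lies in $\CCC$; this needs $\CCC$ closed under removing clauses (and the reduced formula is exactly $\phi_i$ with the same prefix, so for a general class I would argue $\Phi'[\tau]$ literally equals $\pref.\phi_i$, because all clauses from other disjuncts are satisfied and vanish, while those of $\phi_i$ lose only their false selector literals). This identification is the only delicate step; everything else is routine.
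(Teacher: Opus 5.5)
Your proposal is correct and is essentially the paper's proof. Your clauses $c \vee \lambda_i$ are exactly the paper's atoms $R \lor T_i(z_1,\dots,z_q)$ with $T_i=\{0,1\}^q\setminus\{\mathbf{t}_i\}$; the paper likewise pads to a power of two by duplicating $\phi_1$, and likewise observes that fixing the selectors to $\mathbf{t}_i$ leaves exactly $\phi_i$ under the original prefix. The only differences are cosmetic: you argue equivalence pointwise, via $\exists Z.\phi' \equiv \bigvee_i \phi_i$ for the innermost existential block, whereas the paper uses winning strategies; and your caveat about the $O(\log k)$ growth per atom and the padding overhead applies equally to the paper's ``linear-time'' claim.
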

By the lemmas above, analyzing $\QBF$ formulas having a backdoor to a
class $\CCC$ of $\QBF$ formulas (with parameter backdoor size) turns
out to be equivalent (for all our intents and
purposes) to the analysis of 
$k$-\textsc{Disjunct}-$\QBF(\CCC)$ formulas (with parameter $k$)
and we will make strong use of
this equivalence for both our algorithmic lower bound and upper bound
results by focusing on the analysis of the structurally simpler
$k$-\textsc{Disjunct}-$\QBF(\CCC)$ formulas.
While this is a
simple connection, it is also crucial for our analysis of QBF formulas
having a backdoor (and $\QBF$ formulas in general) and often makes the analysis
much simpler. For instance, we will
often make use of trading the number of disjuncts for the number of
variables and/or the size of the disjuncts in a delicate manner that
would be much more difficult to do on the original formula.

\subsubsection{Squishing and Lower Bounds} \label{sec:intro_squishing}

We will show that evaluating QBF formulas with small backdoors to 
$\TCNF$, $\HORN$, and $\AFF$ is para-PSpace-hard. We start with an immediate observation that \textsc{Disjunct}-$\QBF(\NUMCNF{1})$ is hard because it is essentially $\QBF(\DNF)$.

\begin{proposition}\label[proposition]{prop:dnf-to-disjunct-intro}
  \textsc{Disjunct}-$\QBF(\NUMCNF{1})$ is PSpace-complete.
\end{proposition}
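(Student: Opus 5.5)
Membership and hardness are handled separately; the number of disjuncts is part of the input and not bounded.

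\emph{Membership in PSpace.} An instance of \textsc{Disjunct}-$\QBF(\NUMCNF{1})$ has the form $\pref.\bigvee_{i} \phi_i$, where each $\phi_i$ is a conjunction of unit clauses, i.e.\ a conjunction of literals. So the matrix is simply a DNF formula. We evaluate it with the standard recursive procedure for quantified Boolean formulas with an arbitrary propositional matrix. The procedure branches on the outermost quantified variable, combines the two results with $\land$ for $\forall$ and $\lor$ for $\exists$, and at the leaves checks whether some disjunct is satisfied by the full assignment. This uses polynomial space.

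\emph{PSpace-hardness.} We reduce from $\QBF$ with a 3-CNF matrix, say $\Phi=\pref.\bigwedge_{j=1}^m C_j$ over variables $x_1,\dots,x_n$. There are two natural routes.

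The first route is to negate. The formula $\neg\Phi$ is equivalent to $\overline{\pref}.\bigvee_j \neg C_j$, where $\overline{\pref}$ is $\pref$ with every quantifier flipped. Each $\neg C_j$ is a conjunction of (at most three) literals, i.e.\ a \NUMCNF{1} formula. Hence $\Phi\mapsto \overline{\pref}.\bigvee_j\neg C_j$ is a polynomial-time reduction from the complement of $\QBF$. Since PSpace is closed under complement and the complement of $\QBF$ is PSpace-hard, this already establishes hardness.

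The second route avoids relying on closure under complement. We apply the Tseitin-style trick directly with a universal gadget. Append innermost universal variables and an existential variable, and write the CNF as a DNF using fresh existential variables $z_1,\dots,z_m$. The matrix then becomes $\bigvee_j(\text{$z_j$ chooses a literal of }C_j)$. However, expressing that each clause is satisfied still needs conjunction across $j$, which is exactly what complementation sidesteps.

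I will therefore present the first route: negation together with the closure of PSpace under complement. This makes the argument essentially immediate, consistent with the paper calling the statement an ``immediate observation''.

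The only step needing care is checking that the negation is correctly pushed through the prefix. The identities $\neg\forall x.\psi\equiv\exists x.\neg\psi$ and $\neg\exists x.\psi\equiv\forall x.\neg\psi$ are applied inductively over the prefix. Then De Morgan turns the CNF matrix into the DNF $\bigvee_j\neg C_j$, each of whose disjuncts is a conjunction of unit clauses, as required.
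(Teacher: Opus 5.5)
Your proof is correct and is essentially the paper's argument (given in the proof of \Cref{lem:dnf-to-disjunct} for the $\Gamma_{\AFF}$ version): flip every quantifier and negate each 3-CNF clause into a conjunction of literals, so that $\Phi \Leftrightarrow \neg\bar{\Phi}$, with PSpace-hardness then following from closure of PSpace under complement, a step the paper leaves implicit. Your explicit membership argument is a harmless addition, and the abandoned ``second route'' paragraph can simply be deleted.
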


The formulas which are hard by \Cref{prop:dnf-to-disjunct-intro}
have many disjuncts, but each of them is of constant size.
The main idea of ``squishing'' is to reduce the number
of disjuncts to a constant at the expense of
increasing the sizes of disjuncts.
Imagine the number of quantifier alternations 
as the ``height'' of the instance and
the number of disjuncts as its ``width'' --
we are ``squishing'' the instance,
making it taller and thinner.
In the process, new variables are introduced,
and the quantifier depth increases. 

As an illustration, 
consider a $9$-\textsc{Disjunct}-$\QBF(\THORN)$
formula $\Phi$ with quantifier prefix $\cQ$.
Let us label the disjuncts of $\Phi$
by $\phi_{1,1}, \dots, \phi_{3,3}$
and view it as a $3$-by-$3$ matrix 
\begin{equation*}
  \begin{array}{cc>{\columncolor{green!20}}c}
    \phi_{1,1} & \phi_{1,2} & \phi_{1,3} \\
    \rowcolor{red!20}
    \phi_{2,1} & \phi_{2,2} & \cellcolor{yellow!40!red!30!green!30} \phi_{2,3} \\
    \phi_{3,1} & \phi_{3,2} & \phi_{3,3} \\
  \end{array}
\end{equation*}
We will now create an equivalent formula $\Phi'$ with $6$ disjuncts
$\psi^R_1,\psi^R_2,\psi^R_3$ and $\psi^C_1,\psi^C_2,\psi^C_3$.
Intuitively, $\psi^R_i$ corresponds to the $i$-th row
of the matrix above and $\psi^C_j$ corresponds to the $j$-th column.
The formulas will be constructed so that 
the existential player can satisfy $\phi_{i,j}$ in $\Phi$
if and only if they can satisfy $\psi^R_i$ and $\psi^C_j$ in $\Phi'$.
In the example above we illustrate this by highlighting
$\psi^R_2$ and $\psi^C_3$: if both are satisfied, then
$\phi_{2,3}$ (highlighted with both colors) is satisfied as well.

To achieve this, we ``split'' each formula $\phi_{i,j}$ into two ``parts''
$\phi^R_{i,j}$ and $\phi^C_{i,j}$.
We will use shorthand notation $(a = b)$ for $(a \to b) \land (b \to a)$.
Let us illustrate splitting on an arbitrary formula $\phi$ from $\THORN$.
Let $C_1,\dots,C_m$ be the clauses in $\phi$.
From every clause $C_k$ in $\phi$
choose an arbitrary literal, say $\ell_k$.
Introduce a new variable $y_k$.
Create $\phi^R$ from $\phi$
by replacing $\ell_k$ in clause $C_k$ with $y_k$.
Then let $\phi^C = \bigwedge_{k=1}^{m} (\ell_k = y_k)$.
Consider the quantified formula
\[ \cQ \exists y_1,\dots,y_m. \phi^R \land \phi^C \]
Note that the existential player can easily satisfy
$\phi^R$ or $\phi^C$
because the (existential) variables $y_1,\dots,y_\ell$
are at the end of the prefix and exactly one of them 
is present in every clause.
Moreover, if the existential player can satisfy both $\phi^R$ and $\phi^C$,
then he can reuse the same strategy to win on $\cQ. \phi$.
One can think of this step as ``reverse''
quantifier elimination on the last existential block.

Apply ``splitting'' to all disjuncts $\phi_{i,j}$ of $\Phi$,
and construct the disjuncts of $\Phi'$ as follows:
let $\psi^R_i$ contain all clauses 
of $\phi^R_{i,1},\dots,\phi^R_{i,3}$,
and 
let $\psi^C_j$ contain all clauses
of $\phi^C_{1,j},\dots,\phi^C_{3,j}$.
Let $Y$ be the set of all variables 
introduced to split $\phi_{i,j}$ for all $i,j$.
Consider
\[ 
  \cQ \exists Y. (\psi^R_{1} \lor \psi^R_{2} \lor \psi^R_{3}) \land
  (\psi^C_{1} \lor \psi^C_{2} \lor \psi^C_{3}).
\]
If the existential player has a strategy that satisfies
some row formula $\psi^R_i$ and 
some column formula $\psi^C_j$,
then they can satisfy both $\phi^R_{i,j}$ and $\phi^C_{i,j}$,
and thus $\phi_{i,j}$.
To get back to disjunctive form while forcing the existential player to satisfy
a row formula and a column formula
we introduce a new universal variable $w$,
and modify the formula as follows:
\[
  \Phi' = \cQ \exists Y \forall w. 
  (\psi^R_{1} \land w) \lor (\psi^R_{2} \land w) \lor (\psi^R_{3} \land w) \lor
  (\psi^C_{1} \land \lnot w) \lor (\psi^C_{2} \land \lnot w) \lor (\psi^C_{3} \land \lnot w).
\]
Since $w$ is universal and ultimate,
the only way for the existential player to win is to satisfy
one disjunct with $w$ and one disjunct with $\lnot w$,
i.e., a row formula and a column formula.

This idea naturally generalizes to reducing formulas 
with $k^2$ disjuncts to $2k$ disjuncts.
It is sufficient to apply it roughly $\log \log m$
times, where $m$ is the number of disjuncts in the original formula,
to reduce to a constant number of disjuncts. 
While the number of clauses in the formula doubles every time we squish, 
we need only $\bigoh(\log \log m)$ steps
to reduce from $m$ disjuncts to an arbitrary constant, 
so the size of the resulting instance is within
a polynomial factor of the original.
Using a more delicate approach %With further technical complications\todo{Maybe use another word;
we design an even faster ``squishing procedure'' that 
in a single iteration
reduces a formula with $\binom{k}{2^p}$ disjuncts,
where $p$ can be any integer with $2^p \leq k$,
to a formula with $k$ disjuncts,
while keeping a polynomial increase in instance size. 
We then use this to show
that any formula with $6 = \binom{4}{2}$ disjuncts (and therefore any formula)
can be reduced to an equivalent formula with only $4$ disjuncts. 
Note that the source of hardness is a disjunction of formulas in \NUMCNF{1},
and the only clauses introduced by the squishing procedure are
equalities, which can be expressed in \THORN.
Thus, the result of squishing is in \THORN, and we conclude the following.

\begin{theorem}[Follows also from \Cref{cor:bounded_arity_aff}]
  $4$-\textsc{Disjunct}-$\QBF(\THORN)$ is PSpace-hard.
\end{theorem}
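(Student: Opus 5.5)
Our plan is to reduce from \textsc{Disjunct}-$\QBF(\NUMCNF{1})$, which is PSpace-complete by \Cref{prop:dnf-to-disjunct-intro}, by repeatedly applying the squishing step sketched above. Given $\Phi=\cQ.\bigvee_{i\in[m]}\phi_i$ with each $\phi_i$ a conjunction of unit clauses (hence in $\THORN$), we pad with unsatisfiable disjuncts, e.g.\ $(x)\land(\lnot x)$ for a fresh existential variable, so that $m$ has the form needed for the squishing step. We then squish iteratively until only $4$ disjuncts remain. Throughout, we maintain the invariant that every disjunct is a $\THORN$ formula. This holds because splitting replaces one literal of a clause by a fresh variable, which leaves the clause at most binary and preserves the number of positive literals when $y_k$ replaces a positive literal (and when it replaces a negative literal $\lnot u$, we instead take $\lnot y_k$ with $(u=y_k)$). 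Every new constraint is an equality $(a\to b)\land(b\to a)$, i.e., two binary Horn clauses. Finally, the unit clauses $w$ and $\lnot w$ are Horn.

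The basic squishing step goes from $r^2$ disjuncts to $2r$. We index the disjuncts as $\phi_{i,j}$ and split each one into $\phi^R_{i,j}$ and $\phi^C_{i,j}$ using fresh existential variables $Y$ placed at the end of the prefix. We form $\psi^R_i=\bigwedge_j\phi^R_{i,j}$ and $\psi^C_j=\bigwedge_i\phi^C_{i,j}$, and output $\Phi'=\cQ\exists Y\forall w.\bigvee_i(\psi^R_i\land w)\lor\bigvee_j(\psi^C_j\land\lnot w)$.

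The main obstacle is proving $\Phi\Leftrightarrow\Phi'$. For the forward direction, fix a winning strategy on $\cQ$ for $\Phi$. At the end of $\cQ$ some $\phi_{i,j}$ is satisfied. The existential player then sets $y_k$ equal to its replaced literal $\ell_k$ in \emph{every} split clause of every disjunct. This makes all $\phi^C_{i',j'}$ true, so every $\psi^C_{j}$ holds, while each $\phi^R_{i',j'}$ is equivalent to $\phi_{i',j'}$ under this choice. In particular $\psi^R_i$ holds provided the whole row is satisfied, which is not automatic. To fix this, we set $y_k$ so as to satisfy the clause of $\phi^R_{i',j'}$ for $j'\neq j$, and respect the equalities only in row $i$, column $j$. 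This is consistent because different $(i',j')$ use disjoint fresh $y$'s. Under this choice, row $i$ is fully satisfied because $\phi_{i,j}$ is true and the other clauses are satisfied freely through their $y$, and column $j$ is satisfied because all its other cells have their equalities, which hold whenever $y$ is set equal to the literal there. So $\psi^R_i$ and $\psi^C_j$ both hold and $w$ is irrelevant.

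For the converse, suppose the existential player wins $\Phi'$. Then after $\cQ\exists Y$, both $w=0$ and $w=1$ must be answered, so some $\psi^R_i$ and some $\psi^C_j$ hold simultaneously under the same assignment. Hence $\phi^R_{i,j}\land\phi^C_{i,j}$ holds. Substituting $y_k=\ell_k$ then shows that $\phi_{i,j}$ is satisfied, and restricting the strategy to $\cQ$ wins $\Phi$.

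For size, each step at most triples the number of clauses (a split adds two clauses per clause) and adds two quantifier blocks. Iterating from $m$ disjuncts down to a constant takes $O(\log\log m)$ steps, giving polynomial blow-up. Plain $r^2\to 2r$ squishing stalls at constants: $9\to6$, but $4\to4$. To finish at exactly $4$, we use the generalized step mentioned in the text. Disjuncts are indexed by $2$-subsets $\{a,b\}\subseteq[4]$. Each $\phi_S$ is split into $|S|=2$ parts, assigned to the groups $\psi_a$ and $\psi_b$, and a universal selector is used that forces the existential player to satisfy two distinct groups simultaneously. This can be done with two universal bits choosing among four, where $\psi_a$ is guarded by literals making it cover a ``pair'' of selector values, and the guard design is the delicate part. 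Correctness of this step follows the same two-direction argument as above. It yields $6\to4$, completing the reduction to $4$-\textsc{Disjunct}-$\QBF(\THORN)$.
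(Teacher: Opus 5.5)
Your basic $r^2\to 2r$ step is sound. The \THORN{} invariant, the row/column forward argument (satisfying the off-cell R-clauses through their own fresh $y$'s), the converse, and the $O(\log\log m)$ iteration with polylogarithmic blow-up all check out. The genuine gap is the final step, the only place where the number $4$ is reached. As you observe, row/column squishing stalls above $4$, and you leave the $6\to 4$ gadget unspecified (``the guard design is the delicate part'').

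Moreover, the design you indicate cannot work. Suppose the four groups $\psi_1,\dots,\psi_4$ carry fixed guards $g_1,\dots,g_4$, each a conjunction of literals over universal selector bits $W$ placed last. The forward direction needs $\forall W.(g_a\lor g_b)$ for all $a\neq b$, so the sets of selector values falsifying the $g_a$ must be pairwise disjoint. The converse needs each of these sets nonempty, since otherwise a single group wins alone. But a nontrivial conjunction containing a literal $\ell$ is falsified on the whole half-cube $\{\ell=0\}$, and no three half-cubes are pairwise disjoint. Concretely, with guards $w_1,\lnot w_1,w_2,\lnot w_2$ only the pairs $\{1,2\}$ and $\{3,4\}$ win, so a play satisfying only $\phi_{\{1,3\}}$ loses at $w_1=w_2=0$. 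A similar counting argument also rules out arbitrary \THORN{} guards over fresh universal bits alone.

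The missing idea, used in the paper's \Cref{lem:squish} with $k=4$ and $p=1$, is to let the \emph{existential} player choose each group's guard. After $\exists Y$, add existential $z_1,\dots,z_4$, then a single universal $w$, and conjoin $(z_s=w)$, i.e.\ $(\lnot z_s\lor w)\land(z_s\lor\lnot w)$, to $\psi_s$. Then $\exists z\,\forall w.\bigvee_{t\in T}(z_t=w)$ is true iff $|T|\geq 2$: take $a\neq b$ in $T$ and set $z_a=0$, $z_b=1$. Split $\phi_S$ for $S=\{a,b\}$ into two pieces placed in $\psi_a$ and $\psi_b$. The converse is then immediate. In the forward direction, each $\phi_{S'}$ with $S'\neq S$ has at most one piece in $\psi_a\land\psi_b$, and it can be satisfied through its own fresh variables exactly as in your row/column argument.

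The paper in fact uses this existential-guard gadget with bit-vectors ($\mathbf z_s=\mathbf w$ over $p$ bits) for every squishing step. Your grid iteration is a fine alternative for the bulk of the reduction, but the final step needs the existentially chosen guards.
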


Another reason for introducing the fast squishing procedure (reducing
from $\binom{k}{2^p}$ disjuncts to $k$ in one iteration) is that it will
later allow us to obtain a lower bound (using a combination of
the sparsification lemma of~\cite{impagliazzo_which_2001} and the
Exponential Time Hypothesis~\cite{impagliazzo2001complexity}) for \textsc{Disjunct}-$\THORN$ parameterized by 
the number of disjuncts and the number of quantifier alternations.

\subsubsection{Elimination of Guarded Universal Sets}
\label{ssec:unmixed-overview}

Here, we will provide an overview for the proof of
\Cref{th:new_unmixed_qbf} (restated below).

\thnewunmixedqbf*

We start by reformulating \Cref{th:new_unmixed_qbf} into a question
about \textsc{Disjunct}\hy$\QBF$. In particular, let $\Phi'$ be a
$\QBF$ formula and let $Y \subseteq V_{\forall}(\Phi')$. 
After applying
\Cref{lem:from-backdoor-to-disjunct} to $\Phi'$ for the set $\delta(Y)$ of variables, we
obtain an equisatisfiable $k$-\textsc{Disjunct}\hy$\QBF$ $\Phi$ on
the same variables as $\Phi'$ for some $k$ that is bounded
by a function of $|\delta(Y)|$ such that the set $Y$ is \emph{$\Phi$-closed},
i.e., $Y \subseteq V_\forall(\Phi)$ and every clause $c$ in $C(\Phi,Y)$
has all its variables in $Y$, where $C(\Phi,Y)$ is the set of 
clauses (i.e., a clause occurring in some disjunct of $\Phi$) that contains at least one variable
from $Y$.
Therefore, to
show \Cref{th:new_unmixed_qbf}, it is sufficient to show the following theorem.

\begin{restatable}{theorem}{thnewunmixedqbfdisj}
\label{th:new_unmixed_disj}
  Let $\Phi$ be a $k$-\textsc{Disjunct-QBF} and let $Y\subseteq
  V_\forall(\Phi)$ be $\Phi$-closed.
  Then, there is an FPT-algorithm for parameter $k+\ariti(\Phi,Y)$
  that computes a partial assignment $\beta$ of $Y$
  such that 
  $\Phi \Leftrightarrow \Phi[\beta]$ and
  $|Y \setminus V(\beta)|$ is bounded by a function of $k+\ariti(\Phi,Y)$.
\end{restatable}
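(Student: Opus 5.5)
Since $Y$ is $\Phi$-closed, every clause of every disjunct $\phi_i$ is either a \emph{$Y$-clause} (all its variables in $Y$) or a clause with no variable of $Y$. So each disjunct splits as $\phi_i = \alpha_i \land \gamma_i$, where $\alpha_i$ uses only variables of $Y$ and $\gamma_i$ uses none. The plan is to treat $Y$ as a family of universal ``probes'' that the universal player uses only to falsify some of the $\alpha_i$. Which assignments to $Y$ matter is then captured by a bounded amount of data. Concretely, for an assignment $\sigma$ of $Y$, consider the set $S(\sigma)=\{i \in [k] : \sigma \models \alpha_i\}\subseteq [k]$ of disjuncts that survive. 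The universal player, when setting the variables of $Y$ in prefix order, wants to reach a small surviving set, interleaved with the existential moves. The key observation is that only the sequence of ``which disjuncts are already killed'' matters. Since killing is monotone, this sequence has at most $k$ change points.

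\textbf{Step 1 (sunflower/kernel-type reduction for one disjunct set).} For each subset $I\subseteq[k]$, the question ``can the universal player kill exactly the disjuncts in $I$?'' is decided by finding an assignment of $Y$ falsifying some clause of each $\alpha_i$, $i\in I$, and satisfying all $\alpha_j$, $j\notin I$. A falsified clause of arity at most $\SA(\Phi,Y)=:d$ is witnessed by at most $d$ variables. Hence killing a set $I$ needs a witness of at most $kd$ variables, plus consistency with $\bigwedge_{j\notin I}\alpha_j$. I would use a sunflower/representative-sets argument on the family of candidate witness sets, ordered by quantifier position. Among all minimal witnesses, we keep a bounded number $f(k,d)$ that are ``representative'' with respect to prefix positions, i.e.\ keep those whose variables appear as late as possible relative to each existential block boundary, since later universal moves are always at least as good for the universal player. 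This marks a set $M\subseteq Y$ of size bounded by a function of $k+d$.

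\textbf{Step 2 (fix the rest).} Define $\beta$ on $Y\setminus M$ so that it satisfies every $Y$-clause it can satisfy without affecting the marked witnesses. A natural choice is to assign each unmarked variable the value that is ``neutral'' for $\bigwedge_j \alpha_j$ restricted to the witnesses, computed via the representative structure. Then prove $\Phi\Leftrightarrow\Phi[\beta]$. One direction is immediate, because fixing universal variables only helps the existential player's opponent trivially: if $\Phi$ is true then $\Phi[\beta]$ is true. The converse needs an exchange argument. Given a universal strategy in $\Phi$ that wins, we replace each use of unmarked variables by a marked representative witness occurring no earlier in the prefix. This gives a winning universal strategy in $\Phi[\beta]$; that the existential strategy transfers follows because the $\gamma_i$ never see $Y$.

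\textbf{Main obstacle.} The exchange argument in Step 2 is the hard part, because winning universal strategies are adaptive. The witness used depends on earlier existential moves, which only affect the $\gamma_i$, so universal choices in $Y$ depend on existential history only through which $\alpha$-kill set is being targeted. I would first try to normalize strategies so that the universal player's $Y$-moves are determined by a bounded ``plan'' (a chain of kill sets with at most $k$ steps). Representativeness then needs to hold per plan and per position in the prefix, giving $f(k,d)$ marked variables overall. If the sunflower-style representative selection fails to be monotone in prefix position, I would fall back to a brute-force compression: bucket $Y$ by (quantifier block relative to the at most $k$ relevant block boundaries, clause-type pattern) and keep $kd$ per bucket.
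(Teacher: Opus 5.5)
Your proposal has a genuine gap. The only nontrivial direction is that a universal winning strategy for $\Phi$ yields one for $\Phi[\beta]$, and that is exactly the exchange argument you defer. The objects it is meant to justify are also not well defined.

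\begin{itemize}
\item The marked set $M$ is specified only as witnesses ``representative with respect to prefix positions \dots relative to each existential block boundary''. The parameter is $k+\Delta(\Phi,Y)$ alone, so the number of quantifier blocks is unbounded. Any selection made per boundary or per position therefore yields an unbounded $M$.
\item The bucketing fallback does not repair this. Your bound of at most $k$ change points of the kill set holds along a single play. A universal strategy is a tree whose plays change their kill sets at different positions, so there is no fixed family of $k$ ``relevant'' boundaries to bucket by.
\item Choosing $\beta$ to \emph{satisfy} the unmarked $Y$-clauses takes options away from the universal player. Proving that this is harmless is the whole content of the theorem, and it does not follow from the representative structure you describe.
\item In Step~1, requiring the assignment that kills $I$ to satisfy $\bigwedge_{j\notin I}\alpha_j$ is unnecessary. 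That condition is also not witnessed by a bounded set of variables.
\end{itemize}

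The missing idea is that, for $\Phi$-closed $Y$, prefix positions are irrelevant. Every clause meeting $Y$ lies entirely inside $Y$, so you may overwrite some $Y$-moves of a universal strategy tree by \emph{constants}. All other clauses evaluate identically at each leaf, so the modified strategy still wins as long as every disjunct previously killed through $Y$ is still killed. The paper uses this as follows.

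\begin{itemize}
\item For disjuncts whose $Y$-clauses are all unit, take a maximum matching between $Y$ and these disjuncts (Lemma~\ref{lem:unit_clause}). Every disjunct adjacent to an unmatched variable, or to a matched variable reachable from one by an alternating path, is matched into that reachable set. Fixing those reachable matched variables to constants that falsify their partners kills every disjunct the original strategy killed through any of these variables. Hence all but at most $k$ variables are redundant.
\item Disjuncts of higher $Y$-arity are then falsified outright by a greedy fixed assignment of the redundant variables (Lemma~\ref{lem:greedy_long}).
\item When the greedy step fails for some $\phi$, the variables it used together with the at most $k$ non-redundant ones form a hitting set $Y_\phi$ of size at most $k\Delta(\Phi,Y)$ for the $Y$-clauses of $\phi$. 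Then $\phi$ is replaced by $\disj(\phi,Y_\phi)$, which keeps the matrix equisatisfiable but strictly lowers the $Y$-arity of the new disjuncts. The paper also gives a linear-size variant.
\item A lexicographic count of disjuncts by arity bounds the number of rounds. Lemma~\ref{lem:beta_application} then gives $\Phi\Leftrightarrow\Phi[\beta]$ with $|Y\setminus V(\beta)|$ at most the final number of disjuncts.
\end{itemize}
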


We now illustrate the proof of \Cref{th:new_unmixed_disj} and start by
introducing the notion of redundancy.
Let $\Phi$ be a $k$-\textsc{Disjunct}-$\QBF(\CNF)$ formula.
We say that $\Phi' \subseteq$\footnote{Here, and in the following,
we will sometimes view a $k$-\textsc{Disjunct}-$\QBF(\CNF)$ formula
as a set of disjuncts and we also assume that the
quantifier prefix does not change after any set operation is
applied.}$\Phi$
is \emph{$\Phi$-redundant} if and
only if $\Phi \Leftrightarrow\Phi \setminus \Phi'$.
We say that a set $X$ of variables is \emph{$\Phi$-redundant} if and only if 
$\Phi \Leftrightarrow \pref'.\bigvee_{\phi \in \Phi} \phi \setminus C(\phi, X)$,
where $\pref'$ is the prefix of $\Phi$ after removing $X$.
The following lemma now shows that in order to obtain the assignment
$\beta$ from \Cref{th:new_unmixed_disj}, it is sufficient to find a subset $Y_{\mnu}\subseteq Y$ together with a subformula $\Phi_{\mnu}\subseteq
\Phi$ and an
assignment $\beta$ over $Y_{\mnu}$ such that $Y_{\mnu}$ is $\Phi\setminus \Phi_\mnu$-redundant
and $\beta$ falsifies every disjunct of $\Phi_\mnu$.
     
\begin{restatable}{lemma}{betaapplication}
\label{lem:beta_application}
     Let $\Phi$ be a $k$-\textsc{Disjunct-QBF} and let $Y \subseteq
     V_\forall(\Phi)$ be $\Phi$-closed. If $\beta$ is an assignment
     over $Y_{\mnu}\subseteq Y$ and $\Phi_{\mnu} \subseteq \Phi$ such that $Y_{\mnu}$ is $\Phi\setminus \Phi_\mnu$-redundant
     and $\beta$ falsifies every conjunctive formula from $\Phi_\mnu$,
     then $\Phi \Leftrightarrow \Phi[\beta]$.
\end{restatable}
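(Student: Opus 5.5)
We show the two directions of $\Phi \Leftrightarrow \Phi[\beta]$ separately, using throughout that $Y$ is $\Phi$-closed.

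\emph{Direction $\Phi \Rightarrow \Phi[\beta]$.} This holds for any partial assignment of universal variables. A winning existential strategy for $\Phi$ still wins when the universal player is restricted to playing according to $\beta$ on $Y_{\mnu}$. Such restricted plays are exactly the plays of $\Phi[\beta]$. Here $\Phi[\beta]$ has prefix $\pref$ with $Y_{\mnu}$ removed, and matrix $\bigvee_{\phi\in\Phi}\phi[\beta]$.

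\emph{Direction $\Phi[\beta] \Rightarrow \Phi$.} We chain three steps.

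\begin{enumerate}
\item \emph{Drop $\Phi_\mnu$.} Every disjunct of $\Phi_\mnu$ is falsified by $\beta$, so $\phi[\beta]$ contains the empty clause for each $\phi\in\Phi_\mnu$. Removing these disjuncts from a disjunction does not change its value. Hence $\Phi[\beta]$ is equivalent to $(\Phi\setminus\Phi_\mnu)[\beta]$.

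\item \emph{Pass to the redundant form.} Let $\Psi$ denote the formula $\pref'.\bigvee_{\phi\in\Phi\setminus\Phi_\mnu}\phi\setminus C(\phi,Y_\mnu)$, where $\pref'$ is $\pref$ with $Y_\mnu$ removed. By closedness, every clause of $C(\phi,Y_\mnu)$ has all its variables in $Y\supseteq Y_\mnu$. So $\beta$ either satisfies such a clause outright, or turns it into a clause over $Y\setminus Y_\mnu$ only. Every clause outside $C(\phi,Y_\mnu)$ is untouched by $\beta$. Hence each disjunct $\phi[\beta]$ equals $(\phi\setminus C(\phi,Y_\mnu))$ conjoined with some extra clauses, and it implies the corresponding disjunct of $\Psi$. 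Both formulas share the prefix $\pref'$, so truth of $(\Phi\setminus\Phi_\mnu)[\beta]$ implies truth of $\Psi$: the same existential strategy works, since the matrix is monotonically weakened. This is the only place closedness is needed. Without it, $\beta$ could simplify clauses containing other variables in ways not captured by deleting $C(\phi,Y_\mnu)$.

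\item \emph{Climb back to $\Phi$.} By the hypothesis that $Y_\mnu$ is $(\Phi\setminus\Phi_\mnu)$-redundant, $\Psi$ is equivalent to $\Phi\setminus\Phi_\mnu$. Adding the disjuncts of $\Phi_\mnu$ back only weakens the matrix under the same prefix $\pref$. So if $\Phi\setminus\Phi_\mnu$ is true, then $\Phi$ is true.
\end{enumerate}

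Combining the three steps gives $\Phi[\beta]\Rightarrow\Phi$.

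\emph{Expected obstacle.} The delicate point is step 2, in particular making the prefixes match. $\Psi$ has prefix $\pref'$ with $Y_\mnu$ removed, while $(\Phi\setminus\Phi_\mnu)[\beta]$ also has $Y_\mnu$ removed because it is assigned. The prefixes agree, so the monotonicity argument applies directly. I would state explicitly the general fact used in steps 2 and 3: if two QBFs have the same prefix and the matrix of the first implies the matrix of the second, then truth of the first implies truth of the second.
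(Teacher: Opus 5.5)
Your proof is correct and follows essentially the same chain as the paper. First, $\Phi\Rightarrow\Phi[\beta]$ because $Y_\mnu$ is universal. Then $\Phi[\beta]$ is equivalent to $(\Phi\setminus\Phi_\mnu)[\beta]$, which weakens to $\pref'.\bigvee_{\phi\in\Phi\setminus\Phi_\mnu}\phi\setminus C(\phi,Y_\mnu)$, and you finish with redundancy plus re-adding $\Phi_\mnu$.

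One minor correction: closedness is not actually needed in your step 2, and the paper's proof does not use it either. For any $\beta$ over $Y_\mnu$, the clauses of $\phi\setminus C(\phi,Y_\mnu)$ survive verbatim in $\phi[\beta]$. The residuals $c[\beta]$ of clauses in $C(\phi,Y_\mnu)$ are just extra conjuncts that can be dropped, whatever variables they still contain.
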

\begin{proof}
    First, observe that $\Phi$ implies $\Phi[\beta]$,  
    since we explicitly fix an assignment of universal variables via $\beta$.  
    Moreover, $\Phi[\beta]$ and $(\Phi \setminus \Phi_{\mnu})[\beta]$  
    are equisatisfiable, because $\beta$ falsifies all formulas in $\Phi_{\mnu}$.  
    For every $\phi \in \Phi \setminus \Phi_{\mnu}$,  
    all constraints in $\phi \setminus C(\phi, V(\beta))$  
    are preserved in $\phi[\beta]$.  
    Hence, $(\Phi \setminus \Phi_{\mnu})[\beta]$ implies  
    $\pref'.\bigvee_{\phi \in \Phi \setminus \Phi_{\mnu}}  
    \phi \setminus C(\phi, Y_{\mnu})$,  
    where $\pref'$ is the prefix of $\Phi$ with $Y_{\mnu}$ removed.  
    Finally, since $Y_{\mnu}$ is $\Phi \setminus \Phi_{\mnu}$-redundant,  
    it follows that  
    $\pref'.\bigvee_{\phi \in \Phi \setminus \Phi_{\mnu}}  
    \phi \setminus C(\phi, Y_{\mnu})$ implies $\Phi$.  
    Altogether, we conclude that $\Phi \Leftrightarrow \Phi[\beta]$.
\end{proof}

Let $\Phi$ be a $k$-\textsc{Disjunct-QBF} and  
$Y \subseteq V_\forall(\Phi)$ be a $\Phi$-closed set of universal variables.
Let $\Phi_{\mnu}$ be the subset of $\Phi$ containing 
all disjuncts $\phi$ with $\ariti(\phi,Y)>1$, i.e., $\Phi_{\mnu} =
\SB\phi \SM \phi \in \Phi \land \ariti(\phi,Y)>1\SE$.
We first generate $Y_\mnu \subseteq Y$ from the formula $\Phi_1 = \Phi
\setminus \Phi_\mnu$ as follows.
Note that $\ariti(\Phi_1,Y)\leq1$ and any disjunct in $\Phi_1$ that contains at least one clause over a
variable in $Y$ can be falsified using a single variable from $Y$.  
Moreover, if a formula $\phi \in \Phi_1$ contains at least $k$  
unit clauses over $Y$, then the universal player can always  
falsify $\phi$. This is because the universal player can use any unit
clause in $C(\phi,Y)$ to falsify $\phi$ and the only reason not to use a
particular variable is to
use it to falsify another of the at most $k-1$ disjuncts in
$\Phi_1\setminus \{\phi\}$.
This leads to a useful bound: there are at most $k^2$  
universal variables in $Y$ that are not $\Phi_1$-redundant.  
The following lemma further improves on this bound,  
showing that only $k$ universal variables in $Y$  
need to be considered. 
\begin{restatable}{lemma}{lemunitclauses}\label{lem:unit_clause}
  Let $\Phi$ be a $k$-\textsc{Disjunct-QBF}
  and let $Y \subseteq V_\forall(\Phi)$ be $\Phi$-closed such that
  $\ariti(\Phi,Y) \leq 1$.
  Then, we can compute a set $Y_{\mnu} \subseteq Y$
  such that $|Y \setminus Y_{\mnu}| \leq |\Phi|\leq k$
  and $Y_{\mnu}$ is $\Phi$-redundant in time $\bigoh(k^3 + \sizeOf{\Phi})$.
\end{restatable}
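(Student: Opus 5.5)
The plan is to keep, for each disjunct, essentially one ``witness'' variable of $Y$ that the universal player can use to falsify it, and to show that all other variables of $Y$ are $\Phi$-redundant. Since $Y$ is $\Phi$-closed and $\ariti(\Phi,Y)\le 1$, every clause of $C(\Phi,Y)$ is a unit clause $(y)$ or $(\lnot y)$ with $y\in Y$. So for every disjunct $\phi\in\Phi$ the relevant information is the set $L(\phi)$ of $Y$-literals occurring as unit clauses in $\phi$.

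\textbf{Preprocessing.} If $L(\phi)$ contains both $y$ and $\lnot y$, then $\phi$ is unsatisfiable and we can fix $y$ as a kept variable for $\phi$; this costs at most one variable per such disjunct. For each remaining $\phi$, setting any literal of $L(\phi)$ to false falsifies $\phi$, and the variables of $Y$ play no other role.

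\textbf{Choosing the kept set $R=Y\setminus Y_{\mnu}$.} Build the bipartite graph $H$ between the disjuncts with $L(\phi)\neq\emptyset$ and the variables of $Y$, with an edge $\phi y$ whenever $y$ occurs in $L(\phi)$. I would compute a maximum matching of $H$ that is lexicographically best with respect to the prefix, preferring innermost (latest quantified) variables. Since one side has at most $k$ vertices and we only need, for each disjunct, the few latest candidate variables, this takes $\bigoh(k^3)$ time after an $\bigoh(\sizeOf{\Phi})$ scan. Let $R$ be the set of matched variables; then $|R|\le|\Phi|\le k$ and $Y_{\mnu}=Y\setminus R$.

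\textbf{Correctness.} Deleting clauses only weakens the disjuncts, so $\Phi$ implies the formula $\Phi'$ obtained by removing $C(\phi,Y_{\mnu})$ from every disjunct. For the converse, take a winning existential strategy for $\Phi'$ and play it in $\Phi$, simply ignoring the values of $Y_{\mnu}$. We must show that whatever set $S$ of disjuncts the universal player kills in $\Phi$ using all of $Y$, some universal play in $\Phi'$, using only $R$ and respecting the prefix order, kills at least $S$ as well; then the existential winning play in $\Phi'$ satisfies a disjunct outside $S$, which is the required contradiction. Whenever the killed disjuncts admit a system of distinct representatives inside $R$, this follows by assigning each representative at its own quantification time. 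The representatives are universal and the existential player never sees $Y_{\mnu}$, so timing only helps.

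\textbf{Main obstacle.} The hard part is the exchange argument when the universal player in $\Phi$ kills several disjuncts with one shared variable, or uses an outer variable of $Y_{\mnu}$ when the matched variable in $R$ carries the wrong polarity. I would handle this with an augmenting-path argument: if the universal player kills a set $S$ with variables outside $R$, then the alternating paths starting from those variables would either enlarge the matching or replace a matched variable by a later one. Either outcome contradicts the choice of a maximum, prefix-lexicographically latest matching. Hence the killed disjuncts can be rerouted to distinct variables of $R$ that are quantified no earlier, which gives the desired universal play in $\Phi'$. I expect getting the polarity and ordering details of this exchange exactly right to be the delicate step. If it fails, the fallback is to allow up to $k$ kept variables per disjunct, chosen greedily from the innermost end, which gives the weaker bound of $k^2$.
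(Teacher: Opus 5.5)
Your kept set $R$ (the matched variables of a maximum matching) is the paper's choice, and in fact any maximum matching works, so the innermost-first refinement is unnecessary. The gap is in the correctness argument, which fails at two points.

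First, the transfer framework is false. A winning existential strategy $\sigma'$ for $\Phi'$, played in $\Phi$ while ignoring $Y_{\mnu}$, need not win. Take $\forall y\,\forall u\,\exists e.\ \bigl((e)\wedge(u)\wedge(y)\bigr)\vee(\lnot e)$ with $Y=\{y,u\}$. Your matching keeps $R=\{u\}$, so $\Phi'=\forall u\,\exists e.\ \bigl((e)\wedge(u)\bigr)\vee(\lnot e)$, and $\sigma'\colon e:=u$ wins $\Phi'$. Yet in $\Phi$ the universal plays $y=0$, $u=1$ and beats this strategy. Your rerouted play in $\Phi'$ must change $u$, and the existential \emph{does} see $R$. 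So $\sigma'$ answers differently, and ``the existential winning play in $\Phi'$'' is a different play that gives no contradiction. Your remark that timing only helps overlooks exactly this.

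Second, the claim that the killed disjuncts can be rerouted to distinct variables of $R$ is false in general. One variable may kill many disjuncts, e.g.\ several disjuncts containing $(u)$ with $R=\{u\}$. The augmenting-path/lexicographic sketch does not establish the claim, and the fallback does not give the stated bound $|Y\setminus Y_{\mnu}|\le|\Phi|$.

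The missing idea is to reroute only around the alternating-path closure, and to use constants. Let $Z$ be the set of matched variables reachable from $Y_{\mnu}$ by $M$-alternating paths. Maximality of $M$ (no augmenting path) implies that every disjunct adjacent to $Y_{\mnu}\cup Z$ is matched, and its $M$-partner lies in $Z$.

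Now argue with universal strategies. Given a universal winning strategy tree $T$ for $\Phi$, build $T'$ as follows:
\begin{enumerate}
\item keep the branching on existential variables and drop $Y_{\mnu}$;
\item set each $z\in Z$ to the constant falsifying the unit clause of its $M$-partner;
\item leave all other variables as in $T$.
\end{enumerate}
A disjunct killed in $T$ through $Y_{\mnu}\cup Z$ is now killed by its partner's constant. A disjunct killed through $R\setminus Z$, or through a clause without $Y$-variables, is killed as before; $\Phi$-closedness makes such clauses independent of $Y$. Hence $T'$ wins $\Phi'$.

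Since constants depend on no information, the prefix-order and polarity conflicts you worry about never arise. Dually, an existential winning strategy for $\Phi'$ transfers to $\Phi$ if it answers as though $Z$ carried these constants. In the example above, $Z=\{u\}$ with constant $u=0$, which gives $e:=0$.
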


Our next goal is to develop a greedy procedure on $\Phi$ that finds an assignment $\beta$
over $Y_{\mnu}$ (computed in \Cref{lem:unit_clause}) that falsifies as many of the disjuncts in $\Phi_{\mnu}$ as
possible. Before doing so, we need the following notion of a hitting
set of a formula.  
Let $\phi$ be a CNF formula and let $Y_\phi \subseteq Y$ be a set of variables.  
We say that $Y_\phi$ is a \emph{hitting set} of $\phi$ with respect to $Y$  
if every clause in $\phi$ that contains a variable from $Y$  
also contains at least one variable from $Y_\phi$,  
i.e., $C(\phi, Y_\phi) = C(\phi, Y)$.

The greedy procedure starts with the empty assignment $\beta'=\emptyset$ and as long as there is a disjunct $\phi \in
\Phi_{\mnu}$ that is not yet falsified by $\beta'$ and a 
clause $c$ in $C(\phi[\beta'],Y_{\mnu})$
such that $c$ only contains variables in $Y_{\mnu}$,
% does not contain a variable from $(Y\setminus Y_\mnu)\cup V(\beta')$,
it extends $\beta'$ with the unique assignment that falsifies
$c$. 
Then, either the greedy procedure succeeds by producing a
partial assignment $\beta'$ of $Y_{\mnu}$ that falsifies all disjuncts in $\Phi_{\mnu}$ or there is a 
disjunct $\phi \in \Phi_{\mnu}$ that is not yet falsified by $\beta'$,
such that $V(\beta')$ is a hitting set of
$\phi \setminus C(\phi, Y \setminus Y_\mnu)$ with respect to $Y_\mnu$.
In the former case, we can just obtain a (complete) assignment $\beta$
over $Y_\mnu$ by extending $\beta'$ arbitrarily.
In the latter case, we define $Y_\phi$ as $V(\beta') \cup (Y \setminus
Y_{\mnu})$, since $Y\setminus Y_\mnu$ is a hitting set of $\phi$ with
respect to $Y\setminus Y_\mnu$.
Moreover, since the greedy procedure assigns at most
$\ariti(\Phi,Y)$ new variables at every step, we obtain that $|V(\beta)| \leq |\Phi_\mnu|
\cdot \ariti(\Phi,Y)$ and $|Y_\phi| \leq |\Phi_\mnu| \cdot
\ariti(\Phi,Y) + |Y \setminus Y_{\mnu}| \leq |\Phi_\mnu| \cdot
\ariti(\Phi,Y) + |\Phi \setminus \Phi_{\mnu}| \leq k \cdot
\ariti(\Phi,Y)$. This allows us to obtain the following lemma.

\begin{restatable}{lemma}{lemgreedylong}\label{lem:greedy_long}
  Let $\Phi$ be a $k$-\textsc{Disjunct-QBF}
  and let $Y \subseteq V_\forall(\Phi)$ be $\Phi$-closed.
  Let $\Phi_\mnu \subseteq \Phi$ and $Y_{\mnu} \subseteq Y$, 
  such that $\ariti(\Phi \setminus \Phi_\mnu, Y) \leq 1$ and $|Y\setminus Y_{\mnu}| \leq |\Phi \setminus \Phi_\mnu|$.
  In $\bigoh(\sizeOf{\Phi})$ time
  we can find one of the following:
  \begin{enumerate}[(1)]
  \item an assignment $\beta$ over $Y_{\mnu}$ that falsifies all disjuncts in $\Phi_{\mnu}$ or
  \item a conjunctive formula $\phi \in \Phi_{\mnu}$ and a subset $Y_\phi \subseteq Y$ of size at most $k \cdot \ariti(\Phi,Y)$, 
    such that
    $Y_\phi$ is a hitting set of $\phi$ with respect to $Y$.
  \end{enumerate}
\end{restatable}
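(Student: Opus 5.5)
We analyse the greedy procedure described just before the statement. Initialise $\beta'=\emptyset$. While some $\phi\in\Phi_\mnu$ is not falsified by $\beta'$ and $\phi[\beta']$ contains a clause $c$ with $V(c)\cap Y\neq\emptyset$ and $V(c)\subseteq Y_\mnu$, extend $\beta'$ by the unique assignment falsifying $c$. Since $Y$ is $\Phi$-closed, every clause meeting $Y$ has all its variables in $Y$. Hence the clauses of $C(\phi[\beta'],Y_\mnu)$ lie entirely inside $Y$, and $c$ contains no variable of $V(\beta')$, so the extension is consistent. Falsifying $c$ falsifies $\phi$. So each iteration falsifies a fresh disjunct of $\Phi_\mnu$, there are at most $|\Phi_\mnu|$ iterations, and each iteration adds at most $\ariti(\Phi,Y)$ variables to $V(\beta')$.

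If the loop ends because every disjunct of $\Phi_\mnu$ is falsified, extend $\beta'$ arbitrarily to an assignment $\beta$ over $Y_\mnu$. This gives outcome (1).

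Otherwise there is an unfalsified $\phi\in\Phi_\mnu$ such that every clause of $\phi[\beta']$ that meets $Y$ also meets $Y\setminus Y_\mnu$. Set $Y_\phi=V(\beta')\cup(Y\setminus Y_\mnu)$. To show $Y_\phi$ is a hitting set of $\phi$ with respect to $Y$, take any clause $c\in C(\phi,Y)$.
\begin{itemize}
\item If $c$ contains a variable of $V(\beta')$, it is hit.
\item If $c$ contains no variable of $V(\beta')$, it is not satisfied by $\beta'$. So it survives unchanged in $\phi[\beta']$ and still meets $Y$. By the stopping condition it then meets $Y\setminus Y_\mnu$.
\end{itemize}
A clause satisfied by $\beta'$ must contain a variable of $V(\beta')$, so this case split is exhaustive. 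For the size bound,
\[
|Y_\phi|\le |\Phi_\mnu|\,\ariti(\Phi,Y)+|\Phi\setminus\Phi_\mnu|\le k\cdot\ariti(\Phi,Y),
\]
using $|Y\setminus Y_\mnu|\le|\Phi\setminus\Phi_\mnu|$ and $\ariti(\Phi,Y)\ge1$. (If $\ariti(\Phi,Y)=0$, no clause meets $Y$, the empty set is a hitting set, and outcome (2) holds trivially with any $\phi$.)

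The main obstacle is the $\bigoh(\sizeOf{\Phi})$ running time, since rescanning after each extension would cost $k\cdot\sizeOf{\Phi}$. We would handle this with a standard watched-clause scheme. Maintain for each clause the number of its variables in $Y\setminus Y_\mnu$, computed once in linear time. Maintain for each variable the list of its occurrences. Maintain a queue of candidate clauses: unsatisfied clauses of unfalsified disjuncts that meet $Y$ and have zero count. Assigning a variable touches only its occurrences, marking clauses satisfied or disjuncts falsified, and each clause and literal is processed $\bigoh(1)$ times. The total work is therefore linear in $\sizeOf{\Phi}$.
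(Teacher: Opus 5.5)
Your proposal is correct and follows the paper's proof. It uses the same greedy procedure: each step falsifies one fresh disjunct of $\Phi_\mnu$ by falsifying a residual clause lying inside $Y_\mnu$, and in the failure case it takes $Y_\phi = V(\beta')\cup(Y\setminus Y_\mnu)$ with the same size count. Your explicit case split for the hitting-set property, the separate handling of $\ariti(\Phi,Y)=0$, and the occurrence-list bookkeeping for the linear running time only make rigorous what the paper asserts briefly (``each constraint is checked at most once'').
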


If the greedy algorithm succeeds to produce an
assignment $\beta$ that falsifies all disjuncts in $\Phi_{\mnu}$,
i.e. case (1) in \Cref{lem:greedy_long}, then we are done, since we can apply \Cref{lem:beta_application}. Otherwise, we will now show how to iteratively modify $\Phi$
(using the hitting set obtained in case (2)) until we end up in (1) case of~\Cref{lem:greedy_long}. To achieve
this we define formulas $\Phi^i$ iteratively by setting
$\Phi^0=\Phi$ and for every $i$, we obtain $\Phi^{i+1}$ from $\Phi^i$
as follows. 
If the greedy algorithm succeeds on $\Phi^i$, then we stop, i.e.,
we set $i_{\text{last}}=i$ and we obtain $\beta$.
Otherwise, we obtain $\phi \in\Phi^i$ and $Y_{\phi}$ from \Cref{lem:greedy_long} such that 
$Y_{\phi}$ is small and is a hitting set of $\phi$ w.r.t. $Y$.
Informally, $\Phi^{i+1}$ is then obtained from $\Phi^i$
by
replacing $\phi$
with a bounded number of new disjuncts 
each having a smaller arity, i.e., 
% $\Phi^i \Leftrightarrow \Phi^{i+1}$ and
$\ariti(\phi,Y)> \ariti(\Phi^{i+1}\setminus \Phi^i,Y)$.
We can achieve this using \Cref{lem:equisat} by replacing
$\phi$ with $\disj(\phi,Y_{\phi})$, i.e.,
$\Phi^{i+1} = \Phi^i \setminus\{\phi\} \cup \disj(\phi,Y_{\phi})$.
While this is sufficient to obtain an FPT-algorithm,  
we later present a different approach that improves  
both the runtime and space complexity of the algorithm.
In both cases we can show the following lemma.
% , which is
% the main challenge of this subsection.

\begin{lemma}\label{lem:new_limit}
  The number of steps $i_{\text{last}}$ for the procedure outlined above is bounded by
  a function of $k$ and $\SA(\Phi,Y)$.
\end{lemma}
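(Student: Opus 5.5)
We will prove the bound by a potential argument on the multiset of arities $\ariti(\phi,Y)$ of the disjuncts, ordered by the multiset ordering. Set $d = \ariti(\Phi,Y)$. In iteration $i$, one disjunct $\phi \in \Phi^i_{\mnu}$ with $\ariti(\phi,Y)=a\ge 2$ is replaced by the disjuncts of $\disj(\phi,Y_\phi)$. Each new disjunct $\phi[\tau]\wedge U_\tau$ satisfies two properties:
\begin{itemize}
\item Every clause of $\phi$ meeting $Y$ also meets $Y_\phi$, since $Y_\phi$ is a hitting set with respect to $Y$. Hence every surviving clause of $\phi[\tau]$ has lost at least one $Y$-variable, so its $Y$-arity is at most $a-1$; unit clauses in $U_\tau$ have arity $1$.
\item $Y$ stays $\Phi^{i+1}$-closed, because assignments only shrink clauses and unit clauses over $Y$ contain only $Y$-variables.
\end{itemize}
So the step replaces one element of value $a$ by at most $2^{|Y_\phi|}\le 2^{kd}$ elements of value $<a$. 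This already gives termination via the Dershowitz--Manna ordering, but we need an explicit bound.

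For the bound, the point is that the number of disjuncts $k$ is not fixed: it grows. We must therefore track the parameter governing $|Y_\phi|$. By \Cref{lem:greedy_long}, $|Y_\phi|\le k_i\cdot d$, where $k_i=|\Phi^i|$, and $d$ never increases. We define $f(a,k)$ as an upper bound on the number of steps needed to eliminate all disjuncts of arity $\ge 2$ when starting with $k$ disjuncts of arity at most $a$, and argue by induction on $a$.

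\begin{itemize}
\item \emph{Base case $a\le 1$.} Then $\Phi_{\mnu}$ is empty, the greedy procedure succeeds immediately, and $f(1,k)=0$.
\item \emph{Inductive step.} Process the disjuncts of top arity $a$; there are at most $k$ of them. Each processing step touches one top-arity disjunct and blows the formula up to at most $k' = k + 2^{k' d}$ disjuncts, defined iteratively. After at most $k$ such top-level replacements, together with interleaved steps on lower-arity disjuncts, the count is bounded by a tower-type function $g(k,d)$.
\end{itemize}

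There is a subtlety: the procedure may pick lower-arity disjuncts in between. To handle this, we bound by induction the total number of steps spent on disjuncts of arity $<a$ created so far, using $f(a-1,\cdot)$ at the current size. This gives a recursion of the form
\[
f(a,k)\le k+\sum_{j\le k} f\bigl(a-1,g_j(k,d)\bigr),
\]
which is a computable function of $k$ and $d$.

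The main obstacle is that the lemma does not control which disjunct is selected, and a disjunct of arity $a-1$ can later spawn many further disjuncts, so the size parameter keeps growing. The cleanest fix is a lexicographic potential. Let $n_j$ be the number of disjuncts of $Y$-arity $j$, for $j=d,\dots,2$. Each step decreases some $n_a$ by one, leaves $n_{a'}$ unchanged for $a'>a$, and increases the lower entries by at most $2^{|\Phi^i|d}$. Since $|\Phi^i|$ is itself bounded by a function of the previous potential, unrolling the lexicographic descent over $d-1$ coordinates gives a bound of at most $d$-fold iterated-exponential type in $k$ and $d$.

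We also check that disjuncts of arity $\le 1$, which do not count in the potential, cannot block the greedy procedure. This holds because the hypothesis $|Y\setminus Y_{\mnu}|\le|\Phi\setminus\Phi_{\mnu}|$ is re-established at each iteration by recomputing $Y_{\mnu}$ via \Cref{lem:unit_clause}. The only effect of these disjuncts is to inflate $|\Phi^i|$, which the potential bound already accounts for.
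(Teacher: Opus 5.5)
Your argument is correct and essentially matches the paper's. Both use a lexicographic progress measure on the vector counting disjuncts by $Y$-arity: each step removes one disjunct of some arity $a\ge 2$ and adds only disjuncts of arity $<a$. Combined with a bound on $|\Phi^{i+1}|$ in terms of $|\Phi^i|$ and $d$, this makes the length of the descent bounded by a computable function of $k$ and $d$, regardless of which disjunct is selected.

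One caveat concerns your quantitative remark. With the $\disj$ replacement, where $|\Phi^{i+1}|\le|\Phi^i|+2^{d|\Phi^i|}$, the unrolled bound is far beyond a $d$-fold iterated exponential; already for $d=3$ it is a tower of height exponential in $k$. The paper's $(d-2)$-fold bound in \Cref{lem:new_limit_long} relies instead on the linear-growth $\magic$ replacement, with $|\Phi^{i+1}|\le(2d+1)|\Phi^i|$. This does not affect the lemma, which only asks for some function.
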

To prove~\Cref{lem:new_limit},
we introduce (the progress measure) $f(\Phi^i)$, which is 
the $(d+1)$-dimensional
vector that provides the number of disjuncts of a given arity,
where $d =\ariti(\Phi,Y)$,
i.e.,
for every $0 \leq a \leq d$, $f(\Phi^i)[a] = |\SB \phi \SM \phi \in \Phi^i \land \ariti(\phi,Y)=a \SE|$.
Clearly, the sum of all values in $f(\Phi^i)$ is equal to $|\Phi^i|$.
Note that if $f(\Phi^i)$ contains zero on all indices larger than $1$, then $\Phi_\mnu^i = \emptyset$ and \Cref{lem:greedy_long} generates $\beta$, i.e., $i = i_{\text{last}}$.
Since
the size of $Y_\phi$ is smaller then $d|\Phi^i|$ for the formula
$\phi \in \Phi^i$ returned by \Cref{lem:greedy_long},
the size of $\Phi^{i+1}$ (and the sum of all values in $f(\Phi^{i+1})$) depends only on
$d$ and $|\Phi^i|$, i.e.,
$|\Phi^{i+1}| \leq |\Phi^i| + 2^{d|\Phi^i|}$ (or $|\Phi^{i+1}| \leq (2d+1)|\Phi^i|$ if we use the improved replacement).
Moreover,
$f(\Phi^{i+1})$ is lexicographically smaller than $f(\Phi^{i})$,
i.e.,
there exists $d' \in [d]$ such that $f(\Phi^{i+1})[d'] < f(\Phi^{i})[d']$ and for every $d' < a  \leq d$, it holds that $f(\Phi^i)[a] = f(\Phi^{i+1})[a]$.
Those properties are sufficient to conclude the existence of an upper bound for $i_{\text{last}}$ that depends only on $|\Phi|$ and $d$.
In the later section we calculate the exact number.

Thanks to \Cref{lem:new_limit}, we know that
the assignment $\beta$ is produced after at most $i_{\text{last}}$
steps, which is bounded by a function of $k$ and $\ariti(\Phi,Y)$.
To conclude the proof of \Cref{th:new_unmixed_disj},
observe that %the size of $\Phi^{i+1}$ grows at most by
$|\Phi^{i+1}| \leq (2d+1)|\Phi^i|$, i.e., we
add only a bounded number of disjuncts at every step,
and the number of variables in $Y$ not assigned by the partial
assignment $\beta$ is at most
$|Y \setminus V(\beta)| \leq |\Phi^{i_{\text{last}}}|$.

\subsubsection{Backdoors and Number of Quantifier Alternations for \TCNF}
\label{ssec:2cnf}

Here, we will provide an overview about the techniques
employed in \Cref{thm:introtractcnf} for the case of $\TCNF$. Using \Cref{lem:from-backdoor-to-disjunct}, it is
sufficient to show the following.
\begin{restatable}{theorem}{thetcnftract}\label{the:tcnf-tract}
  There is an algorithm that takes
  a $k$-\textsc{Disjunct-QBF}$(\TCNF)$ formula $\Phi$
  with $q$ quantifier alternations,
  and checks its validity in FPT-time
  parameterized by $k + q$.
\end{restatable}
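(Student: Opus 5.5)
We argue by induction on the number of quantifier blocks. In each step we eliminate the innermost blocks while keeping the number of disjuncts bounded by a function of $k$, and we never leave the \TCNF base class. This is the same ``trading disjuncts for quantifier depth'' idea described in the overview. Write $\Phi=\pref.\bigvee_{i\in[k]}\phi_i$ with each $\phi_i\in\TCNF$.

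\textbf{Innermost block universal.} Suppose $\pref$ ends with $\forall Z$. Universal elimination does not distribute over a disjunction, so we first push $Z$ up to the last existential block $X$. After a suitable preprocessing, no clause of any $\phi_i$ should contain two variables of $Z$. A clause containing a universal literal and otherwise only universal literals falsifies its disjunct for some choice of the universals. A clause of the form $(x\lor z)$, with $x$ earlier, should be replaced by a fresh existential variable quantified before $Z$ via the $\disj$ trick of \Cref{lem:equisat}, applied only to a bounded set of variables. Once $Z$ is in this shape, a disjunct is falsifiable by the universal player through any of its clauses over $Z$. This is the situation of \Cref{lem:unit_clause}: all but at most $k$ variables of $Z$ are redundant. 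The remaining at most $k$ variables of $Z$ are expanded by brute-force quantifier elimination. Each expansion multiplies the number of disjuncts by at most $2^k$, which is still bounded in $k$, and afterwards the innermost block is existential.

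\textbf{Innermost block existential.} This is the main step. Suppose $\pref$ ends with $\forall Y\exists X$, where $X$ is innermost. For each disjunct, $\exists X.\phi_i$ is equivalent to a 2-CNF over the outer variables, obtained by resolution (implication-graph closure) restricted to the outer variables. Since $X$ is innermost, $\exists X\bigvee_i\phi_i\equiv\bigvee_i\exists X.\phi_i$, so we can replace each $\phi_i$ by its projection. The projection is computable in polynomial time, stays in \TCNF, and keeps the number of disjuncts at $k$. This merges the innermost existential block into the preceding universal block's scope, so the block count drops by one. Combining the two cases, every pair of blocks is removed at the cost of multiplying the disjunct count by a factor depending only on $k$. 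After $q$ rounds we reach a purely existential (or empty) prefix with $g(k,q)$ disjuncts. The remaining formula $\bigvee_i\exists V.\phi_i$ is then decided by solving each \TCNF disjunct in polynomial time.

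\textbf{Main obstacle.} The expected difficulty is the universal step, specifically guaranteeing that after eliminating the existential block below it, every clause meets $Z$ in at most one variable with a bounded-cost normalization. Resolution can create clauses $(z\lor z')$ of two universals, but those immediately make their disjunct falsifiable and can be handled by dropping the disjunct or by the hitting-set greedy of \Cref{lem:greedy_long} with $\SA\le 2$. A clause $(x\lor z)$, with $x$ existential from an outer block, needs care: $z$ can falsify the disjunct only when $x$ is false. I would handle these mixed clauses by branching on a bounded number of the existential literals involved, following the redundancy argument of \Cref{lem:unit_clause}, so that each disjunct uses only boundedly many relevant universals. If this fails, I would fall back on \Cref{th:new_unmixed_disj}, applied to the last universal block, to cut it to bounded size, and then expand it.
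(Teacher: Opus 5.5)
\textbf{Gap in the universal step.} Your existential step is correct; it is essentially \Cref{lem:prop_2cnf}. The universal step, however, carries the whole difficulty of the theorem, and you do not establish it.

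When the prefix ends in $\exists X\,\forall Z$, mixed clauses $(x\lor z)$ put an unbounded number of variables of $X$ into $\delta(Z)$. None of the tools you cite then applies with a bounded parameter:
\begin{itemize}
\item \Cref{lem:unit_clause} requires $Z$ to be $\Phi$-closed with $\SA(\Phi,Z)\le 1$.
\item \Cref{th:new_unmixed_disj} requires $Z$ to be $\Phi$-closed.
\item \Cref{th:new_unmixed_qbf} has $|\delta(Z)|$ in its parameter.
\end{itemize}
Making $Z$ closed via $\disj$ means branching on every variable of $\delta(Z)$ that occurs in a mixed clause, which gives $2^{|\delta(Z)|}$ disjuncts.

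Your ``suitable preprocessing \ldots applied only to a bounded set of variables'' and ``branching on a bounded number of the existential literals'' are exactly the unproved claims. Which universal variables matter depends on which $x$ the existential player sets to $0$ in clauses $(x\lor z)$. You give no progress measure showing that boundedly many branchings suffice. The fallback to \Cref{th:new_unmixed_disj} fails for the same reason.

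Separately, dropping a disjunct that contains a clause $(z\lor z')$ over $Z$ is unsound. For example, $\forall z\,z'.\,(z\lor z')\lor(\neg z\land\neg z')$ is true, but it becomes false once the first disjunct is dropped.

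\textbf{What the paper does instead.} The paper never shrinks the universal block; its abstract notes that for \TCNF the replacement is achievable only for the innermost existential block. It makes $X_{q-1}$ redundant instead:
\begin{itemize}
\item $\del(\Phi)$ adds selector variables so that the existential player commits to the set of live disjuncts. Every group is then propagated and its disjuncts share their $X_q$-free part, giving (A1) and (A2).
\item $\pies(\Phi)$ branches, via fresh existential variables, only on \emph{reducible} $x\in X_{q-1}$. These are variables occurring positively in an $X_q$-clause of one disjunct and negatively in an $X_q$-clause of another, both disjuncts having fewer than $2k$ unit $X_q$-clauses.
\item Each branch strictly increases the potential $\sum_\phi \min(2k,u(\phi))\le 2k^2$ of the group. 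Disjuncts with at least $2k$ unit $X_q$-clauses are redundant by \Cref{lem:tcnf_pies_redundant}.
\item Once no variable is reducible, each $x\in X_{q-1}$ meets $X_q$ with a single polarity. \Cref{lem:bqa-var-redundant} then shows that $X_{q-1}$ is redundant, because $x$ can be set by a rule depending only on earlier literals.
\item The at most $2k^2+k$ fresh variables are removed by ordinary quantifier elimination, which merges $X_q$ into the next universal block. The final purely universal formula is decided by \cite[Corollary~7]{DBLP:conf/ijcai/ErikssonLOOPR24}.
\end{itemize}
The idea your proposal is missing is this combination: one-polarity redundancy of the existential block together with a bounded potential that limits the branching.
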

Note that finding a minimum size
backdoor to \TCNF is well-known to be
FPT~\cite{DBLP:conf/birthday/GaspersS12} and it is therefore
sufficient to show evaluation.
Interestingly, the
algorithm works by eliminating the existential quantifier blocks iteratively
starting with the innermost one.
This is
fairly easy if the innermost quantifier block of the
formula is existential
(using a slight generalization of the well known $Q$-resolution from
CNF to disjuctive formulas formalized in \Cref{lem:prop_2cnf}), but becomes much more
challenging if the innermost quantifier block is universal. The main
technical challenge behind our approach is to show that the 
variables of the innermost existential quantifier block can be
replaced with a small set of variables, whose size is
bounded by a function of $k$. We can then
employ standard quantifier elimination to eliminate the remaining
variables (since it now only results in a bounded increase of the
number of disjuncts).

We begin by introducing the notion of propagation. We call a $\TCNF$ formula $\phi$ {\em propagated}
if it is closed under propositional resolution and does not contain any
superfluous clauses, i.e., clauses that contain other
clauses. Crucially, because
$\phi$ is a $\TCNF$ formula, we can compute an equivalent and
propagated \TCNF{} formula, denoted by $\cnfprop{\phi}$, in
polynomial-time by adding all resolvends and removing superflous
clauses (see also \Cref{lem:prop_comp}).
We say that a disjunctive formula $\Phi$ is
\emph{propagated} if so is every disjunct of $\Phi$. It is easy to see
that $(\pref. \bigvee_{\phi \in \Phi}\phi)$ and $(\pref. \bigvee_{\phi
  \in \Phi}\cnfprop{\phi})$ are equisatisfiable.

Let $\Phi$ be a $k$-\textsc{Disjunct-QBF}(\TCNF),  
where the innermost quantifier block $X_q$ is universal.
We will construct an equivalent 
$2^{2k^2 + k}k$-\textsc{Disjunct-QBF}(\TCNF) formula  
$\pies(\Phi)$, where the
variables $X_{q-1}$ of the innermost
existential quantifier block have become redundant and their function
is replaced by a set $A_\Phi$ of at most $2k^2+k$ fresh existential
variables (inserted into the position of $X_{q-1}$ within the
quantifier prefix). For illustrative purposes, it
will be convenient to think of $\pies(\Phi)$ (and all intermediate
formulas leading to $\pies(\Phi)$) as the disjunction of at most $2^{2k^2 + k}$ many  
$k$-\textsc{Disjunct-QBF}(\TCNF) formulas $\Phi_L$, where $L$ is a
non-contradictory subset of literals over the set
$A_\Phi=\{a_1,\dotsc,a_{2k^2+k}\}$ of fresh existential variables, i.e.,
$\pies(\Phi) \Leftrightarrow \bigvee_L\Phi_L$, where we use
$\bigvee_L\Phi_L$ for the formula $\pref.\bigvee_L (\bigvee_{\phi \in
  \Phi_L}\phi)$ assuming that $\pref$ is the common prefix all formulas $\Phi_L$.  

Intuitively, the set $A_\Phi$ is a set of fresh existential variables
that will later replace all variables in $X_{q-1}$ and the existential
player (in the Hintikka game)
uses the variables in $A_\Phi$ to
select which of the formulas $\Phi_L$ to play on by assigning values for the
variables in $A_\Phi$; importantly, every assignment of the variables
in $A_\Phi$ will falsify every $\Phi_L$ such that $L$ is falsified by
the assignment. Every formula $\Phi_L$ will have the following properties:
\begin{itemize}
\item[(A1)] $\Phi_L$ is propagated,
\item[(A2)] every two (possibly equal) disjuncts $\phi,\phi'\in \Phi_L$ 
  share the same set of clauses that do not contain variables from
  $X_q$, i.e., $\phi \setminus C(\phi, X_q) = \phi' \setminus
  C(\phi', X_q)$, and
\item[(A3)] for every variable $x \in X_{q-1}$, either the literal $x$
  or the literal $\neg x$
  does not occur in any clause (of any disjunct) of $\Phi_L$ with a variable in $X_q$.
\end{itemize}
Crucially (as we show in \Cref{lem:bqa-var-redundant}) it holds
that if each $\Phi_L$ satisfies (A1)--(A3), then $X_{q-1}$ is
$\pies(\Phi)$-redundant, which allows us to
safely ignore all variables in $X_{q-1}$, which have now been
replaced by the at most $2k^2+k$ variables in $A_\Phi$. Note that (thanks to
\Cref{lem:tcnf_pies_redundant}), we can apply
\Cref{lem:bqa-var-redundant} already if (A3) only applies to all
disjuncts that do not contain $2k$ unit clauses with variables in
$X_{q}$, i.e., we can replace (A3) with the following weaker condition:
\begin{itemize}
\item[(A3')] The subset $\Phi'$ of $\Phi_L$ containing all disjuncts
  that contain fewer than $2k$ unit clauses with variables in $X_q$ satisfies:
  for every $x \in X_{q-1}$, either the literal $x$ or the literal $\neg x$
  does not occur in a clause with a variable in $X_q$.
\end{itemize}
Intuitively, this holds because all formulas that contain $2k$ unit
clauses with variables in $X_q$ can be safely ignored since they can
always be falsified by the universal player (see also
\Cref{lem:tcnf_pies_redundant}). Note that replacing (A3) with (A3')
is crucial to show that $\pies(\Phi)$ can be obtained from $\Phi$ in a bounded
number of steps.

The main
ingredient of \Cref{lem:bqa-var-redundant} is to show that if a
formula $\Phi_L$ satisfies (A1)--(A3), then $X_{q-1}$ is
$\Phi_{L}$-redundant with respect to $\Phi^i$. Note that
$\Phi^i$ and 
$(\Phi^i \setminus \Phi_L) \cup \Phi_L'$ are equivalent, 
where $\Phi_L'$ is obtained from $\Phi_L$ after removing all clauses
containing a variable in $X_{q-1}$.
Informally, to show this consider an existential variable $x$ from $X_{q-1}$.  
Clearly, if $x$ appears only as the literal $\neg x$  
in all clauses of $\Phi_L$, then $x$ is $\Phi_L$-redundant,  
because the existential player may simply assign $x$ to $0$  
to satisfy all such clauses. Now the crucial idea of
\Cref{lem:bqa-var-redundant}
is that $x$ remains $\Phi_L$-redundant even if $x$ merely satisfies (A3), i.e.,
if $\neg x$ is the only literal of $x$ appearing in any clause in
$C(\Phi_L,X_q)$. Towards arguing this, consider the case that $x$
satisfies (A3), say for the literal $\neg x$, and there is a clause $(l \lor x)$ for some literal $l$
of a variable in $V(\Phi_L) \setminus X_q$ and assume for simplicity
that there is only one such clause.
We can assume that the variable of $l$ is before $x$ in the prefix,
because we can always reorder elements inside one quantifier block, i.e., $X_{q-1}$. 
When, during the Hintikka game on $\Phi_L$, the existential player has  
to decide on the assignment of $x$,  
it is sufficient to assign $x$ complementary to $l$, i.e., we claim
that if all clauses containing $l$ are satisfied, then all clauses
containing $x$ are satisfied in this way.
This is because every formula in $\Phi_L$ contains  
the clause $(l \lor x)$ (A2) and all formulas are propagated (A1).  
For instance, let $\phi \in \Phi_L$ and $z \in V(\Phi_L)$ 
such that $(l \lor x), (\neg x \lor z) \in \phi$.  
Then, by propagation, we obtain $(l \lor z) \in \phi$.  
If an assignment $\beta$ satisfies $(l \lor z)$  
and assigns $x$ complementary to $l$,  
then $\beta$ also satisfies $(\neg x \lor z)$.  
More generally, suppose that instead of a single literal $l$,  
we have a set $L'$ of literals from $V(\Phi_L) \setminus X_q$,  
such that only literals from $L'$ appear in clauses with $x$.
Then, $x$ can be assigned to $0$  
if all literals in $L'$ are satisfied by the current assignment,  
and $1$ otherwise. This presents an informal proof of
\Cref{lem:bqa-var-redundant}, i.e., the statement that if $x$ satisfies (A3), then it is
$\Phi_L$-redundant.

Our goal is now to build such a disjunction  
of formulas $\Phi_L$ that satisfy (A1), (A2), and (A3').  
We begin by introducing the function $\del(\Phi)$,  
which can be seen as the initialization step of the algorithm.  
Intuitively, $\del(\Phi)$ produces an equivalent formula  
consisting of the disjunction of $2^k$  
distinct
$k$-\textsc{Disjunct-QBF}(\TCNF) formulas  
that satisfy properties (A1) and (A2).
We then create the formula $\pies(\Phi)$, which
satisfies property (A3'), as well as (A1), and (A2), iteratively starting from $\del(\Phi)$.

Let $\Phi = \pref.\bigvee_{i \in [k]} \phi_i$ be a 
$k$-\textsc{Disjunct-QBF}(\TCNF{}) with 
$\pref = Q_1 X_1 \dots Q_q X_q$, and let 
$A_\Phi = \{a_i : i \in [\Asize]\}$ be a set of fresh (existential) variables.
We define the function $\del(\Phi)$, which intuitively acts 
as a selection mechanism that forces the existential player 
to explicitly choose a subset of disjuncts (by choosing an assignment of $A_\Phi$) to continue the game on.
For every non-empty set $I \subseteq [k]$, let 
$\phi^{\del}_I$ denote the conjunction of all clauses from 
$\bigwedge_{i \in I} \phi_i$
that do not contain variables from 
$X_q$, i.e.,
$
\phi^{\del}_I = \bigwedge_{i \in I} 
\left( \phi_i \setminus C(\phi_i, X_q) \right)
$.
Then, the formula $\del(\Phi)$ is given as:
\begin{equation}\label{en:del_def}
\del(\Phi) =  Q_1 X_1 \dots\exists (X_{q-1} \cup A_\Phi)\forall X_{q}.\bigvee_{\emptyset\neq I \subseteq [k]}
\Phi^\del_I,
\end{equation}
 where

$$
\Phi^\del_I =  \bigvee_{i \in I} 
\cnfpropname{}
\biggl( 
\phi_i \land \phi^{\del}_I
\land (\bigwedge_{j \in I} a_j) \land (\bigwedge_{j \in [k] \setminus I} \neg a_j)\biggr)
$$

\begin{example}\label{ex:del}
  Consider the following $3$-\textsc{Disjunct-QBF}(\TCNF):
  $\Phi =
  \forall_{x_1,x_2,x_3}\exists_{y_1,y_2,y_3}\forall_{z_1,z_2,z_3}. \phi_1
  \lor \phi_2 \lor \phi_3$, where:
  $\phi_1 = (x_2 \rightarrow x_1) \land (y_2 \rightarrow x_3)
  \land (y_2 \rightarrow z_2) \land (y_3 \rightarrow z_1)$,
  $\phi_2 = (x_2) \land (x_3 \rightarrow y_1) \land (y_2 \rightarrow
  y_3) \land (y_3 \rightarrow z_3)$, and
  $\phi_3 = (x_3 \rightarrow x_2) \land (y_1 \rightarrow z_3)
  \land (z_2 \rightarrow y_2)$, which together with the formulas
  $\phi_{\{1,2,3\}}^\del$ and
  $\cnfprop{\phi_{\{1,2,3\}}^\del\land \phi_3}$ are also illustrated in Figure~\ref{fig:del}.

  Then, 
  the formula $\phi_{\{1,2,3\}}^\del$ is derived from $\phi_1$, $\phi_2$, and $\phi_3$ 
  by keeping only the clauses that do not involve variables from the
  final $\forall$-block, i.e.:
  \[\phi_{\{1,2,3\}}^\del=(x_2 \rightarrow x_1) \land (y_2
    \rightarrow x_3)\land (x_2) \land (x_3 \rightarrow y_1) \land (y_2 \rightarrow
    y_3)\land (x_3 \rightarrow x_2).
  \]
  Moreover, the formula $\cnfprop{\phi_{\{1,2,3\}}^\del}$ and  $\cnfprop{\phi_{\{1,2,3\}}^\del\land \phi_3}$
  are given as
  \[\cnfprop{\phi_{\{1,2,3\}}^\del}=(x_1)\land(x_2)\land(x_3\rightarrow
    y_1)\land(y_2\rightarrow y_3)\land(y_2\rightarrow
    y_1)\land(y_2\rightarrow x_3),\]
    and
\begin{align*}
\cnfprop{\phi_{\{1,2,3\}}^\del\land \phi_3} = \cnfprop{\phi_{\{1,2,3\}}^\del} 
  &\land (x_3\rightarrow z_3)\land (y_1\rightarrow z_3)\land(y_2\rightarrow z_3)\land (z_2\rightarrow z_3) \\
  &\land (z_2\rightarrow y_3)\land(z_2\rightarrow y_2)\land(z_2\rightarrow y_1)\land(z_2\rightarrow x_3)
\end{align*}
\end{example}

\begin{figure}[htb]
    \centering
\begin{tikzpicture}[node distance=1.5cm]
    % Define color styles
    \tikzset{
        cred/.style={red},
        cblue/.style={blue},
        cgreen/.style={green}
    }
    \draw[gray, dashed] (2.5,0.8) -- (2.5,-9);
    \draw[gray, dashed] (5.5,0.8) -- (5.5,-9);
    \node[gray] at (1, 0.7) {$\forall$}; 
    \node[gray] at (4, 0.7) {$\exists$}; 
    \node[gray] at (7, 0.7) {$\forall$}; 
    \begin{scope}[shift={(0,0)}]
        \inittikz
        \drgiht{F}{G}{cred}
        \dleft{E}{C}{cred}
        \dleft{B}{A}{cred}
        \drgiht{E}{H}{cred}
        \node[anchor=east] at (-0.5,0) {$\phi_1:$};
    \end{scope}
    \begin{scope}[shift={(0,-1.5)}]
        \inittikz
        \drgiht{E}{F}{cblue}
        \dleft{I}{F}{cblue}
        \dposlit{B}{cblue}
        \drgiht{C}{D}{cblue}
        \node[anchor=east] at (-0.5,0) {$\phi_2:$};
    \end{scope}
    \begin{scope}[shift={(0,-3)}]
        \inittikz
        \dleft{H}{E}{cgreen}
        \drgiht{D}{I}{cgreen}
        \dleft{C}{B}{cgreen}
        \node[anchor=east] at (-0.5,0) {$\phi_3:$};
    \end{scope}
    \begin{scope}[shift={(0,-4.5)}]
        \inittikz
        \drgiht{E}{F}{cblue}
        \dleft{E}{C}{cred}
        \dposlit{B}{cblue}
        \dleft{B}{A}{cred}
        \dleft{C}{B}{cgreen}
        \drgiht{C}{D}{cblue}
        \node[anchor=east] at (-0.5,0) {$\phi^{\del}_{\{1,2,3\}}:$};
    \end{scope}
    \begin{scope}[shift={(0,-6)}]
        \inittikz
        \dleft{H}{E}{cgreen}
        \drgiht{D}{I}{cgreen}
        \dleft{C}{B}{cgreen}
        \drgiht{E}{F}{black}
        \dleft{E}{C}{black}
        \dposlit{B}{black}
        \dleft{B}{A}{black}
        \dleft{C}{B}{black}
        \drgiht{C}{D}{black}
        \node[anchor=east] at (-0.5,0) {$\phi^{\del}_{\{1,2,3\}} \land \phi_3:$};
    \end{scope}
    \begin{scope}[shift={(0,-8)}]
        \inittikz
        \dleft{H}{E}{cgreen}
        \dleft{H}{F}{cgreen}
        \drgiht{D}{I}{cgreen}
        \drgiht{E}{I}{cgreen}
        \drgiht{H}{I}{cgreen}
        \drgiht{C}{I}{cgreen}
        \drgiht{E}{F}{black}
        \dleft{E}{C}{black}
        \dleft{E}{D}{black}
        \dposlit{B}{black}
        \dposlit{A}{black}
        \drgiht{C}{D}{black}
        \node[anchor=east] at (-0.5,0) {$\cnfprop{\phi^{\del}_{\{1,2,3\}} \land \phi_3}:$};
    \end{scope}
\end{tikzpicture}
\caption[Illustration of the construction of $\phi^\del_I$]{
      The figure illustrates the construction of $\del(\Phi)$ presented in \Cref{en:del_def}.
In each row, the variables appear in the order given by the quantifier prefix of $\Phi$.
Dashed lines separate variables from different quantifier blocks.  
Each arrow represents an implication between positive literals,  
while each plus sign ($+$) or minus sign ($-$) next to a variable indicates that the formula contains a unit clause with the corresponding positive or negative literal.  
      The highlighted black elements in the final two segments represents
      the influence of $\phi_{\{1,2,3\}}^\del$.
    }
    \label{fig:del}
\end{figure}
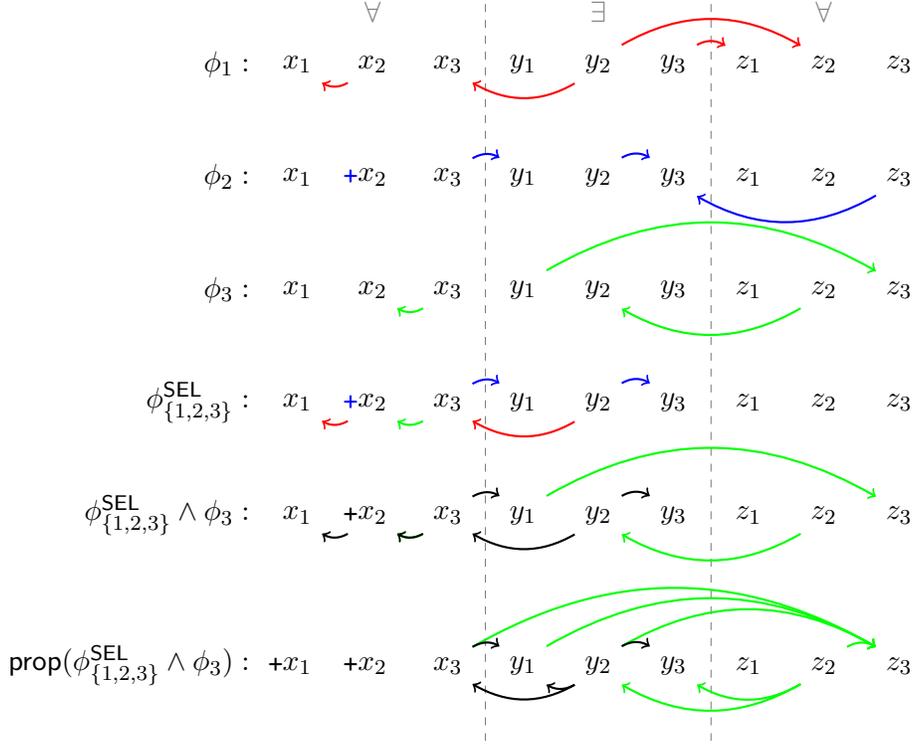

Note that $\del(\Phi)$ does not contain any clauses with variables
$a_i$ for $i > k$, which will later be used to obtain $\pies(\Phi)$.

The following corollary (shown in
\Cref{lem:bqa-dog-prop,lem:bqa-dog-eq})
establishes the equivalence between $\del(\Phi)$ 
and $\Phi$, along with properties that corresponds to (A1) and (A2).
The equivalence holds because any existential winning
strategy for $\del(\Phi)$ is also an existential winning strategy for
$\Phi$. Moreover, given an existential winning strategy for $\Phi$, we
obtain an existential winning strategy for $\del(\Phi)$ by assigning
$a_i$ to $1$ for every $i$ such that $\phi_i$ is not yet falsified;
here we assume without loss of generality that the variables in
$X_{q-1}$ are before the variables in $A_\Phi$ in the quantifier prefix.

\begin{corollary}\label{cor:bqa-dog}
  Let $\Phi$ be a propagated $k$-\textsc{Disjunct-QBF}(\TCNF), whose
  innermost quantifier block is universal.
  Then, $\Phi \Leftrightarrow \del(\Phi)$, $\del(\Phi)$ is a
  $k2^{k}$-\textsc{Disjunct-QBF}(\TCNF) that satisfies (A1) and (A2)
  and that can be
  computed in $\bigoh(k2^{k}\propT{|V(\Phi)|})$ time,
  where $\omega < 3$ is the matrix multiplication exponent.
\end{corollary}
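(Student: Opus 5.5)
We establish the four claims in turn: the size bound and running time, (A1), (A2), and the equivalence, with the equivalence done last as the main step.

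\textbf{Size and running time.} For each of the $2^k-1$ nonempty sets $I\subseteq[k]$, $\Phi^\del_I$ has $|I|\le k$ disjuncts, so $\del(\Phi)$ has at most $k2^k$ disjuncts. Each disjunct is a \TCNF, since unit clauses over $a_j$ and $\cnfpropname$ preserve \TCNF. By \Cref{lem:prop_comp}, each propagation is a transitive-closure computation on the implication graph over $\bigoh(|V(\Phi)|+k)$ literals. This costs $\bigoh(\propT{|V(\Phi)|})$ per disjunct, which gives the stated time.

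\textbf{Property (A1).} It holds because every disjunct is of the form $\cnfprop{\cdot}$. If the intended reading of (A1) is disjunct-wise within each $\Phi_L$, with $L$ the literal set $\{a_j : j\in I\}\cup\{\neg a_j: j\notin I\}$, this is immediate.

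\textbf{Property (A2).} Fix $I$ and $i,i'\in I$. We must show
\[
\cnfprop{\phi_i\wedge\psi}\setminus C(\cdot,X_q)=\cnfprop{\phi_{i'}\wedge\psi}\setminus C(\cdot,X_q),
\qquad
\psi=\phi^\del_I\wedge\bigwedge_{j\in I}a_j\wedge\bigwedge_{j\notin I}\neg a_j.
\]
The plan is to show that both sides equal $\cnfprop{\psi}$.

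\begin{enumerate}
\item Since $\phi_i\setminus C(\phi_i,X_q)\subseteq\psi$, we get $\phi_i\wedge\psi=\psi\wedge C(\phi_i,X_q)$. It therefore suffices to show that adding clauses that each contain an $X_q$-variable creates no new $X_q$-free resolvents beyond $\cnfprop{\psi}$.
\item Any resolution derivation of an $X_q$-free clause that uses $X_q$-clauses must resolve away some universal variable $z\in X_q$. This requires a chain $(\ell\vee z),(\neg z\vee \ell')$, with $\ell,\ell'$ possibly through further $X_q$-literals.
\item Here I will use that $\Phi$ is propagated, so $\phi_i$ is closed under resolution. The derived clause $(\ell\vee\ell')$ then already lies in $\phi_i$, hence in $\phi_i\setminus C(\phi_i,X_q)\subseteq\psi$.
\item The argument has to be made by induction on derivation length, interleaving clauses from $\psi$. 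This works because in 2-CNF resolution is path-finding in the implication graph. Any path starting and ending at non-$X_q$ literals decomposes into maximal segments inside $\phi_i$'s $X_q$-clauses, and each such segment is shortcut by a clause of $\phi_i$ by propagation. Subsumption removal must also be checked to be consistent on both sides.
\end{enumerate}

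This shortcut argument is the step I expect to be the main obstacle. The delicate case is segments that pass through $X_q$ literals but whose endpoints come from $\psi$-clauses. In 2-CNF, however, every edge touching $X_q$ inside the segment comes from $\phi_i$. So the segment's endpoints lie in $\phi_i$ and propagation of $\phi_i$ applies.

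\textbf{Equivalence $\Phi\Leftrightarrow\del(\Phi)$.} We may assume $A_\Phi$ is placed after $X_{q-1}$.

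\begin{itemize}
\item[($\Leftarrow$)] Take an existential winning strategy for $\del(\Phi)$ and project away $A_\Phi$. The projection wins on $\Phi$, since each disjunct of $\Phi^\del_I$ indexed by $i$ implies $\phi_i$.
\item[($\Rightarrow$)] Take a winning strategy for $\Phi$. After $X_{q-1}$ is fixed, let $I$ be the set of $i$ such that $\phi_i\setminus C(\phi_i,X_q)$ is satisfied by the current assignment. Set $a_j=1$ exactly for $j\in I$.
  \begin{itemize}
  \item $I$ is nonempty. Otherwise the universal player could falsify every $\phi_i$, contradicting that the strategy wins.
  \item For any assignment of $X_q$, the winning strategy satisfies some $\phi_i$, and then $i\in I$.
  \item The assignment satisfies $\phi^\del_I$ and the unit clauses on $A_\Phi$, so it satisfies $\phi_i\wedge\psi$ and hence its propagation.
  \end{itemize}
\end{itemize}
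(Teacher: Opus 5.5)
Your proposal is correct and follows essentially the paper's proof, which is split across \Cref{lem:bqa-dog-eq} and \Cref{lem:bqa-dog-prop}. As there, the equivalence sets $a_j=1$ exactly for the disjuncts not yet falsified once $X_{q-1}$ is fixed, and projects away $A_\Phi$ in the other direction; (A2) is obtained by shortcutting the maximal $X_q$-segments of implication-graph paths, using that each $\phi_i$ is propagated (your target $\cnfprop{\psi}$, which keeps the unit clauses on $A_\Phi$, is slightly more precise than the paper's stated $\cnfprop{\phi^{\del}_I}$).
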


We now define the operation $\pies(\Phi)$ that 
achieves (A3') for every $\Phi_L \in\pies(\Phi)$.
We start by setting $\Phi^0 = \del(\Phi)$ and obtain $\pies(\Phi)$ as
the fixed-point of formulas $\Phi^{i}$ that are iteratively obtained
as follows. % with $\pref$ being the prefix of $\del(\Phi)$. 
Note that
for every set $L$ of literals (that contains exactly one literal of
every variable in $\{a_1,\dotsc,a_k\}$), it holds that
$\Phi_L=\Phi^{\del}_{I(L)}$ and $\del(\Phi)=\pref.\bigvee_L\Phi_L$, where $I(L)$ is the subset
of $[k]$ containing $i \in [k]$ if and only if $a_i \in L$.

We say that a variable $x \in X_{q-1}$ is \emph{reducible} with
respect to $\Phi_L$, if there are two
(possibly equal) formulas $\phi$ and $\phi'$ in $\Phi_L$ such that
both of them contain less than $2k$ unit clauses from $X_q$
and
$x$ appears positively in a clause of $\phi$  
and negatively in a clause of $\phi'$ that includes variables from $X_q$.
% Note that, $\Phi_L$ does not satisfy (A3') if and only if there exists $x\in X_q$, such that $x$ is reducible.
Now for every $i\geq 0$, we obtain $\Phi^{i+1}$ from $\Phi^i$ iteratively as follows.
If $\Phi^i$ contains a $k$-\textsc{Disjunct-QBF}(\TCNF) $\Phi_L^i$ that does
not satisfy (A3'),
then $\Phi_L^i$ contains a variable $x \in X_{q-1}$ that is reducible.
Informally, we then obtain $\Phi^{i+1}$ by replacing $\Phi_L^i$ in $\Phi^i$ 
with two modified copies of $\Phi_L^i$, i.e., $\Phi_{L\cup\{a_{|L|+1}\}}'$ and
$\Phi_{L\cup\{\neg a_{|L|+1}\}}'$, which correspond to the cases where $x$ is
assigned to $1$ or to $0$ in $\Phi_L^i$, respectively. More formally:
\begin{equation}\label{eq:pies_update_short}
\Phi^{i+1}  = \pref.
\bigl(
\bigvee_{\phi \in (\Phi^i\setminus \Phi_L^i)} \phi 
\bigr)
\lor \Phi_{L\cup\{a_{|L|+1}\}}' \lor \Phi_{L\cup\{\neg a_{|L|+1}\}}'
\end{equation}
Here, $\Phi_{L\cup\{a_{|L|+1}\}}'=\bigvee_{\phi \in
  \Phi_L^i}(\cnfprop{\phi[x=1]}  \land x \land a_{|L|+1})$ and
$\Phi_{L\cup\{\neg a_{|L|+1}\}}'=\bigvee_{\phi \in
  \Phi_L^i}(\cnfprop{\phi[x=0]}  \land \neg x \land \neg a_{|L|+1})$.
\begin{example}\label{ex:piesup}
Let $\Phi$ be the $3$-\textsc{Disjunct-QBF}(\TCNF) from \Cref{ex:del}
and let $\del(\Phi) =\pref. \bigvee_L\Phi_L$. 
Then, $\Phi_{\{a_1, a_2, a_3\}}$ contains the propagation of the following three formulas:
$\phi_1' = \phi'' \land \crr{(}y_2 \crr{\rightarrow} z_1\crr{)}  \land \crr{(}y_2 \crr{\rightarrow} z_2\crr{)} \land \crr{(}y_3 \crr{\rightarrow} z_1\crr{)}$,
$\phi_2' = \phi'' \land \cb{(}y_2 \cb{\rightarrow} z_3\cb{)}$, and
$\phi_3' = \phi'' \land \cg{(}y_1\cg{\rightarrow}
z_3\cg{)}\land\cg{(}y_2\cg{\rightarrow} z_3\cg{)}\land\cg{(}z_2\cg{\rightarrow}
z_3\cg{)}\land\cg{(}z_2\cg{\rightarrow} y_3\cg{)}\land\cg{(}z_2\cg{\rightarrow}
y_2\cg{)}\land\cg{(}x_3\cg{\rightarrow} z_3\cg{)}\land\cg{(}z_2\cg{\rightarrow}
y_1\cg{)}\land\cg{(}z_2\cg{\rightarrow} x_3\cg{)}$,
where $\phi'' = \cnfprop{\phi_{\{1,2,3\}}^{\del}} \land (a_1)\land (a_2)\land (a_3)$.
%\todo{MR:new colors}

The formulas $\Phi_{\{a_1, a_2, a_3, a_4\}}$ and $\Phi_{\{a_1, a_2, a_3, \neg a_4\}}$  
are derived from $\Phi_{\{a_1, a_2, a_3\}}$ by applying the update rule in \eqref{eq:pies_update_short} to the variable $y_2$.  
These formulas are illustrated in Figure~\ref{fig:piesup} and given below:
$$\Phi_{\{a_1,a_2,a_3,a_4\}} = x_1\land x_2 \land x_3 \land a_1\land a_2 \land a_3 \land a_4 \land y_1 \land \cy{(}y_2\cy{)} \land y_3 \land \big((\crr{(}z_1\crr{)} \land \crr{(}z_2\crr{)}) \lor \cg{(}z_3\cg{)}\big)$$
$$    \Phi_{\{a_1,a_2,a_3,\neg a_4\}} = x_1\land x_2 \land (x_3 \rightarrow y_1)\land a_1\land a_2 \land a_3 \land \neg a_4 \land \cy{(}\neg y_2\cy{)} \land 
\big(
\crr{(}y_3 \crr{\rightarrow} z_1\crr{)} \lor
\cb{(}z_3 \cb{\rightarrow} y_3\cb{)} \lor
(\cg{(}y_1\cg{\rightarrow} z_3\cg{)} \land \cg{(}\neg z_2\cg{)}
\big)$$
\end{example}
\begin{figure}
\label{fig:piesup}
\centering
\begin{tikzpicture}[node distance=1.5cm]
    % Define color styles
    \tikzset{
        cred/.style={red},
        cblue/.style={blue},
        cgreen/.style={green}
    }
    % \draw[gray, dashed] (0.5, 1) -- (0.5,-4);
    % \draw[gray, dashed] (7.5, 1) -- (7.5,-4);
    % \node[gray] at (0, 0.9) {$\forall$}; 
    % \node[gray] at (4, 0.9) {$\exists$}; 
    % \node[gray] at (9, 0.9) {$\forall$}; 
    \draw[gray, dashed] (2.5, 1.5) -- (2.5,-4);
    \draw[gray, dashed] (9.5, 1.5) -- (9.5,-4);
    \node[gray] at (1, 1.4) {$\forall$}; 
    \node[gray] at (6, 1.4) {$\exists$}; 
    \node[gray] at (11, 1.4) {$\forall$}; 
    \begin{scope}[shift={(0,0)}]
        \inittikzIV
        
        \draw[->, thick, green]   (C.north east) to[out=30, in=145, looseness=0.6] (I.north west);
        \draw[->, thick, black]   (C.north east) to[out=30, in=150, looseness=0.6] (D.north west);
        \draw[->, thick, black]   (E.south west) to[out=210, in=330, looseness=0.6] (C.south east);
        \dposlit{A}{black}
        \dposlit{B}{black}
        
        \dposlit{A1}{black}
        \dposlit{A2}{black}
        \dposlit{A3}{black}
        \drgiht{F}{G}{cred}
        \drgiht{E}{H}{cred}
        \drgiht{E}{G}{cred}
        \drgiht{E}{I}{cgreen}
        \dleft{H}{E}{cgreen}
        \dleft{H}{F}{cgreen}
        \drgiht{D}{I}{cgreen}
        \drgiht{H}{I}{cgreen}
        \dleft{I}{F}{cblue}
        \drgiht{E}{F}{black}
        \dleft{E}{D}{black}
        \node[anchor=east] at (-0.5,0) {$\Phi_{\{a_1,a_2,a_3\}}:$};
    \end{scope}
    \begin{scope}[shift={(0,-1.5)}]
        \inittikzIV
        % \draw[->, thick, black]   (C.north east) to[out=30, in=150, looseness=0.6] (D.north west);
        % \draw[->, thick, black]   (E.south west) to[out=210, in=330, looseness=0.6] (C.south east);
        \dposlit{A}{black}
        \dposlit{B}{black}
        \dposlit{C}{black}
        
        \dposlit{A4}{black}
        \dposlit{A1}{black}
        \dposlit{A2}{black}
        \dposlit{A3}{black}
        \dposlit{E}{black}
        \dposlit{D}{black}
        \dposlit{F}{black}
        \dposlit{G}{cred}
        \dposlit{H}{cred}
        \dposlit{I}{cgreen}
        \node[anchor=east] at (-0.5,0) {$\Phi_{\{a_1,a_2,a_3,a_4\}}:$};
    \end{scope}
    \begin{scope}[shift={(0,-3)}]
        \inittikzIV
        \draw[->, thick, black]   (C.north east) to[out=30, in=150, looseness=0.6] (D.north west);
        \dposlit{A}{black}
        \dposlit{B}{black}
        \draw[->, thick, green]   (C.north east) to[out=30, in=145, looseness=0.6] (I.north west);
        
        \dneglit{A4}{black}
        \dposlit{A1}{black}
        \dposlit{A2}{black}
        \dposlit{A3}{black}
        \dneglit{E}{black}
        \drgiht{F}{G}{cred}
        \dneglit{H}{cgreen}
        \drgiht{D}{I}{cgreen}
        \dleft{I}{F}{cblue}
        \node[anchor=east] at (-0.5,0) {$\Phi_{\{a_1,a_2,a_3,\neg a_4\}}:$};
    \end{scope}
\end{tikzpicture}
\caption{
An illustration of the formulas given in \Cref{ex:piesup}. Every row corresponds to a $3$-\textsc{Disjunct-QBF}(\TCNF),
where colors green, red and blue represent individual clauses of each formula and black color represent clauses that are shared.
Note that $y_2$ ($y_3$)
appears negatively with $z_2$ in red ($z_1$ in red) 
and positively with $z_2$ in green ($z_3$ in blue).
The formulas $\Phi_{\{a_1,a_2,a_3,a_4\}}$ and $\Phi_{\{a_1,a_2,a_3,\neg a_4\}}$ 
presents the steps of iteration of \Cref{eq:pies_update_short},
where $y_2$ is chosen to be the reducible.}
\end{figure}
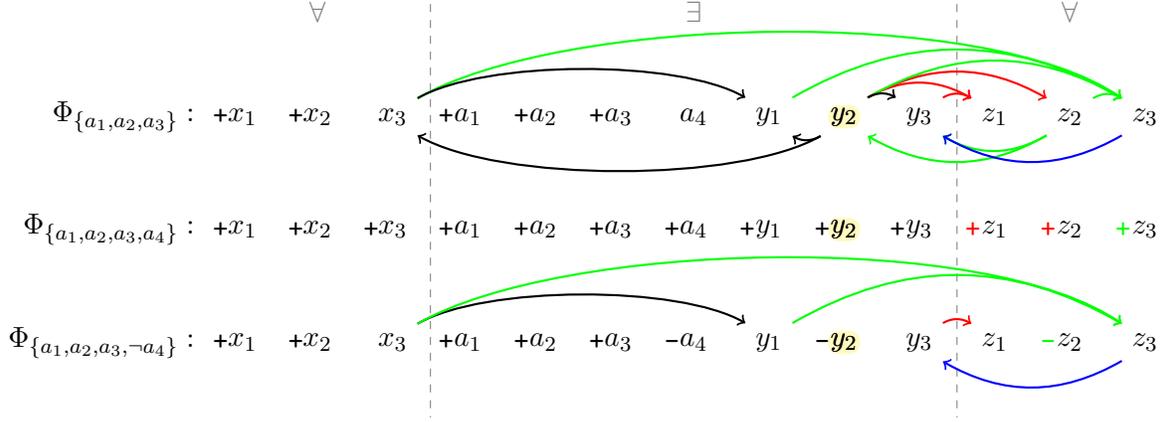

We first show that there will be at most $2^{2k^2+k}$ updates, i.e.,
there is some $i\leq 2^{2k^2+k}$ such that $\Phi^i$ fulfills (A3')
and we will define $\pies(\Phi)$ to be equal to $\Phi^i$.
Let $f(\Phi_L)$ denote the sum $\sum_{\phi \in \Phi_L} \min(2k, u(\phi))$,  
where $u(\phi)$ is the number of unit clauses in $\phi$  
that contain variables from $X_q$.  
Assume that $\Phi_L$ does not satisfy (A3').  
Then, there is a reducible variable $x \in X_{q-1}$
and $x$ appears positively in a clause of $\phi$  
and negatively in a clause of $\phi'$ both containing variables from $X_q$.
In this case, $f(\Phi_L) < 2k \cdot |\Phi_L|$,  
and assigning a value to $x$ increases the number  
of unit clauses in either $\phi$ or $\phi'$,  
because $f(\{\phi\}) < f(\{\phi[x=0]\})$ and $f(\{\phi'\}) < f(\{\phi'[x=1]\})$.  
Therefore, we have  
$f(\Phi_L) < f(\Phi_{L \cup \{a_{|L| + 1}\}})$ and  
$f(\Phi_L) < f(\Phi_{L \cup \{\neg a_{|L| + 1}\}})$.  
This implies that each $\Phi_L$ can be updated at most $2k^2$ times (because,
as we show in \Cref{lem:pies_A_assigment}, each $\Phi_L$ contains at most $k$ disjuncts),  
and in the end, we obtain a formula satisfying (A3').
Finally, it holds that $\Phi \Leftrightarrow \pies(\Phi)$,  
because each update from $\Phi^i$ to $\Phi^{i+1}$ produces an
equivalent formula. We therefore obtain the following (which can
formally be obtained from \Cref{lem:pies_A_assigment,lem:tcnf_pies_limit,lem:phi_pies,lem:computecat}).
\begin{corollary}\label[corollary]{cor:phi_pies_short}
    Let $\Phi $ be a propagated $k$-\textsc{Disjunct-QBF}(\TCNF),
    then 
    $\Phi \Leftrightarrow \pies(\Phi)$,
    $\pies(\Phi)$ satisfies (A1), (A2), and (A3'), and
    $\pies(\Phi)$ is a $2^{2k^2 + k}k$-\textsc{Disjunct-QBF}$(\TCNF)$.
\end{corollary}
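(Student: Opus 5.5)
We prove \Cref{cor:phi_pies_short} by induction over the sequence $\Phi^0=\del(\Phi),\Phi^1,\dots$ of the update rule \eqref{eq:pies_update_short}, establishing four facts in turn: every $\Phi^i$ keeps (A1) and (A2) and has a common block structure; each update is an equivalence; the process terminates within the claimed bound; and the final formula satisfies (A3').

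\textbf{Base case and the invariant.} By \Cref{cor:bqa-dog}, $\Phi\Leftrightarrow\del(\Phi)$, and $\del(\Phi)$ satisfies (A1) and (A2). It is the disjunction over full literal sets $L$ on $\{a_1,\dots,a_k\}$ of blocks $\Phi_L=\Phi^\del_{I(L)}$, each with at most $k$ disjuncts. The invariant carried through the induction is the following.
\begin{itemize}
\item $\Phi^i=\pref.\bigvee_L\Phi^i_L$, where the sets $L$ are pairwise contradictory literal sets over $A_\Phi$.
\item Each block $\Phi^i_L$ has at most $k$ disjuncts.
\item Every disjunct of $\Phi^i_L$ contains the unit clauses of $L$.
\item Each block $\Phi^i_L$ satisfies (A1) and (A2).
\end{itemize}

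\textbf{Preservation of the invariant.} An update replaces $\Phi^i_L$ by the two blocks indexed by $L\cup\{a_{|L|+1}\}$ and $L\cup\{\neg a_{|L|+1}\}$. Each new block has one disjunct per old disjunct, so the bound of $k$ is preserved. Each new disjunct is $\cnfprop{\cdot}$ of something conjoined with units, which gives (A1), since propagation is applied after adding the unit clauses. For (A2), the $X_q$-free part of $\cnfprop{\phi[x=c]}$ should depend only on the $X_q$-free part of $\phi$. This holds because $\phi$ is propagated: a resolvent of two clauses without $X_q$ contains no $X_q$ variable, and an $X_q$-containing clause resolved with the unit $x=c$ still contains its $X_q$ variable. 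Hence disjuncts that agreed on their $X_q$-free part before the update still agree afterwards.

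\textbf{Equivalence of each update.} Since $x$ is existential and $a_{|L|+1}$ is a fresh existential variable, for any fixed play the block $\Phi^i_L$ is equivalent to the disjunction of $x=1\wedge\Phi^i_L[x=1]$ and $x=0\wedge\Phi^i_L[x=0]$. Choosing $a_{|L|+1}=x$ selects the surviving side, so winning strategies transfer in both directions. The variables $A_\Phi$ and $X_{q-1}$ share a block, so $a_{|L|+1}$ can be placed after $x$, mirroring the argument of \Cref{lem:bqa-dog-eq}. The remaining blocks, which sit under contradictory $L$'s, are untouched.

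\textbf{Termination and size.} This is the main obstacle. Use the potential $f(\Phi_L)=\sum_{\phi}\min(2k,u(\phi))$, which is bounded by $2k\cdot k$. Let $x$ be reducible, witnessed by $\phi$ and $\phi'$, each with fewer than $2k$ units over $X_q$. In the $x=0$ branch, the clause $(x\vee z)$ of $\phi$ with $z\in X_q$ becomes the unit $(z)$. Because $\phi$ was propagated, this unit was not already present: if it were, the clause would have been subsumed and removed. So $u$ strictly increases for $\phi$, and symmetrically for $\phi'$ in the $x=1$ branch. The potential never decreases in any disjunct, since setting $x$ only adds units over $X_q$, or removes them when a disjunct becomes false. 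For the latter case I would argue that falsified disjuncts are deleted and treated as having maximal potential.

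It follows that every path of the branching tree has length at most $2k^2$. The $a$-literal sets therefore have size at most $k+2k^2$, at most $2^{2k^2+k}$ leaf blocks arise, and each has at most $k$ disjuncts. This gives the claimed $2^{2k^2+k}k$ disjuncts, which uses exactly the $2k^2+k$ variables of $A_\Phi$.

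\textbf{Final property.} At the fixed point no block has a reducible variable, which is precisely (A3').
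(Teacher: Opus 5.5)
Your proposal is correct and follows essentially the same route as the paper. The paper obtains the corollary by combining four ingredients: the base case for $\del(\Phi)$ (\Cref{lem:bqa-dog-eq,lem:bqa-dog-prop}); the invariant of at most $k$ disjuncts per block (\Cref{lem:pies_A_assigment}); the potential $\sum_{\phi}\min(2k,u(\phi))$, which gives termination and the disjunct count (\Cref{lem:tcnf_pies_limit}); and the strategy transfer via $a_{|L|+1}:=x$ (\Cref{lem:phi_pies}). Property (A3') then holds at the fixed point by the definition of reducibility.

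Your treatment of (A2) and of falsified disjuncts is more explicit than the paper's, which only says that the update modifies all disjuncts of a block in the same way. For (A2), your two listed facts do not by themselves cover a new unit $(z)$ with $z\in X_q$ resolving with $(\neg z\vee y)$ to give an $X_q$-free clause. The cleanest justification is this: since $\phi$ is propagated, $\phi[x=c]$ is already closed under resolution up to subsumption. Hence $\cnfprop{\phi[x=c]}$ only deletes subsumed clauses, and an $X_q$-free clause can only be subsumed by an $X_q$-free one. Combined with (A2) and the witnesses $\phi,\phi'$, this also shows that no disjunct of a split block ever becomes $\bot$. Such a unit clause on $x$ would be shared across the block and would subsume a witness clause. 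So your deletion convention is harmless but never actually needed.
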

As mentioned above, the variables in the innermost existential
quantifier block can now be removed from $\pies(\Phi)$ and the at most
$\Asize$ fresh variables in $A_\Phi$ can then be removed using
standard quantifier elimination, which results in a formula with one
less existential quantifier block. Repeating this process then results
in a universal $k'$-\textsc{Disjunct-QBF}$(\TCNF)$ with $k'$ being
bounded by a function of $k+q$, which can be solved
using~\cite[Corollary 7]{DBLP:conf/ijcai/ErikssonLOOPR24} in FPT-time
parameterized by $k'$.

\subsection*{Organization}
After having introduced the basic notions in
Section~\ref{sec:preliminaries} we begin the paper with
Section~\ref{sec:disjunct} where we relate backdoors to disjunctive
formulas, and Section~\ref{sec:hardness}, where we use this connection
in order to prove lower bounds. In Section~\ref{sec:algorithms} we
begin the algorithmic part of the paper with FPT-algorithms for
backdoor evaluation into $\TCNF$ and $\AFF$ with bounded quantifier
alternations. We continue with the introduction of our method for
eliminating guarded universal sets and its applications for enhanced
(deletion) backdoors in Section~\ref{sec:enhanced} and provide a brief
comparison to the backdoors based on variable dependencies and
introduced by Samer and Szeider~\cite{SamerSzeider07a} in
Section~\ref{sec:comp-dep-back}. Finally, we wrap up the paper in Section~\ref{sec:discussion} with a brief summary and some interesting directions for future research.

\section{Preliminaries} \label{sec:preliminaries}
We begin the paper by introducing the notation necessary for the rest of the paper.

\subsection{Clauses, Relations and Formulas}

Boolean expressions, formulas, variables, 
literals, conjunctive normal form (CNF), clauses and atoms are
defined in the standard way~(cf.~\cite{HandbookOfSat}).
We treat $1$/$\top$ and $0/\bot$ as the truth values
``true'' and ``false'', respectively, 
and formulas in \CNF as sets of clauses where each clause is a set of literals. 
More generally, it is convenient to use a relational perspective where atoms are expressions of the form $R(x_1,  \ldots, x_r)$ where $x_1, \ldots, x_r$ are variables and $R \subseteq \{0,1\}^r$ is an $r$-ary relation. A function $f \colon \{x_1, \ldots, x_r\} \to \{0,1\}$ then satisfies an atom $R(x_1, \ldots, x_r)$ if $(f(x_1), \ldots, f(x_r)) \in R$. Note that clauses can be defined in this way, e.g., the clause $(x \lor y \lor \neg z)$ is equivalent to an atom $R(x,y,z)$ where $R = \{0,1\}^3 \setminus \{(0,0,1)\}$.

Given a Boolean formula $\phi$, we write $\vars(\phi)$ for the set of variables in $\phi$. 
For a (partial) assignment $\tau : V' \rightarrow \{0,1\}$, where $V'\subseteq V(\phi)$, 
and an atom $R(x_1, \ldots, x_r)$, 
we let $R(x_1, \ldots, x_r)[\tau]$ be the atom obtained by (1) removing any tuple $(b_1, \ldots, b_r) \in R$ if there is an $1 \leq i \leq r$ where $x_i \in V'$ and $\tau(x_i) \neq b_i$, and then (2) projecting away any $x_i$ where $x_i \in V'$. 
This easily extends to conjunctions/sets of atoms, and for a formula $\phi$ we thus let 
$\phi[\tau]$ be the formula obtained by simplifying each atom according to $\tau$.
If $V' = \{x\}$ and $\tau(x) = b$, then we simply write $\phi\repl{x = b}$.
In the special case when $\phi$ is in CNF, $\phi[\tau]$ is the formula obtained from $\phi$ by removing all clauses 
satisfied by $\tau$ and removing all literals from clauses that are set to $0$ by $\tau$. 
For a conjunctive $\phi$ and a set of variables $S\subseteq V(\phi)$,
we denote by $C(\phi,S)$ the set of all atoms that contain at least
one variable from $S$. If $\phi$ is in CNF we write $\phi - S$ for the
formula obtained from $\phi$ after removing the variables in $S$ from
every clause of $\phi$.

Let us now define the following classes of conjunctive Boolean formulas.

\begin{enumerate}
\item
 \DCNF for the class of formulas in \CNF where each clause has at most $d$ literals.
\item
 \HORN for the class of formulas in \CNF where each clause has at most one positive literal.
\item
\AFF for the class of conjunctive formulas with atoms of the form
  $(\ell_1 \oplus \ell_2 \oplus \dots \oplus \ell_k) = b$, where
  $\oplus$ is the XOR operator, $b
  \in \{0,1\}$, and each $\ell_i = b_i x_i$ for a variable $x_i$ and
  coefficient $b_i \in \{0,1\}$. Without loss of generality, we will
  represent the atoms of affine formulas by a pair $(A,b)$, where $A$ is
  a set of variables and $b \in \{0,1\}$. Moreover, for a partial
  assignment $\tau \colon X \rightarrow \{0,1\}$ and an (affine) equation
  $C=(A,b)$, we denote by $C[\tau]$ the atom $(A\setminus
  V(\tau),b+A[\tau])$, where $A[\tau]=\sum_{v \in V(\tau)\cap A}\tau(v)$. We let $d$-$\AFF \subseteq \AFF$ where each involved equation has arity at most $d \geq 1$.
\end{enumerate}
We say that a conjunctive formula is in \emph{clausal form} if all
its atoms are clauses.
We refer to clauses with a single literal as \emph{unit clauses},
and slightly abusing terminology,
refer to single-variable atoms $(\ell) = b$ 
of affine formulas as unit clauses as well.

    The {\em satisfiability} problem for $\phi$ is then simply to determine whether there exists a satisfying assignment $f \colon \vars(\phi) \to \{0,1\}$. It is worth remarking that the satisfiability problem for propositional formulas in $2$-CNF, Horn and affine, is in P, while  $k$-CNF for  $k \geq 3$ yields an NP-hard satisfiability problem~\cite{sch78}.

For a pair of formulas $\phi_1$ and $\phi_2$
on the same set of variables $V$, we say that
$\phi_1$ and $\phi_2$ are \emph{equisatisfiable}
if, for every assignment $\alpha : V \to \{0,1\}$,
we have that $\alpha$ satisfies
$\phi_1$ if and only of
$\alpha$ satisfies $\phi_2$.
We say that two QBF formulas are equisatisfiable
if they have the same quantifier prefix
and their matrices are equisatisfiable.

\subsection{The Quantified Boolean Formula Problem}

A \emph{quantified Boolean formula (QBF)} 
is of the form $\cQ. \phi$, where 
$\cQ = Q_1 x_1 \ldots Q_n x_n$ with
$Q_i \in \{\forall, \exists\}$ for all $1 \leq i \leq n$
is the \emph{(quantifier) prefix},
$x_1, \dots, x_n$ are variables, and
$\phi$ is a Boolean formula 
%$\phi$
on these variables
called the
\emph{matrix}.
If $Q_i = \forall$, we say that
$x_i$ is a \emph{universal variable},
and if $Q_i = \exists$, we say that
$x_i$ is an \emph{existential variable}.

The truth value of a QBF $\cQ. \phi$ is 
defined recursively.
Let $\cQ = Q_1 x_1 \dots Q_n x_n$ and
$\cQ' = Q_2 x_2 \dots Q_n x_n$.
Then $\cQ. \phi$ is \emph{true} if
\begin{itemize}
  \item $\phi$ is a true Boolean expression, or
  \item $Q_1 = \exists$ and it holds that 
  $\cQ'. \phi\repl{x_1 = 0}$ \emph{or} $\cQ'. \phi\repl{x_1 = 1}$ is true, or
  \item $Q_1 = \forall$ and 
  $\cQ'. \phi\repl{x_1 = 0}$ \emph{and} $\cQ'. \phi\repl{x_1 = 1}$ are true.
\end{itemize}
Otherwise, $\cQ. \phi$ is \emph{false}. 
For a QBF $\Phi=\pref.\phi$ and a partial assignment $\tau : V' \rightarrow \{0,1\}$, where $V'\subseteq V(\Phi)$, we write $\Phi[\tau]$ for the QBF formula $\pref'.\phi[\tau]$, where $\pref'$ is obtained from $\pref$ after removing the variables in $V'$. We represent CNF formulas by sets and thus take us the liberty to write e.g.\ $\phi \in \Phi$ to denote a clause $\phi$ occurring in the QBF $\Phi$, i.e., we do not always bother to first specify the quantifier prefix and then extract the clause.

A QBF is in {\em block} form if it is of the form 
$Q_1 X_1 \dots Q_q X_q. \phi$ where $X_1,\ldots, X_q$ are pairwise
disjoint set of variables, $Q_i\in \{\exists,\forall\}$ for all $i\in
[q]$ and $Q_i\neq Q_{i+1}$ for all $i\in [q-1]$.
The parameter $q - 1$ is called the {\em quantifier alternation depth}.

\begin{definition}
For a class of formulas $\CCC$ we write $\QBF(\CCC)$ for the computational problem of deciding whether a QBF formula $\cQ . \phi$, where $\phi \in \CCC$, is true.
\end{definition}

The evaluation  of a QBF  $Q_1 x_1 \dots Q_n x_n . \phi$ can be seen
as a two player game (a.k.a. Hintikka game) played
between the {\em universal} and the {\em existential}
player. In the $i$-th step of the game the $Q_i$-player, i.e., the
existential player if $Q_i=\exists$ and the universal player
otherwise),
assigns a value to the
variable $x_i$. The existential player wins the game if $\phi$
evaluates
to true under the assignment constructed in the game and the universal player wins
if $\phi$ evaluates to false.

A \emph{strategy} for a player is a rooted binary tree \( T \) of
depth \( n+1 \) where nodes at level \( j \in \{0, \ldots, n-1\} \)
are labeled \( x_{j+1} \), leaves are labeled \( \top \) or \( \bot
\), and edges are labeled \( 0 \) or \( 1 \). If $T$ is an existential
strategy, nodes labeled by existential variables have one child, and
those by universal variables have two; the roles are reversed for
universal strategies. Each node \( t \) corresponds to the assignment
\( \tau_t^T \) given by the vertex and edge labels on the root-to-\( t
\) path, defining \( \phi[\tau_l^T] \) as the label of each leaf \( l
\). Let \( \ass(T) \) denote all root-to-leaf assignments of $T$. An
existential strategy is \emph{winning} if all leaves are \( \top \),
and a universal strategy is \emph{winning} if all leaves are \( \bot \).

    \begin{proposition}
    A QBF formula $\Phi$ is true ($\Phi \Leftrightarrow \top$) if and only if the existential player has a winning strategy, and $\Phi$ is false ($\Phi \Leftrightarrow \bot$) if and only if 
    the universal player has a winning strategy.     
    \end{proposition}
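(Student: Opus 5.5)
We prove the two directions separately, by induction on the number $n$ of variables in the prefix, using the recursive definition of truth. Since the existential and universal cases are symmetric, it suffices to prove that $\Phi$ is true if and only if the existential player has a winning strategy. We then add that $\Phi$ is false if and only if the universal player has a winning strategy. This follows from the same argument with the roles of $\exists/\forall$ and of $\top/\bot$ swapped. It also follows from determinacy: the two players cannot both have winning strategies, because an existential and a universal strategy share exactly one root-to-leaf path, whose leaf cannot be both $\top$ and $\bot$.

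For the base case $n=0$, the matrix is a closed Boolean expression. A strategy is then a single leaf labelled by its truth value, so the claim is immediate.

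For the inductive step, write $\Phi=Q_1x_1\,\pref'.\phi$ and set $\Phi_b=\pref'.\phi\repl{x_1=b}$ for $b\in\{0,1\}$. The key observation is that the subtree of a strategy $T$ hanging below the edge labelled $b$ at the root is itself a strategy for $\Phi_b$ of the same type. Its nodes are relabelled by $x_2,\dots,x_n$, and its leaves keep their labels, because $\phi[\tau]=\phi\repl{x_1=b}[\tau\setminus\{x_1\}]$ whenever $\tau(x_1)=b$. Conversely, strategies for $\Phi_0$ and $\Phi_1$ can be glued under a new root labelled $x_1$.

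We now split on the first quantifier. If $Q_1=\exists$, an existential winning strategy for $\Phi$ has one child at the root, say along the edge labelled $b$. Its subtree is a winning strategy for $\Phi_b$, so $\Phi_b$ is true by induction, and hence $\Phi$ is true. Conversely, if some $\Phi_b$ is true, we take its winning strategy from the induction hypothesis and hang it under a root with the single edge $b$. If $Q_1=\forall$, the root has both children, and the argument is the same using both subtrees.

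The only delicate point is bookkeeping: we must check that the leaf labels $\phi[\tau_l^T]$ commute with restricting to $x_1=b$, i.e., that simplification by a partial assignment is compositional. This holds directly from the definition of $\phi[\tau]$ given in the preliminaries. The rest is routine.
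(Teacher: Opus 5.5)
Your proof is correct. The paper gives no proof of this proposition; it is stated as a standard fact right after strategies are defined. Your argument is the standard one, mirroring the paper's recursive definition of truth: induct on the length of the prefix, split a strategy tree at its root into strategies for $\Phi_0,\Phi_1$, and glue them under a new root for the converse. For the second claim, your dual induction also works. It exchanges $\exists/\forall$ and $\top/\bot$ and uses the negated recursion for ``false'', which the paper defines as ``not true''.

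One caveat concerns your alternative justification via determinacy. The fact that an existential and a universal strategy share exactly one root-to-leaf path only shows that the two players cannot both win. Combined with the first claim, this gives ``universal winning strategy $\Rightarrow$ $\Phi$ false''. It does not give ``$\Phi$ false $\Rightarrow$ universal winning strategy''. That direction (at least one player wins) is the real content of determinacy, and it is exactly what your dual induction proves, so the remark should not be presented as an independent proof.

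A minor bookkeeping point: the paper's definition also makes $\mathcal{Q}.\phi$ true whenever $\phi$ is a true Boolean expression, even for a nonempty prefix. When you unfold ``$\Phi$ is true'' in the inductive step, note that this case is subsumed, since then $\Phi_0$ and $\Phi_1$ are both true.
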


As an important intermediate problem in our algorithms and lower bound proofs,
we will use \QBF on disjunctive formulas.
Formally, for a class of formulas $\CCC$ and 
a positive integer $q \in \ZZ_+$, let 
$q$-\textsc{Disjunct}-$\CCC$ be the class of formulas
$\phi$ such that $\phi = \phi_1 \lor \ldots \lor \phi_q$
and let $q$-\textsc{Disjunct}-$\QBF(\CCC)$ be the class of QBF
formulas, where the matrix is in $q$-\textsc{Disjunct}-$\CCC$,
for some $\phi_1, \dots, \phi_q \in \CCC$.
We also use $q$-\textsc{Disjunct}-$\QBF(\CCC)$ to denote the \QBF{}
problem on $q$-\textsc{Disjunct}-$\QBF(\CCC)$ formulas.
For an instance $\Phi = \cQ. (\phi_1 \lor \ldots \lor \phi_q)$ 
of the latter problem,
we use shorthand $\phi \in \Phi$ for $\phi \in \{\phi_1,\dots,\phi_q\}$.
Additionally, if $\CCC \subseteq \CNF$,
for a disjunctive formula $\Phi = \bigvee_{i=1}^{q} \phi_i$ and a set of variables $B$, 
we write $\Phi - B$ for the formula $\bigvee_{i=1}^{q} \phi_i - B$.

    \subsection{Parameterized Complexity}
    
    We use standard notation and primarily follow~\cite{downey2013fundamentals}~and~\cite{fomin2019kernelization}.
    For a finite alphabet $\Sigma$ a 
    {\em parameterized problem} $L$ is a subset of $\Sigma^* \times \NN$.
    The problem $L$ is \emph{fixed-parameter tractable} (or, in FPT)
    if there is an algorithm deciding 
    whether an instance $(I, k) \in \Sigma^* \times \NN$ is in $L$
    in time $f(k) \cdot |I|^c$, where
    $f$ is some computable function and
    $c$ is a constant independent of $(I, k)$.
    An equivalent definition of FPT can be given by a \emph{kernelization (algorithm)} for $L$, i.e.,
    an algorithm that takes $(I, k) \in \Sigma^* \times \NN$ as input and
    in time polynomial in $|(I,k)|$, 
    outputs $(I',k') \in \Sigma^* \times \NN$ such that: $(I,k)
      \in L$ if and only if $(I',k')\in L$, and $|I'|,k' \leq h(k)$ for
      some computable function $h$. 
      We say that problem $L$ is in XP
      if there is an algorithm deciding 
      whether an instance $(I, k) \in \Sigma^* \times \NN$ is in $L$
      in time $|I|^{f(k)}$ for some computable function
      $f$.
    
    Let $L, L' \subseteq \Sigma^* \times \NN$ be two parameterized problems.
    A mapping $P : \Sigma^* \times \NN \to \Sigma^* \times \NN$ is a 
    \emph{fixed-parameter (FPT) reduction from $L$ to $L'$}
    if there exist computable functions $f,p : \NN \to \NN$
    and a constant $c$ such that the following conditions hold:
    \begin{itemize}
      \item $(I,k) \in L$ if and only if $P(I,k) = (I',k') \in L'$,
      \item $k' \leq p(k)$, and
      \item $P(I,k)$ can be computed in $f(k) \cdot |I|^c$ time.
    \end{itemize}
    
    Parameterized complexity also contains a complementary theory of hardness. Here, the hard classes $W[1]\subseteq W[2] \subseteq \ldots $, form a hierarchy of classes. We say that a problem is {\em $W[1]$-hard} if it admits an FPT-reduction from \textsc{Independent Set}
    (parameterized by the number of vertices in the independent set). 
    Similarly, we say that  a problem is {\em $W[2]$-hard} if it admits an FPT-reduction from \textsc{Hitting Set}
    parameterized by the number of vertices in the hitting set.
    Many problems in this paper turn out to be considerably harder than this and we say that a parameterized problem $X \subseteq \Sigma^* \times \mathbb{N}$ is {\em para-PSpace-hard} if there exists a constant $c \geq 0$ such that the slice $\{x \mid (x,k) \in X, k \leq c\}$ is PSpace-hard.

    For sharper lower bounds, stronger assumptions are sometimes necessary. For $d \geq 3$ let $c_d$ denote the infimum of all constants $c$ such that $d$-SAT is solvable in $2^{c n}$ time by a deterministic algorithm. The {\em exponential-time hypothesis} (ETH) then states that $c_3 > 0$, i.e., that 3-SAT is not solvable in subexponential time.
    The {\em strong exponential-time hypothesis} (SETH) additionally conjectures that the limit of the sequence $c_3, c_4, \ldots$ tends to $1$, which in particular is known to imply that the satisfiability problem for clauses of arbitrary length
    is not solvable in $2^{c n}$  time for {\em any} $c < 1$~\cite{impagliazzo2009}.

    We will also frequently make use of the \textsc{Hitting Set} problem,
    which is defined as follows.

    An instance $I=(U,\mathcal{F},k)$ of \textsc{Hitting Set} is a set
    (universe) $U$, a family $\mathcal{F}$ of subsets of $U$, and an
    integer $k$ and the task is to decide whether $I$ has a \emph{hitting
      set} of size at most $k$, i.e., a set $H$ of size at most $k$ such
    that $H\cap F\neq \emptyset$ for every $F \in \mathcal{F}$.
    \begin{proposition}[\cite{downey2013fundamentals}]
    \label[proposition]{pro:hit}
    \textsc{Hitting Set} can be solved in time $\bigoh(d^k|\mathcal{F}|d)$,
      where $d$ is the maximum size of any set in $\mathcal{F}$.
    \end{proposition}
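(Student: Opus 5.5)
This is the classical bounded search tree argument. We branch on the sets of $\mathcal{F}$ and show that the recursion tree has at most $d^k$ leaves, with $\bigoh(|\mathcal{F}|d)$ work per node.

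\textbf{Recursive procedure.} Define $\textsf{HS}(\mathcal{F}', k')$, where $\mathcal{F}'$ is the current family of sets still to be hit and $k'$ is the remaining budget.
\begin{enumerate}
\item If $\mathcal{F}'=\emptyset$, answer yes.
\item If $k'=0$ and $\mathcal{F}'\neq\emptyset$, answer no. Also answer no if $\mathcal{F}'$ contains the empty set, since it cannot be hit.
\item Otherwise, pick any $F\in\mathcal{F}'$. For each $u\in F$, recurse on $\bigl(\SB F'\in\mathcal{F}' \SM u\notin F'\SE, k'-1\bigr)$. Answer yes if any branch answers yes.
\end{enumerate}

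\textbf{Correctness.} We argue by induction on $k'$.
\begin{itemize}
\item Suppose a hitting set $H$ with $|H|\le k'$ exists. Then $H$ contains some $u\in F$. The set $H\setminus\{u\}$ has size at most $k'-1$ and hits every set of $\mathcal{F}'$ not containing $u$, so the branch for $u$ succeeds.
\item Conversely, suppose the branch for $u$ returns a hitting set $H'$ of size at most $k'-1$. Then $H'\cup\{u\}$ is a hitting set of $\mathcal{F}'$ of size at most $k'$.
\end{itemize}
The base cases are immediate.

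\textbf{Running time.} Each internal node has at most $|F|\le d$ children, and the depth is at most $k$. Hence the tree has at most $d^k$ leaves and $\bigoh(d^k)$ nodes in total (the total is $\bigoh(k)$ if $d=1$).

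At a node, building each child family means scanning $\mathcal{F}'$ once, which costs $\bigoh(|\mathcal{F}|d)$ if each set is stored as a list of size at most $d$. This cost can be charged to the child node, so every node pays $\bigoh(|\mathcal{F}|d)$. The total is therefore $\bigoh(d^k|\mathcal{F}|d)$, as claimed.

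The only delicate point is this accounting, namely charging each filtering step to the child it creates rather than to the parent, so that no extra factor of $d$ appears. Membership of $u$ in a set $F'$ is checked in $\bigoh(d)$ time by a scan, which fits within the $\bigoh(|\mathcal{F}|d)$ per-node bound.
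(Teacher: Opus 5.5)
Your bounded search tree argument is correct and is the standard proof behind the cited Downey--Fellows result; the paper gives no proof of its own, only the citation. The one loose end is the degenerate case $d=1$: your own count gives $\bigoh(k|\mathcal{F}|)$ there rather than the claimed $\bigoh(|\mathcal{F}|)$. That case is trivial (the answer is yes iff no set is empty and the singletons contain at most $k$ distinct elements), so it should be handled separately.
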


\subsection{Backdoors}
We begin by defining the notion of a {\em strong backdoor}.

\begin{definition}
Let $\CCC$ be a class of QBF formulas.
    For a  \QBF formula $\Phi=\cQ. \phi$ we define a {\em (strong)
      backdoor} to $\CCC$ as a set $B \subseteq \vars(\phi)$ where $\Phi[\tau] \in \mathcal{C}$ for every $\tau \colon B \to \{0,1\}$.
\end{definition}

Moreover, a set $B \subseteq \vars(\phi)$ is a \emph{(variable) deletion backdoor} if $\phi -B \in \mathcal{C}$. Note that backdoors and deletion backdoors coincide for the base classes \TCNF and \HORN.

When working with backdoors to a class of formulas $\CCC$,
it is natural to assume that the class is 
\emph{closed under partial assignments}, i.e., for every formula
$\Phi \in \CCC$ and every partial assignment $\tau : U \to \{0,1\}$,
where $U \subseteq V(\Phi)$, the subformula $\Phi[\tau] \in \CCC$.
An even stronger property for the class is being
\emph{closed under taking subformulas}:
for every $\Phi \in \CCC$, any formula that can be
obtained from $\Phi$ by removing a subset of variables
and a subset of clauses also belongs to $\CCC$.
This property holds for all classes considered in the paper,
and we will assume henceforth for all classes that 
they are closed under taking subformulas without explicitly
stating this in every result.

\begin{example}
    We consider a simple example to exemplify the basic concepts. Consider the relation $R = \{(a,b,c,d) \in \{0,1\}^4 \mid a \geq 1, (a \xor b \xor c \xor d = 0)\}$ and the $\QBF$ instance over variables $\{x,y,z,u,v,w\}$ and \[\psi = \exists x \forall v, w, \exists y,z,u . R(x, y, z, u) \land (x \lor v \lor w) \land (x \lor v).\]
    The set $\{x,v\}$ is indeed a  backdoor to $\AFF$ since we for
    any $\tau \colon \{x,v\} \to \{0,1\}$ have 
    \begin{enumerate}
        \item 
    $R(x,y,z,u)[\tau]$ can
    be represented by $\{(y\oplus z\oplus u=1)\}$ if $\tau(x) =1$, and
    by a trivially false set of equation if $\tau(x) = 0$ (e.g., $\{(x=0),(x=1)\}$, 
    \item $(x \lor v \lor w)[\tau]$ by either  $\{(w=0)\}$ or $\{(w=1)\}$, and $(x \lor v)[\tau]$ by either a trivially true, or trivially false, system of equations. 
    \end{enumerate}
\end{example}

Our backdoor notion naturally yields QBF problems parameterized by
$|B|$, i.e., we are interested in
solving an instance as fast as possible provided we are given a
(deletion) backdoor, a problem usually known as {\em (deletion) backdoor
evaluation}. 
    On the other hand, the {\em detection} problems are
    much simpler since they do not differ from the
    corresponding detection problem for the Boolean satisfiability problem
    and the more general constraint satisfaction problem. Specifically,
    backdoor detection is FPT for $\TCNF$ and
    $\HORN$~\cite{DBLP:conf/birthday/GaspersS12}, and $\AFF$ backdoor
    detection for  \QBF with arbitrary atoms is FPT when additionally
    parameterized by the maximum arity of any involved
    relation~\cite[Theorem 4]{DBLP:conf/dagstuhl/GaspersOS17} and is
    $W[2]$-hard otherwise~\cite[Theorem 7]{DBLP:conf/dagstuhl/GaspersOS17}.
%    \todo{SO: say why it also works in the unmixed case}

    \subsection{Graphs and Iterated Logarithms}
    We use basic terminology for undirected and directed
    graphs~\cite{DiestelBook}. For the problem of finding a maximum matching in a graph we need the following lemma.
    
    \begin{proposition}\label[proposition]{pro:matching}
      Let $G$ be a bipartite graph with partition $\{A,B\}$. Then, a maximum matching in $G$ can be computed in time $\bigoh(|A|^{3}+|V(G)|+|E(G)|)$.
    \end{proposition}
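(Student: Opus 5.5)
The plan is to combine a linear-time sparsification of $G$ with the classical augmenting-path algorithm (Kuhn's algorithm, equivalently Ford--Fulkerson on the unit-capacity flow network). Run directly on $G$, that algorithm would cost $\bigoh(|A|(|V(G)|+|E(G)|))$, since a matching has at most $|A|$ edges and each augmentation is one BFS/DFS in $G$. To reach the stated bound, I will first shrink $G$ to a subgraph $G'$ with at most $|A|^2$ edges and the same maximum matching size. Every matching of $G'$ is a matching of $G$, so it then suffices to find a maximum matching of $G'$.

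\emph{Sparsification.} For every $a\in A$ with $\deg_G(a)>|A|$, keep an arbitrary set $N'(a)$ of exactly $|A|$ of its neighbours; for every other $a\in A$, set $N'(a)=N_G(a)$. Let $G'$ have edges $\{ab : b\in N'(a)\}$ and vertex set $A$ together with the $B$-vertices incident to these edges. One scan over the adjacency lists builds $G'$ in time $\bigoh(|V(G)|+|E(G)|)$, and $G'$ has at most $|A|^2$ edges and at most $|A|+|A|^2$ vertices.

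\emph{Correctness of the sparsification.} This is the only step needing an argument. Among all maximum matchings of $G$, pick one, $M$, with the fewest edges outside $G'$. Suppose some $ab\in M$ is not in $G'$; then $\deg_G(a)>|A|$ and $|N'(a)|=|A|$. The vertices of $N'(a)$ matched by $M$ are matched to vertices of $A\setminus\{a\}$, so there are at most $|A|-1$ of them. Hence some $b'\in N'(a)$ is $M$-free. Then $M-ab+ab'$ is a matching of the same size with fewer edges outside $G'$, contradicting the choice of $M$. Therefore $G'$ contains a maximum matching of $G$.

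\emph{Matching in $G'$.} Run the augmenting-path algorithm on $G'$. There are at most $|A|$ augmentations, since each increases the matching size and the size is bounded by $|A|$. Each augmentation is a single search over $G'$, costing $\bigoh(|V(G')|+|E(G')|)=\bigoh(|A|^2)$, so this phase takes $\bigoh(|A|^3)$. Together with the linear-time construction of $G'$, the total running time is $\bigoh(|A|^3+|V(G)|+|E(G)|)$.
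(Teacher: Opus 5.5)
Your proof is correct and follows the paper's route: keep at most $|A|$ neighbours per vertex of $A$, then compute a maximum matching in the resulting graph with $\bigoh(|A|^2)$ edges. The only differences are that the paper uses Hopcroft--Karp ($\bigoh(\sqrt{|V(G')|}\,|E(G')|)=\bigoh(|A|^3)$) where you use plain augmenting paths for the same bound, and that you actually prove, via your exchange argument, that $G'$ contains a maximum matching of $G$, with the right choice $N'(a)=N_G(a)$ for low-degree $a$; the paper only asserts this, and its phrase ``size at most $|A|$'' is too loose to guarantee it.
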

    \begin{proof} 
      Let $G'$ be the bipartite graph obtained from $G$ as follows.
      For every vertex $a \in A$, let $N'(a)$ be an arbitrary
      subset of size at most $|A|$ of the neighbors $N_G(a)$ of $a$ in
      $G$. Then, $G'$ has vertices $A \cup \bigcup_{a \in A}N'(a)$
      and an edge between $a$ and $b$ if $b \in N'(a)$.
      Then, $|V(G')|\leq |A|^2+|A|$ and $|E(G)|\leq |A|^2$.
      Therefore,
      using e.g.~\cite{DBLP:journals/siamcomp/HopcroftK73} we can compute a maximum matching of $G'$ in time
      $\bigoh(\sqrt{|V(G')|}|E(G')|)=\bigoh(|A|^3)$ and since
      any maximum matching in $G'$ is also a maximum matching in $G$, this
      allows us to compute a maximum matching for $G$.
\end{proof}

    \emph{Tetration} $^na$, or the power tower of $a$s of order $n$, is (for natural numbers) defined as $a^{a^{a^{\dots^a}}}$, containing $n$ $a$s, or 
    \begin{equation*}
    	^nx = 
    	\begin{cases} 
    		% 1&\quad n=0\text{,} \\ 
    		1&\quad n<1\text{,} \\ 
    		% a^{(^{n-1}a)}&\quad n\geq1\text{.}
    		x^{(^{n-1}x)}&\quad n\geq1\text{.}
    	\end{cases}
    \end{equation*}
    
    The \emph{Iterated logarithm} (or \emph{Super-logarithm}, even if the definitions of these two normally differs slightly) of base $b$ $\log^*_bx$ is defined as 
    \begin{equation*}
    	\log^*_bx = 
    	\begin{cases} 
    		0&\quad x\leq1\text{,} \\ 
    		1+\log^*_b(\log_bx)&\quad x>1\text{.}
    	\end{cases}
    \end{equation*}
    Essentially, $\log^*_bx$ is the inverse of $^xb$ and $x=\log^*_b {^x b}$.
    
    As a alternative form of tetration we also define $\exp_x^{y}(z)$ as
    \begin{equation*}
    	\exp_x^{n}(y) = 
    	\begin{cases} 
    		% 1&\quad n=0\text{,} \\ 
    		y&\quad n<1\text{,} \\ 
    		% a^{(^{n-1}a)}&\quad n\geq1\text{.}
    		x^{\exp_x^{n-1}(y)}&\quad n\geq1\text{.}
    	\end{cases}
    \end{equation*}
    This is mainly a more pleasant expression to use when the top of a tower differs from from the tower itself. 
    Additionally $\exp_x^{n}(y) \geq ^{n+\log^*_xy}x$, were the difference comes only from a rounding error in $\log^*$\footnote{Optionally, more general definitions of $^nx$, $\log^*_bx$, and $\exp_x^{n}(y)$ over $\mathbb{R}$ can be used to allow $\exp_x^{n}(y) = ^{n+\log^*_xy}x$.}.

\section{From Backdoors to Disjunctive Formulas, and Back}
\label{sec:disjunct}

Recall that an instance of $k$-\textsc{Disjunct}-$\QBF(\CCC)$
is a quantified formula $\cQ. \phi_1 \lor \dots \lor \phi_k$,
where each disjunct $\cQ. \phi_i$ belongs to $\CCC$. We often represent the  disjuncts by the set $\{\phi_1, \ldots, \phi_k\}$ and if $\Phi$ is a $k$-\textsc{Disjunct}-$\QBF(\CCC)$ formula we write $\phi \in \Phi$ to extract a specific disjunct (i.e., we do not always bother to first specify the quantifier prefix).
We will connect the complexity of $\QBF(\CCC)$
parameterized by backdoor size to 
$k$-\textsc{Disjunct}-$\QBF(\CCC)$ parameterized by $k$.
We present two reductions in the forward direction:
one using generic quantifier elimination and
an alternative approach that works under a mild technical
assumption that unit clauses are allowed, 
does not change the variable set and produces equisatisfiable formulas.
The latter is utilized in some of our algorithms.
We also present a reduction in the opposite direction, i.e.,
producing a formula with a small backdoor to $\CCC$
from a $k$-\textsc{Disjunct}-$\QBF(\cC)$.

Let us start with the quantifier elimination approach.
To describe it,
let $\Phi = \cQ. \phi$, where $\cQ = Q_1 x_1 \dots Q_n x_n$.
For an index $i \in [n]$, let $\phi'_{>i}$ be the formula
obtained from $\phi$ by substituting every variable $x_j$ 
where $i+1 \leq j \leq n$ with $x'_j$.
We denote by $\cQ_{<i}$ the prefix $Q_1 x_1 \dots Q_{i-1} x_{i-1}$ and 
by $\cQ'_{>i}$ the prefix 
$Q_{i+1} x_{i+1} Q_{i+1} x'_{i+1} \dots Q_{n} x_{n} Q_{n} x'_{n}$. 
Observe that if $Q_i = \exists$, we have
\begin{equation}
  \label{eq:elim-exists}
  \Phi \Leftrightarrow \cQ_{<i} \cQ'_{>i} . \phi\repl{x_i = 0} \lor \phi'_{>i}\repl{x_i = 1}
\end{equation}
and if $Q_i = \forall$, then
\begin{equation}
  \label{eq:elim-forall}
  \Phi \Leftrightarrow \cQ_{<i} \cQ'_{>i} . \phi\repl{x_i = 0} \land \phi'_{>i}\repl{x_i = 1}.
\end{equation}

\begin{lemma}\label[lemma]{lem:from-backdoor-to-disjunct-qe}
  There is a reduction that takes
  an instance $\Phi$ of  \QBF
  with $n$ variables and 
  a backdoor of size $k$ to a class $\CCC$, 
  and in time $\bigoh(k' \cdot \sizeOf{\Phi})$
  computes an (satisfiability-)equivalent
  instance of 
  $k'$-\textsc{Disjunct}-$\QBF(\CCC)$
  with $k' = 2^{2^{k-1}}$ and 
  at most $2^k n$ variables.
\end{lemma}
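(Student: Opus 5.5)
We prove the statement by eliminating the backdoor variables one at a time using the quantifier elimination identities \eqref{eq:elim-exists} and \eqref{eq:elim-forall}. Let $B=\{x_{i_1},\dots,x_{i_k}\}$ be the backdoor, ordered so that $i_1<\dots<i_k$ in the prefix. We process the backdoor variables from the innermost one ($x_{i_k}$) outwards. At every stage the matrix will be a Boolean combination of formulas of the form $\phi^{(c)}[\tau]$. Here $\phi^{(c)}$ is a renamed copy of $\phi$, and $\tau$ assigns all backdoor variables already eliminated. By the backdoor property, each such $\phi^{(c)}[\tau]$ lies in $\CCC$ up to variable renaming, which we assume is harmless for $\CCC$, as for all classes in the paper.

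The key step is to track the shape of the matrix. The elimination identities turn $\phi$ into either $\phi[x=0]\lor\phi'[x=1]$ or $\phi[x=0]\land\phi'[x=1]$. So after all $k$ eliminations we obtain a formula tree of depth $k$ whose $2^k$ leaves are copies of $\phi[\tau]$, one for each $\tau\colon B\to\{0,1\}$. The internal nodes at level $j$ are $\lor$ if $x_{i_j}$ is existential and $\land$ if it is universal.

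This is not yet a disjunction. We convert it to a disjunction of conjunctions of leaves by distributing $\land$ over $\lor$. The leaves are conjunctive formulas from $\CCC$, and $\CCC$ is closed under conjunction of variable-disjoint formulas; the copies made by renaming use distinct primed variables, and in general we rely on the fact that $\CCC$ is a syntactic clause class such as \TCNF, \HORN or \AFF. Hence each resulting disjunct is again in $\CCC$.

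Next we count the disjuncts. Let $D(j)$ be the number of disjuncts produced from a subtree of height $j$. The two cases give $D(j)=2D(j-1)$ for an $\lor$-node and $D(j)=D(j-1)^2$ for an $\land$-node, with $D(0)=1$. We would like the worst case to be $2^{2^{k-1}}$. The bottom level contributes at most a factor of $2$ (two leaves, either joined by $\lor$ or merged into one conjunction), and then squaring the count at each of the remaining $k-1$ levels gives $2^{2^{k-1}}$. Because $2D\le D^2$ once $D\ge 2$, the worst case is indeed an $\lor$ at the bottom followed by only $\land$s, which yields $k'=2^{2^{k-1}}$.

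For the number of variables, each elimination at most doubles the number of non-eliminated variables, so after $k$ steps there are at most $2^k n$ of them. For the running time, each of the $k'$ disjuncts is a conjunction of at most $2^k$ copies of $\phi[\tau]$. This gives size $\bigoh(2^k\sizeOf\Phi)$ per disjunct, which would make the total $\bigoh(k'2^k\sizeOf\Phi)$.

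The main obstacle is exactly this running-time bound. A naive count gives $\bigoh(k'\cdot 2^k\sizeOf\Phi)$ rather than the stated $\bigoh(k'\sizeOf\Phi)$. To close the gap I would first check whether the $2^k$ factor is absorbed: for $k\ge 3$ one should have $2^k\le k'$, but that alone does not remove the factor. The alternative is a sharper count: compute disjunct sizes via the same recursion, with $S(j)=S(j-1)$ for an $\lor$-node and $S(j)=2S(j-1)$ for an $\land$-node, and then bound the total size $D\cdot S$ by $\bigoh(k'\sizeOf\Phi)$ up to the constant conventions in the statement.

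A secondary point to verify is that the prefix $\cQ_{<i}\cQ'_{>i}$ stays well-formed across successive eliminations. When a later, outer variable is eliminated, the primed copies it creates must include the copies introduced by earlier eliminations. This is handled by applying the identity to the current prefix each time, which still gives the variable bound of $2^k n$.
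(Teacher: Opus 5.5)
Your argument is the paper's proof. You eliminate the backdoor variables innermost-first via \eqref{eq:elim-exists} and \eqref{eq:elim-forall} and multiply out. Your count (``factor $2$ at the bottom, then squaring'') is exactly the paper's recurrence $q\mapsto\max(2q,q^2)$ from $q=1$, and the number of variables at most doubles per step. The only difference is cosmetic: you build the whole $\land/\lor$ tree and distribute once, while the paper re-normalizes to a disjunction after every elimination.

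One correction to your closure argument. The leaves that get conjoined are not variable-disjoint, since $\phi[x_i=0]$ and $\phi'_{>i}[x_i=1]$ share every variable quantified before $x_i$. What you need is closure of $\CCC$ under arbitrary conjunction. The paper also uses this tacitly, and your fallback to syntactic clause classes provides it.

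The running-time gap you found is real, but it lies in the statement, not in your argument. The paper's proof contains no runtime analysis, and no sharper count gives $\bigoh(k'\sizeOf{\Phi})$ for this construction. Take $B=\{b_1,\dots,b_k\}$ with $b_k$ existential and $b_1,\dots,b_{k-1}$ universal. For $\CCC=\TCNF$, say, let $\phi=\psi\land c$, where $\psi$ is a large $B$-free $2$-CNF over variables quantified after $b_k$, and $c$ is one clause on $B$ plus one extra variable. Every one of the $k'=2^{2^{k-1}}$ disjuncts then contains $2^{k-1}$ copies of $\psi$ on pairwise renamed-apart variables. The output alone therefore has size $\Theta(2^{k-1}k'\sizeOf{\Phi})$, which is exactly what your $D\cdot S$ recursion gives in this worst case. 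So your proposed refinement via $S(j)$ cannot close the gap.

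The correct bound for this construction is $\bigoh(2^k k'\sizeOf{\Phi})$. It is still of the form $f(k)\cdot\sizeOf{\Phi}$, which is all the paper ever uses from this lemma (e.g.\ item~2 of \Cref{the:extendableclasses}). You should prove that bound rather than try to remove the $2^k$ factor.
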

\begin{proof}
  % \todo{check proof for runtime (G: done)}
  We apply quantifier elimination to the backdoor
  variables in the reverse order
  of their appearance in the prefix.
  One can think of the matrix of 
  the original  QBF formula
  as a $1$-\textsc{Disjunct}-$\CCC$.
  %\mateusz{We are using the following notation everywhere: $\Phi = \cQ. \bigvee_{i \in [q]}\phi_i$.}
  More generally,
  let $\Phi = \cQ. \phi$, $\phi =\bigvee_{i \in [q]}\phi_i$ be 
  $q$-\textsc{Disjunct}-$\CCC$
  %such that $\phi = \phi_1 \lor \dots \lor \phi_q$
  and let $B$ be a  backdoor to $\CCC$ 
  for all $\phi$ with $i \in [q]$.
  Let $x_i$ be the backdoor variable appearing the latest in $\cQ$.
  We show that using quantifier elimination,
  we can compute an equivalent
  $q'$-\textsc{Disjunct}-$\CCC$ with $q' \leq \max(2q,q^2)$, 
  at most twice as many variables as $\Phi$, and
  every disjunct having a backdoor $B \setminus \{x_i\}$
  to $\CCC$.
  We start with $q=1$ and backdoor of size $k$ to $\CCC$,
  so by applying these steps iteratively
  we obtain the lemma.

  Consider formulas
  $\phi\repl{x_i=0}$ and $\phi'_{>i}\repl{x_i=1}$.
  Note that $(\phi_1 \lor \ldots \lor \phi_q)\repl{x_i = 0} = 
  \phi_1\repl{x_i = 0} \lor \ldots \lor \phi_q\repl{x_i = 0}$ and similarly
  for $\phi'_{>i}\repl{x_i = 1}$.
  Thus, both formulas are in $q$-\textsc{Disjunct}-$\CCC$
  and $B \setminus \{x_i\}$ is a backdoor to $\CCC$ for both.
  In total, they have at most twice as many variables as $\phi$.
  If $Q_i  = \exists$, then by~\eqref{eq:elim-exists},
  we immediately obtain an equivalent $2q$-\textsc{Disjunct}-$\CCC$
  with the desired properties.
  If $Q_i = \forall$, then, following~\eqref{eq:elim-forall},
  $\Phi$ is equivalent to
  a formula with matrix $\phi\repl{x_i = 0} \land \phi'_{>i}\repl{x_i = 1}$,
  which can be converted into a $q^2$-\textsc{Disjunct}-$\CCC$
  by the standard rule, i.e., 
  $(\alpha_1 \lor \ldots \lor \alpha_q) \land (\beta_1 \lor \ldots
  \lor \beta_q)$ is equisatisfiable to $\bigvee_{j \in [q]} \bigvee_{j' \in [q]} (\alpha_j \land \beta_{j'})$.
\end{proof}

Now we present the alternative approach bypassing quantifier elimination.
For a conjunctive formula $\phi$
and a set of variables $X \subseteq V(\phi)$,
define the disjunctive formula 
\[ 
  \disj(\phi, X) = \bigvee_{\tau : X \to \{0,1\}} \phi[\tau] \land U_\tau,
\]
where $U_\tau$ is a conjunction of
unit clauses defined as follows:
if $\tau(x) = 1$, then $U_\tau$ contains the clause
$(x)$, otherwise it contains $(\lnot x)$.
Observe that $\tau$ is the only
satisfying assignment to $U_\tau$.

\lemequisat*
\begin{proof}
  Suppose $\alpha : V(\phi) \to \{0,1\}$
  satisfies $\phi$.
  Let $\tau$ be the restriction of $\alpha$ to $X$.
  By definition, $\alpha$ satisfies $U_{\tau}$.
  Moreover, it satisfies $\phi[\tau]$ because
  it is a subformula of $\phi$.

  On the other hand, suppose $\alpha' : V(\phi) \to \{0,1\}$
  satisfies $\disj(\phi,X)$.
  Then there exists $\tau' : X \to \{0,1\}$
  such that $\alpha'$ satisfies $\phi[\tau']$
  and $U_{\tau'}$.
  By the latter, $\alpha'$ agrees with $\tau'$ on $X$,
  so if $\tau'$ satisfies a clause in $\phi$,
  then so does $\alpha'$.
  Moreover, $\alpha'$ satisfies $\phi[\tau']$,
  i.e., all clauses in $\phi$ that are not
  satisfied by $\tau'$, hence $\alpha'$ satisfies $\phi$.
\end{proof}

We say that a class $\CCC$
\emph{allows adding unit clauses} if for every $\Phi$ in $\CCC$, 
the formula $\Phi \cup \{U_x\}$, 
where $U_x \in \{(x), (\lnot x)\}$ for some $x \in V(\Phi)$,
is also in $\CCC$.
Let $\CCC$ be a class of $\QBF$ formulas that allows adding unit clauses
and let $\Phi = \cQ. \phi$ be a QBF formula
with a backdoor $B \subseteq V(\phi)$ to $\CCC$.
Then, $\cQ. \disj(\Phi, B)$ is a
$2^{|B|}$-\textsc{Disjunct}\hy $\QBF(\CCC)$ formula.
Moreover, $\Phi$ and $\cQ. \disj(\Phi, B)$
are equisatisfiable by \Cref{lem:equisat}, 
which yields us the following lemma.

\lemfrombackdoortodisjunct*

Finally, we show a reduction in the opposite direction
that allows us to transfer lower bounds 
from the disjunctive problem to the backdoor problem.

\lemfromdisjuncttobackdoor*

\begin{proof}
  Let $\Phi = \cQ. \bigvee_{i \in [k]} \phi_i$ be a formula 
  in $k$-\textsc{Disjunct}-$\CCC$.
  We can assume without loss of generality that
  $k$ is a power of two: indeed,
  if it is not, we can add copies of $\phi_1$ as
  disjuncts until the total number of disjuncts
  becomes a power of two; clearly, the resulting
  formula is equivalent to the original.
  Assume $k = 2^q$ and
  introduce a set of variables $Z = \{z_1, \dots, z_q\}$.
  Let $\mathbf{t}_1, \dots, \mathbf{t}_k$ be
  an enumeration of the tuples in $\{0,1\}^q$,
  and define relations $T_i = \{0,1\}^q \setminus \{\mathbf{t}_i\}$.
  Now, let $\Phi' = \cQ \exists Z. \bigwedge_{i=1}^{k} \phi'_i$, 
  where $\phi'_i$ is obtained from $\phi_i$
  by replacing every atom $R(x_1,\dots,x_r)$ with
  $R'(x_1,\dots,x_r,z_1,\dots,z_q)$ where the relation $R'$ is such that 
  \[ R'(x_1,\dots,x_r,z_1,\dots,z_q) \equiv R(x_1,\dots,x_r) \lor T_i(z_1,\dots,z_q). \]
  Note that $Z$ is a backdoor to $\CCC$ for $\Phi'$.
  We claim that $\Phi \Leftrightarrow \Phi'$.
  Indeed, if the existential player has a winning strategy on $\Phi$,
  then at least one disjunct $\phi_i$ is satisfied in every branch.
  To extend the strategy to $\Phi'$, 
  set the variables in $Z$ to $\mathbf{t}_i$.
  In the opposite direction, 
  suppose the existential player 
  has a winning strategy on $\Phi'$,
  and let $\alpha$ be a partial assignment
  to the variables in $\cQ$
  in the winning strategy tree.
  Observe that $\bigwedge_{i=1}^{k} T_i(z_1,\dots,z_q)$
  is not satisfiable because $T_1 \cap \ldots \cap T_k = \emptyset$.
  Hence, for some $i \in [k]$, we have that
  $\phi'_i[\alpha] \not\equiv T_i(z_1,\dots,z_q)$.
  Since $\phi'_i \Leftrightarrow \phi_i \lor T_i(z_1,\dots,z_q)$,
  this implies that $\alpha$ satisfies $\phi_i$, 
  and we are done.
\end{proof}

In summary, 
\Cref{lem:from-backdoor-to-disjunct-qe},
\Cref{lem:from-backdoor-to-disjunct}~and~\Cref{lem:from-disjunct-to-backdoor}
imply that for a class of formulas $\CCC$,
 $\QBF$ parameterized by backdoor size to $\CCC$
is in FPT/W[1]-hard if and only 
$k'$-\textsc{Disjunct}-$\QBF(\CCC)$
parameterized by $k'$ is in FPT/W[1]-hard.
In the forthcoming sections our principal study object is therefore
$k'$-\textsc{Disjunct}-$\QBF(\CCC)$ and we do not 
always stress its connection to backdoors.

\section{Lower Bounds}
\label{sec:hardness}

Consider two parameters:
the size $k$ of a backdoor to
polynomial-time tractable classes of QBF formulas, namely
$\TCNF$, $\HORN$, and $\AFF$,
and the number $q$ of quantifier alternations.
We show that only having $k$ as the parameter
is not sufficient to obtain XP algorithms:
even for $k=2$ all these problems turn out to be PSpace-hard.
For Horn formulas, even parameterizing by $k+q$
does not allow to escape $W[1]$-hardness.
In the next section we show that the parameter $k+q$
can lead to FPT-algorithms for other base classes.

\subsection{The Squishing Lemma and Its Consequences}
\label{lem:squishing}

\sloppypar
We will show that evaluating 
QBF formulas with small backdoors to 
any class $\CCC \in \{\TCNF, \HORN, \AFF\}$ is para-PSpace-hard.
By Lemma~\ref{lem:from-disjunct-to-backdoor},
it suffices to show that $k$-\textsc{Disjunct}-$\QBF(\CCC)$
is PSpace-hard for a constant $k$.
As an intermediate step,
we will prove hardness for a simple class of formulas
$\Gamma_{\AFF} \subseteq \AFF$
which contains formulas with atoms of the form
$(x) = 0$, $(x) = 1$ and $(x \oplus y) = 0$.
To simplify the notation,
we will use the following shorthands
$(\neg x)$, $(x)$ and $(x = y)$ for the atoms above.
    The starting point of our hardness proof is a simple lemma.
    
    \begin{lemma}\label[lemma]{lem:dnf-to-disjunct}
      \textsc{Disjunct}-$\QBF(\Gamma_{\AFF})$ is PSpace-complete.
    \end{lemma}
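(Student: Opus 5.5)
Membership and hardness are handled separately. For membership, observe that \textsc{Disjunct}-$\QBF(\Gamma_{\AFF})$ is a special case of $\QBF$ with an arbitrary propositional matrix. Each disjunct is a conjunction of literals and equalities $(x=y)$, and every such atom can be rewritten in CNF with at most two clauses. Hence the matrix can be evaluated in polynomial time under any full assignment, and the standard polynomial-space recursive evaluation of QBF applies. So the problem lies in PSpace.

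For hardness, I reduce from $\QBF(\DNF)$. This problem is PSpace-complete: it is the complement of $\QBF(\CNF)$, obtained by negating the matrix and swapping all quantifiers, and PSpace is closed under complement. Let $\Psi=\cQ.\bigvee_{i\in[m]} t_i$, where each term $t_i$ is a conjunction of literals. Each positive literal $x$ is exactly the atom $(x)=1$, written $(x)$. Each negative literal $\neg x$ is exactly the atom $(x)=0$, written $(\neg x)$. Therefore every term $t_i$ is already a conjunction of $\Gamma_{\AFF}$ atoms. It follows that $\Psi$, read verbatim, is an $m$-\textsc{Disjunct}-$\QBF(\Gamma_{\AFF})$ instance with the same truth value. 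The translation is the identity up to syntax, so it runs in linear time. Equality atoms are never needed here; they appear only later, in the squishing procedure.

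The only point that needs care is that the number of disjuncts is not fixed but is part of the input; indeed, without this the problem would not be hard in the way required. I also need to justify that $\QBF(\DNF)$ is PSpace-hard in the stated direction. Let $\Phi=\cQ.\phi$ with $\phi$ in CNF. Then $\neg\Phi$ is equivalent to $\overline{\cQ}.\neg\phi$, where $\overline{\cQ}$ swaps every quantifier. By De Morgan's laws $\neg\phi$ is in DNF, so this gives a polynomial reduction from co-$\QBF(\CNF)$. Since co-PSpace equals PSpace, this yields PSpace-hardness. Apart from this, there is no real obstacle.
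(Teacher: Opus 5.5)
Your proof is correct and is essentially the paper's argument. Both negate a CNF matrix, swap every quantifier, and read each resulting term as a conjunction of the unit atoms $(x)$ and $(\neg x)$ of $\Gamma_{\AFF}$, relying on closure of PSpace under complement. You additionally spell out membership, which the paper leaves implicit. The paper starts specifically from quantified 3-CNF formulas so that every disjunct has constant size; this is irrelevant for this lemma but is used later for the ETH-based lower bound.
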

    \begin{proof}
      Recall that $\QBF(\text{3-\CNF})$ is PSpace-complete.
      Let $\Phi = \cQ. (C_1 \land \ldots \land C_m)$ be a quantified
      $3$-CNF formula and define
      $\bar{\Phi} = \bar{\cQ}. (\bar{C}_1 \lor \ldots \lor \bar{C}_m$),
      where $\bar{\cQ}$ is the prefix obtained from $\cQ$
      by replacing every $\forall$ with $\exists$ and vice versa,
      while $\bar{C_i}$ is the negation of $C_i$, e.g.,
      if $C_i = (x \lor y \lor \neg z)$, then 
      $\bar{C}_i = (\neg x) \land (\neg y) \land (z)$.
      By de Morgan's law, $\Phi \Leftrightarrow \neg \bar{\Phi}$,
      and $\bar{\Phi}$ is a disjunction of $m$ formulas in $\Gamma_{\AFF}$.
    \end{proof}
    
    Observe that the \textsc{Disjunct}-$\QBF(\Gamma_{\AFF})$ formula obtained
    in the lemma above has many disjuncts, but each disjunct
    is of constant size.

The main technical part of the section is
``squishing'', where we reduce the number of 
disjuncts to a constant at the expense of 
increasing the size of the disjuncts.
Having access to equality is essential here.

    Before we present the squishing lemma,
    we need another technical step.
    
    \begin{lemma}\label[lemma]{lem:split}
      Let $\phi$ be a formula in $\Gamma_{\AFF}$ on variables $X$.
      For every positive integer $q \geq 2$,
      there exist $q$ satisfiable formulas
      $\phi_1, \dots, \phi_q$ in $\Gamma_{\AFF}$ on variables $X \cup Y$
      such that
      \begin{enumerate}[1.]
        \item for all $i \in [q]$, the size of $\phi_i$ is at most $\bigoh(|\phi|)$,
        \item for all $i \in [q]$, we have $\forall X \exists Y. \phi_i \Leftrightarrow \top$, and
        \item for every quantifier prefix $\cQ$ on variables $X$,
          we have $\cQ. \phi \Leftrightarrow \cQ \exists Y. \phi_1 \land \ldots \land \phi_q$.
      \end{enumerate}
    \end{lemma}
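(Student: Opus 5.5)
The plan is to generalise the splitting step from the overview, which handled $q=2$, to an arbitrary $q\geq 2$ by chaining fresh copies of a replaced literal. Write $\phi = C_1 \land \dots \land C_m$, where each atom $C_k$ is of the form $(x)$, $(\neg x)$ or $(x = y)$. From each atom choose one variable occurrence $v_k$; for $(x=y)$ take $v_k = x$. Introduce fresh variables $y_k^1,\dots,y_k^{q-1}$ for every $k\in[m]$, and let $Y$ be the set of all of them. We define the formulas as follows.
\begin{itemize}
\item $\phi_1$ is obtained from $\phi$ by replacing the occurrence $v_k$ in $C_k$ with $y_k^1$, for every $k$.
\item For $2 \leq i \leq q-1$, $\phi_i = \bigwedge_k (y_k^{i-1} = y_k^{i})$.
\item $\phi_q = \bigwedge_k (y_k^{q-1} = v_k)$.
\end{itemize}
All of these are clearly in $\Gamma_{\AFF}$ and have size $\bigoh(|\phi|)$, which gives item 1. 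When $q=2$ the middle family is empty and we recover the construction from the overview.

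For item 2 and satisfiability, we use that every atom of every $\phi_i$ contains exactly one designated fresh variable, and distinct atoms receive distinct ones. For $\phi_1$ this variable is $y_k^1$; for $\phi_i$ with $i\geq 2$ it is $y_k^i$, or $y_k^{q-1}$ when $i=q$. Given any assignment of $X$, and of any other variables occurring, we set each designated variable so that its atom is satisfied. A unit atom $(y)$ or $(\neg y)$ fixes $y$, and an equality fixes $y$ to the value of its partner. Since designated variables are pairwise distinct, there are no conflicts. The one subtlety is in $\phi_1$ with $q \geq 2$, where the atom $(y_k^1 = y)$ has $y\in X$, so the dependency is on $X$ only and the argument goes through. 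The same ordering shows that $\forall X \exists Y.\phi_i$ is true, with the non-designated variables of $Y$ set arbitrarily. Each $\phi_i$ is therefore satisfiable.

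For item 3, the conjunction $\phi_2\land\dots\land\phi_q$ forces $y_k^1 = v_k$ for all $k$, through the chain $y_k^1 = y_k^2 = \dots = y_k^{q-1} = v_k$. Under this constraint $\phi_1$ becomes exactly $\phi$. Hence for any assignment $\alpha$ of $X$, the formula $\exists Y.\phi_1\land\dots\land\phi_q$ holds if and only if $\alpha$ satisfies $\phi$, and the witness is unique, namely all $y_k^j := \alpha(v_k)$. Because $Y$ is quantified innermost, $\cQ\exists Y.\bigwedge_i\phi_i$ is equivalent to $\cQ.\exists Y.\bigwedge_i \phi_i$ with the matrix replaced pointwise by an equivalent formula over $X$. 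This gives $\cQ.\phi$.

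The step that needs the most care is item 2. The acyclic-dependency argument must be checked so that designated variables are never shared and so that every other variable of the atom lies in $X$ or is non-designated. The choice of exactly one fresh variable per atom per formula is what guarantees this.
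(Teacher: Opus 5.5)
Your proof is correct and takes essentially the same approach as the paper: the same chain of fresh variables $y_k^1 = \dots = y_k^{q-1}$ spread across $\phi_2,\dots,\phi_q$, the same one-fresh-variable-per-atom argument for item 2, and the same collapse of the chain for item 3. Choosing a variable occurrence $v_k$ instead of a literal is a small improvement: the linking atoms $(y_k^{q-1} = v_k)$ are then always plain equalities and stay in $\Gamma_{\AFF}$ even for unit atoms $(\neg x)$, whereas the paper's reading $\ell_i \in \{x,\neg x\}$ would produce a disequality there.
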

    \begin{proof}
      Let $\phi = C_1 \land \ldots \land C_m$.
      For every $i \in [m]$, choose 
      any literal $\ell_i$ in $C_i$ and create 
      fresh variables $y_{i,1}, \ldots, y_{i,q-1}$.
      Create formulas $\phi_1, \ldots, \phi_q$ as follows:
      for every $i \in [m]$, 
      \begin{itemize}
        \item add atom $C'_i$ to $\phi_1$, where $C'_i$ is obtained
          from $C_i$ by replacing literal $\ell_i$ with $y_{i,1}$,
        \item for every $2 \leq j \leq q-1$,
          add $(y_{i,j-1} = y_{i,j})$ to $\phi_j$, and
        \item add $(y_{i,q-1} = \ell_i)$ to $\phi_q$.
      \end{itemize}
      This completes the construction.
    
      The first statement in the lemma follows
      by construction.
      For the second statement, note that 
      every atom $C'_i$ of $\phi_1$ contains a unique variable $y_{i,1}$ from $Y$,
      and $C'_i$ is in $\Gamma_{\AFF}$,
      so for every assignment to $X$,
      the existential can choose the value of $y_{i,1}$ to satisfy $C'_i$.
      In case of $\phi_q$, let $\ell_i \in \{x, \neg x\}$.
      Then $\forall x \exists y_{i,q-1}. (y_{i,q-1} = \ell_i)$ is true because 
      any assignment for $x$ by the universal player
      can be matched on $y_{i,q-1}$ by the existential player. 
      Finally, for $\phi_j$ with $2 \leq j \leq q-1$,
      the formulas do not contain any variables from $X$ 
      and are clearly satisfiable.
      For the third statement, note that for each $i \in [m]$,
      the formula $\phi_i$ contains the atom $C'_i$
      where $\ell_i$ is replaced by $y_{i,1}$,
      and the atoms equivalent to the chain of equalities 
      $y_{i,1} = y_{i,2} = \dots = y_{i,q-1} = \ell_i$.
    \end{proof}

    We are now ready to prove the squishing lemma.

\begin{lemma}\label[lemma]{lem:squish}
  Let $k$ and $p$ be positive integers and $K = \binom{k}{2^p}$.
  There is a polynomial-time algorithm that takes
  an instance $\Phi$ of $K$-\textsc{Disjunct}-$\QBF(\Gamma_{\AFF})$ with $m$ atoms
  and produces an equivalent instance $\Phi'$ of
  $k$-\textsc{Disjunct}-$\QBF(\Gamma_{\AFF})$
  with $\bigoh(kp+m2^p)$ atoms.
\end{lemma}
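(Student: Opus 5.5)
Index the $K=\binom{k}{2^p}$ disjuncts of $\Phi=\cQ.\bigvee_S\phi_S$ by the $2^p$-element subsets $S\subseteq[k]$; each $S$ is listed as $s_1<\dots<s_{2^p}$. The plan generalizes the row/column construction of \Cref{sec:intro_squishing}. There, a disjunct indexed by a pair was split into two parts and the parts were distributed among row and column disjuncts. Here each $\phi_S$ will be split into $q=2^p$ parts, and the parts will be distributed among $k$ new disjuncts $\psi_1,\dots,\psi_k$, with $\psi_j$ collecting parts of every $\phi_S$ with $j\in S$. After a round of selector universal variables, the existential player must satisfy a $2^p$-subset's worth of the $\psi_j$.

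\textbf{Step 1 (splitting).} Apply \Cref{lem:split} with $q=2^p$ to each $\phi_S$, using disjoint fresh variable sets $Y_S$. This yields $\phi_S^1,\dots,\phi_S^{2^p}$ such that:
\begin{itemize}
\item $\cQ.\phi_S\Leftrightarrow\cQ\exists Y_S.\bigwedge_t\phi_S^t$;
\item each part is universally satisfiable, i.e.\ $\forall X\exists Y_S.\phi^t_S\Leftrightarrow\top$.
\end{itemize}
Put $\psi_j=\bigwedge_{S\ni j}\phi_S^{t(S,j)}$, where $t(S,j)$ is the position of $j$ in $S$. Each $\phi_S^t$ is used exactly once, and each part has size $\bigoh(|\phi_S|)$. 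Since the split of \Cref{lem:split} produces about $q$ chain atoms per original atom, the total size is $\bigoh(m2^p)$.

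\textbf{Step 2 (forcing a subset).} Append $\forall w_1\dots w_p$ to $\cQ\exists Y$, with $Y=\bigcup_S Y_S$. View each assignment $b\in\{0,1\}^p$ of the $w$'s as an index $t\in[2^p]$. The intended semantics is: for every $t$, some $j$ at position $t$ must have $\psi_j$ satisfied. Positions are awkward, so we instead require that for each $t$ the existential player chooses an index $j_t$, with $j_1<\dots<j_{2^p}$, and satisfies $\psi_{j_t}$ together with a unit-clause pattern of $w=t$. This costs $k\cdot2^p$ disjuncts, which is too many.

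To stay at $k$ disjuncts, introduce existential variables $c_1,\dots,c_k$ before the $w$'s, with $c_j$ encoding the chosen position of $j$ in binary using $p$ bits. The new disjunct $\psi'_j$ is $\psi_j$ plus equalities $(w_r=c_{j,r})$ for $r\in[p]$. This adds $\bigoh(kp)$ atoms in total, which accounts for the $\bigoh(kp)$ term. Every $\psi_j$ depends on the fixed positions $t(S,j)$, so all equalities must lie in $\Gamma_{\AFF}$. To handle this, each part inside $\psi_j$ is guarded by the position. Because parts are universally satisfiable, I expect that an unused or mismatched part can be neutralized by the existential player, so the guarding can be dropped.

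\textbf{Step 3 (correctness).} For the forward direction, suppose the existential player wins on $\Phi$ with some $\phi_S$. They play the $Y_S$ witnesses from splitting, satisfy the other parts arbitrarily using universal satisfiability, and set $c_j$ to the position of $j$ in $S$ for each $j\in S$. Then every $w$-assignment $t$ is matched by $j=s_t$, and $\psi_{s_t}$ holds.

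For the converse, for each $t$ some $j$ with $c_j=t$ satisfies $\psi_j$. The chosen $j$'s for distinct $t$ are distinct, since $c_j$ is a single value. The main obstacle is forcing these $j$'s to form one subset $S$ in increasing order, so that $\psi_{s_t}$ contains precisely the $t$-th part of $\phi_S$ and not a part of some other $\phi_{S'}$. The $j$ chosen at position $t$ satisfies $\psi_j$, which contains parts of every $S\ni j$, so all of them hold simultaneously. It remains to show that the family $\{j_t\}$ is a set $S$ whose positions match.

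First I would try to handle this by making every part of $\psi_j$ for $S$ indexed by $t(S,j)$ coincide with $t$ only when $S=\{j_1,\dots,j_{2^p}\}$. Once the family is identified as such an $S$, all $2^p$ parts of $\phi_S$ hold, since $\psi_{j_t}$ contains part $t$ of $\phi_S$, and the third property of \Cref{lem:split} gives $\phi_S$. If the ordering cannot be forced this way, fall back on requiring $c$ to be monotone through additional $\Gamma_{\AFF}$ gadgets.
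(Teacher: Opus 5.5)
Your construction is the paper's construction. You split each disjunct via \Cref{lem:split} with $q=2^p$, place part $t(S,j)$ of $\phi_S$ into $\psi_j$, and add the gadget $\bigwedge_{r\in[p]}(w_r=c_{j,r})$ with the $c$'s existential and placed before the universal $w$'s. This is the paper's $\phi''_s=\bigwedge_{\ell}(z_{s,\ell}=w_\ell)$, and the atom count $\bigoh(kp+m2^p)$ is the same.

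\textbf{The gap: the converse direction is left unfinished.} You already derive everything it needs. For each value $t$ of $(w_1,\dots,w_p)$ there is an index $j_t$ with $c_{j_t}=t$ such that $\psi_{j_t}$ is satisfied by the assignment to the original variables and $Y$, and the $j_t$ are pairwise distinct. Now take $S=\{j_1,\dots,j_{2^p}\}$, which is one of the $2^p$-subsets indexing a disjunct of $\Phi$.
\begin{itemize}
\item For each $t'\in[2^p]$, the part $\phi_S^{t'}$ was placed in $\psi_{s_{t'}}$ with $s_{t'}\in S$, and that $\psi_{s_{t'}}$ is satisfied.
\item Hence all $2^p$ parts of $\phi_S$ hold, and the equality chains of \Cref{lem:split} give $\phi_S$ on the restriction to the original variables.
\end{itemize}
Nothing requires $c_j$ to equal $t(S,j)$, nor the $j_t$ to be increasing in $t$. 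The $c$'s and $w$'s only certify that at least $2^p$ of the $\psi_j$ are satisfied. This is exactly the paper's claim that $\exists Z\forall W.\bigvee_{t\in T}\phi''_t$ is true iff $|T|\ge 2^p$. Any $2^p$ satisfied indices form a valid $S$, because $j\mapsto t(S,j)$ is a bijection from $S$ to $[2^p]$.

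Your Step~3 instead ends with ``it remains to show'' and two unspecified fallbacks: position guards and monotonicity gadgets on $c$. Both are unnecessary. Neither is obviously expressible with the unit and equality atoms of $\Gamma_{\AFF}$. So, as written, the converse is not proved.

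\textbf{A smaller point in the forward direction.} The step ``satisfy the other parts arbitrarily using universal satisfiability'' needs one more argument. Property~2 of \Cref{lem:split} is about a single part. But for $S'\neq S$, the disjuncts $\psi_j$ with $j\in S\cap S'$ together contain several parts of $\phi_{S'}$, and these share the chain variables in $Y_{S'}$. What you need is that every proper subset of the $2^p$ parts of $\phi_{S'}$ is jointly satisfiable under every assignment of the original variables. This holds for the construction in \Cref{lem:split}:
\begin{itemize}
\item each part contains exactly one link of each chain $C'_i$, $y_{i,1}=y_{i,2}$, \dots, $y_{i,q-1}=\ell_i$;
\item a chain with a missing link is satisfiable: set $y_{i,1}$ to satisfy $C'_i$ on one side, and copy $\ell_i$ on the other.
\end{itemize}
It applies because $|S\cap S'|<2^p$. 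The paper's proof also passes over this point.
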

    \begin{proof}
      Throughout this proof
      we will use the following indices:
      \[ i \in [K], \ j \in [2^p], \ \ell \in [p], \ s \in [k]. \]
      Let $\Phi = \cQ. \bigvee_{i} \phi_i$.
      Apply Lemma~\ref{lem:split} to each $\phi_i$
      with $q = 2^p$, obtaining formulas
      $\phi_{i,j}$ for all $j$.
      Note that we have split the disjuncts into $K \cdot 2^p$ ``pieces''.
      We will now ``distribute the pieces'' among $k$ newly defined formulas $\phi'_1,\dots,\phi'_k$.
      Arbitrarily enumerate the subsets of $\{1,\dots,k\}$ of size $2^p$
      as $S_1, \dots, S_K$.
      For every disjunct $\phi_i$, we now have a subset $S_i \subseteq [k]$ of size $2^p$,
      and we will add a ``piece'' of $\phi_i$ to $\phi'_s$ 
      if the element $s$ belongs to $S_i$.
      Note that there are $2^p$ ``pieces'' of $\phi_i$,
      and exactly $2^p$ elements in $S_i$.
      To formalize this, let the elements of $S_i$ be $s_{i,j}$,
      and for every $s \in [k]$, define 
      \[ 
        \phi'_s = \bigwedge_{i,j \text{ such that } s_{i,j} = s} \phi_{i,j}.
      \]
      Now, let $Y_{i,j}$ be the set of additional 
      variables in the formula $\phi_{i,j}$, and 
      $Y = \bigcup_{i,j} Y_{i,j}$.
      Introduce $kp + p$ additional variables 
      $Z = \{z_{1,1}, \dots, z_{k,p}\}$ and $W = \{w_{1}, \dots, w_{p}\}$.
      Finally, let 
      \[ 
        \Phi' = \cQ \exists Y \exists Z \forall W. 
        \bigvee_{s} \phi'_s \land \phi''_s.
      \]
      , where:
      \[  
        \phi''_s = \bigwedge_{\ell \in [p]} (z_{s,\ell} = w_\ell).
      \]
      Clearly, $\Phi'$ can be computed in polynomial time.
    
      Before we proceed with correctness, let us prove
      a useful claim.
    
      \begin{claim*}
        For every $T \subseteq [k]$, the formula 
        $(\exists Z \forall W. \bigvee_{t \in T} \phi''_t) \Leftrightarrow \top$
        if and only if $|T| \geq 2^p$.
      \end{claim*}
      \begin{claimproof}
        Let us view 
        $\mathbf{z}_t = (z_{t,1}, \dots, z_{t,p})$ for $t \in T$
        and $\mathbf{w} = (w_1, \dots, w_p)$
        as tuples of Boolean variables.
        The formula is then equivalent to 
        $\exists Z \forall W. \bigvee_{t \in T} \mathbf{z}_t = \mathbf{w}$.
        If $|T| \geq 2^p$, then the existential player
        can assign every Boolean tuple of size $p$ to some $\mathbf{z}_t$,
        leaving the universal player with no choice
        but to satisfy one of the disjuncts $\mathbf{z}_t = \mathbf{w}$.
        If $|T| < 2^p$, then for every assignment to $Z$,
        there exists a tuple of $p$ Boolean values 
        that is not assigned to any of $\mathbf{z}_t$,
        and the universal player wins by assigning
        that tuple to $\mathbf{w}$.
      \end{claimproof}
    
      To prove that $\Phi \Rightarrow \Phi'$,
      consider a winning existential strategy on $\Phi$
      and suppose $\phi_i$ is satisfied in a leaf.
      By Lemma~\ref{lem:split}, this strategy can be extended to the prefix
      $\cQ \exists Y$ so that $\phi_{i,j}$ is satisfied for all $j$.
      By definition, $\phi'_t$ for all $t \in S_i$ are satisfied,
      so the resulting formula contains 
      $\exists Z \forall W. \bigvee_{t \in S_i} \phi''_t$,
      and since $|S_i| = 2^p$, the claim implies that $\Phi' \Leftrightarrow \top$.
    
      For the opposite direction, i.e., $\Phi' \Rightarrow \Phi$,
      consider a winning existential strategy on $\Phi'$.
      Consider the formula obtained in a leaf node after playing $\cQ \exists Y$.
      Let $T$ be the set of indices $s$ such that
      $\phi'_s$ is satisfied.
      Then the formula is 
      $\exists Z \forall W. \bigvee_{t \in T} \phi''_t$.
      By the claim, $|T| \geq 2^p$,
      so $S_i \subseteq T$ for some $i$, hence 
      the existential player satisfies 
      $\phi_{i,j}$ for all $j$, and, hence, 
      $\phi_i$ by Lemma~\ref{lem:split}.
    \end{proof}

Note that by picking $k=4$ and $p=1$,
one can reduce any formula with $6$ disjuncts
to a formula with $4$ disjuncts.
If a formula has $5$ disjuncts, then
we can add a copy of any disjunct to obtain
an equivalent $6$-disjunct formula to squish again.
Observe further that the size of the formula only 
grows by a constant factor with every application
of the lemma, and each time the number of disjuncts
decreases.
Thus, by repeatedly applying squishing 
to $k$-\textsc{Disjunct}-$\Gamma_{\AFF}$,
    we obtain the following.

    \begin{corollary} \label{cor:bounded_arity_aff}
      \label{cor:squish}
      $4$-\textsc{Disjunct}-$\QBF(\Gamma_{\AFF})$ is PSpace-hard.
    \end{corollary}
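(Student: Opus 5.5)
The plan is a reduction chain. It starts from \Cref{lem:dnf-to-disjunct} and then applies the squishing lemma (\Cref{lem:squish}) with $k=4$, $p=1$ as many times as needed. Let $\Phi_0$ be a \textsc{Disjunct}-$\QBF(\Gamma_{\AFF})$ formula with $m_0$ disjuncts and $M$ atoms, produced from a quantified 3-CNF by \Cref{lem:dnf-to-disjunct}. Since that problem is PSpace-complete, it suffices to turn $\Phi_0$ into an equivalent $4$-\textsc{Disjunct}-$\QBF(\Gamma_{\AFF})$ formula in polynomial time.

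One squishing step uses $K=\binom{4}{2}=6$ disjuncts. Given a formula with $m\ge 5$ disjuncts, I split it into a block $\Psi$ of $6$ disjuncts and the rest $R$. If $m=5$, I pad the block with a duplicate disjunct, which clearly preserves equivalence. Squishing the block alone replaces it with $4$ disjuncts but appends the prefix $\exists Y\exists Z\forall W$. The new variables occur only in the new disjuncts. So I regard $R$ as a disjunction over the extended prefix, and it is unaffected because its variables are untouched. I still need to check that the claim in \Cref{lem:squish} survives when the extra disjuncts $R$ are present. In any leaf after $\cQ$, either some disjunct of $R$ is already true, or we are in exactly the situation of the original proof. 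Hence $\cQ.(\Psi\lor R)\Leftrightarrow\cQ\exists Y\exists Z\forall W.(\Psi'\lor R)$. Each step lowers the disjunct count by $2$, or by $1$ with padding, so at most $m_0$ steps are needed.

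The main obstacle is size control. If I always squish the whole current formula, the size grows by a constant factor per step. After $m_0$ steps that is exponential, so a naive iteration fails. I would handle this with a balanced scheme. Group the disjuncts into blocks of $6$ and squish each block to $4$ disjuncts, with every block given its own fresh variables. Each atom is then multiplied by only a constant factor, since the $\bigoh(m2^p)$ bound is linear in that block's atoms. In one round the disjunct count drops by a factor of about $2/3$ while the total size grows by a constant factor $c$.

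With a geometric reduction there are only $\bigoh(\log m_0)$ rounds. The final size is therefore $c^{\bigoh(\log m_0)}\cdot M=\mathrm{poly}(M)$, and the prefix gains $\bigoh(\log m_0)$ blocks. Once four disjuncts remain, the result is an equivalent $4$-\textsc{Disjunct}-$\QBF(\Gamma_{\AFF})$ instance computed in polynomial time, which proves PSpace-hardness.

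One remaining check is that squishing several blocks in parallel is sound. I would do the blocks sequentially within a round, appending each block's $\exists\exists\forall$ suffix in turn, and use the argument above with $R$ containing the other blocks. This keeps each squish within the lemma's hypotheses, and the size analysis is unchanged.
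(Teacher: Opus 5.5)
Your proof is correct, but it is organised differently from the paper's.

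\textbf{The paper's route.} The paper never squishes only part of a disjunction. It applies \Cref{lem:squish} to the whole current formula, with parameters adapted to the current number $m$ of disjuncts, padding with copies of a disjunct when needed. For instance, take $p=1$, which keeps the growth to a constant factor, and $k$ with $\binom{k}{2}\ge m$. Then one application takes $m$ down to about $\sqrt{2m}$, so $\bigoh(\log\log m)$ applications suffice before the final $6\to4$ step with $(k,p)=(4,1)$. \Cref{lem:squick} sharpens this to $\bigoh(\log^* m)$ applications by letting $2^p$ grow with $k$. The total blow-up is only a polylogarithmic factor.

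\textbf{Your route.} You use only the smallest gadget $(4,1)$, which requires two extra ingredients, both of which you supply. First, squishing a $6$-disjunct sub-block while leaving the other disjuncts $R$ untouched is sound; your leafwise argument handles this correctly, because $R$ mentions only earlier variables and is therefore decided at every leaf of $\cQ$. Second, the squishes must be scheduled in balanced rounds so that each atom takes part in only $\bigoh(\log m_0)$ of them, which gives an $m_0^{\bigoh(1)}$ blow-up.

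\textbf{What each buys.} Your route is more elementary: it needs neither the general $\binom{k}{2^p}$ selector gadget nor its counting claim. It is quantitatively weaker, though. The paper adds only $\bigoh(\log^* m)$ quantifier blocks, and exactly that bound is reused for the ETH-based lower bound after \Cref{lem:squick}. Your scheme adds $\Theta(\log m_0)$ blocks even in its best form.

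\textbf{A small inconsistency.} In your last paragraph you settle on squishing the blocks of a round sequentially. Then every squish appends its own $\exists\forall$ pair, so the prefix gains $\Theta(m_0)$ quantifier blocks, not the $\bigoh(\log m_0)$ you state. To get $\bigoh(\log m_0)$, let all blocks of one round share a single suffix $\exists(\bigcup_b Y_b)\,\exists(\bigcup_b Z_b)\,\forall(\bigcup_b W_b)$. This is sound by the same leafwise argument, since the universal player can choose each $W_b$ independently. For PSpace-hardness the difference is irrelevant.
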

In turn, the corollary implies lower bounds 
for $\TCNF$, $\HORN$, and $\AFF$.

\begin{theorem}
  \label{thm:PSpace-hard}
  If $\CCC$ is \TCNF, \HORN, or \AFF then
  $\QBF$ is PSpace-hard for backdoors to $\CCC$ of size $2$.
\end{theorem}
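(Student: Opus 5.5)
The plan is to combine \Cref{cor:bounded_arity_aff} with the reduction of \Cref{lem:from-disjunct-to-backdoor}. By \Cref{cor:bounded_arity_aff}, $4$-\textsc{Disjunct}-$\QBF(\Gamma_{\AFF})$ is PSpace-hard. Since $\lceil \log 4\rceil = 2$, \Cref{lem:from-disjunct-to-backdoor} turns any such instance in linear time into an equivalent $\QBF$ instance $\Phi'$ with a backdoor $Z=\{z_1,z_2\}$ of size $2$ to $\Gamma_{\AFF}$. The remaining work is to check that $\Phi'$ can be written in the required syntax: as a CNF whose reductions land in \TCNF and \HORN, and as an affine/relational formula whose reductions land in \AFF.

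I will go through the atom types of $\Gamma_{\AFF}$: $(x)$, $(\neg x)$ and $(x = y)$. In the construction, each atom $R$ of disjunct $\phi_i$ becomes $R \lor T_i(z_1,z_2)$, where $T_i$ excludes exactly one tuple $\mathbf{t}_i \in \{0,1\}^2$. Hence $T_i(z_1,z_2)$ is a single clause over $z_1,z_2$, namely the clause falsified only by $\mathbf{t}_i$.

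For CNF, $(x)\lor T_i$ and $(\neg x)\lor T_i$ are each one clause of width $3$. The atom $(x=y)\lor T_i$ is equivalent to the two clauses $(\neg x\lor y\lor T_i)$ and $(x\lor\neg y\lor T_i)$. So $\Phi'$ is a $3$-CNF formula with the same prefix. Under any $\tau\colon Z\to\{0,1\}$, a clause whose $T_i$-part is satisfied disappears. Otherwise $\tau=\mathbf{t}_i$, and the clause reduces to the original literal $(x)$, $(\neg x)$, or to an implication $(\neg x\lor y)$ or $(x\lor\neg y)$. All of these are $2$-clauses with at most one positive literal, so $\Phi'[\tau]\in\TCNF\cap\HORN$ for every $\tau$, which gives the \TCNF and \HORN cases.

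For \AFF, I will represent each atom of $\Phi'$ as the arbitrary relational atom $R'$ used in \Cref{lem:from-disjunct-to-backdoor}, as permitted by the relational preliminaries. For every $\tau$, its reduction is either trivially true or one of $(x)=1$, $(x)=0$, $(x\oplus y)=0$, all of which are affine. So $Z$ is a strong backdoor of size $2$ to \AFF. This is also the point to double-check against the paper's definition: if the input to $\QBF$ is required to be a CNF, the clausal encoding above works for \AFF as well, because each reduced clause is a unit clause or a $2$-clause implication, and these are expressible as affine equations. Either way the reduced formulas lie in \AFF.

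The main thing to verify is not the backdoor property but equivalence and encoding size. Equivalence is already guaranteed by \Cref{lem:from-disjunct-to-backdoor}, and the clausal rewriting of $(x=y)\lor T_i$ into two clauses is equivalent atom by atom. The reduction is therefore polynomial, and PSpace-hardness transfers to $\QBF$ with backdoors of size $2$ to each of \TCNF, \HORN, and \AFF.
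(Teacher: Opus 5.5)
Your proposal is correct and is essentially the paper's argument. Both use hardness of $4$-\textsc{Disjunct}-$\QBF(\Gamma_{\AFF})$ from \Cref{cor:bounded_arity_aff} together with \Cref{lem:from-disjunct-to-backdoor}. Each $(x=y)$ is written as two implication clauses for \TCNF and \HORN, and relational atoms are used for \AFF. The only difference is that you encode the clauses after applying the lemma rather than before, which produces the same formula.

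One small correction concerns your fallback remark that a reduced implication clause $(\neg x\lor y)$ is expressible as an affine equation. This is false: the clause has three models, and an affine relation has a power-of-two number of models. Only the pair $(\neg x\lor y)\land(x\lor\neg y)$, which survives together under the same $\tau$, is equivalent to $(x\oplus y)=0$. So the clausal encoding lands in \AFF only semantically, and you should rely on the relational representation, as the paper does.
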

\begin{proof}
  The result for $\AFF$ follows directly since $\Gamma_{\AFF} \subseteq \AFF$.
  For \CNF classes, take a quantified formula
  with matrix in $4$-\textsc{Disjunct}-$\Gamma_{\AFF}$,
  and replace every atom of the form $(x = y)$
  with a conjunction of clauses
  $(x \lor \neg y) \land (\neg x \lor y)$,
  and unit atoms with equivalent unit clauses.
  Observe that the resulting
  formula is a disjunction of $4$ formulas in
  $\TCNF \cap \HORN$,
  and the result follows.
\end{proof}

In addition to these hardness proofs, squishing also 
gives us the following lower bounds
under the ETH.
We start by observing that squishing can be done quickly.
    \begin{lemma}\label[lemma]{lem:squick}
      There is an algorithm that takes
      an instance of
      $k$-\textsc{Disjunct}-$\QBF(\Gamma_{\AFF})$ and 
      in polynomial time returns 
      an equivalent instance
      with $4$ disjuncts
      and at most $2q$ additional quantifier blocks,
      where $q \leq \log^*_{1.889} k+1$.
    \end{lemma}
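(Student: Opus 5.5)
The plan is to iterate the squishing lemma (\Cref{lem:squish}), each time choosing $p$ as large as the current number of disjuncts permits, so that the disjunct count collapses at a tower-like rate. Each application adds two quantifier blocks: $\exists Y \exists Z$, which merges into a trailing existential block if there is one, and $\forall W$. We stop once four disjuncts remain.

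Start with $k_0 = k$ disjuncts and define $k_{t+1}$ as a smallest integer $k'$ for which there is a $p$ with $2^p \le k'$ and $\binom{k'}{2^p} \ge k_t$. Before squishing, we pad with copies of an arbitrary disjunct so the count is exactly $\binom{k'}{2^p}$; this preserves equivalence. The natural choice is $p \approx \log_2(k'/2)$, so that $2^p \approx k'/2$ and $\binom{k'}{k'/2} \approx 2^{k'}/\sqrt{k'}$. Hence $k_{t+1} = \bigoh(\log k_t)$, and more precisely $k_t \ge c^{k_{t+1}}$ holds for some base $c$ close to but below $2$. The claimed base $1.889$ should come from rounding $2^p$ down to a power of two. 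In the worst case $2^p$ lies between $k'/4$ and $k'/2$, and we need the constant in $\binom{k'}{2^p} \ge 1.889^{k'}$ to hold for all relevant $k'$. As a fallback we could take $2^p$ as the largest power of two at most $k'/2$ and verify $\binom{k'}{2^p} \ge 1.889^{k'}$ by entropy bounds, possibly after handling small $k'$ separately, for instance with $k'=4$, $p=1$, $\binom{4}{2}=6$.

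Given $k_t \ge 1.889^{k_{t+1}}$, we have $k_{t+1} \le \log_{1.889} k_t$, so after $q \le \log^*_{1.889} k + 1$ rounds the count drops to at most $4$. The extra $+1$ absorbs the final step from $\le 6$ disjuncts down to $4$ and rounding effects. This gives at most $2q$ extra quantifier blocks.

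For the polynomial running time, note that \Cref{lem:squish} turns $m$ atoms into $\bigoh(kp + m2^p)$. Since $2^p \le k_{t+1}$, the size multiplies by $\bigoh(k_{t+1})$ in round $t$, plus an additive padding term. The product $\prod_t k_{t+1}$ over the rounds is dominated by the first factor $\bigoh(\log k)$ times an iterated-log tail, which is polylogarithmic in $k$ and hence polynomial. Padding to $\binom{k_{t+1}}{2^p}$ disjuncts costs at most a factor of about $k_{t+1}$, since $\binom{k'-1}{2^p} < k_t$ by minimality. I therefore expect no obstacle here beyond bookkeeping.

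The main obstacle is the careful numerical inequality $\binom{k'}{2^p} \ge 1.889^{k'}$ for the chosen $p$ across all $k'$, together with the boundary cases where $k'$ is small. I would try first to establish it for $2^p = 2^{\lfloor \log_2 (k'/2) \rfloor}$ via $\binom{n}{\alpha n} \ge 2^{H(\alpha)n}/(n+1)$, with $\alpha \in [1/4,1/2]$ and $H(1/4) \approx 0.811$. Since $2^{0.811} \approx 1.75$, this is not enough on its own. So I would instead let $p$ range over both neighbouring powers of two and pick the better one, which should push the worst ratio toward $1.889$, and I would verify small cases by direct computation.
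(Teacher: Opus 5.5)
Your plan has the same structure as the paper's proof. You iterate \Cref{lem:squish}, pad with copies of a disjunct up to exactly $\binom{k'}{2^p}$, and pay one existential and one universal block per round; your size bookkeeping is, if anything, more careful than the paper's. Your ``better of the two neighbouring powers of two'' is exactly the paper's choice $k'/3\le 2^{p}\le 2k'/3$, since one of the two powers of two adjacent to $k'/2$ always lies in this interval. By $\binom{k'}{j}=\binom{k'}{k'-j}$ the worst case is $2^p=k'/3$. That is the source of the constant, not rounding into $[k'/4,k'/2]$: $\binom{3x}{x}$ grows like $(27/4)^x$, and $\sqrt[3]{27/4}=2^{H(1/3)}\approx 1.8899$, with $H$ the binary entropy.

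The gap is that the inequality you intend to prove, $\max_p\binom{k'}{2^p}\ge 1.889^{k'}$ (equivalently $k_{t+1}\le\log_{1.889}k_t$), is false for many $k'$, not just tiny ones. We have $1.889^3\approx 6.7406$, within $0.15\%$ of $27/4$, while $\binom{3x}{x}<(27/4)^x\sqrt{3/(4\pi x)}$. So the polynomial loss is not compensated until $x$ is in the thousands. The inequality therefore fails for every $k'=3\cdot 2^{j}$ with $j\le 11$, and for many other $k'$. For example, $\binom{6}{2}=15<1.889^{6}\approx 45$, $\binom{12}{4}=495<1.889^{12}\approx 2064$, and $\binom{24}{8}=735471<1.889^{24}\approx 4.3\cdot 10^{6}$.

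Consequently, ``verifying small cases by direct computation'' would produce counterexamples rather than confirm the bound. The round count you derive from it, where the $+1$ is spent only on the last step down to $4$ disjuncts, is then not justified. You need to keep the lower-order term in the recurrence, as the paper does with $k_{t+1}\le\log_{1.889}k_t+\log_{1.889}\log_{1.889}k_t$. You must then argue explicitly that these additive errors cost at most the single extra round allowed by $q\le\log^*_{1.889}k+1$. The slack from stopping at $4$ rather than at $1$ is what makes this work. For instance, every $k_t\le\binom{15}{8}=6435$ reaches $4$ disjuncts within three further rounds, whereas $\log^*_{1.889}k_t\ge 4$ already for $k_t\ge 9$.
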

\begin{proof}
    We repeatedly use Lemma~\ref{lem:squish}, assuming iteration $i-1$ outputs an $k_i$-\textsc{Disjunct}-$\Gamma_{\AFF}$.
    Let $k_0=k$, and choose $p_i$ such that $k_i/3 \leq 2^{p_i} \leq 2k_i/3$.
    By e.g. Stirling's approximation, we have $({27/4})^x/x\leq\binom{3x}{x}\leq({27/4})^x$, giving us, in the worst case, \[{(\log_{1.889}k_{i-1})}\leq k_i \leq {(\log_{1.889}k_{i-1}+\log_{1.889} \log_{1.889}k_{i-1})}\] since $\sqrt[3]{27/4}=1.889$.
    In the edge case when $\binom{k_i}{2^{p_i}} > k_{i-1}$, we can simply pad
    the formula by adding sufficiently many copies of any of its disjuncts --
    this clearly yields an equivalent formula.
    As there for every $k_{i-1} > 4$ there exists a $k_i<k_{i-1}$ and $p_i$ such that $\binom{k_i}{2^{p_i}} \geq k_{i-1}$, 
    and by the previously mentioned worst-case relation between $k_{i-1}$ and $k_{i}$, we have $k_i\leq 4$ for all $i \geq \log^*_{1.889} k + 1$.
    Additionally, every iteration of Lemma~\ref{lem:squish} adds at most one $\exists$-block and one $\forall$-block.
\end{proof}

    Using Lemma~\ref{lem:squick} we can now take any 3-CNF-QBF formula, and produce an equisatisfiable $k$-\textsc{Disjunct}-QBF($\CCC$) formula for any $k\geq4$, but at a cost of increasing the number of quantifier alternations by at most two times $1+\log^*_{1.889} m$.
    This then gives us the following lemma:
    \begin{lemma}
      Let $\CCC \in \{\TCNF, \HORN, \AFF\}$.
      Assuming ETH, 
      there is a constant $c > 1$ such that no algorithm can solve
      \QBF on $k$-\textsc{Disjunct}-QBF($\CCC$) instances
      with $2q+1$ quantifiers in time 
      $c^{(^{(q-1+\log_{1.889}^*k)}\beta)}=c^{exp_\beta^{(q-1+\log_{1.889}^*k)}(1)}$
      for any $\beta < 1.889$.
    \end{lemma}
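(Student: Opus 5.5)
The plan is a reduction from 3-SAT under ETH, made sparse first and then pushed through the squishing machinery of \Cref{lem:dnf-to-disjunct}, \Cref{lem:squish} and \Cref{lem:squick}.

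\textbf{Sparsification.} Start from a 3-CNF formula $\psi$ with $n$ variables. By the sparsification lemma we may assume it has $m = \bigoh(n)$ clauses. Treat it as a purely existential $\QBF(\text{3-\CNF})$ and apply \Cref{lem:dnf-to-disjunct}. This gives a formula $\bar\Phi$ with a purely universal prefix and $m$ disjuncts in $\Gamma_{\AFF}$, each of constant size, and $\psi$ is satisfiable iff $\bar\Phi$ is false. Since the complement of the problem is solved equally fast, it suffices to lower-bound deciding $\bar\Phi$.

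\textbf{Squishing and counting quantifiers.} Rather than squishing all the way down to $4$ disjuncts, we stop once exactly $q$ rounds of \Cref{lem:squish} have been used. Each round adds one $\exists$-block followed by one $\forall$-block. Starting from a single $\forall$-block, the result therefore has at most $2q+1$ quantifier blocks.

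To fix the final number $k$ of disjuncts, run the recursion of \Cref{lem:squick} in reverse. If the target is $k$ disjuncts after $q-1$ rounds beyond the level reachable with $\log^*_{1.889} k$, then with the choice $2^{p_i}\approx k_i/3$ each round can absorb roughly $1.889^{k_i}$ disjuncts down to $k_i$, up to the polynomial correction in the Stirling bound. So $q$ rounds ending at $k$ disjuncts can absorb an initial count $m$ as large as roughly $\exp_{\beta}^{q-1+\log^*_{1.889}k}(1)$, for any $\beta<1.889$, where the slack in $\beta$ swallows the lower-order terms.

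Padding with duplicate disjuncts handles non-exact values, as in the proof of \Cref{lem:squick}. \Cref{thm:PSpace-hard}'s translation of atoms then converts $\Gamma_{\AFF}$ into $\TCNF\cap\HORN$, covering all three classes $\CCC$.

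\textbf{Size and contradiction.} The remaining step is to check that the reduction is efficient. Each round multiplies the number of atoms by $2^{p}=\bigoh(k_i)$ plus additive terms. Because the $k_i$ decrease super-exponentially, the product over all rounds is dominated by the first factor, about $\log m$. The total size is thus $m\cdot\mathrm{polylog}(m)$, and the construction runs in that time.

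Now suppose an algorithm ran in time $c^{\exp_\beta^{q-1+\log^*_{1.889}k}(1)}$. Choose $q,k$ so that this tower is at most $\varepsilon m$, or $\varepsilon n/\mathrm{polylog}$ after accounting for the size blowup. Then 3-SAT would be solved in time $c^{o(n)}\cdot\mathrm{poly}$, or $2^{\varepsilon' n}$ for every small $\varepsilon'$, which contradicts ETH. The constant $c$ comes from the ETH constant $c_3$ together with the sparsification constant.

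\textbf{Main obstacle.} The delicate part is the inverse bookkeeping in the second step. We must show that with $q$ rounds and terminal width $k$ the reachable $m$ really is at least the claimed tower, including the off-by-one between $\log^*$ and tetration and the $x/\mathrm{poly}$ loss in $\binom{3x}{x}$. The tool for this is the slack $\beta<1.889$: for each fixed $\beta$, eventually $\binom{k_i}{k_i/3}\ge\beta^{k_i}$.
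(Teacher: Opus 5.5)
Your proposal is correct and follows the paper's own argument. Both go through sparsified 3-SAT under ETH, negation into a purely universal $m$-disjunct $\Gamma_{\AFF}$ instance via \Cref{lem:dnf-to-disjunct}, $q$ rounds of squishing (\Cref{lem:squish}, \Cref{lem:squick}) giving $2q+1$ blocks, the atom translation of \Cref{thm:PSpace-hard}, and the slack $\beta<1.889$ to absorb the Stirling and $\log^*$ rounding losses; your explicit check that the squished instance has size $m\cdot\mathrm{polylog}(m)$ is a detail the paper leaves implicit, and it only affects polynomial factors, not the exponent.
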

    \begin{proof}
      It is sufficient to show the result for $\CCC = \Gamma_{\AFF}$.
      Under the ETH, the sparsification lemma implies that
      there exists a constant $c > 1$ such that
      no algorithm can decide whether 
      a Boolean formula in 3-\CNF with $m$ clauses 
      is satisfiable in time $c^m$.
      By Lemma~\ref{lem:dnf-to-disjunct},
      this immediately implies that
      no algorithm can solve 
      \QBF on $m$-\textsc{Disjunct}-QBF($\Gamma_{\AFF}$) formulas
      in time $c^m$.
      Lemma~\ref{lem:squick} gives us that any such instance can be squished in $q \leq 1+\log^*_\alpha m$ steps to a $k$-\textsc{Disjunct}-QBF($\Gamma_{\AFF}$) instance, adding at most $q$ $\exists$- and $\forall$-blocks, with $\alpha=\sqrt[3]{27/4}=1.889$ and any $k\geq4$.
        Hence if $^{(q-1+\log_{\alpha}^*k)}\beta \leq (1-\epsilon)m$ for any $\epsilon>0$, i.e.
        for any $\beta<\alpha$, solving $\Phi'$ in 
        $c^{(^{(q-1+\log_{\alpha}^*k)}\beta)}=c^{exp_\beta^{(q-1+\log_{\alpha}^*k)}(1)}$ time implies solving $\Phi$ in $c^m$ time, contradicting the ETH.
    \end{proof}

\subsection{Lower bounds for Backdoor to Horn with Bounded Quantifier Depth}

We show that (even bounded arity) Horn behaves differently from $\TCNF$ and $\AFF$ and that it remains hard {\em even} if one bounds the number of quantifier alternations, or considers backdoors to unmixed Horn formulas. To prove this we provide a reduction from the 
\textsc{Multicolored Independent Set (MCIS)} problem, where
we are given a graph $G$ and an integer $k$; 
vertices $V(G) = V_1 \uplus \ldots \uplus V_k$ 
which are partitioned into $k$ sets that induce
cliques in $G$, and where the goal is to find
an independent set with exactly one
vertex in each $V_i$.

\begin{theorem} \label{thm:bounded_arity_horn}
  $\QBF(\CNF)$ parameterized by (1) backdoor size and quantifier-alternation depth to $3$-\HORN, or (2) backdoor size to unmixed \HORN, is W[1]-hard.
\end{theorem}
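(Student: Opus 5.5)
We reduce from \textsc{MCIS} via \Cref{lem:from-disjunct-to-backdoor}: it suffices to build, in FPT time, an equivalent $k'$-\textsc{Disjunct}-$\QBF(3\hy\HORN)$ formula with $k'$ and the number of alternations bounded by functions of $k$. The lemma then yields a backdoor of size $\lceil\log k'\rceil$. One caveat: that reduction produces atoms $R\lor T_i$, so for part (1) we must re-express them as clauses. Since $T_i$ excludes a single tuple, $T_i(z)$ is a disjunction of literals over $z$. The resulting clauses have $3+\log k'$ literals but become $3$-\HORN after any assignment to $Z$, so $Z$ is still a strong backdoor, as required.

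Given $(G,k)$ with $V(G)=V_1\uplus\dots\uplus V_k$, we introduce existential variables $x_v$ for $v\in V(G)$, meaning ``$v$ is picked.'' The existential player picks vertices, and the universal player then challenges the choice. The challenge is a universal block with one variable $u_v$ per vertex; the intended play is that the universal player sets a single $u_v$ to $1$, or violates a ``one-hot'' condition. We would use about $k+\binom{k}{2}+O(1)$ Horn disjuncts.
\begin{itemize}
\item For each class $i$, a disjunct $\psi_i$ is satisfied exactly when the universal player's challenge points at a vertex of $V_i$ that was selected: $\psi_i=\bigwedge_{v\in V_i}(u_v\to x_v)$ together with ``some challenge lies in $V_i$.''
\item Disjuncts that punish a malformed universal challenge.
\end{itemize}
The point to engineer is that existential non-selection must make $\psi_i$ falsifiable, while all constraints on the universal player appear only negatively. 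The Horn clauses of the form $(\neg u_v\lor x_v)$ and $(\neg x_u\lor\neg x_v)$ for edges $uv$ are all $2$-Horn.

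A cleaner alternative, which I would try first, is to place the selection universally and the verification existentially. The prefix is $\forall$ (universal vertex choices $y_{i,v}$) $\exists$ (witnesses) $\forall$, with the roles of $\exists$ and $\forall$ flipped via complementation as in \Cref{lem:dnf-to-disjunct}. Independence is then checked by a bounded number of disjuncts. A disjunct encoding ``the choice in class $i$ is not one-hot'' is a disjunction over vertices, so it has to be handled with a fresh existential pointer. We encode that pointer with Horn clauses $(\neg p_v\lor y_{i,v})$ plus one positive clause $\bigvee_v p_v$. This clause is not Horn, so it must instead be made a separate disjunct indexed by the class, or replaced by unit propagation on the universal side. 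The latter works because $\neg y$ literals are allowed freely.

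The main obstacle is exactly these ``at least one'' constraints (each class picks a vertex): they are positive disjunctions of unbounded size, which neither Horn nor bounded arity allows. My plan is to move them into the quantifier structure. Universal variables make a clause like $\bigvee_v y_{i,v}$ unnecessary, since the universal player can be forced by a single extra disjunct $\bigwedge_{v\in V_i}\neg y_{i,v}$, which is Horn, to pick at least one vertex per class. That costs $k$ disjuncts and keeps a constant number of alternations. Correctness then amounts to checking both directions of the game, as in the proof of \Cref{lem:squish}.

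For part (2), the same construction should work with the clauses mixing universal and existential variables moved into the backdoor part. Concretely, we choose the encoding so that every clause in each disjunct is purely existential or purely universal, and let the pointer variables in $Z$ absorb the mixing. The disjunct-to-backdoor conversion then leaves each $\Phi[\tau]$ unmixed Horn.
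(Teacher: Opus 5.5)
Your overall framework is sound: reducing from MCIS via \Cref{lem:from-disjunct-to-backdoor}, and your observation that each atom $R\lor T_i$ is a clause that becomes $3$-Horn once $Z$ is assigned, are both fine. But there is a genuine gap: the reduction itself is never specified, and every gadget you sketch contains exactly the kind of constraint you yourself identify as the obstacle. ``Some challenge lies in $V_i$'' is the positive clause $\bigvee_{v\in V_i}u_v$. The existential pointer needs $\bigvee_v p_v$. Punishing a challenge that is not one-hot naively needs a disjunct $u_v\land u_w$ for every pair $v\neq w$.

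Your proposed repairs do not fix this. Turning $\bigvee_v p_v$ into separate disjuncts costs one disjunct per \emph{vertex}, not per class. Moreover, no single Horn disjunct can express ``at least one $p_v$ is true'': the models of a Horn formula, and all their projections, are closed under componentwise $\wedge$, while this constraint is not. ``Unit propagation on the universal side'' is not an argument. There is also a polarity problem: with selection encoded as $y_v=1$, the natural certificate that class $i$ fails (every selected vertex of $V_i$ has a selected neighbour) is $\neg y_v\lor\bigvee_{u\in N(v)}y_u$, which is not Horn. Your empty-class punisher $\bigwedge_{v\in V_i}\neg y_{i,v}$ is fine, but it leaves the edge check, which is the heart of the reduction, unresolved.

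The missing idea is to encode selection negatively and let a class index be the only non-Horn choice. The paper lets the universal player select $v$ by setting $x_v=0$. For $v\in V_i$ it uses the Horn clause $C_v=x_v\lor\bigvee_{u\in N(v)}\neg x_u\lor\neg z_i$, and sets $\Phi=\forall X\exists Z.\bigwedge_{v}C_v\land(z_1\lor\dots\lor z_k)$. The existential player can set $z_i=1$ iff no vertex of $V_i$ is selected with all its neighbours unselected. Hence $\Phi$ is false iff $G$ has a multicoloured independent set, and the only non-Horn clause lives on the $k$ backdoor variables $Z$.

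In your disjunctive language this is just $\forall X.\bigvee_{i\in[k]}\bigwedge_{v\in V_i}(x_v\lor\bigvee_{u\in N(v)}\neg x_u)$, a $k$-\textsc{Disjunct}-$\QBF(\HORN)$ formula with a single quantifier block. It already subsumes your empty-class punisher and needs no one-hot condition, since one isolated selected vertex per class suffices. Part (1) then follows by splitting the long Horn clauses with fresh innermost existential variables. Part (2) follows because after assigning $Z$ every remaining clause is purely universal, with unbounded arity.

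Your plan for (2), ``the same construction,'' cannot work as stated, for two reasons. First, your clauses $(\neg p_v\lor y_{i,v})$ mix quantifiers over unboundedly many $p_v$, which a bounded backdoor cannot absorb. Second, a construction with a small backdoor to unmixed $3$-\HORN{} would contradict \Cref{thm:backdoorunmixed} unless FPT $=$ W[1], so (2) genuinely requires unbounded arity.
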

\begin{proof}
We prove the result by modifying a reduction from Eriksson et al.~\cite{DBLP:conf/ijcai/ErikssonLOOPR24}.
  Let $(G, k)$ with $V(G) = V_1 \uplus \ldots \uplus V_k$,
  be an instance of MCIS.
  We construct a formula $\Phi$ on variables
  $X = \{x_v : v \in V(G)\}$ and $Z = \{z_i : i \in [k]\}$
  as follows.
  For a vertex $v \in V_i$ in $G$, let $N_G(v)$ be the neighbors of $v$
  in $G$, and define a clause
  \begin{align*}
    C_v = \{x_v\} \cup
    \{\neg x_u : u \in N_G(v) \} \cup
    \{\neg z_i\}
  \end{align*}
  and define
  \[
    \Phi = \forall X \exists Z. \bigwedge_{v \in V} C_v \land (z_1 \lor \ldots \lor z_k).
  \]
  Observe that each clause $C_v$ is in \HORN, and if ignoring $Z$, even in unmixed \HORN. Moreover,
  $Z$ is thus a backdoor of size $k$.
  Using a standard reduction, one can transform
  $\forall X. \bigwedge_{v \in V(G)} C_v$ into an equivalent
  formula $\forall X \exists Y. \phi$, where $\phi$ is in $3$-\HORN. 
  Thus, to obtain the claim, we will show that $\Phi \Leftrightarrow \bot$
  if and only if $(G, k)$ contains an independent set of size $k$.
  
  Suppose the $\forall$-player has a winning strategy on $\Phi$, i.e.
  there exists an assignment $\alpha : X \to \{0,1\}$
  such that $\Phi[\alpha] \Leftrightarrow \bot$.
  Then $\Phi[\alpha] = \exists Z. (\neg z_1) \land\ldots\land (\neg z_k) \land (z_1 \lor \ldots \lor z_k)$,
  so for every $i \in [k]$, there exists a clause $C_v$ with 
  $v \in V_i$ such that all literals except $\neg z_i$ in $C_v$
  are assigned $0$ by $\alpha$, i.e.
  $\alpha(x_v) = 0$ and $\alpha(x_u) = 1$ for all $u \in N_G(v)$.
  For every $i \in [k]$, let $C_{v_i}$ be such a clause,
  and define a subset $S = \{v_1, \dots, v_k\}$ of vertices in $G$.
  Note that $\alpha(X_{v_1}) = \dots = \alpha(x_{v_k}) = 0$, hence
  if $v_i, v_j \in S$, then $v_i \notin N_G(v_j)$. 
  We conclude that $S$ is an independent set in $G$.

  Now assume $S = \{v_1,\dots,v_k\}$ is 
  an independent set in $G$ with $v_i \in V_i$.
  Define assignment $\alpha : X \to \{0,1\}$
  such that $\alpha(x_v) = 0$ if $v \in S$
  and $\alpha(x_v) = 1$ if $v \in V(G) \setminus S$.
  Note that in a clause $C_{v_i}$, the assignment
  $\alpha$ assigns $0$ to all literals in $C_{v_i}$
  except $\neg z_i$: indeed, $\alpha(x_{v_i}) = 0$ by definition,
  and $\alpha(x_u) = 1$ for all $u \in N_G(v_i)$
  because $S$ is an independent set.
  Hence, $\Phi[\alpha] \Leftrightarrow \bot$,
  and $\alpha$ is a winning strategy for the $\forall$-player.
\end{proof}

\section{Algorithms for Backdoors}
\label{sec:algorithms}

Here, we will show the algorithmic part of \Cref{thm:introtractcnf},
which is restated below.

\thmintrotractcnf*

We start by showing the algorithm for \TCNF in \Cref{ssec:m2cnf} and
then provide the algorithm for \AFF in \Cref{ssec:affine}.

\subsection{Algorithm for \TCNF}
\label{ssec:m2cnf}

Here, we will provide a formal proof of \Cref{the:tcnf-tract},
restated below, which
using \Cref{lem:from-backdoor-to-disjunct} implies that \QBF is fixed-parameter tractable parameterized by the
size of a minimum backdoor set to $\QBF(\TCNF)$ plus the number of
quantifier alternations, i.e., the algorithm for
\TCNF stated in \Cref{thm:introtractcnf}.

\thetcnftract*

We start by giving a detailed exposition of the notion of propagation.
Let $C_1 = (\ell \lor A)$ and $C_2 = (\neg \ell \lor B)$ be two
clauses in a \CNF formula, where $\ell$ is a literal. 
The {\em resolution} of $C_1$ and $C_2$ derives the {\em resolvent}
clause $C = (A \lor B)$.

We say that a $\TCNF$ formula $\phi$ is {\em propagated} if it 
satisfies the following three properties: 
\begin{itemize}
    \item[(1)] if $\bot \in \phi$ or $\emptyset \in \phi$, then $\phi = \{\bot\}$, 
    \item[(2)] it is closed under resolution, and 
    \item[(3)] no clause is a super-clause of another clause. 
    \end{itemize}
Note that (1) is implied by (2) and (3).

Let $\cnfprop{\phi}$ denote a \TCNF{} formula $\phi'$ such that 
$\phi'$ is propagated and $\phi \Leftrightarrow \phi'$.
\begin{example}\label{ex:cnFPTop}
  Consider the following formulas and their propagated forms:
  \begin{itemize}
  \item $\cnfprop{(\neg a \lor b) \land (\neg b \lor c)} 
    = (\neg a \lor b) \land (\neg b \lor c) \land (\neg a \lor c)$,
  \item $\cnfprop{(\neg a \lor b) \land (\neg b \lor c) \land c} 
    = (\neg a \lor b) \land c$,
  \item $\cnfprop{(\neg a \lor b) \land (\neg b \lor c) \land \neg c} 
    = \neg a \land \neg b \land \neg c$.
  \end{itemize}
\end{example}
It is well known that a \CNF{} formula $\phi$ is satisfiable if and only if
$\cnfprop{\phi}\neq \bot$.
While computing $\cnfprop{\phi}$ is not polynomial-time for general $\CNF$ formulas,  
it is polynomial-time for $\TCNF$, as stated below.

\begin{lemma}\label[lemma]{lem:prop_comp}
Let $\phi$ be a \TCNF formula. We can compute $\cnfprop{\phi}$ in 
time $\bigoh(\propT{|V(\phi)|})$, where $\omega$ is the matrix multiplication exponent ($2 \leq \omega \leq 2.38$~\cite{DBLP:conf/soda/WilliamsXXZ24}). Additionally, the size of 
$\cnfprop{\phi}$ is bounded by $\bigoh(|V(\phi)|^2)$.
\end{lemma}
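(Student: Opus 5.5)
The plan is to reduce the computation of $\cnfprop{\phi}$ to computing the transitive closure of the implication graph of $\phi$. First, handle the trivial degenerate cases: if $\phi$ contains the empty clause, output $\{\bot\}$. Then build the standard implication digraph $D$ on the $2|V(\phi)|$ literals. Each binary clause $(\ell_1\lor\ell_2)$ contributes the arcs $\neg\ell_1\to\ell_2$ and $\neg\ell_2\to\ell_1$. Each unit clause $(\ell)$ contributes the arc $\neg\ell\to\ell$. Clauses with repeated variables can be normalised away: tautologies are dropped, and $(\ell\lor\ell)$ is treated as the unit clause $(\ell)$. Computing the transitive closure of a graph with $N=2|V(\phi)|$ vertices via Boolean matrix multiplication (repeated squaring, or the standard reduction of transitive closure to a single matrix product up to constants) takes time $\bigoh(N^{\omega})=\bigoh(|V(\phi)|^{\omega})$. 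This also dominates the cost of building $D$, since after removing duplicates $\phi$ has at most $\bigoh(|V(\phi)|^2)$ distinct clauses.

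From the closure I read off the propagated formula. If some variable $x$ has both $x\leadsto\neg x$ and $\neg x\leadsto x$, then $\phi$ is unsatisfiable and I output $\{\bot\}$. Otherwise, let $U$ be the set of literals $\ell$ with $\neg\ell\leadsto\ell$; these are exactly the unit clauses derivable by resolution. Output the units $(\ell)$ for $\ell\in U$. Then output every binary clause $(\ell_1\lor\ell_2)$ with $\neg\ell_1\leadsto\ell_2$ and $\ell_1\ne\ell_2$, provided neither $\ell_1$ nor $\ell_2$ is in $U$ and the clause is not a tautology. This step clearly costs $\bigoh(|V(\phi)|^2)$, and the output has size $\bigoh(|V(\phi)|^2)$, which gives the size bound.

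Correctness has three parts. Equivalence holds because every output clause is derived by a chain of resolutions from $\phi$, so it is implied by $\phi$. Conversely, every clause of $\phi$ is either output or subsumed by an output unit. Closure under resolution uses the fact that resolution of 2-clauses corresponds exactly to path concatenation in $D$ (this is the standard correspondence). Resolving a unit with a binary clause yields a unit that lies on a path into $U$. Here one must check that $U$ is closed under implication: if $\ell\in U$ and $\ell\leadsto m$, then $\neg m\leadsto\neg\ell\leadsto\ell\leadsto m$ by the skew-symmetry of $D$, so $m\in U$. Finally, the property that no clause is a superclause of another holds by construction, since binary clauses containing a literal from $U$ are removed.

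The main obstacle is this last point. Removing subsumed clauses must not break closure under resolution. That is, a resolvent of two surviving clauses must itself be either present or subsumed, so that ``closed'' is understood modulo subsumption, consistent with the definition's examples. For instance, in the paper's second example $(\neg b\lor c)$ is dropped because $c$ is present. I would verify this by the closure property of $U$: any resolvent involving a removed clause is subsumed by a unit in $U$.
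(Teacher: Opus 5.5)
Your proposal is correct and follows essentially the same route as the paper. You build the implication graph, compute its transitive closure in $\mathcal{O}(|V(\phi)|^{\omega})$ time, read off the binary clauses, then add the unit clauses and discard the binary clauses they subsume. For the time bound, rely on the reduction of transitive closure to matrix multiplication (Munro) that the paper cites, since plain repeated squaring costs an extra logarithmic factor. Your handling of units, via arcs $\neg\ell\to\ell$, the set $U=\{\ell : \neg\ell \text{ reaches } \ell\}$ and the explicit unsatisfiability check, is somewhat more careful than the paper's: the paper only propagates unit clauses already present in $\phi$ and does not treat the unsatisfiable case separately.
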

\begin{proof}
  Note first that the fact that $|\cnfprop{\phi}|=\bigoh(|V(\phi)|^2)$
  holds, because every clause obtained via resolution from two clauses of
  size at most two has again size at most two and therefore
  $\cnfprop{\phi}$ is also a \TCNF{} formula.
  
  To propagate the formula, we will first compute all clauses that can
  be obtained from $\phi$ via resolution. Towards this aim, we will
  first build the implication graph $G_\phi$ of $\phi$ that has two
  vertices $x$ and $\lnot x$ for every variable $x \in V(\phi)$ and an
  arc from $l$ to $l'$ if and only if $\phi$ contains the clause
  $\lnot l \lor l'$. Note that this can be achieved in time
  $\bigoh(|V(\phi)|+|\phi|)=\bigoh(|V(\phi)|^2)$. We then compute the transitive closure
  $G_\Phi^T$ of
  $G_\phi$ in time 
  $\bigoh(|V(G_\phi)|^\omega)=\bigoh(|V(\phi)|^\omega)$
  using~\cite{DBLP:journals/ipl/Munro71}. Let $C$ be the set of clauses
  $x\lor y$ such that $(\lnot x,y) \in E(G_\Phi^T)$. Note that
  $C$ already contains all clauses of $\cnfprop{\phi}$ of size $2$, i.e.,
  $\SB c \in \cnfprop{\phi}\SM |c|= 2\SE \subseteq C$ and it
  remains only to compute the unit clauses of $\cnfprop{\phi}$ as well
  as to remove clauses from $C$ that contain such a unit clause.
  Note that the set of unit clauses of $\cnfprop{\phi}$, is given by
  $A=\SB N_{G_\phi^T}(\lnot l) \cup \{l\} \SM \{l\}\in \phi\SE$ and can
  be computed in time $\bigoh(|V(\phi)|^2)$. Finally, to ensure (3),
  let $C'$ be the set obtained from $C$ after removing all clauses  that contain a literal corresponding to a unit
  clause in $A$; note that $C'$ can be computed in time $\bigoh(|C|)$
  with the help of standard data structures. Then,
  $\cnfprop{\phi}=C'\cup A$ and the runtime to compute
  $\cnfprop{\phi}=C'\cup A$ is at most
  $\bigoh(|V(\phi)|^\omega)$. 
\end{proof}

We say that a $k$-\textsc{Disjunct-QBF}(\TCNF) $\Phi$ is {\em propagated} 
if every disjunct in $\Phi$ is propagated. Note that
$(\bigvee_{\phi \in \Phi}\phi)$ and $(\bigvee_{\phi \in
  \Phi}\cnfprop{\phi})$ are equisatisfiable and therefore also
$(\pref.\bigvee_{\phi \in \Phi}\phi)$ and $(\pref.\bigvee_{\phi \in
  \Phi}\cnfprop{\phi})$.
Moreover, for any partial assignment $\beta$
the formula $\phi[\beta]$ is either propagated or $\bot \in \phi[\beta]$. 
This directly yields the following lemma, which generalizes the
well-known Q-resolution to $k$-\textsc{Disjunct-QBF}(\TCNF) formulas,
and immediately allows us to remove the innermost existential
quantifier block.
Before we state the lemma, we briefly recall the notion of redundancy
that we first introduced in \Cref{ssec:unmixed-overview}. Let $\Phi$
be a $k$-\textsc{Disjunct}-$\QBF(\CNF)$ formula and let $\phi \in
\Phi$.
We say that $\phi$ is \emph{$\Phi$-redundant} if and only if $\Phi
\Leftrightarrow \Phi \setminus \{\phi\}$.
Similarly, a subset $\Phi' \subseteq \Phi$ is \emph{$\Phi$-redundant} if and only if $\Phi \Leftrightarrow \Phi \setminus \Phi'$.
Let $x \in V(\Phi)$.
We say that the variable $x$ is \emph{$\Phi$-redundant} if and only if 
$\Phi \Leftrightarrow \pref'.\bigvee_{\phi \in \Phi} \phi \setminus C(\phi, \{x\})$,
where $\pref'$ is the prefix of $\Phi$ after removing $x$.

\begin{lemma}\label[lemma]{lem:prop_2cnf}
    Let $\Phi$ be a propagated
    $k$-\textsc{Disjunct-QBF}(\TCNF),
    whose innermost quantifier block $X_q$ is existential.
    Then, $X_q$ is $\Phi$-redundant.
\end{lemma}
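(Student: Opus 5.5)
We have to show that $\Phi \Leftrightarrow \Phi' := \pref'.\bigvee_{\phi\in\Phi}\phi\setminus C(\phi,X_q)$, where $\pref'$ is $\pref$ with the block $X_q$ removed. The direction $\Phi\Rightarrow\Phi'$ is immediate. Every disjunct $\phi\setminus C(\phi,X_q)$ is a subset of $\phi$, so any assignment satisfying some disjunct $\phi$ also satisfies $\phi\setminus C(\phi,X_q)$. Hence an existential winning strategy for $\Phi$, restricted to the variables outside $X_q$, wins $\Phi'$; since $X_q$ is the innermost block, dropping it does not affect the earlier moves.

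For $\Phi'\Rightarrow\Phi$, fix an existential winning strategy for $\Phi'$. Consider any leaf, i.e., a total assignment $\alpha$ of $V(\Phi)\setminus X_q$ under which some disjunct $\phi\setminus C(\phi,X_q)$ is satisfied. Because $X_q$ is the last existential block, it suffices to show that $\phi[\alpha]$, a 2-CNF over $X_q$, is satisfiable; the existential player then extends the strategy by such an assignment of $X_q$. Every clause of $\phi$ without $X_q$-variables is satisfied by $\alpha$, so $\phi[\alpha]$ consists only of the (reduced) clauses of $C(\phi,X_q)$. Moreover, $\phi[\alpha]$ contains no empty clause. Indeed, an empty clause would come from a clause $c\in C(\phi,X_q)$ all of whose literals are falsified. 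But $c$ contains an $X_q$-literal, and these literals are unassigned by $\alpha$, so this is impossible.

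The key step, and the main obstacle, is to rule out a contradiction inside $\phi[\alpha]$ that arises through resolution among the $X_q$-clauses. We use the remark preceding the lemma: for a propagated $\phi$ and any partial assignment $\beta$, either $\phi[\beta]$ is propagated or $\bot\in\phi[\beta]$. We first justify this remark concretely for 2-CNF. Suppose $\phi[\beta]$ contains two clauses $(\ell\lor a)$ and $(\neg\ell\lor b)$. They arise from clauses $(\ell\lor a\lor\ldots)$ and $(\neg\ell\lor b\lor\ldots)$ of $\phi$ with the extra literals falsified. Their resolvent lies in $\phi$, or is subsumed by some clause of $\phi$, by closure and property (3). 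Restricted by $\beta$, this clause yields the resolvent in $\phi[\beta]$, or a subclause of it, or it becomes empty, which gives $\bot$. Applying the remark with $\beta=\alpha$ and using that $\bot\notin\phi[\alpha]$, we conclude that $\phi[\alpha]$ is closed under resolution up to subsumption.

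A 2-CNF formula that is closed under resolution and contains no empty clause is satisfiable, since unsatisfiability of CNF is witnessed by a resolution derivation of the empty clause. Alternatively, we can assign the literals of unit clauses and argue along the implication graph with transitive closure. Hence $\phi[\alpha]$ is satisfiable and the strategy extends to a winning strategy for $\Phi$. Combining the two directions gives $\Phi\Leftrightarrow\Phi'$, so $X_q$ is $\Phi$-redundant.

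I expect the fiddly part to be handling subsumption correctly: resolvents may be removed from $\phi$ because a smaller clause subsumes them. One must check that the subsuming clause still restricts appropriately under $\alpha$, so that no empty clause can be derived by resolution in $\phi[\alpha]$.
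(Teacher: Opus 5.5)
Your proof is correct, but it is organized differently from the paper's.

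\textbf{The paper's route.} The paper never restricts to the whole block at once. For $X\subseteq X_q$ it defines $\Phi^X$, which keeps only the clauses whose $X_q$-variables lie in $X$, and it proves $\Phi^{X\cup\{x\}}\Leftrightarrow\Phi^X$ one variable at a time. At a leaf of a winning strategy for $\Phi^X$, every variable other than $x$ in the newly added clauses is already assigned. So if no value of $x$ worked, there would be an unsatisfied clause containing $x$ and an unsatisfied clause containing $\neg x$. Their resolvent lies in $\phi$, mentions neither $x$ nor any variable of $X_q\setminus(X\cup\{x\})$, and is falsified. This contradicts that the leaf satisfies the $\Phi^X$-disjunct. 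The argument is a self-contained Davis--Putnam-style extension: it uses closure of the original $\phi$ only on the variable being added, and it needs neither refutation completeness nor any claim about $\phi[\alpha]$.

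\textbf{Your route.} You show in one shot that $\phi[\alpha]$ has no empty clause and is closed under resolution up to subsumption, and then you appeal to completeness of resolution. This is conceptually clean, but it relies on a black-box theorem and forces you to track subsumption under restriction.

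\textbf{The subsumption point.} You were right to claim only closure up to subsumption. The paper's informal remark that $\phi[\beta]$ is again propagated fails for property (3): for example, $\phi=(a\lor b)\land(a\lor m)$ with $m$ set to $0$ yields $(a\lor b)\land(a)$. Your final step (``closed under resolution and no empty clause implies satisfiable'') should therefore also be stated for closure up to subsumption. A routine induction over a refutation closes it: every derived non-tautological clause is subsumed by a clause of $\phi[\alpha]$, so $\emptyset$ itself would have to lie in $\phi[\alpha]$.

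Neither argument actually uses the 2-CNF restriction beyond propagation being computable. Both work verbatim for any resolution-closed CNF.
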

\begin{proof}
Let $\pref = Q_1 X_1 \dots \exists X_q$ be a prefix of $\Phi$.
For each $X \subseteq X_{q}$ ($\overline{X}=X_{q}\setminus X$), we define  
$$
\Phi^{X} = Q_1 X_1 \dots \exists X. 
\bigvee_{\phi \in \Phi} 
\left(\phi \setminus C(\phi, \overline{X})\right).
$$

Let $X \subset X_{q}$ and $x \in X_{q} \setminus X$.  
Now we show that $\Phi^{X \cup \{x\}}$ is equivalent to $\Phi^{X}$.

To show the forward direction, let $T$ be an existential winning 
strategy for $\Phi^{X \cup \{x\}}$. Let $l$ be any leaf node of $T$. 
Then, $\tau_l^T$ satisfies some disjunct of $\Phi^{X \cup \{x\}}$, 
and therefore also some disjunct of $\Phi^X$, as required.

For the reverse direction, let $T$ be an existential winning 
strategy for $\Phi^{X}$. Let $l$ be any leaf node of $T$. 
There exists $\phi \in \Phi^{X \cup \{x\}}$ such that
$\tau_l^T$ satisfies $(\phi \setminus C(\phi, \{x\}))$.
Let us assume that there exists $c_\top,c_\bot \in C(\phi, \{x\})$
such that
$c_\top$ and $c_\bot$ are not satisfiable by $\tau_l^T$, 
$x$ appears positively in $c_\top$ and $x$ appears negatively in $c_\top$.
Due to property (2) (closure under resolution), the resolvent $c_r$ of 
$c_\top$ and $c_\bot$ must be in $\phi$, and hence in 
$\phi \setminus C(\phi, \{x\})$. However, this implies 
$c_r[\tau_l^T] = \bot$, which is a contradiction. Therefore, such 
$c_\top$ and $c_\bot$ cannot exist simultaneously.
This means there is an assignment to $x$ that satisfies all clauses 
of $C(\phi, \{x\})$ not already satisfied by $\tau_l^T$. We modify $T$ 
by introducing a new level for $x$ after each leaf, assigning $x$ 
appropriately. This modified tree becomes an existential winning 
strategy for $\Phi^{X \cup \{x\}}$.

This shows that $\Phi^{X}$ and $\Phi^{X\cup \{x\}}$ are
equisatisfiable and by repeating this argument for all variables in
$X_{q}$, we obtain that $\Phi^{X_{q}} \Leftrightarrow \Phi^\emptyset$, 
which completes the proof.
\end{proof}
Let $\Phi = \pref.\bigvee_{i \in [k]} \phi_i$ be a 
$k$-\textsc{Disjunct-QBF}(\TCNF{}) with 
$\pref = Q_1 X_1 \dots Q_q X_q$. Because of \Cref{lem:prop_2cnf}, we will in the following assume that
the innermost quantifier block is universal. Recall from \Cref{ssec:2cnf} that
our aim is to construct an equivalent 
$2^{2k^2 + k}k$-\textsc{Disjunct-QBF}(\TCNF) formula  
$\pies(\Phi)$, where the
variables $X_{q-1}$ of the innermost
existential quantifier block have become redundant and their function
is replaced by a set $A_\Phi$ of at most $2k^2+k$ fresh existential
variables (inserted into the position of $X_{q-1}$ within the
quantifier prefix). For illustrative purposes, it
will be convenient to think of $\pies(\Phi)$ (and all intermediate
formulas leading to $\pies(\Phi)$) as the disjunction of at most $2^{2k^2 + k}$ many  
$k$-\textsc{Disjunct-QBF}(\TCNF) formulas $\Phi_L$, where $L$ is a
non-contradictory subset of literals over the set
$A_\Phi=\{a_1,\dotsc,a_{2k^2+k}\}$ of fresh existential variables, i.e.,
$\pies(\Phi) \Leftrightarrow \bigvee_L\Phi_L$, where we use
$\bigvee_L\Phi_L$ for the formula $\pref.\bigvee_L (\bigvee_{\phi \in
  \Phi_L}\phi)$ assuming that $\pref$ is the common prefix of all $\Phi_L$'s.  

Intuitively, the set $A_\Phi$ is a set of fresh existential variables
that will later replace all variables in $X_{q-1}$ and the existential player
uses the variables in $A_\Phi$ to
select which of the formulas $\Phi_L$ to play on by assigning values for the
variables in $A_\Phi$; importantly, every assignment of the variables
in $A_\Phi$ will falsify every $\Phi_L$ such that $L$ is falsified by
the assignment. Every formula $\Phi_L$ will have the following properties:
\begin{itemize}
\item[(A1)] $\Phi_L$ is propagated,
\item[(A2)] every two (possibly equal) disjuncts $\phi,\phi'\in \Phi_L$ 
  share the same set of clauses that do not contain variables from
  $X_q$, i.e., $\phi \setminus C(\phi, X_q) = \phi' \setminus
  C(\phi', X_q)$, and
\item[(A3)] for every variable $x \in X_{q-1}$, either the literal $x$
  or the literal $\neg x$
  does not occur in any clause (of any disjunct) of $\Phi_L$ with a variable in $X_q$.
\end{itemize}
Crucially (as we show in \Cref{lem:bqa-var-redundant}) it holds
that if each $\Phi_L$ satisfies (A1)--(A3), then $X_{q-1}$ is
$\pies(\Phi)$-redundant, which allows us to
safely ignore all variables in $X_{q-1}$, which have now been
replaced by the at most $2k^2+k$ variables in $A_\Phi$. Note that thanks to
\Cref{lem:tcnf_pies_redundant}, we can apply
\Cref{lem:bqa-var-redundant} already if (A3) only applies to all
disjuncts that do not contain $2k$ unit clauses with variables in
$X_{q}$, i.e., we can replace (A3) with the following weaker condition:
\begin{itemize}
\item[(A3')] The subset $\Phi'$ of $\Phi_L$ containing all disjuncts
  that contain fewer than $2k$ unit clauses with variables in $X_q$ satisfies:
  for every $x \in X_{q-1}$, either the literal $x$ or the literal $\neg x$
  does not occur in a clause with a variable in $X_q$.
\end{itemize}
Intuitively, this holds because all formulas that contain $2k$ unit
clauses with variables in $X_q$ can be safely ignored since they can
always be falsified by the universal player (see also
\Cref{lem:tcnf_pies_redundant}). Note that replacing (A3) with (A3')
is crucial to show that $\pies(\Phi)$ can be obtained from $\Phi$ in a bounded
number of steps.

Our goal is now to build such a disjunction  
of formulas $\Phi_L$ that satisfy (A1), (A2), and (A3').  
We begin by introducing the function $\del(\Phi)$,  
which can be seen as the initialization step of the algorithm.  
Intuitively, $\del(\Phi)$ produces an equivalent formula  
consisting of the disjunction of $2^k$  
distinct
$k$-\textsc{Disjunct-QBF}(\TCNF) formulas  
that satisfy properties (A1) and (A2).
We then create the formula $\pies(\Phi)$, which
satisfies property (A3'), as well as (A1), and (A2), iteratively using \Cref{eq:pies_update}  
starting from $\del(\Phi)$. 

We now recall the definition of $\del(\Phi)$.
Let $\Phi = \pref.\bigvee_{i \in [k]} \phi_i$ be a 
$k$-\textsc{Disjunct-QBF}(\TCNF{}) with 
$\pref = Q_1 X_1 \dots Q_q X_q$, and let 
$A_\Phi = \{a_i : i \in [\Asize]\}$ be a set of fresh (existential) variables.
For every non-empty subset $I \subseteq [k]$, let 
$\phi^{\del}_I$ denote the conjunction of all clauses from 
$\bigwedge_{i \in I} \phi_i$
that do not contain variables from 
$X_q$, i.e.,
$
\phi^{\del}_I = \bigwedge_{i \in I} 
\left( \phi_i \setminus C(\phi_i, X_q) \right)
$; see also \Cref{ex:del} for an example of the formula.
Then, the formula $\del(\Phi)$ is given as:
\begin{equation}%\label{en:del_def}
\del(\Phi) =  Q_1 X_1 \dots\exists (X_{q-1} \cup A_\Phi)\forall X_{q}.\bigvee_{\emptyset\neq I \subseteq [k]}
\Phi^\del_I,
% \bigvee_{i \in I} 
% \cnfpropname{}
% \biggl( 
% \phi_i \land \phi^{\del}_I
% \land (\bigwedge_{j \in I} a_j) \land (\bigwedge_{j \in [k] \setminus I} \neg a_j)\biggr)
\end{equation}
 where

$$
\Phi^\del_I =  \bigvee_{i \in I} 
\cnfpropname{}
\biggl( 
\phi_i \land \phi^{\del}_I
\land (\bigwedge_{j \in I} a_j) \land (\bigwedge_{j \in [k] \setminus I} \neg a_j)\biggr)
$$

\newcommand{\tcnfold}[1]{}
\tcnfold{
Let $\Phi = \pref.\bigvee_{i \in [k]} \phi_i$ be a
$k$-\textsc{Disjunct-QBF}(\TCNF)
with $\pref = Q_1 X_1 \dots Q_q X_q$.
We first define the function $\del(\Phi)$, which informally serves as
a kind of selection function.
If $Q_{q} = \exists$, then we let $\del(\Phi)$ be equal to
$\Phi$. Otherwise, let $\phi^{\del}_I = \bigwedge_{i \in I} (\phi_i
\setminus C(\phi_i, X_q))$ for every $I \subseteq [k]$ and 
let $\pref' = Q_1 X_1 \dots Q_{q-2} X_{q-2}\exists (A_\Phi \cup X_{q-1})\forall X_{q}$ be a
new quantifier prefix, where $A_\Phi = \{a_i : i \in [\Asize]\}$ is a
set of new variables; note that the variables in $A_\Phi$ will later
be used to replace all the variables in $X_{q-1}$. Then, we let $\del(\Phi)$ be equal to:
\[\pref'.\bigvee_{\emptyset\neq I \subseteq [k]}\bigvee_{i \in I} 
  \cnfprop{\phi_i \land \phi^{\del}_I
\land (\bigwedge_{j \in I} a_j) \land (\bigwedge_{j \in [k] \setminus I} \neg a_j)}.
\]
}
Note that $\del(\Phi)$ does not contain any clauses with variables
$a_i$ for $i > k$, which will later be used to obtain $\pies(\Phi)$.

The following two lemmas can now be used to show \Cref{cor:bqa-dog} and
establish the equivalence between $\del(\Phi)$ 
and $\Phi$, along with additional properties.

\begin{lemma}\label[lemma]{lem:bqa-dog-eq}
  Let $\Phi$ be a propagated $k$-\textsc{Disjunct-QBF}(\TCNF), whose innermost
  quantifier block is universal.
  Then, $\del(\Phi)$ is a $k2^k$-\textsc{Disjunct-QBF}(\TCNF) and $\Phi \Leftrightarrow \del(\Phi)$.
\end{lemma}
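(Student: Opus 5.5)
The disjunct count is immediate from the definition. There are at most $2^k$ nonempty sets $I\subseteq[k]$, and each $\Phi^\del_I$ contributes $|I|\le k$ disjuncts. Each disjunct is $\cnfprop{\cdot}$ of a \TCNF formula: $\phi_i$ and $\phi^\del_I$ are \TCNF, and the unit clauses on $A_\Phi$ are \TCNF. Since propagation of a \TCNF stays \TCNF (\Cref{lem:prop_comp}), $\del(\Phi)$ is a $k2^k$-\textsc{Disjunct-QBF}(\TCNF). For the equivalence I will first discard the $\cnfpropname$ operator: each disjunct is equivalent to its unpropagated version, so $\del(\Phi)$ is equisatisfiable with the same formula without $\cnfpropname$. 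I then argue via winning existential strategies. Inside the block $\exists(X_{q-1}\cup A_\Phi)$ I order the prefix so that $A_\Phi$ comes after $X_{q-1}$, which is allowed within a single block.

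\textbf{$\del(\Phi)\Rightarrow\Phi$.} Take a winning existential strategy for $\del(\Phi)$ and forget the moves on $A_\Phi$. At any leaf, the assignment satisfies some disjunct $\phi_i\wedge\phi^\del_I\wedge(\text{units})$ with $i\in I$. In particular it satisfies $\phi_i$, so the restricted strategy wins on $\Phi$. The variables $a_j$ with $j>k$ never occur, so they play no role.

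\textbf{$\Phi\Rightarrow\del(\Phi)$.} This is the main step. Fix a winning existential strategy $T$ for $\Phi$. Consider a node where all of $X_1,\dots,X_{q-1}$ have been assigned, with partial assignment $\tau$. Set
\[I_\tau=\SB i\in[k] \SM \tau \text{ satisfies } \phi_i\setminus C(\phi_i,X_q)\SE .\]
The key claim is that $I_\tau\neq\emptyset$. Suppose $I_\tau=\emptyset$. Then every $\phi_i$ contains a clause without $X_q$-variables that $\tau$ falsifies, so every disjunct is falsified regardless of how $X_q$ is played. This contradicts $T$ winning below this node.

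Now extend $T$ by setting $a_j=1$ iff $j\in I_\tau$ (and $a_j$ arbitrary for $j>k$). Keep the universal play on $X_q$ unchanged. At a leaf extending $\tau$ by an assignment $\sigma$ of $X_q$, $T$ guarantees that some $\phi_i$ is satisfied by $\tau\cup\sigma$. Then $\phi_i\setminus C(\phi_i,X_q)$ is satisfied by $\tau$, so $i\in I_\tau$. By definition of $I_\tau$, $\tau$ also satisfies every $\phi_{i'}\setminus C(\phi_{i'},X_q)$ for $i'\in I_\tau$, hence $\tau$ satisfies $\phi^\del_{I_\tau}$. The unit clauses on $a_1,\dots,a_k$ hold by construction. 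Therefore the disjunct indexed by $(I_\tau,i)$ of $\Phi^\del_{I_\tau}$ is satisfied.

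The only delicate point is the choice $I=I_\tau$ rather than the set of disjuncts actually satisfied at the leaf. The set of satisfied disjuncts depends on $X_q$, which is played after $A_\Phi$. In contrast, $I_\tau$ depends only on $\tau$, which is exactly why $\phi^\del_I$ was restricted to clauses free of $X_q$.
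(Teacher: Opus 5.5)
Your proof is correct and follows essentially the same route as the paper. For $\Phi\Rightarrow\del(\Phi)$, the paper also inserts, at each node where the first variable of $X_q$ is about to be played, the assignment of $A_\Phi$ encoding the set of not-yet-falsified disjuncts; that set coincides with your $I_\tau$. For the converse it simply restricts a winning strategy for $\del(\Phi)$ to $V(\Phi)$, and your explicit check that $I_\tau\neq\emptyset$ correctly handles a detail the paper leaves implicit.
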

\begin{proof}
  Let $\Phi=\pref.\bigvee_{i \in [k]} \phi_i$  be a propagated
  $k$-\textsc{Disjunct-QBF}(\TCNF) with $\pref = Q_1 X_1 \dots Q_q
  X_q$, whose innermost quantifier block is universal. By construction
  $\del(\Phi)$ is a $k2^k$-\textsc{Disjunct-QBF}(\TCNF) and it
  remains to show that $\Phi \Leftrightarrow \del(\Phi)$.
  % Since $\Phi=\del(\Phi)$ if $Q_q=\exists$, we can assume that
  % $Q_q=\forall$.
      
  Towards showing the forward direction, let $T$ be an existential
  winning strategy for $\Phi$.
  Let $z$ be the outermost variable in $X_q$ with respect to $\pref$.
  Moreover, let $t \in V(T)$ be a node of $T$ labeled with the
  variable $z$ and let $I_t$ be the set of all indices $i \in [k]$ such that
  $\bot \notin \phi_i[\tau_t^T]$, i.e., the set of all indices that
  correspond to disjuncts of $\Phi$ that
  are not yet falsified. 
  For a subset $I \subseteq [k]$ with $I\neq \emptyset$, we denote by $\tau_I: A_\Phi
  \rightarrow \{0,1\}$ the assignment of the new variables in $A_\Phi$
  given by setting $\tau_I(a_i)=1$ if $i \in I$ and $\tau_I(a_i)=0$,
  otherwise. Note that $\tau_t^T$ satisfies $\phi_{I_t}^{\del}$
  (because $\phi_{I_t}^{\del}$ does not contain any clause containing
  a variable from $X_q$)  and
  $\tau_{I_t}$ satisfies $(\bigwedge_{j \in I_t} a_j) \land
  (\bigwedge_{j \in [k] \setminus I_t} \neg a_j)$. Therefore,
  any extension of $\tau_t^T$ (to the variables in 
  % \tdm{$X_{q-1}$ ?}
  $X_{q}$) satisfies
  $\phi_i$ if and only if the corresponding extension of $\tau_t^T
  \cup \tau_{I_t}$ satisfies
  $\Phi^\del_I$.
  This allows us to define an existential winning strategy $T'$ for
  $\del(\Phi)$ as follows. Informally, $T'$ is obtained from
  $T$ by adding the path $P_t$ corresponding to the assignment $\tau_{I_t}$
  before $t$ for every node $t$ of $T$ labeled by $z$. More formally, let $t\in V(T)$ be a node of $T$
  labeled by $z$, then $P_t$ is the path
  corresponding to the assignment $\tau_{I_t}$, i.e., $P=(v_1,\dotsc,
  v_{\Asize+1})$
  and for every $i \in [\Asize]$, $v_i$ is labeled
  $a_i$ and the edge $\{v_i,v_{i+1}\}$ is labeled
  $\tau_{I_t}[a_i]$. Now, we modify $T$ by adding the path $P_t$ 
  making $v_1$ the new child of the parent $p$ of $t$ and identifying
  $t$ with $v_{\Asize+1}$.
  
  Towards showing the reverse direction, let $T'$ be an existential
  winning strategy for $\del(\Phi)$ and let $T$ be the existential
  strategy for $\Phi$ obtained by restricting $T'$ to the variables in
  $V(\Phi)$; which is well-defined since all variables in $A_\Phi$ are
  existentially quantified. We claim that $T$ is an existential
  winning strategy for $\Phi$. To see this consider any leaf $l$ of
  $T'$ and let $\Phi^{\del}_I$ be some disjunct in
  $\del(\Phi)$ satisfied by $\tau_l^{T'}$. Then, the leaf
  corresponding to $l$ in $T$ satisfies $\phi_i$.  
\end{proof}
    
\begin{lemma}\label[lemma]{lem:bqa-dog-prop}
  Let $\Phi$ be a propagated $k$-\textsc{Disjunct-QBF}(\TCNF),
  whose innermost quantifier block $X_q$ is universal.
  Then, $\del(\Phi)$ can be computed in time $\bigoh(k2^k\propT{|V(\Phi)|})$.
  Moreover, $\del(\Phi)$ satisfies (A1) and (A2).
  % for every 
  % $\phi, \phi' \in \del(\Phi)$, if 
  % $C(\phi, A_\Phi) = C(\phi', A_\Phi)$, then 
  % $\phi \setminus C(\phi, X_q) = \phi' \setminus C(\phi', X_q)$.
\end{lemma}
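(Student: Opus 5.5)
The lemma has three parts: a running-time bound, property (A1), and property (A2). I would prove each directly from the definition of $\del(\Phi)$ in \eqref{en:del_def}.

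\emph{Running time.} Enumerate the $2^k-1$ non-empty sets $I \subseteq [k]$.
For each $I$:
\begin{enumerate}
\item Form $\phi^{\del}_I$ by deleting from each $\phi_i$, $i \in I$, the clauses containing a variable of $X_q$. This takes time linear in $\sum_i |\phi_i|$. Since $\Phi$ is propagated, each $|\phi_i| = \bigoh(|V(\Phi)|^2)$ by \Cref{lem:prop_comp}, which is dominated by $\propT{|V(\Phi)|}$.
\item For each $i \in I$, compute $\cnfprop{\phi_i \land \phi^{\del}_I \land \bigwedge_{j\in I} a_j \land \bigwedge_{j \in [k]\setminus I}\neg a_j}$ using \Cref{lem:prop_comp}.
\end{enumerate}
The formula in step 2 is a \TCNF on $|V(\Phi)|+k$ variables. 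It is enough to run \Cref{lem:prop_comp} on $\phi_i \land \phi^{\del}_I$ and then append the $a$-unit clauses. Those units involve fresh variables, so they create no new resolvents and cannot subsume any other clause. This gives $\bigoh(\propT{|V(\Phi)|})$ per pair $(I,i)$, hence $\bigoh(k2^k\propT{|V(\Phi)|})$ in total.

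\emph{Property (A1).} Here each $\Phi_L$ is $\Phi^\del_{I(L)}$, and every disjunct of it is by definition the output of $\cnfpropname$. Hence it is propagated.

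\emph{Property (A2).} This is the main point. Fix $I$ and $i,i' \in I$, and let $\psi_i = \cnfprop{\phi_i\land\phi^{\del}_I\land U}$, where $U$ denotes the $a$-units. I must show $\psi_i \setminus C(\psi_i,X_q) = \psi_{i'}\setminus C(\psi_{i'},X_q)$. I would establish the characterization
\[\psi_i\setminus C(\psi_i,X_q) = \cnfprop{\phi^{\del}_I}\cup U,\]
which is independent of $i$.

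\begin{itemize}
\item[($\supseteq$)] The clauses of $\phi^{\del}_I$ all lie in $\phi_i\land\phi^{\del}_I$. The clauses of $\cnfprop{\phi^{\del}_I}$ are obtained from these by resolution, so they are derivable and belong to $\psi_i$ unless subsumed. A subsuming clause must itself be $X_q$-free, so it is handled by the ($\subseteq$) direction.
\item[($\subseteq$)] Take an $X_q$-free clause $c$ derived in $\psi_i$. Its resolution derivation may pass through clauses of $\phi_i$ that contain $X_q$-variables. I need to show $c$ is already derivable from $\phi^{\del}_I$ alone. Two facts drive this. First, every $X_q$-free clause of $\phi_i$ is in $\phi^{\del}_I$. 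Second, $\phi_i$ is propagated, so any $X_q$-free consequence of resolving clauses of $\phi_i$ on $X_q$-literals is already a clause of $\phi_i$, hence of $\phi^{\del}_I$, or is subsumed by one.
\end{itemize}

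To make the second fact rigorous, use the implication-graph view from \Cref{lem:prop_comp}. A derivation of a 2-clause $(\ell\lor\ell')$ is a path from $\neg\ell$ to $\ell'$. Split the path at its $X_q$-free literals. A maximal stretch through $X_q$-literals must consist entirely of $\phi_i$-edges, because $\phi^{\del}_I$ has no clauses touching $X_q$. By transitivity of the propagated $\phi_i$, each such stretch collapses to a single $X_q$-free clause of $\phi_i$, which lies in $\phi^{\del}_I$. Unit clauses are handled the same way, viewing a unit $(\ell)$ as a path $\neg\ell\to\ell$.

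I expect this path-collapsing step to be the main obstacle. There are three edge cases to watch. A collapsed segment may produce a unit clause, which is fine because it still lies in $\phi^{\del}_I$. If $\psi_i=\{\bot\}$, the empty clause must also arise from $\phi^{\del}_I\cup U$, or (A2) must be read with falsified disjuncts treated separately. Finally, subsumption removals could differ between $i$ and $i'$ only through $X_q$-free units, and those coincide by the characterization above.
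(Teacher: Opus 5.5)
Your proposal is correct and follows the paper's proof essentially step for step: the runtime comes from one application of \Cref{lem:prop_comp} per disjunct, (A1) holds by construction, and (A2) comes from showing that the $X_q$-free part of every disjunct of $\Phi^{\del}_I$ is a fixed formula independent of $i$, by collapsing maximal $X_q$-stretches of implication-graph paths using that $\phi_i$ is propagated. Your version is slightly more careful than the paper's: the paper states the characterization as $\phi'\setminus C(\phi',X_q)=\cnfprop{\phi^{\del}_I}$, which omits the $a$-unit clauses $U$ that you correctly include, and you also handle the fresh variables explicitly in the running-time bound.
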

\begin{proof}
  We begin by noting that the function $\del(\Phi)$ can be computed in 
  time $\bigoh(k2^k\propT{|V(\Phi)|})$. Specifically, each of the at most 
  $k2^k$ disjuncts of $\del(\Phi)$ can be computed via propagation in time 
  $\bigoh(\propT{|V(\Phi)|})$ 
  using \Cref{lem:prop_comp}. By construction, $\del(\Phi)$ is
  propagated and therefore satisfies (A1).
  
  Towards showing that $\del(\Phi)$ satisfies (A2),
  recall that every
  disjunct $\phi'$ of $\del(\Phi)$ has the form
  $\cnfprop{\phi_i \land \phi^{\del}_I \land (\bigwedge_{j \in I} a_j)
    \land (\bigwedge_{j \in [k] \setminus I} \neg a_j)}$ for some $I
  \subseteq [k]$ and $i \in I$, where $\phi^{\del}_I = \bigwedge_{i \in I} (\phi_i
  \setminus C(\phi_i, X_q))$.
  We claim that $\phi'\setminus
  C(\phi',X_q)=\cnfprop{\phi^{\del}_I}$, which
  concludes the proof that $\del(\Phi)$ satisfies (A2).
  Clearly, $\cnfprop{\phi^{\del}_I} \subseteq \phi'\setminus C(\phi',X_q)$, so suppose for a
  contradiction that there is a clause $c \in (\phi'\setminus C(\phi',X_q))\setminus
  \cnfprop{\phi^{\del}_I}$. Note that $c=a \lor b$ is in $\phi'$ if
  and only if the implication graph $G_{\phi'}$ of $\phi'$ contains a
  directed path $P$ from $\lnot a$ to $b$. Recall that the implication
  graph $G_\phi$ of a \TCNF formula $\phi$ has two
  vertices $x$ and $\lnot x$ for every variable $x \in V(\phi)$ and an
  arc from literal $l$ to literal $l'$ if and only if $\phi$ contains the clause
  $\lnot l \lor l'$. Note that if $P$
  does not contain any arc corresponding to a clause in
  $C(\phi_i,X_q)$, then $P$ is also in $G_{\cnfprop{\phi^{\del}_I}}$ and
  therefore $c \in \cnfprop{\phi^{\del}_I}$, a contradiction. Therefore,
  let $P'$
  be a maximal subpath of $P$ say from $l$ to $l'$ containing only arcs in
  $C(\phi_i,X_q)$. Because the endpoints of $P$ are in
  $V(\del(\Phi))\setminus X_q$, the same is true for the endpoints $l$
  and $l'$ of
  $P'$. But then because $\phi_i$ is propagated, $G_{\phi_i\setminus
    C(\phi_i,X_q)}$ (and therefore also $G_{\cnfprop{\phi^{\del}_I}}$) contains the arc $(l,l')$. Therefore,
  by replacing every such maximal subpath of $P$ in this manner
  we obtain a path $P''$ that is also in $G_{\cnfprop{\phi^{\del}_I}}$,
  which shows that $c \in \cnfprop{\phi^{\del}_I}$.
\end{proof}

We now recall the definition of $\pies(\Phi)$ (which will achieve (A3')) and state some
additional properties.
Let $\Phi = \pref.\bigvee_{i \in [k]} \phi_i$ be a
$k$-\textsc{Disjunct-QBF}(\TCNF) with $\pref = Q_1 X_1 \dots Q_q X_q$,
where $Q_q=\forall$. We start by setting $\Phi^0=\del(\Phi)$ and
iteratively obtain $\Phi^{i+1}$ from $\Phi^i$ as follows;
$\pies(\Phi)$ will then be equal to some $\Phi^j$, which satisfies (A3').
Note that
for every set $L$ of literals (that contains exactly one literal of
every variable in $\{a_1,\dotsc,a_k\}$), it holds that
$\Phi_L=\Phi^{\del}_{I(L)}$ and $\del(\Phi)=\pref.\bigvee_L\Phi_L$, where $I(L)$ is the subset
of $[k]$ containing $i \in [k]$ if and only if $a_i \in L$.

We say that a variable $x \in X_{q-1}$ is \emph{reducible} with
respect to $\Phi_L^i$ for some $i\geq 0$ and set $L$ of literals over $A_\Phi$, if there are two
disjuncts $\phi$ and $\phi'$ in $\Phi_L^i$ such that:
\begin{itemize}
\item[(P1)]\label{en:pies_2k} $\phi$ and $\phi'$ contain fewer than $2k$ unit clauses over
  variables in $X_q$,
\item[(P2)]\label{en:pies_x} $C(\phi,\{x\})\cap C(\phi,X_q)$ contains a clause where $x$
  occurs positively and $C(\phi',\{x\})\cap C(\phi',X_q)$ contains a
  clause where $x$ occurs negatively
\end{itemize}
If $x \in X_{q-1}$ is reducible with respect to $\Phi^i_L$, then
we denote by $\piesup(\Phi^i,L,x)$ the following formula (see also
\Cref{ex:piesup} for an example).
\begin{equation}\label{eq:pies_update}
\piesup(\Phi^i,L,x) = \pref.
\bigl(
\bigvee_{\phi \in (\Phi^i\setminus \Phi^i_L)} \phi 
\bigr)
\lor \Phi_{L\cup\{a_{|L|+1}\}}' \lor \Phi_{L\cup\{\neg a_{|L|+1}\}}'
\end{equation}
Here, $\Phi_{L\cup\{a_{|L|+1}\}}'=\bigvee_{\phi \in
  \Phi_L^i}(\cnfprop{\phi[x=1]}  \land x \land a_{|L|+1})$ and
$\Phi_{L\cup\{\neg a_{|L|+1}\}}'=\bigvee_{\phi \in
  \Phi_L^i}(\cnfprop{\phi[x=0]}  \land \neg x \land \neg a_{|L|+1})$.
Moreover, as long as $\Phi^i$ contains a
$k$-\textsc{Disjunct-QBF}(\TCNF) $\Phi_L^i$ 
that contains a variable $x \in X_{q-1}$ that is reducible, we let
$\Phi^{i+1}$ be equal to $\piesup(\Phi^i,L,x)$. Finally, $\pies(\Phi)$
is then equal to $\Phi^j$, where $j$ is the smallest integer such that
no variable $x \in X_{q-1}$ is reducible for any $\Phi^j_L$; note that
$\pies(\Phi)$ then satisfies (A3') by construction.
\tcnfold{
We now define the function $\pies(\Phi)$ that informally allows us to
assign (or reduce) variables in $X_{q-1}$ that occur both positively and
negatively in clauses with at least one variable in $X_q$.
Let $\Phi = \pref.\bigvee_{i \in [k]} \phi_i$ be a
$k$-\textsc{Disjunct-QBF}(\TCNF) with $\pref = Q_1 X_1 \dots Q_q X_q$.
If $Q_q=\exists$, then $\pies(\Phi)=\Phi$. Otherwise, let
$\Phi'=\del(\Phi)$. Let $\phi, \phi' \in \Phi'$ and $x \in
X_{q-1}$. We say that a variable $x \in X_{q-1}$ is \emph{reducible} with
respect to some set $L$ of literals over some variables in
$A_\Phi$ and formula $\Phi'$, if there are two
formulas $\phi$ and $\phi'$ in $\Phi'$ satisfying the following properties:
\begin{itemize}
\item[(P1)]\label{en:pies_PhiL} $C(\phi,A_\Phi)=C(\phi',A_\Phi)=L'$,
  where $L'$ is the set of unit clauses corresponding to the literals
  in $L$,
\item[(P2)]\label{en:pies_2k} $\phi$ and $\phi'$ contain fewer than $2k$ unit clauses over
  variables in $X_q$,
\item[(P3)]\label{en:pies_x} $C(\phi,\{x\})\cap C(\phi,X_q)$ contains a clause where $x$
  occurs positively and $C(\phi',\{x\})\cap C(\phi',X_q)$ contains a
  clause where $x$ occurs negatively
\end{itemize}
If $x \in X_{q-1}$ is reducible with respect to 
$L$ and $\Phi'$, then we denote by $\piesup(L,x,\Phi')$ the following
formula.
\begin{equation}\label{eq:pies_update}
\begin{split}
  \piesup(L,x,\Phi')  = & \bigvee_{\phi \in (\Phi'\setminus
                          \Phi'_L)}\phi \lor\\
  & \bigvee_{\phi \in \Phi_L'}\biggl((\cnfprop{\phi[x= 1]}  \land x \land
    a_{t}) \lor \\
  &  (\cnfprop{\phi[x= 0]}  \land \neg x \land \neg a_{t})\biggr)\\
\end{split}
\end{equation}
Here, $\Phi_L'=\SB \phi \in \Phi' \land C(\phi,A_\Phi)=L\SE$ and $t
\in [\Asize]$
is the smallest index such that $L$ does not contain a literal over
$a_t$; we will show in \Cref{lem:tcnf_pies_limit} that such an index
$t$ does always exist and therefore an exhaustive application of the
function $\piesup(-)$ is always possible.
Let $\Phi''$ be the formula obtained from $\Phi'$ after
exhaustively applying the function $\piesup(-)$, i.e., until
there is no longer a reducible variable. Then, $\pies(\Phi)$ is equal to $\Phi''$
after exhaustive propagation.
}

The following auxiliary lemma is a first step to bound the number of
disjuncts in $\pies(\Phi)$.

\begin{lemma}\label[lemma]{lem:pies_A_assigment}
  \tcnfold{
    Let $\Phi $ be a $k$-\textsc{Disjunct-QBF}(\TCNF), whose innermost
    quantifier block is universal, and let $\alpha$ be an
    assignment of $A_\Phi$, and let $\Phi'$ be obtained by an arbitrary number of
    applications of the function $\piesup(-)$ from $\del(\Phi)$. Then,
    $|\SB \phi \in \Phi'[\alpha]\SM \bot \notin \phi \SE|\leq k$.
  }
  Let $\Phi $ be a propagated $k$-\textsc{Disjunct-QBF}(\TCNF), whose
  innermost quantifier block is universal, let $\alpha$ be an
  assignment of $A_\Phi$, and let $i$ be an integer. Then,
  $|\SB \phi \in \Phi^i[\alpha]\SM \bot \notin \phi \SE|\leq
|\Phi_{L(\alpha)}^i|\leq k$, where $L(\alpha)$ is the set of literals
of $A_\Phi$ representing the assignment $\alpha$.
\end{lemma}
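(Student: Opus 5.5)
We argue by induction on $i$, maintaining a structural invariant from which both inequalities follow. The invariant is that $\Phi^i$ is the disjunction of the groups $\Phi^i_L$. Here $L$ ranges over a family $\mathcal{L}^i$ of sets of literals over $A_\Phi$ with three properties. First, every $L\in\mathcal{L}^i$ has the form $L_0\cup\{\ell_{k+1},\dots,\ell_{|L|}\}$, where $L_0$ fixes $a_1,\dots,a_k$ and $\ell_j\in\{a_j,\neg a_j\}$. Second, every disjunct of $\Phi^i_L$ contains the unit clauses of $L$. Third, any two distinct members of $\mathcal{L}^i$ contain complementary literals, so they are pairwise contradictory. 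In addition we require $|\Phi^i_L|\le k$ for all $L\in\mathcal{L}^i$.

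For $i=0$ this is immediate from the definition of $\del(\Phi)$. The groups are $\Phi^\del_{I}$ with $L=L(I)$ fixing all of $a_1,\dots,a_k$, and distinct $I$ give contradictory $L$. Each $\Phi^\del_I$ has $|I|\le k$ disjuncts. Propagation preserves the unit clauses $a_j$ and $\neg a_j$ unless the disjunct becomes $\{\bot\}$. A disjunct equal to $\{\bot\}$ is falsified by every assignment and can be kept as is or dropped.

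For the step $i\to i+1$, $\piesup(\Phi^i,L,x)$ removes $\Phi^i_L$ and inserts the two groups $\Phi'_{L\cup\{a_{|L|+1}\}}$ and $\Phi'_{L\cup\{\neg a_{|L|+1}\}}$. Each new group has exactly $|\Phi^i_L|\le k$ disjuncts, and each disjunct carries the unit clauses of $L$ together with $a_{|L|+1}$ (resp. $\neg a_{|L|+1}$). The clauses of $L$ survive because they are variable-disjoint from $x$ and propagation keeps units. The two new sets contradict each other. Each of them contradicts every other $L'\in\mathcal{L}^i$, since $L$ already did. The prefix form with consecutive indices is preserved because the new literal is on $a_{|L|+1}$. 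One point needs checking: $|L|+1\le 2k^2+k$, so that $a_{|L|+1}$ exists. This is the boundedness argument given via $f(\Phi_L)$, but it is not needed for the counting claim itself.

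Now fix a full assignment $\alpha$ of $A_\Phi$. Because the sets in $\mathcal{L}^i$ are pairwise contradictory, at most one of them is consistent with $\alpha$, that is, contained in $L(\alpha)$. Every disjunct in a group $\Phi^i_{L'}$ with $L'$ inconsistent with $\alpha$ contains a unit clause falsified by $\alpha$, so $\bot$ appears in its restriction. Hence the non-falsified disjuncts of $\Phi^i[\alpha]$ all lie in the unique compatible group, which the statement denotes $\Phi^i_{L(\alpha)}$; if no group is compatible, that group is empty. This gives the first inequality, and the invariant gives $|\Phi^i_{L(\alpha)}|\le k$.

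The main obstacle is bookkeeping rather than mathematics. One must make precise that $\Phi^i_{L(\alpha)}$ means the group whose label is a subset of $L(\alpha)$. One must also verify that propagation inside $\piesup$ either retains the $A_\Phi$ unit clauses or collapses the disjunct to $\{\bot\}$. Both are harmless for the count.
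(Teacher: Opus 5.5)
Your proof is correct and follows the paper's argument: induction on $i$, with the base case read off from the definition of $\del(\Phi)$, and the inductive step resting on the fact that $\piesup$ splits a group into two copies tagged with the contradictory literals $a_{|L|+1}$ and $\neg a_{|L|+1}$. You make explicit what the paper leaves implicit. That is the invariant itself (pairwise contradictory labels, with each disjunct either carrying its label's unit clauses or being $\{\bot\}$), and also the necessary reading of $\Phi^i_{L(\alpha)}$ as the unique group whose label is consistent with $\alpha$.
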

\begin{proof}
  Note that the statement of the lemma clearly holds
  initially, i.e., for $\Phi^0=\del(\Phi)$. Namely, $|\SB \phi \in
  \del(\Phi)[\alpha] \SM \bot \notin \phi\SE|\leq k$ in fact even
  $|\SB \phi\in
  \del(\Phi)[\alpha] \SM \bot \notin \phi\SE|\leq |I_\alpha|$, where $I_\alpha$ is the
  subset of $\Phi$ containing all $\phi_i$ with
  $\alpha(a_i)=1$. Moreover, the statement of the lemma is
  maintained after every application of the function 
  $\piesup(\Phi^i,L,x)$ since the two
  copies of every disjunct in $\Phi_L^i$ are added
  in conjunction with contradictory literals, i.e., $a_{|L|+1}$ and $\lnot a_{|L|+1}$, respectively.
\end{proof}

The following lemma now shows that $\piesup(-)$ only needs to be
applied a finite amount of times and therefore $\pies(\Phi)$ is well defined.
    
\begin{lemma}\label[lemma]{lem:tcnf_pies_limit}
  Let $\Phi$ be a propagated $k$-\textsc{Disjunct-QBF}(\TCNF), whose
  innermost quantifier block is universal.
  Then, the definition of $\pies(\Phi)$ is well-behaved, meaning the 
  variables in $A_\Phi$ are sufficient to allow for the exhaustive 
  application of the function $\piesup(-)$ to the formula
  $\Phi^0$. Moreover, $\pies(\Phi)=\Phi^i$ with $i \leq 2^{\Asize}$
  and $\pies(\Phi)$ has at most $k2^{\Asize}$
  disjuncts.
\end{lemma}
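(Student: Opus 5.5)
We track the process through the family of sub-formulas $\Phi_L^i$ and bound it by a potential argument. We view the iteration $\Phi^0=\del(\Phi),\Phi^1,\dots$ as growing a binary tree. The root corresponds to the $2^k$ sets $L$ over $\{a_1,\dots,a_k\}$. Each application of $\piesup(\Phi^i,L,x)$ replaces the leaf $\Phi_L^i$ by two children $\Phi'_{L\cup\{a_{|L|+1}\}}$ and $\Phi'_{L\cup\{\neg a_{|L|+1}\}}$. The claim then follows from two facts: every root-to-leaf path has length at most $2k^2$, and every leaf $\Phi_L^i$ has at most $k$ disjuncts.

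The second fact is immediate from \Cref{lem:pies_A_assigment}: $|\Phi_L^i|\le k$. For the first, we use the potential $f(\Phi_L)=\sum_{\phi\in\Phi_L}\min(2k,u(\phi))$, where $u(\phi)$ counts unit clauses over $X_q$. This potential is at most $2k\cdot|\Phi_L|\le 2k^2$. We show that when $x$ is reducible for $\Phi_L^i$, both children have strictly larger potential.

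Take $\phi,\phi'$ as in (P1),(P2). Then $\phi$ contains a clause $(x\lor z)$ or $(x\lor\neg z)$ with $z\in X_q$, which becomes a unit clause over $X_q$ in $\phi[x=0]$. We must check it is not already a unit in $\phi$, since a unit $(z)$ would subsume $(x\lor z)$, contradicting property (3) of propagation. We must also check that propagation $\cnfprop{\phi[x=0]}$ does not lose unit clauses over $X_q$. Resolution only adds clauses; subsumption removes only clauses that are supersets, so unit clauses persist unless the formula becomes $\{\bot\}$. In that case the disjunct is falsified and can be dropped, or counted with potential $2k$ by convention. Existing units of $\phi$ over $X_q$ do not contain $x$ and survive. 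Since $u(\phi)<2k$, this gives $\min(2k,u(\phi[x=0]))>\min(2k,u(\phi))$, and symmetrically for $\phi'$ with $x=1$. All other disjuncts are non-decreasing by the same persistence argument. Hence each split strictly increases $f$ in both children, and each path from a root $L$ (with $|L|=k$) has at most $2k^2$ splits.

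It follows that $|L|\le k+2k^2=\Asize$ always, so the fresh variable $a_{|L|+1}$ exists whenever a split is applied; this shows the definition is well-behaved. The tree has at most $2^{\Asize}$ leaves, each with at most $k$ disjuncts, giving at most $k2^{\Asize}$ disjuncts. Each step increases the number of leaves by one, so the number of steps is below $2^{\Asize}$. The main obstacle is the monotonicity check under $\cnfprop{\cdot}$, in particular the degenerate case where assigning $x$ makes a disjunct contradictory. We handle it by treating $\bot$-disjuncts as having maximal potential, or by discarding them, which cannot decrease the remaining sum.
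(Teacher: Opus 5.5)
Your proof is correct and is essentially the paper's argument. It uses the same potential $\sum_{\phi\in\Phi_L}\min(2k,u(\phi))$, with strict increase in both children via the witnesses from (P1)--(P2) and subsumption-freeness. It uses the same cap $2k\cdot|\Phi_L|\le 2k^2$ from \Cref{lem:pies_A_assigment}, and the same count of at most $2^{2k^2+k}$ leaves with at most $k$ disjuncts each.

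Your handling of contradictory disjuncts goes beyond the paper, which asserts $f(\phi_L)\le\min\{f(\phi_L^\top),f(\phi_L^\bot)\}$ without comment. Keep the ``potential $2k$'' convention and drop the ``discard'' alternative: discarding removes a nonnegative term and can lower the sum below the parent's. In fact the case never arises under (A2). A disjunct of $\Phi_L^i$ becomes contradictory under $x=0$ (resp.\ $x=1$) only if the shared non-$X_q$ part contains the unit $(x)$ (resp.\ $(\neg x)$). That unit would subsume the witness clause of $\phi$ (resp.\ $\phi'$), contradicting propagation.
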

\begin{proof}
  By definition, every $\phi \in \Phi^0$ contains no clauses on any
  variable in $\SB a_i\SM k<i\leq \Asize\SE$. The main idea behind
  showing the first statement of the lemma, is to show that the number
  of unit clauses of variables from the last quantifier block
  increases with every application of $\piesup(-)$. In particular, we
  will show an increase of the following function $f$.    
  For a \CNF formula $\phi^*$, we let $f(\phi^*)$ be equal to the minimum of
  $2k$ and the number of unit clauses in $C(\phi^*,X_q)$.
  Moreover, we set $f(\Phi^*) = \sum_{\phi^* \in \Phi^*} f(\phi^*)$ for a set
  $\Phi^*$ of \CNF formulas.

  Let $x$ be a reducible variable for $\Phi^i_L$ that is witnessed
  by the disjuncts $\phi,\phi' \in \Phi_L^i$ satisfying (P1) and (P2)
  and assume that $\Phi^{i+1}=\piesup(\Phi^i,L,x)$.
  Recall that for every $\phi_L \in \Phi^i_L$, the formula
  $\piesup(\Phi^i,L,x)$ contains the two disjuncts
  $\phi_L^\top=(\cnfprop{\phi_L[x= 1]}  \land x \land
  a_{|L|+1})$ and $\phi_L^\bot=(\cnfprop{\phi_L[x= 0]}  \land
  \lnot x \land
  \lnot a_{|L|+1})$. Then,
  $f(\phi_L)\leq \min\{f(\phi_L^\top),f(\phi_L^\bot)\}$. Moreover,
  because of (P2) $C(\phi,A_\Phi)\cap C(\phi,X_q)$ contains a clause,
  say $l \lor x$,
  where $x$ occurs positively. Because $\phi$ is propagated, it does not contain the unit clause $l$; otherwise
  $l \lor x$ would no longer be contained in $\phi$. Therefore,
  $\cnfprop{\phi[x= 0]}  \land \lnot x \land
  \lnot a_{|L|+1}$ contains at least one more unit clause from $X_q$ than
  $\phi$, which because of (P2) shows that
  $f(\cnfprop{\phi[x= 0]}  \land \lnot x \land
  \lnot a_{|L|+1})>f(\phi)$. A similar argument shows that also
  $f(\cnfprop{\phi'[x= 1]}  \land x \land
  a_{|L|+1})>f(\phi')$. Therefore,
  $f(\Phi^{i+1}_{L\cup\{a_{|L|+1}\}})>f(\Phi^i_L)$ and $f(\Phi^{i+1}_{L\cup\{\lnot
    a_{|L|+1}\}})>f(\Phi^i_L)$. Since,
  $f(\phi^*)\leq 2k$ for any \CNF formula $\phi^*$ and because
  $|\Phi^\ell_{L^*}|\leq k$ for any $\ell$  and any set
  $L^*$ of literals using variables in $A_\Phi$ (because of \Cref{lem:pies_A_assigment}), we obtain that
  $f(\Phi^\ell_{L^*})\leq 2k^2$. Therefore, at most $2k^2$ literals from
  $A_\Phi$ can be added to any formula $\phi \in \Phi^i$, which
  concludes the proof of the first part of the lemma. Since
  $\Phi^x=\pies(\Phi)$ consists of at most $2^{\Asize}$ formulas
  $\Phi_L^x$ (one for every assignment of the variables in $A_\Phi$) and
  each $\Phi_L^x$ has at most $k$ disjuncts (using
  \Cref{lem:pies_A_assigment}), it follows that $\pies(\Phi)$ has at
  most $k2^{\Asize}$ disjuncts. Finally, because $\Phi^0$ consists of
  $2^k$ formulas $\Phi^0_L$ and every application of $\piesup(-)$ adds
  one such formula, it follows that at most $x=2^{\Asize}-2^k\leq
  2^{\Asize}$ applications of $\piesup(-)$ are necessary to obtain $\pies(\Phi)$.
  \tcnfold{Let $\Phi''$ be any formula obtained from $\Phi^0$ after some
  applications of the function $\piesup(-)$ and let $x$ be a
  reducible variable for some set $L$ of literals using
  variables in $A_\Phi$ and $\Phi''$ that is witnessed by the formulas
  $\phi,\phi' \in \Phi''$ satisfying (P1)--(P3).
  Recall that for every $\phi_L \in \Phi''_L$, the formula
  $\piesup(L,x,\Phi'')$ contains the two disjuncts
  $\phi_L^\top=(\cnfprop{\phi_L[x= 1]}  \land x \land
  a_{t})$ and $\phi_L^\bot=(\cnfprop{\phi_L[x= 0]}  \land
  \lnot x \land
  \lnot a_{t})$, where $t \in [\Asize]$ is the smallest index such
  that $L$ does not contain a literal over $a_t$. Then,
  $f(\phi_L)\leq \min\{f(\phi_L^\top),f(\phi_L^\bot)\}$. Moreover,
  because of (P3) $C(\phi,A_\Phi)\cap C(\phi,X_q)$ contains a clause,
  say $l \lor x$,
  where $x$ occurs positively. Because $\phi$ is propagated, it does not contain the unit clause $l$; otherwise
  $l \lor x$ would no longer be contained in $\phi$. Therefore,
  $\cnfprop{\phi[x= 0]}  \land \lnot x \land
  \lnot a_{t}$ contains at least one more unit clause from $X_q$ than
  $\phi$, which because of (P2) shows that
  $f(\cnfprop{\phi[x= 0]}  \land \lnot x \land
  \lnot a_{t})>f(\phi)$. A similar argument shows that also
  $f(\cnfprop{\phi'[x= 1]}  \land x \land
  a_{t})>f(\phi')$. Therefore,
  $f(\Phi'''_{L\cup\{a_t\}})>f(\Phi''_L)$ and $f(\Phi'''_{L\cup\{\lnot
    a_t\}})>f(\Phi''_L)$, where $\Phi'''=\piesup(L,x,\Phi'')$. Since,
  $f(\phi^*)\leq 2k$ for any \CNF formula $\phi^*$ and because
  of \Cref{lem:pies_A_assigment} also
  $|\Phi^*_{L^*}|\leq k$ for any $\Phi^*$ obtained from $\Phi'$ after
  any number of applications of the function $\piesup(-)$ and any set
  $L^*$ of literals using variables in $A_\Phi$, we obtain that
  $f(\Phi^*_{L^*})\leq 2k^2$. Therefore, at most $2k^2$ literals from
  $A_\Phi$ can be added to any formula $\phi \in \Phi'$, which
  concludes the proof of the lemma.}
\end{proof}

The next lemma now shows that $\pies(\Phi)$ and $\Phi$ are equivalent.
    
\begin{lemma}\label[lemma]{lem:phi_pies}
  Let $\Phi $ be a propagated $k$-\textsc{Disjunct-QBF}(\TCNF), whose
  innermost quantifier block is universal.
  Then, $\Phi \Leftrightarrow \pies(\Phi)$.
\end{lemma}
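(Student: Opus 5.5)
We argue by induction on the number of applications of $\piesup(-)$. \Cref{lem:bqa-dog-eq} gives $\Phi \Leftrightarrow \del(\Phi) = \Phi^0$. \Cref{lem:tcnf_pies_limit} gives $\pies(\Phi)=\Phi^j$ for some finite $j$. So it suffices to show $\Phi^i \Leftrightarrow \Phi^{i+1}$ whenever $\Phi^{i+1}=\piesup(\Phi^i,L,x)$ for a reducible $x \in X_{q-1}$. Throughout, every $\Phi^i$ has prefix $Q_1X_1\dots\exists(X_{q-1}\cup A_\Phi)\forall X_q$. Since variables may be reordered freely inside an existential block, we may assume that $x$ is quantified before every variable of $A_\Phi$, and that $a_{|L|+1}$ is quantified after $x$.

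\emph{Reverse direction} ($\Phi^{i+1}\Rightarrow\Phi^i$). Take an existential winning strategy $T'$ for $\Phi^{i+1}$, keep the same strategy, and consider any leaf assignment $\tau$. Some disjunct of $\Phi^{i+1}$ is satisfied by $\tau$. If it lies in $\Phi^i\setminus\Phi^i_L$, we are done. Otherwise it has the form $\cnfprop{\phi[x=b]}\land(x=b)\land a_{|L|+1}^{b}$ for some $\phi\in\Phi^i_L$. Since $\tau(x)=b$ and $\cnfprop{\phi[x=b]}$ is equivalent to $\phi[x=b]$, the assignment $\tau$ satisfies $\phi$. Hence $\tau$ satisfies a disjunct of $\Phi^i$, so $T'$ is also winning for $\Phi^i$.

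\emph{Forward direction} ($\Phi^i\Rightarrow\Phi^{i+1}$). Take an existential winning strategy $T$ for $\Phi^i$ and modify it at the level of $a_{|L|+1}$. This variable occurs in no clause of $\Phi^i$: the literals over $A_\Phi$ used so far, in every branch, are over $a_1,\dots,a_{|L'|}$ with $|L'|\le|L|$, and $|L|+1$ is the next unused index along this branch. The new strategy sets $a_{|L|+1}$ equal to the value already chosen for $x$ on that branch. Consider a leaf assignment $\tau$ and a disjunct $\psi$ of $\Phi^i$ satisfied by $\tau$.
\begin{itemize}
\item If $\psi\notin\Phi^i_L$, then $\psi$ is still a disjunct of $\Phi^{i+1}$. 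It does not mention $a_{|L|+1}$, unless it belongs to a different $\Phi_{L'}$ branch that already uses that index; in that case the value of $a_{|L|+1}$ was fixed there and we must not overwrite it. This is resolved below.
\item If $\psi\in\Phi^i_L$ and $\tau(x)=b$, then $\tau$ satisfies $\phi[x=b]$, hence $\cnfprop{\phi[x=b]}$, together with the unit $x=b$ and the matching literal of $a_{|L|+1}$. So $\tau$ satisfies the corresponding disjunct of $\Phi^{i+1}$.
\end{itemize}

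The main obstacle is the bookkeeping of the $A_\Phi$ variables: we must make sure that reusing an index for $a_{|L|+1}$ does not collide with other $\Phi_{L'}$, where that variable may already occur. The plan is to exploit \Cref{lem:pies_A_assigment}, namely that the families $\Phi_{L'}$ are indexed by pairwise contradictory literal sets. Hence any assignment of $A_\Phi$ is consistent with at most one $L'$ in the current family. If the strategy's assignment to $A_\Phi$ is consistent with some $L''\ne L$, every disjunct of $\Phi^i_L$ is already falsified, so it cannot be the satisfied $\psi$; in that case we leave $a_{|L|+1}$ as $T$ assigns it. If the assignment is consistent with $L$, then by construction no other family in $\Phi^i$ extends $L$, so $a_{|L|+1}$ appears in no clause outside $\Phi_L^i$ that is not already falsified, and we set it to the value of $x$.

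To make this precise, we define the modified strategy through the branch's $L$-consistency. Its values on $A_\Phi$ may depend on $x$, which is legal because $x$ precedes $A_\Phi$ in the prefix. With this definition both directions go through, which completes the induction and gives $\Phi \Leftrightarrow \pies(\Phi)$.
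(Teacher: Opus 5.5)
Your proof is correct and follows the paper's argument. You reduce to a single step $\Phi^i \Leftrightarrow \piesup(\Phi^i,L,x)$ via \Cref{lem:bqa-dog-eq}, read a winning strategy for the new formula directly as one for $\Phi^i$, and in the other direction let $a_{|L|+1}$ copy the value of $x$. Your extra care in overriding $a_{|L|+1}$ only on plays consistent with $L$ is genuinely needed. Other families $\Phi^i_{L'}$ may already use the index $|L|+1$, so copying $x$ into $a_{|L|+1}$ on every play could falsify a disjunct that was satisfied; the paper's proof glosses over this, and your use of the pairwise contradictoriness of the families on $a_1,\dots,a_{|L|}$ is the right fix. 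You should delete your earlier remark that $a_{|L|+1}$ ``occurs in no clause of $\Phi^i$'': it is false in general, and your final argument does not use it.
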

    
\begin{proof}
  The equivalence of $\Phi$ and $\del(\Phi)$ follows
  from \Cref{lem:bqa-dog-eq}. Therefore, it suffices to show that
  $\piesup(\Phi^i,L,x)$ is equivalent with $\Phi^i$ whenever $x$ is reducible with respect
  to $L$ and $\Phi^i$.
  
  Towards showing the forward direction of the equivalence,
  let $T$ be an existential winning strategy for
  $\piesup(\Phi^i,L,x)$ and let $l$ be any leaf of $T$. Then, $\tau_l^T$
  satisfies some disjunct of $\piesup(\Phi^i,L,x)$ and therefore also
  some disjunct of $\Phi^i$, as required.
  
  Towards showing the reverse direction of the equivalence,
  let $T$ be an existential winning strategy for $\Phi^i$. Informally, we
  obtain an existential winning strategy $T'$ for $\piesup(\Phi^i,L,x)$ by
  assigning $a_{|L|+1}$ the same value as $x$, i.e., whenever $T$ assigns
  $x$ to $0$ ($1$), then $T'$ also assigns $a_{|L|+1}$ to $0$ ($1$); note
  that this can be achieved since all variables in $A_\Phi\cup
  X_{q-1}$ are within the same quantifier block of
  $\piesup(\Phi^i,L,x)$. To see that $T'$ is a winning strategy for
  $\piesup(\Phi^i,L,x)$ consider any leaf $l$ of $T$ and let $\phi
  \in \Phi^i$ be the disjunct of $\Phi^i$ satisfied by $\tau_l^T$. If
  $\phi \in \piesup(\Phi^i,L,x)$, then $\phi$ is also satisfied at
  the leaf $l'$ corresponding to $l$ in $T'$. Otherwise, $\phi \in
  \Phi_L^i$, but then $l'$ either satisfies the disjunct $(\phi[x= 1]  \land x \land
  a_{t})$ (if $\tau_{l'}^{T'}(x)=\tau_{l'}^{T'}(a_{|L|+1})=1$) or the disjunct
  $(\phi[x= 0]  \land \lnot x \land
  \lnot a_{t})$ (if $\tau_{l'}^{T'}(x)=\tau_{l'}^{T'}(a_{|L|+1})=0$).
  
  From \Cref{lem:bqa-dog-prop},  $\del(\Phi)$ is in 
  propagated form. Furthermore, if $\Phi^i$ is propagated, then 
  $\piesup(\Phi^i,L, x)$ is also propagated, as the $\cnfprop{-}$ 
  function is applied whenever a new disjunct is added. Therefore, 
  $\pies(\Phi)$ is propagated.
\end{proof}
      
The next lemma demonstrates that 
$\pies(\Phi)$ can be efficiently computed.

\begin{lemma}\label[lemma]{lem:computecat}
  Let $\Phi$ be a propagated $k$-\textsc{Disjunct-QBF}(\TCNF), whose
  innermost quantifier block is universal. 
  Then, $\pies(\Phi)$ can be computed in time  
  $\bigoh(2^{2k^2 + k}k\propT{|V(\Phi)|})$.
\end{lemma}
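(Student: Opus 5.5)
The plan is to account for the running time in three parts: computing $\Phi^0=\del(\Phi)$, the number of $\piesup(-)$ applications, and the cost of one application together with the search for a reducible variable. By \Cref{lem:bqa-dog-prop}, $\del(\Phi)$ can be computed in time $\bigoh(k2^k\propT{|V(\Phi)|})$, which is dominated by the claimed bound. By \Cref{lem:tcnf_pies_limit}, $\pies(\Phi)=\Phi^j$ for some $j\le 2^{\Asize}$, and every intermediate $\Phi^i$ has at most $k2^{\Asize}$ disjuncts. By \Cref{lem:pies_A_assigment}, every subformula $\Phi^i_L$ has at most $k$ disjuncts.

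To make the total bound come out right, I will not charge a full rescan of $\Phi^i$ per step. Instead I will organise the computation as a tree (or work queue) over the sets $L$. Each node stores $\Phi^i_L$, which consists of at most $k$ propagated $\TCNF$ disjuncts on $\bigoh(|V(\Phi)|)$ variables. Processing a node has three steps.
\begin{enumerate}
\item Compute $u(\phi)$ for each disjunct $\phi\in\Phi^i_L$.
\item For each $x\in X_{q-1}$, record whether $x$ occurs positively, resp.\ negatively, in a clause with an $X_q$-variable in some disjunct with fewer than $2k$ such unit clauses. This tests (P1) and (P2) in time linear in the size of the $k$ disjuncts, i.e.\ $\bigoh(k|V(\Phi)|^2)$ by \Cref{lem:prop_comp}.
\item If a reducible $x$ is found, create the two children $L\cup\{a_{|L|+1}\}$ and $L\cup\{\lnot a_{|L|+1}\}$. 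This requires computing $\cnfprop{\phi[x=b]}\land(x\text{ or }\lnot x)\land(\text{literal of }a_{|L|+1})$ for each of the at most $k$ disjuncts and both values $b$, i.e.\ $2k$ propagations costing $\bigoh(\propT{|V(\Phi)|})$ each by \Cref{lem:prop_comp}.
\end{enumerate}
Adding the unit clauses on fresh variables $x$ and $a_{|L|+1}$ keeps the formula propagated, or needs only a trivial further step.

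Independence of the nodes needs one check. Operating on one $\Phi^i_L$ does not affect the other blocks $\Phi^i_{L'}$, because $\piesup$ only replaces $\Phi^i_L$. Hence the final result $\pies(\Phi)$ is exactly the set of leaves of this tree, regardless of processing order. The process also terminates correctly: a leaf is reached precisely when no variable of $X_{q-1}$ is reducible, so (A3') holds there.

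For the total count, the tree has at most $2^k$ roots and every internal node has two children. By \Cref{lem:tcnf_pies_limit} the depth below each root is at most $2k^2$, so the total number of nodes is at most $2^k\cdot 2^{2k^2+1}=\bigoh(2^{\Asize})$. Each node costs $\bigoh(k\propT{|V(\Phi)|})$, since $\omega\ge2$ absorbs the $\bigoh(k|V(\Phi)|^2)$ scan. Multiplying gives $\bigoh(2^{2k^2+k}k\propT{|V(\Phi)|})$. Adding the cost of $\del(\Phi)$ leaves this bound unchanged.

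The main obstacle is the bookkeeping. A naive implementation would rescan the whole of $\Phi^i$, with up to $k2^{\Asize}$ disjuncts, after each of up to $2^{\Asize}$ steps, which gives a squared exponential. The localized, per-$L$ processing described above avoids this. I therefore expect the key justification to be the observation that reducibility is a property of $\Phi^i_L$ alone and that $\piesup$ touches only that block.
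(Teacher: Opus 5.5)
Your proposal is correct and follows essentially the same route as the paper. The paper also bounds $\del(\Phi)$ via \Cref{lem:bqa-dog-prop} and charges $\bigoh(k\propT{|V(\Phi)|})$ per application of $\piesup$ over at most $2^{\Asize}$ applications; it avoids rescanning all of $\Phi^i$ with a stack of pairs $(L,x)$, checking only newly created blocks for reducible variables at cost $\bigoh(k|V(\Phi)|^2)$ each, which is exactly your per-$L$ work queue (one small wording point: $x$ is not fresh, but it no longer occurs in $\cnfprop{\phi[x=b]}$, so adding its unit clause does preserve propagation as you claim).
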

\begin{proof}
  We begin by noting that the function $\del(\Phi)$ can be computed in 
  time $\bigoh(k2^k\propT{|V(\Phi)|})$ due to \Cref{lem:bqa-dog-prop}.
  To compute $\pies(\Phi)$ from $\del(\Phi)$, we apply the function 
  $\piesup(-)$ at most $2^{|A_\Phi|}=2^{\Asize}$ times,
  because of \Cref{lem:tcnf_pies_limit}.
  
  Once a 
  reducible variable $x$ is identified with respect to some set $L$ of 
  literals over $A_\Phi$ and the current formula $\Phi^c$, we compute the 
  difference between $\piesup(\Phi^c,L, x)$ and $\Phi^c$ in 
  $\bigoh(k\propT{|V(\Phi)|})$ time using \Cref{lem:prop_comp}. Thus, updating 
  $\Phi^c$ to $\piesup(\Phi^c,L, x)$ takes
  $\bigoh(k\propT{|V(\Phi)|})$ time.
  
  For a given set $L$ of literals over $A_\Phi$, we can check in 
  $\bigoh(\sizeOf{\Phi^c_L}) = \bigoh(k|V(\Phi)|^2)$ whether there exists 
  a reducible variable $x \in V(\Phi)$. If $x$ is reducible in 
  $\Phi_L^c$, it remains reducible for any $\Phi'$ such that 
  $\Phi_L^c \subseteq \Phi'$. Therefore, after calculating $\del(\Phi)$, 
  we can initialize a stack $S$. For each set of literals $L$ over 
  $A_\Phi$, we check whether there exists a reducible variable 
  $x \in V(\Phi)$ for $L$ and $\del(\Phi)$, adding the pair $(L, x)$ to 
  $S$ if so.
  
  If $S$ is not empty, we pop a pair $(L, x)$ from $S$ and apply 
  $\piesup(\Phi^c,L, x)$ to the current $\Phi^c$. Afterward, we check 
  whether the new disjuncts introduce additional reducible variables, 
  adding any such pairs to $S$. This approach requires 
  $\bigoh(2^{2k^2 + k}k|V(\Phi)|^2)$ time to identify reducible variables 
  throughout the computation.
  
  Combining these steps, the total runtime to compute $\pies(\Phi)$ is 
  $\bigoh(2^{2k^2 + k}k\propT{|V(\Phi)|})$.
\end{proof}

Our next aim is to show that we can assume that every disjunct of 
$\pies(\Phi)$ contains fewer than $2k$ unit clauses involving variables 
from the last (universal) quantifier block.
This will allow us, to use property (A3') instead of (A3), which was
crucial to bound the number of iterations used to obtain $\pies(\Phi)$.  
\begin{lemma}\label[lemma]{lem:tcnf_pies_redundant}
  Let $\Phi$ be a propagated $k$-\textsc{Disjunct-QBF}(\TCNF) 
  whose innermost quantifier block $X_q$ is universal and
  let $\Phi'$ be the subset of $\pies(\Phi)$ that contains
  all disjuncts with at least $2k$ unit clauses using variables in $X_q$.
  Then, $\Phi'$ is $\pies(\Phi)$-redundant.
\end{lemma}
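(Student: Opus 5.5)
Deleting disjuncts can only weaken the existential player, so $\pies(\Phi)\setminus\Phi'$ implies $\pies(\Phi)$. The plan is therefore to prove the converse: whenever the existential player wins $\pies(\Phi)$, the same strategy already wins $\pies(\Phi)\setminus\Phi'$. Fix an existential winning strategy $T$ for $\pies(\Phi)$. Consider any node $t$ of $T$ at which all variables outside $X_q$ have been assigned, i.e.\ the node where the final universal block $X_q$ begins. Let $\alpha$ be the restriction of $\tau_t^T$ to $A_\Phi$.

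The first step is to bound the surviving disjuncts. By \Cref{lem:pies_A_assigment}, at most $k$ disjuncts of $\pies(\Phi)[\tau_t^T]$ are not falsified. Call this set $D$. All other disjuncts are falsified no matter how $X_q$ is played.

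The second step is to show that $D\cap \Phi'$ is useless to the existential player. Suppose, for contradiction, that on every extension of $\tau_t^T$ to $X_q$ some disjunct of $D\setminus\Phi'$ holds fails, i.e.\ the universal player can falsify all of $D\setminus\Phi'$. We then build an assignment to $X_q$ that falsifies every disjunct of $D$, contradicting that $T$ is winning. The construction is greedy over the disjuncts of $D$.
\begin{itemize}
\item For each $\phi\in D\setminus\Phi'$ we need some set of $X_q$-literals that falsifies it.
\item Each $\phi\in D\cap\Phi'$ contains at least $2k$ unit clauses over $X_q$. These units stay present after applying $\tau_t^T$, since they involve only $X_q$ variables and $\phi$ is not yet falsified.
\item Each such $\phi$ is falsified by flipping a single one of its unit clauses. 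One committed variable per disjunct suffices.
\end{itemize}

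The key subclaim is that every $\phi\in D\setminus\Phi'$ can be falsified by setting at most two $X_q$-variables. Here we use that $\phi[\tau_t^T]$ is a \TCNF formula whose remaining variables are all universal in $X_q$. The formula $\phi[\tau_t^T]$ is propagated (A1), or already contains $\bot$. If it contains any clause at all, that clause has at most two literals. Falsifying one such clause requires setting at most two variables, and falsifying the clause falsifies $\phi$. If $\phi[\tau_t^T]$ is empty, i.e.\ true, then the universal player cannot falsify it. In that case the existential player wins on $\phi\notin\Phi'$, and we are done directly.

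So we may assume each disjunct of $D\setminus\Phi'$ is falsifiable with at most two variables. Together, these disjuncts fix at most $2(k-1)$ variables, since $|D|\le k$ and we may assume at least one disjunct lies in $\Phi'$. Each $\phi\in D\cap\Phi'$ has at least $2k$ distinct unit variables. Hence at least one of its units uses a variable not yet fixed; there is slack because at most $2k-1$ variables are used in total. We set that variable to violate the unit, and assign the remaining variables arbitrarily. This falsifies all of $D$, contradicting that $T$ is winning.

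Hence, at every such node, the universal block is won on $D\setminus\Phi'$. The existential strategy $T$ therefore wins $\pies(\Phi)\setminus\Phi'$, which gives the claimed redundancy of $\Phi'$.

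The main obstacle is the counting in the greedy step. Variables fixed to falsify one disjunct may also be the unit variables of a $\Phi'$-disjunct, so we must check that $2k$ units leave room after at most $2k-1$ variables are committed. A secondary point is ordering: the $\Phi'$-disjuncts must be handled after the others, and among themselves one at a time, adding at most one variable each. The count is at most $2|D\setminus\Phi'|+|D\cap\Phi'|-1\le 2k-1<2k$, so it closes. One should also confirm that the unit clauses are literally present in $\phi[\tau_t^T]$; this holds because they involve only $X_q$ variables, which $\tau_t^T$ leaves unassigned.
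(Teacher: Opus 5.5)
Your route is the paper's argument seen from the other side. The paper fixes a universal winning strategy for $\pies(\Phi)\setminus\Phi'$ and, in each play, re-sets only $X_q$-variables. You instead keep the existential strategy and argue by contradiction at the start of $X_q$. The ingredients are the same: at most $k$ surviving disjuncts by \Cref{lem:pies_A_assigment}, one falsified clause of size at most two per survivor outside $\Phi'$, and $2k$ unit clauses leaving room for pairwise distinct $v_\phi$. So is the count $2|D\setminus\Phi'|+|D\cap\Phi'|-1<2k$. However, the step ``together, these disjuncts fix at most $2(k-1)$ variables'' is not justified as written.

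\textbf{The gap: consistency across disjuncts.} Your key subclaim falsifies each $\phi\in D\setminus\Phi'$ \emph{separately}, through an arbitrary clause of $\phi[\tau_t^T]$, and you then add up the fixed variables. Nothing guarantees that these partial assignments agree.
\begin{itemize}
\item If $\phi_1[\tau_t^T]$ contains $(z)$ and $\phi_2[\tau_t^T]$ contains $(\lnot z)$, your choices may demand opposite values of $z$.
\item If $\phi_1[\tau_t^T]=(z)$ and $\phi_2[\tau_t^T]=(\lnot z)$, no extension of $\tau_t^T$ falsifies both, so the assignment you claim to build need not exist.
\end{itemize}
Joint falsifiability is exactly what your contradiction hypothesis supplies, but you never use it.

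\textbf{The fix.} Take the extension $\beta$ of $\tau_t^T$ to $X_q$ that falsifies every disjunct of $D\setminus\Phi'$. For each such $\phi$, pick a clause $c_\phi$ falsified by $\tau_t^T\cup\beta$. Let $W$ be the set of $X_q$-variables occurring in these clauses, so $|W|\le 2|D\setminus\Phi'|$, and keep the values of $\beta$ on $W$. These values come from one assignment, hence are consistent, and every $c_\phi$ stays falsified however the variables outside $W$ are set. Only then does your greedy choice of distinct $v_\phi\notin W$ for $\phi\in D\cap\Phi'$ go through. This is how the paper secures consistency: its clauses $c_\phi$ are all falsified by a single play $\alpha$, and only variables outside $\bigcup_\phi V(c_\phi)$ are modified.

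You should also say why the $2k$ unit clauses of $\phi\in\Phi'$ lie on $2k$ distinct variables. A propagated disjunct containing both $(z)$ and $(\lnot z)$ would be $\{\bot\}$, which has no such unit clauses.
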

    \begin{proof}
      It suffices to show that $(\lnot \pies(\Phi)) \Leftrightarrow (\lnot (\pies(\Phi)\setminus \Phi'))$. Since
      the forward direction is trivial, it only remains to show that
      if there is a universal winning strategy say $T$ for $\pies(\Phi)\setminus
      \Phi'$, then there is a universal winning strategy for $\pies(\Phi)$.
      Consider a play $\alpha \in \ass(T)$.
      Because of \Cref{lem:pies_A_assigment}, all but at most $k$
      disjuncts of $\pies(\Phi)$ are satisfied by the assignment $\alpha_{A_\Phi}$
      obtained by restricting $\alpha$ to the variables in
      $A_\Phi$. Let $\Phi''$ be the set of all those disjuncts that are
      not falsified by $\alpha_{A_\Phi}$.
      Because $T$ is a universal
      winning strategy for $\Phi''\setminus \Phi'$, there is a clause, say
      $c_\phi$ that is falsified by $\alpha$ for every $\phi \in
      \Phi''\setminus \Phi'$. Then, because $|c_\phi|\leq 2$ for every $\phi
      \in \Phi''\setminus \Phi'$, it holds that the number of variables
      in $X_q$ occurring in unit clauses of $\phi$ is at least $|\Phi'|$ for every
      $\phi \in\Phi'$. Therefore, we can choose one variable,
      say $v_\phi$, from $X_q$ for every $\phi \in \Phi'$ such that:
      $\phi$ contains a unit clause on $v_\phi$, $|\SB v_\phi \SM \phi \in
      \Phi'\SE|=|\Phi'|$, and $\SB v_\phi \SM \phi \in
      \Phi'\SE\cap \SB V(c_\phi)\SM \phi \in \Phi''\setminus \Phi'\SE=\emptyset$.
      Let $\alpha'$ be the assignment obtained from $\alpha$ by changing
      only the assignment of the variables in $\SB v_\phi\SM \phi \in
      \Phi'\SE$ such that the unit clause on $v_\phi$ is falsified for
      every $\phi \in \Phi'$. Then, $\alpha'$ is winning for the universal
      player on $\pies(\Phi)$. By doing the same modification for every play in
      $\ass(T)$, we obtain a winning strategy for the universal player on $\pies(\Phi)$.
    \end{proof}

The following is now the main technical lemma of this subsection,
showing that the variables of the last existential quantifier block of
the given formula $\Phi$ are redundant in $\pies(\Phi)$ and can
therefore be replaced by the boundedly many variables in $A_\Phi$.

\begin{lemma}\label[lemma]{lem:bqa-var-redundant}
  Let $\Phi$ be a propagated $k$-\textsc{Disjunct-QBF}(\TCNF)
  with prefix
  $\pref = Q_1 X_1 \dots Q_q X_q$ and $Q_q=\forall$.
  Let $\Phi'$ be the subset of $\pies(\Phi)$ containing all
  disjuncts with at least $2k$ unit clauses over variables in $X_q$.
  Then, the set $X_{q-1}$ is $\pies(\Phi)\setminus \Phi'$-redundant.
\end{lemma}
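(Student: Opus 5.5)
We write $\Psi=\pies(\Phi)\setminus\Phi'$ and $\Psi^-$ for the formula obtained from $\Psi$ by deleting every clause that contains a variable of $X_{q-1}$ and removing $X_{q-1}$ from the prefix. The goal is $\Psi\Leftrightarrow\Psi^-$, and I would prove the two directions separately.

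\textbf{Easy direction ($\Psi\Rightarrow\Psi^-$).} Take an existential winning strategy for $\Psi$ and simply forget the assignments to $X_{q-1}$. Every leaf still satisfies some disjunct of $\Psi$, and hence satisfies the corresponding disjunct of $\Psi^-$, which is a subset of it.

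\textbf{Hard direction, the plan.} Take an existential winning strategy $T^-$ for $\Psi^-$. I would extend it to $\Psi$ by choosing values for $X_{q-1}$ at the moment the existential player plays its last block. This block is now $X_{q-1}\cup A_\Phi$, and I order $A_\Phi$ \emph{before} $X_{q-1}$. Once the values of $A_\Phi$ are fixed by $T^-$, they fix a single set $L$ of literals. Then only the disjuncts of $\Psi_L=\Phi_L\setminus\Phi'$ remain unfalsified, by \Cref{lem:pies_A_assigment}. By (A3'), for each $x\in X_{q-1}$ one polarity, say $\neg x$ (up to renaming), never occurs in a clause of $\Psi_L$ that contains a variable of $X_q$. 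Also, by (A2), all disjuncts of $\Psi_L$ share the same set $S_L$ of clauses not involving $X_q$.

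\textbf{Assigning $X_{q-1}$.} Process the variables of $X_{q-1}$ in an order compatible with the partial order used in the sketch; since they lie in one block they can be reordered freely. For $x$, let $L'_x$ be the set of literals $l$, on variables other than $X_q$, such that $(l\lor x)\in S_L$. Set $x=0$ if every literal of $L'_x$ is already true under the current assignment, and $x=1$ otherwise. The remaining point is what to do with literals of $L'_x$ on variables of $X_{q-1}$ that are assigned later. To handle this, I would instead define the assignment of $X_{q-1}$ as the \emph{minimal} one (fewest ones) satisfying $S_L[\tau]$, where $\tau$ is the current assignment of the earlier variables. This assignment exists whenever $S_L[\tau]$ is satisfiable: the clauses of $S_L$ mentioning $X_{q-1}$ are 2-clauses, so after reduction they form a 2-CNF, and its minimal model is found by unit propagation of the forced positive literals. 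Satisfiability of $S_L[\tau]$ follows from the fact that, at some leaf of $T^-$ below, a disjunct of $\Psi_L^-$ is satisfied, together with propagation (A1). Indeed, any clause of $S_L$ that became empty would have a resolvent free of $X_{q-1}$ which lies in that disjunct and is falsified.

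\textbf{Main obstacle: leaf verification.} Fix a leaf and a disjunct $\phi\in\Psi_L$ whose restriction $\phi^-$ is satisfied. Clauses of $\phi$ without $X_{q-1}$ are in $\phi^-$ and hence satisfied, and the clauses of $S_L$ are satisfied by the construction above. What remains are the clauses $(\neg x\lor z)$, with $z$ a literal on $X_q$, in which $x$ occurs only in its allowed polarity, which we renamed to $\neg x$. Any such $x$ with $x=1$ was forced by minimality through a chain of implications in $S_L$. This chain ends at a clause $(l\lor x)$, or at a clause $(l\lor y)$ followed by implications $y\to\dots\to x$, with $l$ false under the non-$X_{q-1}$ variables. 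Resolving the chain with $(\neg x\lor z)$, using closure under resolution of $\phi$ (A1), yields $(l\lor z)\in\phi$. This clause avoids $X_{q-1}$, so it lies in $\phi^-$ and is satisfied; since $l$ is false, $z$ is true. Making this chain argument precise, and checking that propagation removes no needed clause through subsumption, is where I expect the work. The bound on unit clauses in (A3') is what lets us discard $\Phi'$ here, exactly as recorded in \Cref{lem:tcnf_pies_redundant}.
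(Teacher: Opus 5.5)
Your easy direction and the set-up of the hard direction are fine: fixing $L$ through $A_\Phi$, restricting to $\Psi_L$, and using (A2) and (A3'). The first rule you state for $x$ is exactly the rule the paper uses. There is a polarity slip: you first say $\neg x$ never meets $X_q$, but afterwards the $X_q$-clauses are $(\neg x\lor z)$. Your $L'_x$ and the minimal-model choice require that the \emph{positive} literal $x$ never occurs together with $X_q$.

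The genuine gap is in the global replacement you then adopt. You assert that the minimal model of $S_L[\tau]$ is obtained by unit-propagating forced positive literals. From this you conclude that every $x$ set to $1$ is reached by an implication chain from a false external literal $l$. This is false. Under your renaming $S_L[\tau]$ is an arbitrary 2-CNF, not a Horn formula, because nothing in (A1)--(A3') forbids a clause $(x\lor y)\in S_L$ with $x,y\in X_{q-1}$ both in their ``non-$X_q$'' polarity. If $S_L[\tau]$ is just $(x\lor y)$, every minimal model sets one of $x,y$ to $1$ without any chain. Your leaf verification then has nothing to resolve $(\neg x\lor z)$ against. That verification is the whole content of the lemma, and you leave it open yourself, so the hard direction is not proved.

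There are two ways to close the gap.

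\textbf{Rescuing your route.} Let $N$ be the set of $x\in X_{q-1}$ for which some $\phi\in\Psi_L$ contains a clause $(\neg x\lor z)$ with $z$ on $X_q$. Using closure under resolution, absence of superclauses, and (A3'), show two facts.
\begin{itemize}
\item[(i)] If $(\neg y\lor x)\in S_L$ and $x\in N$, then $y\in N$. The resolvent $(\neg y\lor z)$ must itself lie in $\phi$, since $(\neg y)$ or $(z)$ would subsume a clause of $\phi$.
\item[(ii)] No $x\in N$ occurs in a unit $(x)\in S_L$, nor in a clause $(x\lor y)\in S_L$ with $y$ positive. The resolvent $(y\lor z)$ would violate (A3'), and its subclauses $(y)$, $(z)$ would subsume $(x\lor y)$ or $(\neg x\lor z)$.
\end{itemize}
Take an inclusion-minimal model $M$ and flip to $0$ every variable of $N$ that is $1$ in $M$ but not reachable from a false external literal. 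By (i) and (ii) this preserves every clause of $S_L[\tau]$. By minimality no such variable exists, and your chain argument then applies.

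\textbf{The paper's route.} Keep your first rule and prove $\Phi^X\Leftrightarrow\Phi^{X\cup\{x\}}$ by induction over $X\subseteq X_{q-1}$. Here $\Phi^X$ deletes all clauses containing variables of $X_{q-1}\setminus X$, and the new $x$ is placed after all of $X$ in the block. Every clause containing $l_x$ (your renamed positive literal) then has its other literal already assigned, so your worry about later variables never arises. A clause linking $x$ to a later $x'$ is handled at the step for $x'$. If it contains $l_{x'}$, the rule takes care of it directly. If it contains $\neg l_{x'}$ and $l_{x'}$ was set true, resolve it with the clause that forced $l_{x'}$.
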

\begin{proof}
For each $X \subseteq X_{q-1}$ ($\overline{X}=X_{q-1}\setminus X$), we define  
$$
\Phi^{X} = Q_1 X_1 \dots \exists(A_\Phi \cup X)\forall X_q. 
\bigvee_{\phi \in \pies(\Phi) \setminus \Phi'} 
\left(\phi \setminus C(\phi, \overline{X})\right).
$$

Moreover, for a \CNF formula $\phi$ and a literal $l$, we denote by
$\CL(\phi,l)$ the set of all clauses in $\phi$ that contain literal $l$.

We start by showing the following.
Let $L$ be any set of literals from $A_\Phi$ with $\Phi_L\neq
\emptyset$ and let $X \subseteq X_{q-1}$. Then, for every $x \in X$ 
there is a literal $l_x \in \{x, \neg x\}$ such that:
\begin{itemize}
    \item[(A)] $V(\CL(\Phi^{X}_L, l_x)) \subseteq V(\Phi^X) 
      \setminus (X_q \cup A_\Phi)$,
    \item[(B)] For every $\phi, \phi' \in \Phi^X_L$,  
    $\phi \setminus C(\phi, X_q) = \phi' \setminus C(\phi',
    X_q)$
  \item[(C)] For each $\phi \in \Phi^X_L$, $\phi$ is
    closed under resolution with respect to $x$.
\end{itemize}

(A) follows because otherwise the variable $x$ would be reducible with respect
to $L$ and $\pies(\Phi)$, which is not possible since the function
$\piesup(-)$ has be exhaustively applied to $\pies(\Phi)$.

(B) follows immediately from \Cref{lem:bqa-dog-prop} together with the
observation that the function $\piesup(-)$ modifies all disjuncts
$\phi$ with $C(\phi,A_\Phi)=L$ in the same manner. 

Finally, (C) follows because every $\phi \in \Phi^X_L$ is
propagated, in particular, contains all resolutions obtained between
clauses of size at most $2$, together with the fact that $\phi$ is a
\TCNF.

Let $X \subset X_{q-1}$ and $x \in X_{q-1} \setminus X$.  
Now we show that $\Phi^{X \cup \{x\}}$ is equivalent to $\Phi^{X}$.

To show the forward direction, let $T$ be an existential winning 
strategy for $\Phi^{X \cup \{x\}}$. Let $l$ be any leaf node of $T$. 
Then, $\tau_l^T$ satisfies some disjunct of $\Phi^{X \cup \{x\}}$, 
and therefore also some disjunct of $\Phi^X$, as required.

For the reverse direction, let $T$ be an existential winning 
strategy for $\Phi^{X}$.  
We modify $T$ by introducing a new level for the variable $x$ between $A_\Phi \cup X$ 
and $X_q$ in the tree. 
Let $\alpha$ be an assignment to $V(\Phi^{X}) \setminus X_q$ 
corresponding to the path from the root to some node $n$ on the new level.  
Let $L$ be a set of literals in $A_\Phi$ such that 
$L[\alpha] = \top$ and $\Phi_L^{X} \neq \emptyset$. Note that due to
the construction of $\pies(\Phi)$, the set $L$ is unique.
Let $l_x$ be a literal of $x$ satisfying (A) with respect to $\Phi_L^{X \cup \{x\}}$.  

If there exists a clause $c \in \CL(\Phi_L^{X \cup \{x\}}, l_x)$ 
that is not satisfied by $\alpha$, we assign $l_x$ positively.  
Otherwise, we assign $l_x$ negatively.  
This assignment is then added to $T$ at node $n$, ensuring that the 
strategy now accounts for $x$. Let $T'$ be the existential strategy
for $\Phi^{X\cup \{x\}}$ obtained in this manner. It remains to show
that $T'$ is winning.

Let $l$ be any leaf of $T'$. Because $T'$ is an existential winning
strategy for $\Phi^X$, there is a $\phi \in \Phi_L^{X \cup \{x\}}$
such that $(\phi \setminus C(\phi, \{x\}))$ is satisfied by
$\tau_l^{T'}$ and it remains to show that $\tau_l^{T'}$ also satisfies
$C(\phi, \{x\})$. Let $l_x$ be the literal of $x$ chosen to determine
the assignment for $x$ in the ancestor of $l$ in $T'$. If $l_x$ is
assigned positively in $\tau_l^{T'}$, then already all clauses in
$\CL(\phi, l_x)$ are satisfied by $\tau_l^{T'}$. Moreover, 
there is a clause
$c \in \CL(\phi, l_x)$ such that $c\setminus \{l_x\}$ is falsified
by $\tau_l^{T'}$. It remains to show that also all clauses in 
$\CL(\phi, \lnot l_x)$ are satisfied by $\tau_l^{T'}$. Let $c'$ be
any such clause. Then, because of (C), we have that the resolvent
$c_r$ of $c$ and $c'$ is contained in $\phi$ and even in
$\phi\setminus C(\phi,\{x\})$. Therefore, $c_r$ is satisfied by
$\tau_l^{T'}$, which implies that $\tau_l^{T'}$ also satisfies $c'$;
this is because $c\setminus \{l_x\}$ is falsified by $\tau_l^{T'}$.

If, on the other hand, $l_x$ is assigned negatively by $\tau_l^{T'}$, then all clauses in 
$\CL(\phi, l_x)$ are already satisfied by $\tau_l^{T'}$, and all clauses in 
$\CL(\phi, \neg l_x)$ are satisfied by the assignment to $x$.
This shows that $\Phi^{X}$ and $\Phi^{X\cup \{x\}}$ are
equisatisfiable and by repeating this argument for all variables in
$X_{q-1}$, we obtain that $\Phi^{X_{q-1}} \Leftrightarrow \Phi^\emptyset$, 
which completes the proof.
\end{proof}

We are now ready to put everything together and proof \Cref{the:tcnf-tract}.
\begin{proof}[Proof of \Cref{the:tcnf-tract}]
  Using \Cref{lem:prop_comp}, we can assume that $\Phi$ is propagated. Moreover, if
  $Q_q=\exists$, we can use \Cref{lem:prop_2cnf} to remove the
  last existential quantifier block. Therefore, we can assume that
  $Q_q=\forall$. To decide the validity of $\Phi$, we start by 
  iteratively removing existential quantifier blocks starting from the right most
  existential quantifier block. In particular, we iteratively compute 
  $k_i$-\textsc{Disjunct-QBF}(\TCNF) formulas $\Phi_i$ for every $i
  \in [\ell]\cup \{0\}$ for some integer $k_i$, where $\ell$ is the number of
  existential quantifier blocks of $\Phi$. Importantly, $\Phi_i$ will
  only have $i$ quantifier blocks but will still be equivalent
  with $\Phi$. To compute the formulas $\Phi_i$, we start by setting
  $\Phi_\ell=\Phi$ and $k_\ell=k$. Moreover, for every $i\in [\ell]$, we
  obtain $\Phi_{i-1}$ and $k_{i-1}$ from
  $\Phi_{i}$ and $k_{i}$ as follows. We first use \Cref{lem:computecat} to
  compute $\pies(\Phi_{i})$ in time
  % $2^{\bigoh(k_{i}^2)}\propT{|V(\Phi_i)|}$ 
  $\bigoh(k_i2^{2k_i + k_i}\propT{|V(\Phi_i)|})$ 
  from
  $\Phi_{i}$. Note that $\pies(\Phi_{i})$ is equivalent with
  $\Phi_{i}$ and $|\pies(\Phi_{i})|\leq
  k_{i}2^{2k_{i}^2+k_{i}}$ because
  of \Cref{lem:phi_pies}. We then obtain the formula $\Phi_{i-1}'$ from
  $\pies(\Phi_{i})$ by removing all formulas $\phi \in
  \pies(\Phi_{i})$ that have at least $2k_i$ unit clauses over
  variables in the last (universal) quantifier block of $\pies(\Phi_{i})$. Note that this can be
  achieved in time 
  $\bigoh(\sizeOf{\pies(\Phi_{i})})= \bigoh(k_i2^{2k_i + k_i}|V(\Phi_i)|^2)$ 
  and because
  of \Cref{lem:tcnf_pies_redundant} the resulting formula $\Phi_{i-1}'$ is
  equivalent with $\Phi_{i}$.
    Using \Cref{lem:bqa-var-redundant}, we can remove all variables 
    in the innermost existential quantifier block of $\Phi_{i}$ 
    from $\Phi_{i-1}'$. This yields the equivalent formula 
    $\Phi_{i}''$, where the innermost existential quantifier block 
    now contains at most $2k_{i}^2 + k_{i}$ variables from 
    $A_{\Phi_{i}}$.
  We then
  obtain $\Phi_{i-1}$ by eliminating all variables in $A_{\Phi_{i}}$
  from $\Phi_{i-1}''$ using ordinary quantifier elimination.
  Note that $\Phi_{i-1}''$ still has property from \Cref{lem:pies_A_assigment}
  that for each assignment $\alpha$ of $A_{\Phi_{i}}$
  there exist at most $k_i$ disjuncts from $\Phi_{i-1}''$
  that are not unsatisfied by $\alpha$.
  This property give us that $|\Phi_{i-1}| \leq k_i2^{2k_i^2 + k_i} = k_{i-1}$,
  and we can  compute $\Phi_{i-1}$ in
  time $\bigoh(k_{i-1}\sizeOf{\Phi_{i}''}) = \bigoh(k_{i}|V(\Phi_i)|^2)$,
  because $\Phi_{i}$, $\Phi_{i-1}'$, $\Phi_{i-1}''$ and $\Phi_{i-1}$ are propagated. 
  Note furthermore,
  $|V(\Phi_{i-1})| \leq k_{i-1}|V(\Phi_{i})|$,
  $\Phi_{i-1}$ is equivalent with
  $\Phi_{i}$, and the number of existential quantifier blocks of
  $\Phi_{i-1}$ is one less than the number of existential quantifier
  blocks of $\Phi_{i}$.
  Putting everything together the time required for the computation of
  $\Phi_{i-1}$ from $\Phi_{i}$ is dominated by the time required to
  compute $\pies(\Phi_{i})$ and is therefore at most
  $\bigoh(k_i2^{2k_i^2 + k_i}\propT{|V(\Phi_i)|}) = \bigoh(k_{i-1}\propT{|V(\Phi_i)|}) 
  = \bigoh(k_{i-1} \propT{(\Pi_{i <j\leq \ell} k_i)}\propT{|V(\Phi)|})$.

  Next we analyze the time required to compute $\Phi_0$. Towards
  this aim, we start by obtaining an upper bound on $k_i$. Since
  $b^{(b^a)^2} = b^{(2b)^a}$,
  $k_\ell=k$ and 
  $k_{i-1}=k_{i}2^{2k_{i}^2+k_{i}}=2^{2k_{i}^2+k_{i}+\log_2(k_{i-1})}\leq
  2^{3k_{i}} = 8^{k_i^2}$,
  we obtain that 
  $k_i\leq 8^{\exp_{16}^{\ell-i}(k^2)}$, for $i \in [\ell] \cup \{0\}$.
  The total runtime to compute $\Phi_0$ from $\Phi_\ell$
  as
  $\bigoh(\sum_{0\leq i \leq \ell} k_{i-1} \propT{(\Pi_{i <j\leq \ell} k_j)}\propT{|V(\Phi)|})=
  \bigoh(k_{0} \propT{(\Pi_{0 <j\leq \ell} k_j)}\propT{|V(\Phi)|})
  =\bigoh(k_0^2\propT{|\Phi|}) = \bigoh(\exp_{16}^{\ell}(k^2)\propT{|\Phi|})$.
  
It remains to determine the validity of $\Phi_0$. For this, we 
use~\cite[Corollary 7]{DBLP:conf/ijcai/ErikssonLOOPR24}, which shows 
that any $k$-\textsc{Disjunct-QBF}(\TCNF) $\Phi'$ without existentially 
quantified variables can be solved in time 
$(2k)^{\bigoh(k)}|V(\Phi')|$. Applying this result, we conclude that 
$\Phi_0$ can be solved in time 
$
(2k_0)^{\bigoh(k_0)}|V(\Phi_0)| = 
(2k_0)^{\bigoh(k_0)}\left(\prod_{0 < j \leq \ell} k_j\right)|V(\Phi)| = 
2^{\bigoh(k_0 \log k_0)}|V(\Phi)|
$.
Both the construction of $\Phi_0$ and solving $\Phi_0$ admit an FPT-time parameterized by $k + \ell$. Thus, the overall runtime of the 
algorithm remains fixed-parameter tractable.
\end{proof}

\subsection{Algorithm for Affine}
\label{ssec:affine}

Here we show that deciding the validity of a
$k$-\textsc{Disjunct-QBF}(\AFF) formula is fixed-parameter tractable
parameterized by $k+q$, where $q$ is the number of quantifier
alternations. The overall strategy is to show that one can always remove
the innermost quantifier block without increasing the number of disjuncts
by too much. To achieve this, we first bring every disjunct of the
formula into reduced echelon form, a well-known normal form for affine
formulas obtained by Gaussian elimination. We then observe that if the
innermost quantifier block is existential, we can remove it together with
all equations that contain variables from said block since those
equations can always be simultaneously satisfied by the existential
player (here we use that the formula is in reduced echelon form and
therefore all equations are linearly independent). The main challenge,
however, is if the innermost quantifier block is universal. In this case,
we first observe that any disjunct that has at least $k$ (linearly
independent) equations, each using at least one variable from the last
universal quantifier block, can be safely removed
since it can always be falsified by the universal player. This leaves
us with at most $k$ disjuncts each containing at most $k-1$ equations
with variables from the innermost quantifier block. This now already
implies that the number of different variable types (given by the
subset of these $k(k-1)$ equations in which the variable occurs in) within the last
universal quantifier block is at most $2^{k(k-1)}$ and it is
straightforward to verify that keeping only one variable from each of
these types is sufficient to preserve
equivalence. Using this together with traditional quantifier
elimination already allows one to eliminate the innermost quantifier block,
while obtaining at most $2^{2^{k(k-1)}}$ disjuncts. We show, however,
and this is the main technical challenge and contribution of our
algorithm,
  which we give in~\Cref{lem:aff-red-forall}, that one can
reduce this to at most $2^{k^2}$ disjuncts by carefully analyzing
the formula resulting after employing traditional quantifier
elimination of all variables in the last universal quantifier block.

\iflong
We
start by defining the normal form for $k$-\textsc{Disjunct-QBF}(\AFF)
formulas that can be obtained by Gaussian elimination of every disjunct.

Let $\Phi$ be a $k$-\textsc{Disjunct-QBF}(\AFF) formula and $\phi \in
\Phi$. We define the \emph{clause-variable matrix of $\phi$}, denoted
by $M_\phi$, as follows.
$M_\phi$ has one column for
every variable in $V(\phi)$ and one additional column representing
the right-hand side of any equation; we assume that the columns are
ordered such that $i$-th column corresponds to the $i$-th variable
from the right in the quantifier prefix. Moreover, the column
representing the right-hand side of any equation is the last
column of $M_\phi$. Finally, $M_\phi$ has a row for every equation
$(A,b) \in \phi$ that is $1$ at every column corresponding to a
variable in $A$, $b$ in the last column of $M_\phi$, and $0$ at every other column.

Let $M$ be a $\{0,1\}^{m\times n}$ matrix. The \emph{leading
  coefficient} of a row $r$ of $M$, is the left-most non-zero entry of
$r$. We say that a matrix $M$ is in \emph{echelon
  form}, if $M$ has every all-zero row 
at the bottom and moreover the leading coefficient of any non-zero row $r$ of $M$
that is below another non-zero row $r'$ in $M$ is to the right of the leading
coefficient of $r'$. We say that $M$ is in \emph{reduced echelon form}
if it is in echelon form and additionally every column that contains the
leading coefficient of some row is zero at all other entries.
Additionally, we say that
$\phi \in \Phi$ is in \emph{reduced echelon form} if the matrix $M$ describing $\phi$ is in reduced echelon form.
We say that $\Phi$ is in \emph{reduced echelon form} if so is every $\phi
\in \Phi$.

\begin{lemma}\label[lemma]{lem:GE}
  Let $\phi$ be an affine formula, then there is an algorithm that in
  time $\bigoh(|\phi||V(\phi)|^2)$ decides whether the system of equations
  represented by $\phi$ is satisfiable. Moreover, if it is satisfiable
  the algorithm also returns the (unique) matrix in reduced echelon
  form obtained from $M_\phi$.
\end{lemma}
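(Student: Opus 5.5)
The plan is to run Gaussian elimination over $\mathbb{F}_2$ directly on the clause-variable matrix $M_\phi$. We have $|\phi|$ rows and $|V(\phi)|+1$ columns, with columns ordered as in the definition so the right-hand side is last. We process the variable columns from left to right, keeping a pointer $r$ to the next unused row (initially the first row). For the current column $j$, we look for a row at or below $r$ with a $1$ in column $j$. If none exists, we move to column $j+1$. Otherwise we swap that row into position $r$ and XOR it into every other row (above and below) that has a $1$ in column $j$. This makes column $j$ a pivot column with a single $1$. We then increment $r$.

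Each pivot step touches at most $|\phi|$ rows. Each row operation costs $\bigoh(|V(\phi)|)$, since a row has $|V(\phi)|+1$ entries and XOR is entrywise. The number of pivot steps is at most $\min(|\phi|,|V(\phi)|)\le|V(\phi)|$. Scanning columns without a pivot costs $\bigoh(|\phi|)$ each. The total is therefore $\bigoh(|V(\phi)|\cdot|\phi|\cdot|V(\phi)|)=\bigoh(|\phi||V(\phi)|^2)$, plus $\bigoh(|\phi||V(\phi)|)$ to build $M_\phi$.

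Correctness is standard. Row swaps and adding one row to another are invertible operations, so they preserve the solution set of the system $(A,b)$. By construction the final matrix is in echelon form: all-zero rows sink to the bottom because pivots fill rows $1,\dots,r-1$, and leading coefficients move strictly right. It is also reduced, because we clear each pivot column both above and below the pivot.

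For the satisfiability test, the system is unsatisfiable exactly when some final row is zero on all variable columns but has $1$ in the last column, i.e., it encodes $0=1$. Otherwise, we set the free (non-pivot) variables arbitrarily and read off each pivot variable from its row, which gives a solution. One subtlety is that the right-hand-side column is treated as a column in the echelon-form definition. We therefore never pivot on it; an inconsistent row is detected and reported as unsatisfiable, so whenever the algorithm returns a matrix no pivot lies in the last column.

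Uniqueness of the reduced echelon form is the one point needing an argument. I would use the classical fact that the reduced row echelon form of a matrix is determined by its row space. Two reduced echelon matrices with the same row space have the same pivot columns, because the pivot positions are exactly the columns $j$ such that the row space contains a vector whose first nonzero entry is in column $j$. Given the pivot columns, the nonzero rows are forced: for each pivot column $j$, the row with pivot $j$ is the unique row-space vector that is $1$ at $j$ and $0$ at every other pivot column. Since row operations preserve the row space of $M_\phi$, the output is the unique reduced echelon matrix with that row space, up to the convention of discarding or keeping zero rows at the bottom. I expect this uniqueness claim to need the most care, though it is textbook linear algebra; the runtime bound is routine.
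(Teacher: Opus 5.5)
Your proposal is correct and follows the same route as the paper, whose proof consists only of the remark that the claim follows from standard Gaussian elimination, with a textbook citation. Your Gauss--Jordan procedure over $\mathbb{F}_2$, the $\bigoh(|\phi||V(\phi)|^2)$ operation count, the detection of a $0=1$ row, and the row-space argument for uniqueness of the reduced echelon form supply the details that the paper delegates to that reference.
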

\begin{proof}
  This can be achieved via standard Gaussian elimination (see e.g.~\cite{farebrother2018linear}).
\end{proof}

The following lemma allows us to assume that we are given a
$k$-\textsc{Disjunct-QBF}(\AFF) formula in reduced echelon form.
\begin{lemma}\label[lemma]{lemma:triangular}
  Let $\Phi$ be a $k$-\textsc{Disjunct-QBF}(\AFF) formula. There is an
  algorithm that in time $\bigoh((\sum_{\phi \in \Phi}|\phi|)|V(\Phi)|^2)$ computes an equivalent
  $k$-\textsc{Disjunct-QBF}(\AFF) formula $\Phi'$ that has the same
  prefix as $\Phi$ and is in reduced echelon form.
\end{lemma}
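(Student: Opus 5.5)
The plan is to apply \Cref{lem:GE} to every disjunct $\phi\in\Phi$ separately, keeping the prefix $\pref$ untouched. Two things need checking: that each disjunct may be replaced by its reduced echelon form without changing the validity of $\Phi$, and that the running time is as stated.

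First, fix the column order of $M_\phi$ as prescribed, so that columns follow the quantifier prefix from the right. Then run Gaussian elimination on $M_\phi$. By \Cref{lem:GE} this takes time $\bigoh(|\phi||V(\phi)|^2)\subseteq\bigoh(|\phi||V(\Phi)|^2)$. It either reports that the system of $\phi$ is unsatisfiable or returns the reduced echelon matrix $M'_\phi$.

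If $\phi$ is unsatisfiable, it is falsified by every assignment of $V(\Phi)$. We drop it from the disjunction, or, if every disjunct is dropped, output a fixed trivially false affine disjunct such as $\{(x,0),(x,1)\}$ for some variable $x$. Otherwise we replace $\phi$ by the affine formula $\phi'$ read off from the nonzero rows of $M'_\phi$. Each row yields an equation $(A,b)$, where $A$ is the set of variables whose column is $1$ and $b$ is the entry in the last column. All-zero rows are discarded; in the satisfiable case they correspond to $0=0$.

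Equivalence follows because elementary row operations over $\mathrm{GF}(2)$ preserve the solution set of a linear system. Hence $\phi$ and $\phi'$ have exactly the same satisfying assignments on $V(\phi)$, and they are equisatisfiable in the sense of the preliminaries. The matrix $\bigvee_{\phi}\phi'$ therefore has the same models as the original matrix. Since the prefix is unchanged, $\pref.\bigvee_\phi\phi'$ and $\Phi$ are equisatisfiable, and in particular equivalent.

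One subtlety is that $\phi'$ may use fewer variables than $\phi$, namely when a column becomes zero. The prefix is kept over all of $V(\Phi)$, so quantifying a variable that no longer occurs is harmless. Another is that the number of disjuncts does not increase, so $\Phi'$ is still a $k$-\textsc{Disjunct-QBF}(\AFF). By construction every disjunct of $\Phi'$ is in reduced echelon form. Summing the cost over all disjuncts gives $\bigoh((\sum_{\phi}|\phi|)|V(\Phi)|^2)$. There is no real obstacle beyond handling unsatisfiable disjuncts and the bookkeeping of the column order.
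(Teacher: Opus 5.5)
Your proposal is correct and is essentially the paper's proof: run \Cref{lem:GE} on each disjunct separately, discard the unsatisfiable ones, and replace each remaining $\phi$ by the equations read off the rows of its reduced echelon matrix, for a total time of $\bigoh((\sum_{\phi}|\phi|)|V(\Phi)|^2)$. The one slip is in your optional fallback for the case where every disjunct is discarded: $\{(x,0),(x,1)\}$ is not itself in reduced echelon form, since both rows have their leading coefficient in the column of $x$, so you should instead output the empty disjunction, which is trivially false and vacuously in reduced echelon form (this is what the paper implicitly does).
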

\begin{proof}
  We obtain $\Phi'$ from $\Phi$ by doing the following for every $\phi
  \in \Phi$. First we apply \Cref{lem:GE} to either decide whether the system of equation
  corresponding to $\phi$ is satisfiable. If it is not, we simply
  remove $\phi$ from $\Phi$ and otherwise let $M_\phi'$ be the unique
  matrix in reduced echelon form obtained from $M_\phi$
  (using \Cref{lem:GE}). We now replace
  $\phi$ with the formula $\phi'$ that consists of the set of all equations defined by $M_\phi'$, i.e., every
  row $r$ of $M_\phi'$ provides the equation $(A,b)$, where $A$ contains
  all variables corresponding to non-zero columns in $r$ and $b$ is
  equal to the right-most column of $M_\phi'$. Because $M_\phi'$ is in
  reduced echelon form, it holds that 
  $\phi'$ is in reduced echelon form. Repeating this for
  every $\phi \in \Phi$, provides the required
  $k'$-\textsc{Disjunct-QBF}(\AFF) formula $\Phi'$ with $k'\leq k$
  in time $\bigoh((\sum_{\phi\in \Phi}|\phi|)|V(\Phi)|^2)$.
\end{proof}

We say that a set $S=\SB (A_i,b_i)\SM 1 \leq i \leq \ell\SE$ of equations is \emph{linearly
  independent} if no set $A_i$ can be expressed as a XOR of the sets in
$\SB A_j\SM 1 \leq j \leq \ell \text{ and } j\neq i\SE$, i.e., $A_i=\xor_{j
  \in J}A_j$ for some subset $J \subseteq [\ell]\setminus \{i\}$. We
obtain the following simple observation.
\begin{observation}\label[observation]{observation:affine}
  Let $\Phi$ be a $k$-\textsc{Disjunct-QBF}(\AFF) in reduced echelon form
  and $\phi \in \Phi$. Then, any subset $E$ of equations of $\phi$ is
  linearly independent and this still holds if all equations in $E$
  are projected to some suffix $S$ of the prefix of $\Phi$ that
  contains at least $1$ variable from every equation in $S$, 
  i.e.,
  the set $\SB (e\cap S,b) \SM (e,b) \in E\SE$ of equations is
  linearly independent if $S$ is a suffix of the prefix with $e\cap
  S\neq \emptyset$ for every $(e,b)\in E$.
\end{observation}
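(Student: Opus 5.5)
The plan is to argue directly from the pivot structure of the reduced echelon form: every equation of $\phi$ has a private variable, its pivot, that occurs in no other equation of $\phi$. This variable survives projection onto a suffix $S$, which yields linear independence in both cases at once.

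First I would fix what the pivot is. By the column ordering of $M_\phi$, the $i$-th column corresponds to the $i$-th variable from the right in the prefix. Hence the leading coefficient of a row $(e,b)$, its left-most non-zero entry, sits in the column of the variable $p(e)\in e$ that occurs latest in the quantifier prefix among the variables of $e$. Rows of $\phi$ are non-zero on the variable columns. An all-zero row with $b=1$ would make $\phi$ unsatisfiable, and such disjuncts were removed in \Cref{lemma:triangular}. An all-zero row with $b=0$ is a trivial equation, which I may drop, and it is in any case excluded from the projected statement by the requirement $e\cap S\neq\emptyset$. Because $M_\phi$ is in reduced echelon form, the column of $p(e)$ is zero in every other row. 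Therefore $p(e)\notin e'$ for every other equation $(e',b')\in\phi$, and the pivots of distinct rows are distinct.

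Next comes the unprojected claim. Suppose $e_i=\xor_{j\in J}e_j$ for some $J\subseteq[\ell]\setminus\{i\}$. The right-hand side does not contain $p(e_i)$, since no $e_j$ with $j\ne i$ contains it, but the left-hand side does. This is a contradiction, so every subset $E$ of equations of $\phi$ is linearly independent.

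For the projected claim, let $S$ be a suffix of the prefix with $e\cap S\neq\emptyset$ for every $(e,b)\in E$. Pick some $y\in e\cap S$. The pivot $p(e)$ is the variable of $e$ latest in the prefix, so it is $y$ itself or appears after $y$. Since $S$ is a suffix, it follows that $p(e)\in S$, and hence $p(e)\in e\cap S$. Also $p(e)\notin e'\cap S$ for every other $(e',b')\in E$, because $p(e)\notin e'$. The same XOR argument as before, applied to the sets $e\cap S$, shows that $\SB (e\cap S,b)\SM (e,b)\in E\SE$ is linearly independent.

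The only delicate point is the orientation convention. One has to check that ``left-most'' in $M_\phi$ really means ``right-most in the prefix'', so that the pivot is the variable that a suffix is guaranteed to retain. With the stated column order this holds, and the rest is routine.
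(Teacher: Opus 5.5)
Your proof is correct. It is exactly the reasoning the paper leaves implicit, since the paper states this as an observation without proof: under the stated column order, the pivot of each row is the variable of that equation that occurs latest in the prefix. Reducedness makes that pivot private to its equation, so it survives projection onto any suffix that meets the equation, and it keeps the equation independent there. Your explicit handling of all-zero rows is care the paper glosses over, since a trivial equation $(\emptyset,b)$ is not linearly independent under the paper's own definition.
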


We also need the following simple technical lemma that allows us to
use the linear independence of equations to adapt winning strategies
for the universal player.
\begin{lemma}\label[lemma]{lem:affine-lind}
  Let $S$ be a system of equations. If all equations in $S$ are
  linearly independent, then $S$ has a solution. Moreover, if there is
  an assignment that falsifies all equations in $S$ and $C$ is an
  equation that is linearly independent from the equations in $S$,
  then there is an assignment that falsifies all equations in $S \cup \{C\}$.
\end{lemma}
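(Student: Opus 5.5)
The plan is to view equations over $\mathbb{F}_2$: identify each equation $(A,b)$ with the row vector $(\chi_A, b)$, where $\chi_A\in\mathbb{F}_2^{V}$ is the characteristic vector of $A$. An assignment $\alpha$ is then a vector $a\in\mathbb{F}_2^V$. It satisfies $(A,b)$ iff $\langle \chi_A, a\rangle = b$, and it falsifies $(A,b)$ iff $\langle \chi_A, a\rangle = b+1$. Linear independence of $S$ in the sense defined above means exactly that the vectors $\chi_{A_1},\dots,\chi_{A_\ell}$ are linearly independent in $\mathbb{F}_2^V$. Both parts of the statement then reduce to solvability of a linear system whose coefficient matrix has full row rank.

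For the first part, let $M$ be the $\ell\times |V|$ matrix with rows $\chi_{A_i}$. By independence $M$ has rank $\ell$, so the linear map $a\mapsto Ma$ from $\mathbb{F}_2^V$ to $\mathbb{F}_2^\ell$ is surjective. Hence for every right-hand side $(b_1,\dots,b_\ell)$ there is some $a$ with $Ma=b$, i.e.\ a solution of $S$. Constructively, one can bring $M$ into reduced echelon form via \Cref{lem:GE}. There is no zero row, so no inconsistency arises, and the free variables may be set arbitrarily.

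For the second part, apply the same argument to the system $S\cup\{C\}$ with $C=(A_C,b_C)$. Since $C$ is linearly independent from $S$ and $S$ is itself independent, the vectors $\chi_{A_1},\dots,\chi_{A_\ell},\chi_{A_C}$ are independent. Indeed, in any nontrivial dependence the coefficient of $\chi_{A_C}$ must be $1$, since otherwise $S$ would be dependent, and such a dependence would express $A_C$ as a XOR of sets of $S$. By surjectivity we can choose $a$ with $\langle\chi_{A_i},a\rangle=b_i+1$ for all $i$ and $\langle\chi_{A_C},a\rangle=b_C+1$, i.e.\ an assignment falsifying every equation of $S\cup\{C\}$.

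The hypothesis that some assignment falsifies all of $S$ is thus not actually needed, since it follows from the first argument. To stay close to how the lemma will likely be used, I would additionally note the following adaptation argument. Let $\alpha$ falsify $S$. Because $\chi_{A_C}\notin\operatorname{span}\{\chi_{A_i}\}$, there is a vector $d$ with $\langle\chi_{A_i},d\rangle=0$ for all $i$ and $\langle\chi_{A_C},d\rangle=1$; this is surjectivity again. Then either $\alpha$ or $\alpha+d$ falsifies $C$, and both still falsify $S$. The only point needing care is the translation between the paper's set/XOR definition of independence and vector-space independence; everything else is routine linear algebra.
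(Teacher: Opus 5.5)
\textbf{Gap: your main argument for the second part assumes that $S$ is linearly independent, which the statement does not assume.} Your first part is fine. The ``Moreover'' sentence only assumes that some assignment falsifies all of $S$ and that $\chi_{A_C}\notin\mathrm{span}\{\chi_{A_1},\dots,\chi_{A_\ell}\}$.

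The falsifiability hypothesis is there precisely because $S$ may be dependent. Your remark that it is ``not actually needed'' is false in that generality. Take $S=\{(\{x\},0),(\{x\},1)\}$ and $C=(\{y\},0)$: then $C$ is independent of $S$, but no assignment falsifies $S\cup\{C\}$.

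The general case is the one the paper needs in \Cref{lem:affRedundantDisjunct}. There $S$ consists of one falsified equation from each of the other disjuncts, restricted to $X_q$. Equations coming from different disjuncts need not be independent. So your argument via independence of $\chi_{A_1},\dots,\chi_{A_\ell},\chi_{A_C}$ does not prove what is required.

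\textbf{The repair is already in your last paragraph, and it should become the main proof.} Given $\alpha$ falsifying $S$, find $d$ with $\langle\chi_{A_i},d\rangle=0$ for all $i$ and $\langle\chi_{A_C},d\rangle=1$. Then one of $\alpha$ and $\alpha+d$ falsifies $C$, and both falsify every equation of $S$.

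Note that ``surjectivity again'' cannot be applied to the matrix with rows $\chi_{A_1},\dots,\chi_{A_\ell},\chi_{A_C}$, since that matrix need not have full rank. Instead, take a basis $v_1,\dots,v_r$ of $\mathrm{span}\{\chi_{A_i}\}$. Then $v_1,\dots,v_r,\chi_{A_C}$ are independent, so you can choose $d$ with $\langle v_j,d\rangle=0$ for all $j$ and $\langle\chi_{A_C},d\rangle=1$.

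With this fix your proof is correct, and it takes a different route from the paper. The paper flips all right-hand sides to obtain a system $\overline{S}$, which is consistent by the hypothesis. It replaces $\overline{S}$ by an equivalent independent subsystem $\overline{S}'$ and applies the first part to $\overline{S}'\cup\{\overline{C}\}$. Your perturbation argument avoids the reduction to an equivalent independent subsystem: it simply shifts the given falsifying assignment by a single vector from the annihilator of $\mathrm{span}\{\chi_{A_i}\}$.
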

\begin{proof}
  It is well-known that a system of linear equations has a solution if
  the vectors corresponding  the left-hand side are linearly
  independent. This already shows the first statement of the
  lemma. Towards showing the second statement of the lemma, first note
  that an assignment falsifies all equations in $S$ if and only if the
  same assignment satisfies all equations in $\overline{S}=\SB (A,1-b) \SM (A,b)
  \in S \SE$. Therefore, it suffices to show that if there is an
  assignment satisfying $\overline{S}$, then there is an assignment
  satisfying $\overline{S}\cup \{\overline{C}\}$, where
  $\overline{C}=(A,1-b)$ and $(A,b)=C$. It is
  well known that because $\overline{S}$ is satisfiable, it contains a subset $\overline{S}'$ of
  linearly independent equations that is equivalent with
  $\overline{S}$. Moreover, it follows from the first statement of the
  lemma that $\overline{S}'\cup \{\overline{C}\}$ has a solution, which
  implies that so does $\overline{S}\cup\{\overline{C}\}$.
\end{proof}

Using the properties of formulas in reduced echelon form, we can now show that we can simply remove the innermost quantifier
block (together with all equations containing variables from the last
quantifier block) if it is existential.
\begin{lemma}\label[lemma]{lem:aff-red-exists}
    Let $\Phi = \pref . \bigvee_{i \in [k]} \phi_i$ be a
    $k$-\textsc{Disjunct-QBF}(\AFF) in reduced echelon form, whose innermost
    quantifier block $X_q$ is existential.
    Then, $X_q$ is $\Phi$-redundant.
\end{lemma}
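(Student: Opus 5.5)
We show that $\Phi$ and $\Phi' = \pref'.\bigvee_{i\in[k]} \phi_i\setminus C(\phi_i,X_q)$ are equivalent, where $\pref'$ is $\pref$ with the block $X_q$ removed. The proof follows the template of \Cref{lem:prop_2cnf}: we compare existential winning strategies on the two formulas.

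\emph{From $\Phi$ to $\Phi'$.} Take an existential winning strategy for $\Phi$ and restrict it to the prefix $\pref'$. At any leaf of the restricted tree, the full strategy extends the current assignment to $X_q$ so that some $\phi_i$ is satisfied. That extension then also satisfies the subformula $\phi_i\setminus C(\phi_i,X_q)$. Since this subformula contains no variable of $X_q$, the leaf assignment of $\pref'$ alone already satisfies it.

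\emph{From $\Phi'$ to $\Phi$.} Take an existential winning strategy $T$ for $\Phi'$. Fix a leaf $l$ and a disjunct $\phi_i$ with $\tau_l^T$ satisfying $\phi_i\setminus C(\phi_i,X_q)$. We must show that the existential player can assign $X_q$ so that every equation of $C(\phi_i,X_q)$ holds under $\tau_l^T$. Since $X_q$ is the innermost block, appending these values below every leaf of $T$ gives a winning strategy for $\Phi$.

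The key step uses the reduced echelon form. Columns are ordered so that variables of $X_q$, which are rightmost in the prefix, come first in $M_{\phi_i}$. Hence every equation in $C(\phi_i,X_q)$ has its leading coefficient at a variable of $X_q$, and these leading variables are pairwise distinct. Substitute $\tau_l^T$ for all variables outside $X_q$ in the equations of $C(\phi_i,X_q)$. The resulting system is the set of equations projected onto the suffix $X_q$, each with an updated right-hand side. Every projected equation is nonempty, since its leading variable lies in $X_q$. By \Cref{observation:affine} the projected left-hand sides are linearly independent, so \Cref{lem:affine-lind} yields a solution over $X_q$. Alternatively, one can assign the leading variables directly: by reducedness each leading variable appears in exactly one equation, so we set all non-leading variables of $X_q$ to $0$ and solve each equation for its own leading variable.

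The main obstacle is making sure the column ordering really places $X_q$ first. We also need that the projection onto $X_q$ keeps every equation of $C(\phi_i,X_q)$ nonempty. This holds by the definition of $C(\phi_i,X_q)$, and the echelon structure then gives independence. There is one more degenerate case: a disjunct removed as unsatisfiable in \Cref{lemma:triangular}. This is harmless, because the lemma's hypothesis already guarantees reduced echelon form, so every disjunct is consistent.
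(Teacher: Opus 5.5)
Your proof is correct and is essentially the paper's argument: $\Phi\Rightarrow\Phi'$ is immediate, and for the converse you extend each leaf assignment of a winning strategy for $\Phi'$ to $X_q$, using that the equations of $C(\phi_i,X_q)$ projected onto $X_q$ are linearly independent by the reduced echelon form (\Cref{observation:affine}, \Cref{lem:affine-lind}); setting the non-leading variables of $X_q$ to $0$ and solving each equation for its own leading variable is just a more explicit version of that step. One correction to your final remark: the paper's definition includes the right-hand-side column in $M_{\phi}$, so reduced echelon form alone does not exclude an inconsistent row $(\emptyset,1)$, but this does not affect your proof, since such a row contains no variable of $X_q$ and so lies in $\phi_i\setminus C(\phi_i,X_q)$, which $\tau_l^T$ could not satisfy anyway.
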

\begin{proof}
  Let $\Phi'=\pref'.\bigvee_{i \in [k]}\phi'_i$, where $\pref'$ is equal to $\pref$ minus the innermost quantifier block and
  for every $\phi_i \in \Phi$, $\phi'_i$ is equal to $\phi_i\setminus
  C(\phi_i,X_q)$. Clearly, $\Phi \models \Phi'$ and
  it therefore only remains to show that $\Phi' \Rightarrow \Phi$. Let $T$ be a winning strategy
  for the existential player on $\Phi'$ and let $\gamma \in
  \ass(T)$. Then, there is a $\phi' \in \Phi'$ that is satisfied by
  $\gamma$. Consider the set $S$ of equations in $\phi$ that contain at
  least one variable in $X_q$ and let $S'=\SB C[\gamma] \SM C \in
  S\SE$. Since $\Phi$ is in reduced echelon form all equations in $S'$ are
  linearly independent and it therefore follows from
  Lemma~\ref{lem:affine-lind} that there is an assignment $\gamma' :
  X_q \rightarrow \{0,1\}$ that satisfies all equations in
  $S'$. Therefore, $\gamma\cup \gamma'$ satisfies $\phi$ and shows
  that $T$ can be extended to a winning strategy for the
  existential player on $\Phi$.
\end{proof}

Our next step is to show that we can also deal with the case that the
innermost quantifier block is universal. Towards this aim,
we start by showing the following auxiliary lemma, which informally allows us to assume that every disjunct of
$k$-\textsc{Disjunct-QBF}(\AFF) in reduced echelon form has fewer than $k$
equations with variables from $X_q$; this is because any disjunct with
at least $k$ of such equations can always be falsified by the
universal player.
\begin{lemma}\label[lemma]{lem:affRedundantDisjunct}
  Let $\Phi$ be a $k$-\textsc{Disjunct-QBF}(\AFF) in reduced echelon form,
  whose right-most quantifier block $X_q$ is universal and let $\phi \in \Phi$ such
  that $\phi$ contains at least $k$ equations that contain a variable from
  $X_q$.
  Then, $\phi$ is $\Phi$-redundant.
\end{lemma}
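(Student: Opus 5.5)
We prove that $\Phi\setminus\{\phi\}$ is true whenever $\Phi$ is; the converse is immediate, since removing a disjunct can only make the matrix weaker. Equivalently, we take a winning strategy of the universal player on $\Phi\setminus\{\phi\}$ and turn it into a winning strategy on $\Phi$. The universal player plays exactly as before on all blocks $X_1,\dots,X_{q-1}$. Only the assignment of the innermost universal block $X_q$ is changed, and only at each leaf.

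Fix a play, and let $\gamma$ be its assignment of $V(\Phi)\setminus X_q$. Let $\Phi^*\subseteq\Phi\setminus\{\phi\}$ be the disjuncts not already falsified by $\gamma$; note $|\Phi^*|\le k-1$. In the original strategy, the completion $\delta$ of $X_q$ falsifies every disjunct in $\Phi^*$. For each $\psi\in\Phi^*$ we pick one equation $E_\psi\in C(\psi,X_q)$ that $\gamma\cup\delta$ falsifies. Such an equation exists because every equation of $\psi$ without $X_q$-variables is satisfied by $\gamma$, as $\psi$ is not falsified by $\gamma$. This yields at most $k-1$ equations $S=\{E_\psi[\gamma]\}$ over $X_q$, all falsified by $\delta$. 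Next, $\phi$ has at least $k$ equations containing $X_q$-variables. By \Cref{observation:affine}, their projections onto $X_q$, which is a suffix of the prefix, are linearly independent. Let $D=\{C[\gamma] \mid C\in C(\phi,X_q)\}$, so $|D|\ge k$ and $D$ is linearly independent.

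Our aim is a new assignment $\delta'$ of $X_q$ falsifying every equation in $S$ and at least one equation of $D$. Then $\gamma\cup\delta'$ falsifies all of $\Phi$, and the modified strategy wins. The equations of $D$ span a space of dimension at least $k$, while the left-hand sides of $S$ span a space of dimension at most $k-1$. Hence some $C\in D$ has left-hand side not in the span of the left-hand sides of $S$; this is a pure dimension count. The second part of \Cref{lem:affine-lind}, applied with $S$ (falsifiable by $\delta$) and $C$, gives an assignment of $X_q$ falsifying all of $S\cup\{C\}$. The lemma is stated for a $C$ linearly independent of $S$, and we need exactly this: independence from the span of $S$, not merely that $S\cup\{C\}$ is independent. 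If needed, we first pass to a linearly independent subset of $S$ whose complemented system is equivalent, as in the proof of \Cref{lem:affine-lind}.

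The main subtlety is this linear algebra step. We must choose $C$ outside the span of $S$, not merely distinct from the equations in $S$. We must also check that the constants introduced by substituting $\gamma$ do not matter; they do not, since falsifiability only depends on consistency of the complemented system. Finally, we take the new assignment of $X_q$ to be $\delta'$ at every leaf. This is legitimate because $X_q$ is the innermost block, so the universal player knows $\gamma$ in full when choosing it.
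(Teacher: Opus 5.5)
Your proposal is correct and follows the paper's proof step by step. You keep the universal winning strategy for $\Phi\setminus\{\phi\}$, pick one falsified equation per remaining disjunct, use the echelon-form independence of the at least $k$ projected $X_q$-equations of $\phi$ to find one outside the span of these at most $k-1$ equations, and then re-choose $X_q$ via \Cref{lem:affine-lind}. Your two refinements only make explicit what the paper's proof leaves implicit: restricting to disjuncts not already falsified by $\gamma$, so that every chosen equation genuinely involves $X_q$, and reading the independence hypothesis as ``outside the span.''
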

\begin{proof}
  Clearly, $(\neg \Phi) \models (\neg (\Phi\setminus\{\phi\}))$, so it
  suffices to show that $(\neg (\Phi\setminus\{\phi\})) \models (\neg
  \Phi)$.
  Let $T$ be a winning strategy for the universal player on
  $\Phi\setminus\{\phi\}$ and consider any play $\gamma \in
  \ass(T)$. Then, for every $\phi' \in \Phi\setminus\{\phi\}$ there is
  (at least) one equation, say $C_{\phi'}$, that is not satisfied by
  $\gamma$. Because $\phi$ is in reduced echelon form, it follows
  that all equations in $\phi-(V(\phi)\setminus X_q)$ are linearly
  independent. Therefore, $\phi  - (V(\phi)\setminus X_q)$ contains at least $k$ linearly
  independent equations, which implies that there is an equation $C$ in $\phi-
  (V(\phi)\setminus X_q)$ that
  is linearly independent from the equations in $S=\SB C_{\phi'}[\gamma_{\overline{X_q}}]
  \SM \phi' \in
  \Phi\setminus \{\phi\}\SE$, where $\gamma_{\overline{X_q}}$ is the assignment
  $\gamma$ restricted to the variables in $V(\Phi)\setminus X_q$. Therefore, it follows from
  Lemma~\ref{lem:affine-lind} that there is an assignment
  $\gamma_{X_q}: X_q \rightarrow \{0,1\}$ that simultaneously
  falsifies all equations in $S \cup \{C\}$, which implies 
  that the play $\gamma_{\overline{X_q}}\cup \gamma_{X_q}$
  is winning for the universal player on $\Phi$. Since
  the play $\gamma_{\overline{X_q}}\cup \gamma_{X_q}$ only differs
  from the original play $\gamma$ on the variables in $X_q$, we obtain
  a winning strategy for the universal player on $\Phi$ by iterating
  the same argument for every play of $T$.
\end{proof}
The next lemma is the main technical contribution of our algorithm for
$k$-\textsc{Disjunct-QBF}(\AFF) and essentially allows us to eliminate
the last universal quantifier block of a formula.

\begin{lemma}\label[lemma]{lem:aff-red-forall}
  Let $\Phi  = \pref.\bigvee_{i \in [k]}\phi_i$ be a
  $k$-\textsc{Disjunct-QBF}(\AFF) in reduced echelon form, whose last
  quantifier block $X_q$ is universal, such that every
  $\phi \in \Phi$ contains at most $k-1$ clauses that contain a
  variable in $X_q$. Then, a $(2^{(k-1)}+1)^k$-\textsc{Disjunct-QBF}(\AFF)
  formula $\Phi'$ 
  in reduced echelon form 
  that is equivalent with $\Phi$ can be computed from
  $\Phi$ in time 
  $\bigoh(2^{2k^2}|X_q|^2+2^{k^2}|V(\Phi)|^3k)$.
  % $\bigoh(2^{k^2}1.5^k|X_q||V(\Phi)|^2)$.
  Moreover, $\sizeOf{\Phi'}\leq
  (2^{(k-1)}+1)^k\sizeOf{\Phi}$ and $\Phi'$ is obtained
  from $\Phi$ after quantifier eliminating all variables in $X_q$ and
  removing some (redundant) disjuncts.
\end{lemma}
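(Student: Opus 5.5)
Because $X_q$ is the innermost block and universal, $\Phi$ is true iff for every assignment $\gamma$ of $V(\Phi)\setminus X_q$ reached in the game, every $\beta\colon X_q\to\{0,1\}$ satisfies some $\phi_i[\gamma]$. So the plan is to write down an explicit, $X_q$-free condition on $\gamma$ that is equivalent to this, and to check that this condition is a disjunction of at most $(2^{k-1}+1)^k$ affine systems in reduced echelon form.

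\emph{Setup.} For each $i$, split $\phi_i = \psi_i \land \chi_i$, where $\psi_i = \phi_i\setminus C(\phi_i,X_q)$ and $\chi_i$ are the at most $k-1$ equations meeting $X_q$. Write each equation $e\in\chi_i$ as $\lambda_e(X_q) = \mu_e(\bar X)\oplus b_e$, with $\bar X = V(\Phi)\setminus X_q$. Since $\phi_i$ is in reduced echelon form and $X_q$ is rightmost, \Cref{observation:affine} says the $X_q$-parts $\{\lambda_e : e\in\chi_i\}$ are linearly independent.

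\emph{Characterization of universal wins.} Fix $\gamma$ and let $I=\{i : \gamma \text{ satisfies } \psi_i\}$. The universal player wins on $\gamma$ iff there is a $\beta$ with one equation $e_i\in\chi_i$ falsified for each $i\in I$. Equivalently, the affine system $\{\lambda_{e_i}(X_q) = \mu_{e_i}(\gamma)\oplus b_{e_i}\oplus 1 : i\in I\}$ is solvable. By Gaussian elimination, this system is unsolvable iff there is a set $J\subseteq I$ with $\bigoplus_{i\in J}\lambda_{e_i}=0$ and $\bigoplus_{i\in J}(\mu_{e_i}\oplus b_{e_i}\oplus 1)(\gamma)=1$. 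For each selection $(e_i)_{i}$ and each dependent subset $J$ this is a single affine equation over $\bar X$. So the existential player wins at $\gamma$ iff for \emph{every} selection some such certificate exists.

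\emph{Expected form.} Plain quantifier elimination of $X_q$ gives a DNF whose disjuncts are indexed by which $\psi_i$ hold and by a certificate for every selection. That count is too big, so the main step is to organize the certificates per disjunct. For each $i$ I choose either the value $\bot$ (meaning ``$\psi_i$ fails / disjunct $i$ is discarded'') or one of the at most $2^{k-1}$ affine combinations of the equations in $\chi_i$. This yields at most $(2^{k-1}+1)^k$ tuples. For each tuple $t=(t_1,\dots,t_k)$ I form $\Phi'_t$ as the conjunction of
\begin{itemize}
\item $\psi_i$ over the non-$\bot$ coordinates, and
\item the $\bar X$-parts of the chosen combinations, subject to the requirement that the chosen $X_q$-parts XOR to zero.
\end{itemize}
I keep only the tuples for which this holds. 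The intended proof that $\bigvee_t\Phi'_t$ is equivalent to $\Phi$ goes in two directions. Soundness: $\Phi'_t$ forces every selection to be inconsistent, because a combination over each $\chi_i$ whose $X_q$-parts cancel means that $\bigoplus_i\chi_i$ restricted to $X_q$ is a contradiction once falsified equations are chosen. Completeness: from the characterization above, using linear independence of each $\chi_i$ to collapse the selection-wise certificates into one combination per disjunct.

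\emph{Main obstacle.} Completeness is where I expect the difficulty. A family of certificates, one per selection, must be merged into a single tuple $t$. I would attempt an induction on $|I|$, or argue via duality. The reasoning is that ``no $\beta$ falsifies one equation of each $\chi_i$'' means the sets $\{\beta : \beta\models\chi_i\}$ cover $\{0,1\}^{X_q}$. For affine subspaces, a cover by at most $k$ of them has a dependency structure that can be expressed through combinations per block; I would invoke a polynomial-method or counting style argument for this.

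\emph{Routine parts.} The reduced echelon form comes from \Cref{lemma:triangular}, and the size and time bounds come from enumerating the $(2^{k-1}+1)^k$ tuples and checking the XOR condition on $X_q$-columns, which takes $\bigoh(2^{2k^2}|X_q|^2)$ time. Redundant disjuncts are those whose tuple fails the cancellation check; these are removed.
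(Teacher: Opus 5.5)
\textbf{There is a genuine gap.} The formula $\bigvee_t\Phi'_t$ you build (one affine combination of $\chi_i$ per non-$\bot$ coordinate, kept when the $X_q$-parts cancel) is not equivalent to $\Phi$; already your soundness direction fails. Take $k=3$, prefix $\forall z\,\exists y_1y_2y_3\,\forall x_1x_2$, and
$\phi_1=(x_1\oplus y_1=0)\land(x_2\oplus y_2=0)$, $\phi_2=(x_1\oplus x_2\oplus y_3=0)$, $\phi_3=(z=1)$.
This is in reduced echelon form and satisfies the hypothesis.

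Consider the tuple that combines both equations of $\chi_1$ (whose $X_q$-part is $x_1\oplus x_2$), takes the equation of $\chi_2$ (also $x_1\oplus x_2$), and sets $t_3=\bot$. It passes your cancellation test, and its $\Phi'_t$ is a satisfiable system over $y_1,y_2,y_3$ alone, so your formula is true. Yet $\Phi$ is false. After $z=0$, whatever the $y$'s are, $\phi_1$ is satisfied by exactly one assignment of $\{x_1,x_2\}$ and $\phi_2$ by two. So the universal player can falsify all three disjuncts.

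The structural reason is that $\beta$ falsifies $\chi_i$ iff \emph{some} equation of $\chi_i$ fails. That is not an affine condition, and a single combination of $\chi_i$ cannot see it. Consequently, the merging step you flag as the main obstacle is not merely unproven but false. Take $C_1=\{x_1=a\}$, $C_2=\{(b_1,b_2)\}$, $C_3=\{(d_1,d_2)\}$, with the right-hand sides being the $\bar X$-parts. These cover $\{0,1\}^2$ iff $b_1=d_1\neq a$ and $b_2\neq d_2$. That is a four-element set of values of $(a,b_1,b_2,d_1,d_2)$ which is not a product over blocks. Any system with at most one equation per block has at least four solutions and is such a product. Hence no union of your $\Phi'_t$ describes this set.

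\textbf{The missing idea.} The $2^{k-1}$ in the bound counts the distinct \emph{restrictions} $\phi_i[\tau]$ for $\tau\colon X_q\to\{0,1\}$, not affine combinations. There is one restriction per value vector of the at most $k-1$ $X_q$-parts, i.e.\ one per coset.

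The paper quantifier-eliminates $X_q$ to get $\bigwedge_\tau\bigvee_i\phi_i[\tau]$ and distributes. A resulting disjunct $\bigwedge_\tau\phi_{i_\tau}[\tau]$ that contains $\phi_i[\tau_1]\neq\phi_i[\tau_2]$ is unsatisfiable. The two restrictions have the same left-hand sides and somewhere differ in a right-hand side. So every surviving disjunct has the form $\bigwedge_{i\in I}\phi_i[\tau_i]$, which gives the $(2^{k-1}+1)^k$ bound.

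Equivalence is then automatic, because only unsatisfiable or duplicate disjuncts of the eliminated formula are dropped. This is also the ``moreover'' clause of the statement, which your $\Phi'_t$ do not satisfy.

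Such a tuple arises iff the cosets $E_i=\{\tau:\phi_i[\tau]=\phi_i[\tau_i]\}$, $i\in I$, cover $\{0,1\}^{X_q}$. This is a condition on the $\tau_i$ alone, because the disjunct pins down all $\bar X$-parts of every block; no certificates need to be merged. Your selection criterion is the right membership test at this point. The cover fails iff, for some choice of one defining equation of each $E_i$, the complemented system is satisfiable. That takes at most $(k-1)^k$ Gaussian eliminations per tuple and, with re-echelonizing the survivors, gives the stated running time.
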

\begin{proof}
  Let $\pref'$ be obtained from $\pref$ after removing the right-most
  quantifier block. After applying quantifier elimination of the variables in $X_q$ to
  $\Phi$, we obtain the formula:
  \[ \Phi'=\pref'.\bigwedge_{\tau : X_q \rightarrow \{0,1\}}(\bigvee_{\phi
      \in \Phi}\phi[\tau])
  \]

  Using distributivity
  to rewrite $\Phi'$ into a disjunction of \textup{CNF}s, we obtain that every
  disjunct of $\Phi'$ has the form:

  \[ \bigwedge_{\tau : X_q \rightarrow \{0,1\}}\phi_{i_\tau}[\tau]\]
  
  where for every $\tau : X_q \rightarrow \{0,1\}$, it holds that
  $i_\tau \in [k]$ and $\phi_{i_\tau} \in \Phi$.
  Observe that if such a disjunct
  contains $\phi[\tau_1]$ and $\phi[\tau_2]$ for some $\phi \in \Phi$
  and $\tau_1,\tau_2 : X_q \rightarrow \{0,1\}$ such that
  $\phi[\tau_1]\neq \phi[\tau_2]$, then the disjunct can never be
  satisfied. This is because every $\phi \in \Phi$ is in triangular
  form and it therefore holds that if $\phi[\tau_1]\neq \phi[\tau_2]$,
  then there are two equations $(A_1,b_1) \in \phi[\tau_1]$ and
  $(A_2,b_2) \in \phi[\tau_2]$ such that $A_1=A_2$ but $b_1\neq
  b_2$. Therefore, we can assume that without loss of generality (after
  removing redundant copies of $\phi[\tau]$) every disjunct of
  $\Phi'$ has
  the form:
  
  \[ \bigwedge_{i \in I}\phi_{i}[\tau_i]\]

  for some $I \subseteq [k]$ with $I\neq \emptyset$ and some $\tau_i : X_q \rightarrow \{0,1\}$. 

  We say that two assignments $\tau : X_q
  \rightarrow \{0,1\}$ and $\tau' : X_q \rightarrow \{0,1\}$ are
  \emph{equivalent with respect to $\phi \in \Phi$}, denoted by $\tau
  \equiv_\phi \tau'$, if $\phi[\tau]=\phi[\tau']$. We note that
  $\equiv_\phi$ is an equivalence relation on the set of all
  assignments $\tau : X_q \rightarrow \{0,1\}$ and we denote by
  $E_\phi$ the set of all equivalence classes of $\equiv_\phi$ and by
  $E_\phi(\tau)$ the equivalence class of $\tau$ in $E_\phi$. Then,
  $E_\phi(\tau)$ is equal to the set of all satisfying assignments of
  the set $S_\phi(\tau)=\SB (A\cap X_q,(A \cap X_q)[\tau]) \SM (A,b) \in \phi \text{ and }
  A\cap X_q\neq \emptyset \SE$ of equations. Note that
  $|S_\phi(\tau)|\leq k-1$ and therefore $|E_\phi|=|\SB \phi[\tau]\SM \tau : X_q \rightarrow \{0,1\} \SE|\leq 2^{k-1}$, which
  already shows that $|\Phi'|\leq (2^{(k-1)}+1)^k$ and therefore also
  $\sizeOf{\Phi'}\leq (2^{(k-1)}+1)^k\sizeOf{\Phi}$ since every
  disjunct of $\Phi'$ has size at most $\sizeOf{\Phi}$.

  It remains to show that $\Phi'$ can be
  computed in the stated time. To show this we first observe that
  a disjunct $\bigwedge_{i \in I}\phi_{i}[\tau_i]$ is in $\Phi'$
  if and only if $\bigcup_{i \in I}E_{\phi_i}(\tau_i)=2^{X_q}$ or in
  other words if and only if $\bigvee_{i \in I}S_{\phi_i}(\tau_i)$
  is a tautology. Now, $\bigvee_{i \in I}S_{\phi_i}(\tau_i)$ is a
  tautology if and only if $\lnot (\bigvee_{i \in
    I}S_{\phi_i}(\tau_i))$ is not
  satisfiable. Using De Morgan's Law, we obtain that
  $\lnot (\bigvee_{i \in I}S_{\phi_i}(\tau_i))=\bigwedge_{i \in
    I}\bigvee_{(A,b) \in S_{\phi_i}(\tau_i)}(A,1-b)$, which using
  distributivity is satisfied if and only if the set $\SB (A_1,1-b_1),
  \dotsc,(A_k,1-b_k)\SE$ of equations is satisfiable for some choice
  of $(A_i,b_i) \in S_{\phi_i}(\tau_i)$ for every $i \in [k]$. Since
  $|S_{\phi_i}(\tau_i)|\leq k-1$, there are at most $(k-1)^k$ choices for
  the set $\SB (A_1,1-b_1), \dotsc,(A_k,1-b_k)\SE$ of equations and
  using \Cref{lem:GE} we can decide whether one such
  set of equations is satisfiable in time
  $\bigoh(k|X_q|^2)$. Therefore, we can decide whether
  the disjunct $\bigwedge_{i \in I}\phi_{i}[\tau_i]$ is in
  $\Phi'$ in time $\bigoh((k-1)^kk|X_q|^2)$. Hence, assuming that we can enumerate
  all of the at most $(2^{(k-1)}+1)^k$ distinct disjuncts in $\Phi'$ in time
  $\bigoh((2^{(k-1)}+1)^k)$, we
  obtain that we can compute $\Phi'$ in time
  $\bigoh((2^{(k-1)}+1)^k(k-1)^kk|X_q|^2)$

  Next, we show
  that we can efficiently enumerate all possible and distinct
  disjuncts $\bigwedge_{i \in I}\phi_{i}[\tau_i]$ in
  $\Phi'$; in fact it is sufficient to enumerate all choices for
  $\bigvee_{i \in I}S_{\phi_i}(\tau_i)$ that correspond to a
  disjunct in $\Phi'$. Now, $S \in S_\phi$ if and only if $S$ contains one
  equation $(A\cap X_q,b')$ for some $b' \in \{0,1\}$ for every $(A,b) \in \phi$ with $A\cap
  X_q\neq \emptyset$ and $S$ is satisfiable. Therefore, there at most
  $2^{k-1}$ choices for $S$ and for each choice we can test whether
  $S$ is satisfiable in time $\bigoh(k|X_q|^2)$
  using \Cref{lem:GE}, which shows that we can compute
  the sets $S_\phi$ for every $\phi \in \Phi$ in time
  $\bigoh(2^{k-1}|X_q|^2k^2)$. Therefore, the total runtime to compute
  all disjuncts of $\Phi'$ is
  equal to
  $\bigoh(2^{k-1}|X_q|^2k^2+(2^{(k-1)}+1)^k(k-1)^kk|X_q|^2)\subseteq\bigoh(2^{2k^2}|X_q|^2)$. Finally,
  it remains to bring $\Phi'$ into reduced echelon form. That is for each
  of the at most $(2^{(k-1)}+1)^k$ disjuncts $\phi \in \Phi'$ we
  use \Cref{lemma:triangular} to get $\phi$ into reduced echelon form in time at most
  $\bigoh(|V(\Phi)|^3k)$; this is because $\phi$ has at most
  $k|V(\Phi)|$ clauses because $\Phi$ is triangulated and at most
  $|V(\Phi)|$ variables. Therefore, we obtain 
  $\bigoh(2^{2k^2}|X_q|^2+2^{(k-1)}+1)^k|V(\Phi)|^3k)=\bigoh(2^{2k^2}|X_q|^2+2^{k^2}|V(\Phi)|^3k)$
  as the total runtime of the algorithm. 
\end{proof}
\fi
The following theorem now puts everything together into an
FPT-algorithm for deciding the validity of
$k$-\textsc{Disjunct-QBF}(\AFF) formulas (with the parameter being $k$ and the number of quantifier alternations).
\begin{theorem}\label{the:tract-aff}
  Let $\Phi  = \pref.\bigvee_{i \in [k]}\phi_i$ be a
  $k$-\textsc{Disjunct-QBF}(\AFF) with $\ell$ universal quantifier
  blocks. Then, the validity of $\Phi$ can be
  decided in time $\bigoh(\exp_4^{\ell}(k^2) n^3+mn^2)$,
  where
  $m=\sum_{\phi \in \Phi}|\phi|$ and $n=|V(\Phi)|$.
\end{theorem}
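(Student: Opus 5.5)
We first apply \Cref{lemma:triangular} to bring $\Phi$ into reduced echelon form. This costs $\bigoh(mn^2)$ and is the source of the additive $mn^2$ term. After that, we eliminate quantifier blocks from the innermost one outwards, maintaining reduced echelon form throughout.

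\textbf{Existential innermost block.} If the innermost block $X_q$ is existential, \Cref{lem:aff-red-exists} shows that $X_q$ is $\Phi$-redundant. We therefore delete $X_q$ together with every equation that contains a variable of $X_q$. Deleting rows of a matrix in reduced echelon form keeps it in reduced echelon form, so no recomputation is needed. This step does not increase the number of disjuncts.

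\textbf{Universal innermost block.} If $X_q$ is universal, we first use \Cref{lem:affRedundantDisjunct} to discard every disjunct that has at least $k'$ equations meeting $X_q$, where $k'$ is the current number of disjuncts. Discarding disjuncts only lowers the disjunct count, so the hypothesis of the lemma remains true for the disjuncts that survive. Then \Cref{lem:aff-red-forall} applies. It produces an equivalent formula in reduced echelon form with no variables of $X_q$ and at most $(2^{k'-1}+1)^{k'} \le 2^{k'^2}$ disjuncts, using $(2^{k'-1}+1)\le 2^{k'}$. The size of the formula grows by at most the same factor.

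\textbf{Counting disjuncts.} Each universal block is followed by an existential elimination that does not increase the count, so the number of disjuncts changes only at the $\ell$ universal steps. Starting from $k_\ell = k$, the recurrence is $k_{i-1}\le 2^{k_i^2}$. We need a tower bound of the form $k_i^2 \le \exp_4^{\ell-i}(k^2)$, or something comparable. For the induction step we need $k_{i-1}^2 \le 2^{2k_i^2} = 4^{k_i^2} \le 4^{\exp_4^{\ell-i}(k^2)} = \exp_4^{\ell-i+1}(k^2)$. This induction closes cleanly and gives $k_0 \le \exp_4^{\ell}(k^2)$.

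\textbf{Runtime and the final formula.} Each application of \Cref{lem:aff-red-forall} costs $\bigoh(2^{2k_i^2}|X_q|^2 + 2^{k_i^2}n^3 k_i)$. This is at most $\bigoh(k_{i-1}^2 n^3)$, which stays within a single extra tower level. Summing over the $\ell$ rounds is dominated by the last round. One point to check is that the number of variables never increases: quantifier elimination substitutes constants for the variables of $X_q$ and introduces no copies, so $n$ bounds the variable count throughout. After all universal blocks are gone, at most one existential block remains. It is removed by \Cref{lem:aff-red-exists}, leaving a variable-free disjunction that we evaluate directly.

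\textbf{Main obstacle.} The delicate part is the bookkeeping needed to state the bound in exactly the form $\exp_4^{\ell}(k^2)n^3$. The exponent $2k^2$ in the runtime of \Cref{lem:aff-red-forall} must be absorbed into one tower level; this works because $2^{2k_i^2}=4^{k_i^2}$, which is why the tower has base $4$. We also need that the $n^3$ term, coming from Gaussian elimination on each new disjunct, is multiplied only by the current disjunct count and not by the accumulated size.
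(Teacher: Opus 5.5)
Your proposal is correct and follows the paper's route. You bring the formula into reduced echelon form and then eliminate blocks from the innermost outwards: existential blocks via \Cref{lem:aff-red-exists}, universal blocks via \Cref{lem:affRedundantDisjunct} followed by \Cref{lem:aff-red-forall}, and the recurrence $k_{i-1}\le 2^{k_i^2}$ gives the base-$4$ tower (your indexing by universal blocks is cleaner than the paper's). One small bookkeeping fix: in the runtime step, read $k_{i-1}$ as the worst-case bound $K_{i-1}=2^{K_i^2}$ with $K_\ell=k$, not the actual disjunct count, because $2^{2k_i^2}\le k_{i-1}^2$ need not hold for actual counts; since the bounds of \Cref{lem:aff-red-forall} are monotone in the number of disjuncts, the final bound is unchanged.
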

\begin{proof}
We begin by applying \Cref{lemma:triangular} to transform
$\Phi$ into reduced echelon form within a runtime of 
$\bigoh(mn^2)$. Next, we iteratively remove
blocks of existential and universal quantifiers, starting
from the right-most block, using \Cref{lem:aff-red-exists} 
and \Cref{lem:aff-red-forall}. Specifically, at each step, 
we compute a $k_i$-\textsc{Disjunct-QBF}(\AFF) formula 
$\Phi_i$ in reduced echelon form for some integer $k_i$ and 
$i \in [q] \cup \{0\}$, where $q$ is the number of quantifier blocks
of $\Phi$.

We initialize the process by setting $\Phi_q = \Phi$ and $k_q=k$ and 
derive $\Phi_{i}$ from $\Phi_{i+1}$ as follows: If 
$Q_{i+1} = \exists$, we set 
$\Phi_i = \pref_i.\bigvee_{\phi \in \Phi_{i+1}} 
\phi\setminus C(\phi,X_{i+1})$, where $\pref_i$ is the 
prefix of $\Phi_{i+1}$ obtained after removing the 
right-most quantifier block. Notably, $\Phi_i$ can be 
computed from $\Phi_{i+1}$ in time 
$\bigoh(\sizeOf{\Phi_{i+1}}) \subseteq \bigoh(k_{i+1}n^2)$ because
$\Phi_{i+1}$ is in reduced echelon form. 
Furthermore, if $\Phi_{i+1}$ is in reduced echelon form, then 
$\Phi_i$ is also in reduced echelon form, and $k_i = k_{i+1}$. 
By \Cref{lem:aff-red-exists}, $\Phi_i$ is equivalent to 
$\Phi_{i+1}$.

If $Q_{i+1} = \forall$, let $\Phi_{i+1}'$ denote the 
formula obtained from $\Phi_{i+1}$ by removing all 
$\phi \in \Phi_{i+1}$ that contain at least $k$ equations 
involving a variable from $X_{i+1}$. This operation can 
be performed in time 
$\bigoh(\sizeOf{\Phi_{i+1}}) = \bigoh(k_{i+1}n^2)$ because
$\Phi_{i+1}$ is in reduced echelon form.
By \Cref{lem:affRedundantDisjunct}, $\Phi_{i+1}'$ is 
equivalent to $\Phi_{i+1}$ and retains its reduced echelon form. 
Next, we use \Cref{lem:aff-red-forall} to obtain $\Phi_i$ 
from $\Phi_{i+1}'$. Here, $\Phi_i$ is in reduced echelon form and equivalent to 
$\Phi_{i+1}'$, and  
$|\Phi_i| \leq (2^{(|\Phi_{i+1}'|-1)}+1)^{|\Phi_{i+1}'|}$. Thus, 
$k_i \leq 2^{k_{i+1}^2}$. 
The computation of $\Phi_i$ requires time 
$\bigoh(2^{2k_{i+1}^2}|X_{i+1}|^2+2^{k_{i+1}^2}|V(\Phi)|^3k_{i+1})=\bigoh(2^{2k_{i+1}^2}n^3)$. Since
we can assume that $k_i\geq k_{i+1}$ for every $i\in [q]\cup\{0\}$, we obtain
that the overall runtime of the algorithm is given by
$\bigoh(2^{2k_{1}^2}n^3q+mn^2)$, where the $mn^2$ term is for bringing
$\Phi$ into reduced echelon form. It therefore only remains to
provide an upper bound on $k_1$.
Note that $k_i=k_{i+1}$ if $Q_{i+1}=\exists$ and $k_i\leq
2^{k_{i+1}^2}$ otherwise. As
$2^{(2^{a^{2}})^2} = 2^{4^{a^2}}$, we obtain that $k_1 \leq
2^{\exp_4^{\ell -2}(k^2)}$, which gives us
$\bigoh(2^{2(2^{\exp_4^{\ell -2}(k^2)})^2}n^3)\in
\bigoh(\exp_4^{\ell}(k^2)n^3+mn^2)$
as the total runtime of the algorithm.
\end{proof}

\section{Elimination of Guarded Universal Sets and Enhanced Backdoors}
\label{sec:enhanced}

Here, we provide the details behind our new method for the elimination
of guarded universal sets and show its implications for enhanced
(deletion) backdoors. We start by proving the correctness of the
method in \Cref{ssec:melgu} and then provide its applications for
enhanced (deletion) backdoors in \Cref{ssec:mappenh}. Finally, we show
how to efficiently find enhanced (deletion) backdoors in
\Cref{ssec:mfindenh}.
In the following, we broaden the notion of $k$-\textsc{Disjunct}\hy \QBF to disjunctions of conjunctive formulas (which include affine formulas) as opposed to merely CNF formulas.

\subsection{Elimination of Guarded Universal Sets}\label{ssec:melgu}
We now provide the full construction and formal justification  
for the algorithm outlined in \Cref{ssec:unmixed-overview}.  
The main objective of this section is to prove  
\Cref{th:new_unmixed_qbf}, which forms the foundation  
of our elimination of guarded universal sets framework.  
Before presenting the proof, we first recall key definitions  
that will be used throughout. Let $\Phi$ be a $\QBF$ formula and let $Y \subseteq
V_\forall(\Phi)$ be a subset of universal variables.
The \emph{primal graph}
$\primG(\Phi)$ of $\Phi$ is the undirected graph with a vertex for every variable in $\Phi$ and 
an edge between two variables if they occur in a common clause. 
We denote by $\delta_{\Phi}(Y)$, or simply $\delta(Y)$ if $\Phi$ is
clear from the context, the \emph{boundary (or guard)} of $Y$ for $\Phi$, i.e.,
the set of all neighbors of $Y$ in the primal graph. 
We also denote by
$\SA(\Phi,Y)$ the maximum number of variables of $Y$ in any clause $c$
of $\Phi$, i.e., $\max_{c \in \Phi}|V(c)\cap Y|$, where $V(c)$ denotes
the set of variables in clause $c$. 

We begin by restating \Cref{th:new_unmixed_disj},  
which serves as a key technical component in the proof of \Cref{th:new_unmixed_qbf}.

\thnewunmixedqbfdisj*

The proof of \Cref{th:new_unmixed_disj} is deferred  
to the end of this section, as it relies on several  
key lemmas introduced later.  
However, we are now ready to restate the main result,  
\Cref{th:new_unmixed_qbf}, and provide its proof  
based on \Cref{th:new_unmixed_disj}.

\thnewunmixedqbf*
\begin{proof}
    Since $\Phi$ is a \textsc{QBF} formula, we can write it as $\Phi = \pref.\phi$,  
    where $\pref$ is the prefix of $\Phi$ and $\phi$ is a conjunction of constraints.  
    Let $\Phi' = \pref.\disj(\phi, \delta(Y))$ be the corresponding $2^{|\delta(Y)|}$-\textsc{Disjunct-QBF}.  
    By \Cref{lem:equisat}, $\disj(\phi, \delta(Y))$ and $\phi$ are equisatisfiable,  
    so $\Phi$ and $\Phi'$ are also equisatisfiable, as they share the same prefix.
    
    Now, observe that for every assignment $\tau$ over $\delta(Y)$ and every constraint $c \in C(\Phi, Y)$,  
    either $\tau$ satisfies $c$, or the residual constraint $c[\tau]$ is entirely over $Y$.  
    This implies that $Y$ is $\Phi'$-closed.
    We apply \Cref{th:new_unmixed_disj} to $\Phi'$ and $Y$ to obtain a partial assignment $\beta$.  
    Since $\Phi'$ and $\Phi$ are equisatisfiable (and therefore
    $\Phi'[\alpha]=\Phi[\alpha]$ for any partial assignment $\alpha$), and \Cref{th:new_unmixed_disj} guarantees that $\Phi' \Leftrightarrow \Phi'[\beta]$,  
    we conclude that 
    $\Phi \Leftrightarrow \Phi' \Leftrightarrow \Phi'[\beta]
    \Leftrightarrow \Phi[\beta]$. 
    Finally, since $|\Phi'| = 2^{|\delta(Y)|}$ and $\ariti(\Phi', Y) \leq \ariti(\Phi, Y)$,  
    all bounds on size and running time follow directly from \Cref{th:new_unmixed_disj}.
\end{proof}

The goal of the algorithm is to produce a subset $Y_{\mnu}\subseteq Y$
together with a subformula $\Phi_{\mnu} \subseteq \Phi$ and an
assignment $\beta$ of $Y_{\mnu}$ satisfying the conditions of \Cref{lem:beta_application} restated below.
The proof of the lemma was presented earlier in \Cref{ssec:unmixed-overview}.

\betaapplication*

We now show \Cref{lem:unit_clause},  
originally introduced in the overview.  
The lemma will be applied to  
$\Phi \setminus \Phi_{\mnu}$, where $\Phi$ is a  
$k$-\textsc{Disjunct-QBF} and  
$\Phi_{\mnu} = \{\phi \in \Phi \mid \ariti(\phi, Y) > 1\}$  
is the set of disjuncts containing constraints over $Y$ of arity greater than one.  
The goal is to extract a large subset  
$Y_{\mnu} \subseteq Y$ of $\Phi\setminus \Phi_{\mnu}$-redundant variables,  
which can then be used to define a suitable assignment $\beta$.

\lemunitclauses*
\begin{proof}
  Since $\ariti(\Phi,Y) \leq 1$, 
  all of the constraints from $C(\Phi,Y)$ are unit, i.e., of size $1$.
  Let $G_\Phi$ be the bipartite graph with partition $\{Y,\Phi\}$
  having an edge between $v \in Y$ and $\phi \in \Phi$ if $v \in
  V(\phi)$. Let $M$ be a maximum matching in $G_\Phi$ and denote by
  $Y_M$ and $\Phi_M$ the set of variables and formulas matched by $M$,
  respectively. We claim that
  setting $Y_{\mnu}=Y\setminus Y_M$ satisfies the statement of the
  lemma. First note that $Y_{\mnu}$ can be computed in time
  $\bigoh(k^3 + \sizeOf{\Phi})$
  by first building $G_\Phi$ in
  time $\bigoh(\sizeOf{\Phi})$ and then
  using \Cref{pro:matching} to compute $M$ in time
  $\bigoh(k^3+\sizeOf{\Phi})$, because $|\Phi|\leq k$ forms one part of
  the bipartition of $G_\Phi$.

  We now show that $Y_{\mnu}$ has the properties stated in the statement of
  the lemma.
  Because $|\Phi|\leq k$, it holds that $|Y\setminus Y_{\mnu}|=
  |Y_M|\leq |\Phi|\leq k$ and it only remains to show that $Y_{\mnu}$ is
  $\Phi$-redundant. Let $\Phi'=\pref'.\bigvee_{i \in
    [k]}\phi_i\setminus C(\phi,Y_{\mnu})$, where $\pref'$ is obtained from
  the quantifier prefix of $\Phi$ after removing all variables in
  $Y_{\mnu}$. It remains to show that if the universal player has a winning
  strategy, say $T$, on $\Phi$, then he also has a winning strategy on $\Phi'$. 
  Let $Y_M'$ be the set of all variables in $Y_M$ that are
  reachable from some variable in $Y_{\mnu}$ in $G_\Phi$ via an alternating path with
  respect to $M$.
  We claim that the strategy $T'$ obtained from $T$ after changing the
  assignment of all variables in $Y_M'$ to the assignment that
  falsifies
  the unit clause contained in their neighbor in $M$ is a
  winning strategy for the universal player on $\Phi'$; note
  that we can freely change the value of these variables within $T$
  because $Y \subseteq V_\forall(\Phi)$ and each node labeled with variables from $Y$ has only one child.
  Towards showing this we first show that $N_{G_\Phi}(Y_M'\cup Y_{\mnu})\subseteq
  \Phi_M$. Suppose for a contradiction that this is not the case and
  let $v$ be a vertex in $Y_M'\cup Y_{\mnu}$ that has a neighbor $\phi_v
  \in \Phi\setminus \Phi_M$. If $v \in Y_{\mnu}$, then $M\cup \{v,\phi_v\}$
  is a matching that is larger than $M$, which contradicts the
  maximality of $M$. Otherwise, $v \in Y_M'$
  and therefore there is an alternating path $P$ from some $u \in Y_{\mnu}$ to
  $v$ in $G_\Phi$ with respect to the matching $M$. 
  Note that without loss of
  generality we can assume that $V(P)\cap
  (\Phi\setminus\Phi_M)=\emptyset$ since otherwise there is a subpath $P'$
  of $P$ that ends in some vertex in $Y_M'$ and satisfies $V(P')\cap
  (\Phi\setminus\Phi_M)=\emptyset$, which we can use instead of $P$. But then $P$ together with the
  edge $\{v,\phi_v\}$ is an augmenting path for $M$ in $G_\Phi$, which
  again contradicts the maximality of $M$.

  Recall that for a leaf $l$ of $T$ ($T'$), we denote by $\tau_l^T$
  ($\tau_l^{T'}$) the assignment of all variables in $V(\Phi)$ given
  by the root to $l$ path in $T$ ($T'$). Moreover, for
  an assignment $\tau : V(\Phi) \rightarrow \{0,1\}$ and a subset $Y'
  \subseteq Y$, we denote by $\Phi_\tau(Y')$ the set of all $\phi \in \Phi$ that contain a
  constraint $c \in C(\phi,Y')$ that is falsified by $\tau$. We claim
  that $\Phi_{\tau_l^T}(Y)\subseteq \Phi_{\tau_l^{T'}}(Y_M)$ for every leaf $l$
  of $T$ (and $T'$), which shows that $T'$ is a winning strategy for
  the universal player on $\Phi'$. Suppose for a contradiction that
  this is not the case and let $\phi \in \Phi_{\tau_l^T}(Y)\setminus
  \Phi_{\tau_l^{T'}}(Y_M)$. Because $\phi \in \Phi_{\tau_l^T}(Y)\setminus
  \Phi_{\tau_l^{T'}}(Y_M)$, it
  holds that $\tau_l^T$ falsifies a unit clause of $\phi$ containing
  some variable $v \in Y_{\mnu} \cup Y_M'$. Because $N_{G_\phi}(Y_M'\cup
  Y_{\mnu})\subseteq \Phi_M$ and $\{v,\phi\} \in E(G_\Phi)$ also $\phi \in \Phi_M$.
  Therefore, the neighbor $u$
  of $\phi$ in $M$ is in $Y_M'$, which shows that $\tau_l^{T'}$
  falsifies a unit clause of $\phi$ containing $u$, contradicting our
  assumption that $\phi \notin \Phi_{\tau_l^{T'}}(Y_M)$. 
\end{proof}

We now show \Cref{lem:greedy_long}, which  
presents an algorithm for computing an assignment $\beta$  
over $Y_{\mnu}$ (which is defined using the previous lemma).  
While the algorithm does not always succeed in producing $\beta$  
(even when such an assignment exists),  
it guarantees the identification of a small subset of variables  
that can be used to further modify the formula $\Phi$.

We briefly recall the notion of a hitting set for conjunctive formulas,  
which was introduced earlier but is restated here for clarity.
Let $\phi$ be a conjunctive formula and let $Y_\phi \subseteq Y$ be a set of variables.  
We say that $Y_\phi$ is a \emph{hitting set} of $\phi$ with respect to $Y$  
if every constraint in $\phi$ that contains a variable from $Y$  
also contains at least one variable from $Y_\phi$,  
i.e., $C(\phi, Y_\phi) = C(\phi, Y)$.
Later, we will use the following property of a hitting set:  
for every assignment $\tau: Y \rightarrow \{0,1\}$,  
if $\ariti(\phi, Y)>1$, then
$\ariti(\phi, Y) = \ariti(\phi, Y_\phi) > \ariti(\phi[\tau], Y_\phi) = \ariti(\phi[\tau], Y)$.  
Although it is possible to compute a minimal hitting set,  
doing so would not improve the asymptotic time complexity  
of our overall algorithm.

\lemgreedylong*

\begin{proof}
Consider now the following greedy procedure on $\Phi$, whose aim is to find an assignment $\beta$
over $Y_{\mnu}$ that falsifies as many of the disjuncts in $\Phi_{\mnu}$ as
possible. 
The procedure starts with the empty assignment $\beta'=\emptyset$ and as long as there is a disjunct $\phi \in
\Phi_{\mnu}$ that is not yet falsified by $\beta'$ and a 
constraint $c$ in $C(\phi[\beta'],Y_{\mnu})$
such that $c$ does not contain a variable from $(Y\setminus Y_\mnu)\cup V(\beta')$, it extends $\beta'$ with the unique assignment that falsifies
$c$. 
Then, either the greedy procedure succeeds by producing a
partial assignment $\beta'$ of $Y_{\mnu}$ that falsifies all disjuncts in $\Phi_{\mnu}$ or there is a 
disjunct $\phi \in \Phi_{\mnu}$ that is not yet falsified by $\beta'$,
such that $V(\beta')$ is a hitting set of
$\phi \setminus C(\phi, Y \setminus Y_\mnu)$ with respect to $Y_\mnu$.
In the first case, we can just obtain an assignment $\beta$ over $Y_\mnu$ by extending $\beta'$.
In the other case, we define $Y_\phi$ as $V(\beta') \cup (Y \setminus Y_{\mnu})$, since $Y\setminus Y_\mnu$ is a hitting set of $\phi$ with respect to $Y\setminus Y_\mnu$.
Moreover, the greedy algorithm on each step assigns at most $\ariti(\Phi,Y)$ new variables, therefore $|V(\beta)| \leq |\Phi_\mnu| \cdot \ariti(\Phi,Y)$ and $|Y_\phi| \leq |\Phi_\mnu| \cdot \ariti(\Phi,Y) + |Y \setminus Y_{\mnu}| \leq |\Phi_\mnu| \cdot \ariti(\Phi,Y) + |\Phi \setminus \Phi_{\mnu}| \leq k \cdot \ariti(\Phi,Y)$.

Clearly, such a greedy algorithm checks each constraint in $\Phi$ at most once.  
Therefore, its time complexity is $\bigoh(\sizeOf{\Phi})$.
\end{proof}

If, for a given $k$-\textsc{Disjunct-QBF} $\Phi$,  
the previous lemma produces an assignment $\beta$,  
then we are done by applying \Cref{lem:beta_application}.  
Otherwise, we proceed by modifying $\Phi$  
through replacing a formula $\phi$  
with  $\phi_1 \lor \dots \lor \phi_t$,  
such that $\phi$ and $\phi_1 \lor \dots \lor \phi_t$  
are equisatisfiable, and for every $i \in [t]$,  
it holds that $\ariti(\phi, Y) > \ariti(\phi_i, Y)$.  

We repeat this transformation  
until \Cref{lem:greedy_long} yields $\beta$.  
The precise sequence of equivalent
modifications  
will be defined later (see \Cref{alg:beta}).  
Before that, we analyze two distinct  
modification strategies.  
The following lemma presents the first simpler modification,  
which relies on \Cref{lem:equisat}  
and was already introduced in the overview.

\begin{lemma}\label{lem:equisat_application}
    Let $\Phi$ be a $k$-\textsc{Disjunct-QBF} and let $\phi \in \Phi$.
    Let $Y$ and $Y_\phi$ be subsets of $V(\Phi)$ such that
    $\ariti(\phi,Y)>1$ and $Y_\phi$ is a hitting set of $\phi$ with respect to $Y$.
    Then, 
    $\Phi$ and $\pref.(\bigvee_{\phi' \in \Phi \setminus \{\phi\}}\phi')\lor  \disj(\phi, Y_\phi) $
    are equisatisfiable, where $\pref$ is a prefix of $\Phi$.
    Moreover, $\ariti(\phi,Y) > \ariti(\disj(\phi, Y_\phi) ,Y)$.
\end{lemma}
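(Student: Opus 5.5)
The equisatisfiability part reduces directly to \Cref{lem:equisat}. Apply it to the conjunctive formula $\phi$ and the set $X = Y_\phi \cap V(\phi)$; if $Y_\phi \not\subseteq V(\phi)$, restrict to this intersection, which does not affect the argument since variables outside $V(\phi)$ neither occur in $\phi$ nor matter for the hitting-set property. This gives that $\phi$ and $\disj(\phi,Y_\phi)$ have the same satisfying assignments over $V(\Phi)$. The disjuncts other than $\phi$ are untouched and the prefix is unchanged. Hence, for every assignment $\alpha$ of $V(\Phi)$, $\alpha$ satisfies $\bigvee_{\phi'\in\Phi}\phi'$ if and only if it satisfies $(\bigvee_{\phi'\in\Phi\setminus\{\phi\}}\phi')\lor\disj(\phi,Y_\phi)$. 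This is exactly equisatisfiability of the two QBFs in the sense of the preliminaries. One small point to check is that the unit clauses in $U_\tau$ only mention variables of $\Phi$, so the variable set stays the same.

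For the arity claim, I would look at an arbitrary disjunct $\phi[\tau]\land U_\tau$ of $\disj(\phi,Y_\phi)$ with $\tau:Y_\phi\to\{0,1\}$, and bound its arity with respect to $Y$ in two parts.

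\textbf{The part $U_\tau$.} Its constraints are unit clauses, so they contribute arity at most $1 < \ariti(\phi,Y)$, using the hypothesis $\ariti(\phi,Y)>1$.

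\textbf{The part $\phi[\tau]$.} Take any constraint $c\in\phi$ with $V(c)\cap Y\neq\emptyset$. By the hitting-set property $C(\phi,Y_\phi)=C(\phi,Y)$, the constraint $c$ contains some variable of $Y_\phi$, and this variable is assigned by $\tau$. So either $c[\tau]$ is trivially satisfied and removed, or its scope loses at least one variable. Since $Y_\phi\subseteq Y$ in the intended use, that variable lies in $Y$. (If $Y_\phi$ is not contained in $Y$, I would replace it by $Y_\phi\cap Y$, which is still a hitting set because only constraints meeting $Y$ are considered.) It follows that $|V(c[\tau])\cap Y|\le |V(c)\cap Y|-1\le \ariti(\phi,Y)-1$. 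Constraints of $\phi$ not meeting $Y$ contribute $0$, and assignment never introduces new variables.

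Taking the maximum over all constraints and all $\tau$ gives $\ariti(\disj(\phi,Y_\phi),Y)\le \max(1,\ariti(\phi,Y)-1)<\ariti(\phi,Y)$.

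The only delicate point is the bookkeeping for general conjunctive atoms: the projection in the definition of $\phi[\tau]$ must genuinely remove the assigned variable from the scope. It does, by definition, since the assigned variables are projected away. With that in place the argument is routine.
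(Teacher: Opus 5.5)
Your proposal is correct and follows the paper's own proof. Equisatisfiability comes straight from \Cref{lem:equisat}, with the other disjuncts and the prefix unchanged, and the arity drop comes from the hitting-set property: every constraint meeting $Y$ loses an assigned $Y_\phi$-variable, while the unit clauses of $U_\tau$ contribute only arity $1<\ariti(\phi,Y)$. Your two side remarks are unnecessary. $Y_\phi\subseteq Y$ is already part of the paper's definition of a hitting set, and the proof of \Cref{lem:equisat} never uses $X\subseteq V(\phi)$, so you can apply it to $Y_\phi$ directly rather than to $Y_\phi\cap V(\phi)$, which would in fact yield a different formula.
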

\begin{proof}
    Let $\Phi' = \pref.(\bigvee_{\phi' \in \Phi \setminus \{\phi\}}\phi')\lor  \disj(\phi, Y_\phi)$.
    From \Cref{lem:equisat}, we know that $\phi$ and $\disj(\phi, Y_\phi)$ are equisatisfiable. 
    Moreover, $\Phi \setminus \{\phi\}$ and $\Phi' \setminus \disj(\phi, Y_\phi)$
    are equisatisfiable, since both contain the same set of disjuncts and have the same prefix.
    Therefore, $\Phi$ and $\Phi'$ are equisatisfiable. 
    
    Let $\tau$ be an assignment over $Y_\phi$.
    Since $Y_\phi$ is a hitting set of $\phi$ with respect to $Y$,
    every constraint $c \in C(\phi, Y)$ contains a variable from $Y_\phi$.
    Thus, $\tau$ either satisfies $c$, or reduces its length,
    i.e., $|c[\tau]| < |c|$.
    From the fact that $\ariti(U_\tau ,Y)=1$ and $\ariti(\phi,Y)>1$,
    we obtain $\ariti(\phi,Y) > \ariti(\disj(\phi, Y_\phi) ,Y)$.
\end{proof}

While the modification introduced in \Cref{lem:equisat_application}  
is sufficient to obtain
an FPT-algorithm,  
it is neither time efficient --- since, it introduces  
$2^{|Y_\phi|}$ new disjuncts --- nor space efficient,  
as $\sizeOf{\disj(\phi,Y_\phi)} \leq 2^{|Y_\phi|} \cdot \sizeOf{\phi}$.  
Therefore, we introduce the operation $\magic(-)$,  
which provides linear size while preserving  
the key properties of $\disj(-)$
at the cost of introducing new variables.

Let $\phi$ be a conjunctive formula  
and let $X = \{x_1, \dotsc, x_t\} \subseteq V(\phi)$  
be a set of variables.  
Define a set of fresh variables
$A_X^\phi = \{a_1, \dotsc, a_t\}$ such that  
$A_X^\phi \cap V(\phi) = \emptyset$  
(and later we assume $A_X^\phi \cap V(\Phi) = \emptyset$).  
We define the disjunctive formula:

$$
  \magic(\phi, X) = (\phi^0\land\bigwedge_{i \in t} \neg a_i) \lor
    (\bigvee_{i \in [t]} (a_i \land \phi^i[x_i = 1] \land x_i )) \lor
    (\bigvee_{i \in [t]} (a_i \land \phi^i[x_i = 0] \land \neg x_i))    
$$
where $\{\phi^0, \dotsc, \phi^t\}$ is a partition of $\phi$,  
possibly with empty parts,  
such that $\phi^0 = \phi \setminus C(\phi, X)$,  
and for each $i \in [t]$, we have $\phi^i \subseteq C(\phi, x_i)$.

\begin{lemma}\label{lem:equisat_magic}
  Let $\phi$ be a conjunctive formula.
  Then, for every $X \subseteq V(\phi)$,
  the formulas $\phi$ and $\bigwedge_{\tau:A_X^\phi\rightarrow\{0,1\}} \magic(\phi, X)[\tau]$
  are equisatisfiable.

  Moreover, we can compute $\magic(\phi, X)$ in $\bigoh(|X|\cdot \sizeOf{\phi})$,
  $|\magic(\phi, X)| = 2|X| + 1$
  and if $\ariti(\phi,X) > 1$, then $\ariti(\phi,X) >
  \ariti(\magic(\phi, X),X)$.
  Finally, if all constraints in $C(\phi,X)$ are clauses, then $\sizeOf{\magic(\phi, X)} \leq
  \sizeOf{\phi} + 5|X| + 2$.
\end{lemma}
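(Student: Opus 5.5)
The plan is to verify the claims of \Cref{lem:equisat_magic} one at a time. The equisatisfiability claim is the main one. I read $\bigwedge_{\tau:A_X^\phi\rightarrow\{0,1\}} \magic(\phi, X)[\tau]$ as the conjunction, over all assignments $\tau$ of the fresh selector variables, of the residual disjunctive formula. By construction, each $\magic(\phi,X)[\tau]$ is a disjunction of the disjuncts not falsified by $\tau$. So I will classify the assignments $\tau$.

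If $\tau$ sets every $a_i$ to $0$, the residual is exactly $\phi^0$. If $\tau$ sets exactly one $a_i$ to $1$, the residual is
\[(\phi^i[x_i=1]\land x_i)\lor(\phi^i[x_i=0]\land\neg x_i).\]
This is equisatisfiable with $\phi^i$ by \Cref{lem:equisat} applied to $\phi^i$ and the single variable $x_i$; in other words, it is $\disj(\phi^i,\{x_i\})$. If $\tau$ sets several $a_i$ to $1$, the residual is a disjunction of several such pieces. Each piece is implied by the corresponding $\phi^i$, so the residual is implied by $\phi^1$ (say), and these conjuncts are redundant.

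It follows that the whole conjunction is equisatisfiable with $\phi^0\land\bigwedge_{i\in[t]}\phi^i$, which equals $\phi$ because $\{\phi^0,\dots,\phi^t\}$ partitions $\phi$. Both directions are easy once this case split is written down. One direction takes an $\alpha$ satisfying $\phi$: it satisfies every $\phi^i$, hence every residual. For the other direction, the all-zero $\tau$ and the singleton $\tau$'s give all the parts.

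The remaining claims are counting. The formula has $1+t+t=2|X|+1$ disjuncts. Computing it amounts to distributing the clauses of $C(\phi,X)$ among the $\phi^i$ (assign each constraint to some $x_i$ it contains) and applying two restrictions for each $i$, which takes $\bigoh(|X|\cdot\sizeOf{\phi})$.

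For the arity claim, assume $\ariti(\phi,X)>1$. The disjunct $\phi^0$ contains no variable of $X$, and the units $\neg a_i$, $a_i$, $x_i$, $\neg x_i$ have at most one $X$-variable. In $\phi^i[x_i=b]$, every constraint of $\phi^i$ contains $x_i$, so after restriction it loses $x_i$ (or is satisfied) and its $X$-count drops by at least one. This gives arity at most $\max(1,\ariti(\phi,X)-1)<\ariti(\phi,X)$.

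For the size bound in the clausal case, each clause of $\phi^i$ appears in both restrictions, but in one of them it is satisfied and disappears, while in the other it merely shrinks. So each clause of $\phi$ is counted at most once. The units contribute $t$ copies of $\neg a_i$ in the first disjunct plus two units in each of the $2t$ other disjuncts, giving $5t$. The extra $+2$ absorbs bookkeeping such as an empty $\phi^0$.

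The main obstacle is the equisatisfiability step, specifically making sure that the conjunction over all $\tau$ (including those with several $a_i=1$) introduces no unwanted constraints. The monotonicity argument above should settle it; the remaining claims are routine.
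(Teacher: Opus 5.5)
Your proposal is correct and follows essentially the same route as the paper. Both arguments reduce to the all-zero and singleton selector assignments $\tau_0,\tau_1,\dots,\tau_t$, identify $\magic(\phi,X)[\tau_i]$ with $\disj(\phi^i,\{x_i\})$ via \Cref{lem:equisat}, and then do the same counting of disjuncts and unit clauses. Two points you make explicit that the paper leaves implicit or phrases less cleanly: residuals for assignments with several $a_i=1$ are implied by a singleton residual (the paper uses this without saying so), and your uniform bound $\max(1,\ariti(\phi,X)-1)$ gives a tidier arity argument.
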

\begin{proof}
    Let $\tau_0$ be the assignment over $A_X^\phi$  
    that sets all variables to $0$.  
    For each $i \in [t]$, let $\tau_i$ be the assignment  
    that sets $a_i = 1$ and all other variables in $A_X^\phi$ to $0$.  
    We now show that for every $0 \leq i \leq t$,  
    the formulas $\magic(\phi, X)[\tau_i]$ and $\phi^i$ are equisatisfiable.  
    The case $i = 0$ is immediate,  
    since $\magic(\phi, X)[\tau_0] = \phi^0$.  
    For $i \in [t]$, observe that $\magic(\phi, X)[\tau_i]$ simplifies to  
    $(\phi^i[x_i = 1] \land x_i) \lor (\phi^i[x_i = 0] \land \neg x_i)$,
    which is equisatisfiable with $\phi^i$ by \Cref{lem:equisat}.  
    Hence, $\magic(\phi, X)[\tau_i] \equiv \phi^i$ for all $i \in [t]$.

    Let $\alpha$ be an assignment of $V(\phi)$ that falsifies $\phi$.  
    Then, there exists a constraint $c \in \phi$ and an index $0 \leq i \leq t$  
    such that $c$ is falsified by $\alpha$ and $c \in \phi^i$.  
    Since $\magic(\phi, X)[\tau_i]$ and $\phi^i$ are equisatisfiable,  
    we conclude that $\magic(\phi, X)[\tau_i]$ is falsified by $\alpha$,  
    and therefore so is $\bigwedge_{\tau : A_X^\phi \to \{0,1\}}  
    \magic(\phi, X)[\tau]$.
    
    Conversely, let $\alpha$ be an assignment of $V(\phi)$ that falsifies  
    $\bigwedge_{\tau : A_X^\phi \to \{0,1\}} \magic(\phi, X)[\tau]$.  
    Then, there exists an index $0 \leq i \leq t$ such that  
    $\magic(\phi, X)[\tau_i]$ is falsified by $\alpha$.
    Since, $\magic(\phi, X)[\tau_i]$ and $\phi^i$ are equisatisfiable,
    $\phi^i$ is falsified by $\alpha$ and so is $\phi$.

    Clearly, $|\magic(\phi, X)| = 2|X| + 1$,  
    and the partition $\phi^0, \dotsc, \phi^t$  
    can be computed in time $\bigoh(|X| \cdot \sizeOf{\phi})$  
    using a simple greedy algorithm.

    Note that for every $i \in [t]$,  
    if $\ariti(\phi^i, X) > 1$, then  
    $\ariti(\phi^i, X) > \ariti(\phi^i[x_i = 1] \land x_i, X)$  
    and $\ariti(\phi^i, X) > \ariti(\phi^i[x_i = 0]\land x_i, X)$,  
    since every constraint $c \in \phi^i$ contains $x_i$,  
    and assigning $x_i$ reduces the size of $c$,  
    i.e., $|c| > |c[x_i = 0]|$ and $|c| > |c[x_i = 1]|$.
    Moreover, if $c$ is a clause,
    then 
    $c$ is satisfied by setting $x_i$ to $0$ or $1$,  
    which implies $c[x_i = 0] = \top$ or $c[x_i = 1] = \top$.
    Therefore, if all constraints in $C(\phi, X)$  
    are clauses,  
    then $\sizeOf{\phi^i} \geq \sizeOf{\phi^i[x_i = 0]} + \sizeOf{\phi^i[x_i = 1]}$,  
    which implies that  
    $\sizeOf{\magic(\phi, X)} \leq \sizeOf{\phi} + 2|X| + 3|X|$,  
    where $2|X|$ counts the new unit clauses over $X$,  
    and $3|X|$ those over $A_X^\phi$.
\end{proof}

Note that $\phi$ and $\magic(\phi, X)$ are not equisatisfiable.  
However, $\phi$ and $\bigwedge_{\tau: A_X^\phi \rightarrow \{0,1\}}  
\magic(\phi, X)[\tau]$ are equisatisfiable.  
While the introduction of new variables and the construction  
of a seemingly more complex formula may appear unintuitive,  
the rationale behind defining $\magic$ lies in its structure:  
the formula $\bigwedge_{\tau: A_X^\phi \rightarrow \{0,1\}}  
\magic(\phi, X)[\tau]$ arises naturally from  
$\magic(\phi, X)$ via universal quantifier elimination  
over the variables $A_X^\phi$.
That is why we can present the second modification in the following lemma.

\begin{lemma}\label{lem:equisat_magic_application}
    Let $\Phi$ be a $k$-\textsc{Disjunct-QBF} with prefix $\pref$ and let $\phi \in \Phi$.
    Let $Y$ and $Y_\phi$ be subsets of $V(\Phi)$ such that
    $\ariti(\phi,Y)>1$ and $Y_\phi$ is a hitting set of $\phi$ with respect to $Y$.
    Then,
    $\Phi$ and $\Phi'$
    are equisatisfiable,
    where $\Phi'$ is obtained from $\pref\forall A_{Y_\phi}^\phi.(\bigvee_{\phi' \in \Phi \setminus \{\phi\}}\phi')\lor  \magic(\phi, Y_\phi)$ after quantifier eliminating $A_{Y_\phi}^\phi$.
    Moreover,  $\ariti(\phi,Y) > \ariti(\magic(\phi, Y_\phi) ,Y)$.
\end{lemma}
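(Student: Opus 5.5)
The plan is to reduce the statement to \Cref{lem:equisat_magic} and the universal quantifier-elimination identity~\eqref{eq:elim-forall}. First I would make the elimination explicit. Since the fresh variables $A = A_{Y_\phi}^\phi$ occur only in $\magic(\phi,Y_\phi)$ and are quantified innermost, applying \eqref{eq:elim-forall} to each $a \in A$ shows that $\pref\forall A.\,\Psi$ with $\Psi = (\bigvee_{\phi'\in\Phi\setminus\{\phi\}}\phi')\lor\magic(\phi,Y_\phi)$ is equivalent to $\pref.\bigwedge_{\tau:A\to\{0,1\}}\Psi[\tau]$. No renaming of later variables is needed because $A$ is the last block. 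Because the other disjuncts do not mention $A$, the matrix equals $(\bigvee_{\phi'\neq\phi}\phi')\lor\bigwedge_\tau\magic(\phi,Y_\phi)[\tau]$. Distributing this back into disjunctive form gives the formula $\Phi'$ of the statement.

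Next I would invoke \Cref{lem:equisat_magic}. It says that $\bigwedge_\tau\magic(\phi,Y_\phi)[\tau]$ and $\phi$ have exactly the same satisfying assignments over $V(\phi)$. The assignments that set at least two $a_i$ to $1$ collapse to $\bot$, so they contribute nothing to the conjunction. Hence the matrix of $\Phi'$ has the same satisfying assignments as the matrix of $\Phi$. The prefix is $\pref$ in both cases, because $A$ has been eliminated. Therefore $\Phi$ and $\Phi'$ are equisatisfiable in the paper's sense.

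For the arity claim, \Cref{lem:equisat_magic} already gives $\ariti(\phi,Y_\phi)>\ariti(\magic(\phi,Y_\phi),Y_\phi)$ when $\ariti(\phi,Y_\phi)>1$. I would transfer this from $Y_\phi$ to $Y$ using the hitting-set property $C(\phi,Y_\phi)=C(\phi,Y)$, arguing constraint by constraint.
\begin{itemize}
\item Each constraint of $\phi^0=\phi\setminus C(\phi,Y_\phi)$ contains no variable of $Y$, so it contributes arity $0$.
\item Each constraint in $\phi^i$ contains $x_i\in Y_\phi\subseteq Y$. Fixing $x_i$ removes that variable, so the reduced constraint has strictly fewer $Y$-variables than the original. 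The original has at most $\ariti(\phi,Y)$ such variables.
\item The added unit clauses over $x_i$ have $Y$-arity $1$, and $1<\ariti(\phi,Y)$.
\item The unit clauses over $a_i$ have $Y$-arity $0$.
\end{itemize}
We also have $\ariti(\phi,Y_\phi)\le\ariti(\phi,Y)$, and every constraint reaching $\ariti(\phi,Y)$ lies in $C(\phi,Y_\phi)$. Since $Y_\phi\subseteq Y$ need not be assumed outright, I take it as intended in the application, where $Y_\phi\subseteq Y$.

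The main obstacle I expect is bookkeeping, not a real difficulty. One issue is being careful that "equisatisfiable" refers to the eliminated formula, which has the original prefix. The other is getting the strict decrease with respect to $Y$ rather than $Y_\phi$, and that is exactly where the hitting-set hypothesis is needed.
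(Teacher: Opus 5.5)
Your proof is correct. The equisatisfiability half follows the paper's argument. You eliminate the innermost block $A_{Y_\phi}^\phi$, use that the other disjuncts are $A$-free to factor them out of $\bigwedge_\tau$, and apply \Cref{lem:equisat_magic}.

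The arity half takes a different and more robust route. The paper asserts that $Y_\phi$ stays a hitting set of $\magic(\phi,Y_\phi)$ with respect to $Y$. It deduces $\ariti(\magic(\phi,Y_\phi),Y_\phi)=\ariti(\magic(\phi,Y_\phi),Y)$ and then uses the arity clause of \Cref{lem:equisat_magic} with $X=Y_\phi$. That chain is fragile, because a hitting set only guarantees at least one $Y_\phi$-variable per constraint. Take $\phi=(y_1\lor y_2\lor y_3)$, $Y=\{y_1,y_2,y_3\}$ and $Y_\phi=\{y_1\}$. Then $\ariti(\phi,Y_\phi)=1$, so that arity clause is vacuous. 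Moreover, $\magic(\phi,Y_\phi)$ contains $(y_2\lor y_3)$, which has no $Y_\phi$-variable, so $\ariti(\magic(\phi,Y_\phi),Y_\phi)=1\neq 2=\ariti(\magic(\phi,Y_\phi),Y)$.

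Your direct count with respect to $Y$ handles this case and correctly yields $2<3$:
\begin{itemize}
\item constraints of $\phi^0$ contribute $0$, since $C(\phi,Y_\phi)=C(\phi,Y)$;
\item reduced constraints of $\phi^i$ lose $x_i\in Y$;
\item unit clauses over $x_i$ contribute $1<\ariti(\phi,Y)$;
\item unit clauses over the fresh $a_i$ contribute $0$.
\end{itemize}
So you can drop the framing as a ``transfer'' of the $Y_\phi$-statement; the bullets alone prove the claim. Also, $Y_\phi\subseteq Y$ is not an extra assumption, since the paper's definition of a hitting set presupposes it.

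One aside in your equisatisfiability step is wrong, though nothing depends on it. Suppose $\tau$ sets $a_i$ to $1$ exactly for $i\in S$ with $|S|\ge 2$. Then $\magic(\phi,Y_\phi)[\tau]$ does not collapse to $\bot$; a $\bot$ conjunct would make the whole conjunction unsatisfiable. Instead it has the same models as $\bigvee_{i\in S}\phi^i$. Such conjuncts are harmless because they are implied by the conjuncts for the assignments with a single $a_i$ set to $1$. Since you cite \Cref{lem:equisat_magic} for the full conjunction over all $\tau$, your conclusion stands.
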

\begin{proof}
  \sloppypar
     Assume that $\Phi'$ is in \textsc{Disjunct-QBF} form and let  
    $\Phi'' = \pref.(\bigvee_{\phi' \in \Phi \setminus \{\phi\}} \phi')  
    \lor \bigwedge_{\tau : A_{Y_\phi}^\phi \to \{0,1\}}  
    \magic(\phi, Y_\phi)[\tau]$.  
    Note that for every $\phi' \in \Phi \setminus \{\phi\}$,  
    $\Phi'$ contains the formulas  
    $\bigwedge_{\tau : A_{Y_\phi}^\phi \to \{0,1\}} \phi'[\tau]$.  
    Since $V(\phi') \cap A_{Y_\phi}^\phi = \emptyset$,  
    we have that $\phi'$ and $\phi'[\tau]$ are equisatisfiable.  
    Therefore, adding all such conjunctive formulas from  
    $\Phi \setminus \{\phi\}$ to $\Phi'$,  
    and removing redundant ones that are proper subsets of others,  
    yields a \textsc{Disjunct-QBF} representation of $\Phi''$.  
    It follows that $\Phi'$ and $\Phi''$ are equisatisfiable.

    From \cref{lem:equisat_magic}, we know that $\phi$ and $\bigwedge_{\tau:A_{Y_\phi}^\phi\rightarrow\{0,1\}} \magic(\phi, Y_\phi)[\tau]$ are equisatisfiable. 
    Moreover, $\Phi \setminus \{\phi\}$ and $\Phi' \setminus \{\bigwedge_{\tau:A_{Y_\phi}^\phi\rightarrow\{0,1\}} \magic(\phi, Y_\phi)[\tau]\}$
    are equisatisfiable, since both contain the same set of disjunct and have the same prefix.
    Hence, $\Phi$ and $\Phi''$ are equisatisfiable,  
    and thus so are $\Phi$ and $\Phi'$.    

    Since $Y_\phi$ is a hitting set of $\phi$ with respect to $Y$,  
    it follows that $Y_\phi$ is also a hitting set of  
    $\magic(\phi, Y_\phi)$ with respect to $Y$.  
    Therefore, we have  
    $\ariti(\magic(\phi, Y_\phi), Y_\phi)  
    = \ariti(\magic(\phi, Y_\phi), Y)$.  
    From \Cref{lem:equisat_magic}, it then follows that  
    if $\ariti(\phi, Y) > 1$,  
    we have  
    $\ariti(\phi, Y) > \ariti(\magic(\phi, Y_\phi), Y)$.
\end{proof}

We are now ready to define the algorithm for computing $\beta$. It
proceeds by iteratively modifying the input \textsc{Disjunct-QBF} by
replacing high arity disjuncts $\phi^i$ with a disjunction of conjunctive formulas of strictly smaller arity.
\begin{algorithm}
\caption{Algorithm for computing the partial assignment $\beta$. \\
\textbf{Input:} A $k$\hy\textsc{Disjunct-QBF} formula $\Phi$ and a
$\Phi$-closed set $Y$ of universal variables. \\
\textbf{Output:} A partial assignment $\beta$ over $Y$ such that $\Phi
\Leftrightarrow \Phi[\beta]$ and $|Y\setminus V(\beta)|$ is bounded by
a function of $k+\ariti(\Phi,Y)$.
}\label{alg:beta}
\begin{algorithmic}[1]
\State $\Phi^0 \gets \Phi$
\State $i \gets 0$
\While{true}
    \State $\Phi^i_\mnu \gets \{\phi \in \Phi^i \mid \ariti(\phi, Y) > 1\}$ %\Comment{disjuncts with $Y$-arity $>1$}
    \State $Y^i_\mnu \gets$ output of \Cref{lem:unit_clause} on $\Phi^i \setminus \Phi^i_\mnu$ and $Y$
    \State Apply \Cref{lem:greedy_long} to $\Phi^i$, $\Phi^i_\mnu$, $Y$, $Y^i_\mnu$
    \If{\Cref{lem:greedy_long} yields $\beta$}
        \State $i_{\text{last}} \gets i$
        \State \Return $\beta$
    \Else
        \State \Cref{lem:greedy_long} yields  $\phi^i$ and $Y_{\phi^i}$
        \State $\pref^i \gets$ prefix of $\Phi^i$
        \State $A^{i+1} \gets  A_{Y_{\phi^i}}^{\phi^i}$
        \State $\Phi^{i+1} \gets \pref^i \forall A^{i+1}. \left( \bigvee_{\phi' \in \Phi^i \setminus \{\phi^i\}} \phi' \lor \magic(\phi^i, Y_{\phi^i}) \right)$ 
        \Comment{see \Cref{lem:equisat_magic_application}}
        \State $i \gets i + 1$
    \EndIf
\EndWhile
\end{algorithmic}
\end{algorithm}

Note that none of the algorithms or lemmas involved are deterministic;
different implementations may yield different intermediate values. Nevertheless, the number of iterations performed by the algorithm is strictly bounded in the worst-case scenario.
The following lemma, a detailed restatement of \Cref{lem:new_limit},  
shows that \Cref{alg:beta} terminates after a number of steps that is
bounded by a function of the parameters.
%which is the main challenge in this subsection.

\begin{lemma}\label{lem:new_limit_long}
  \Cref{alg:beta} terminates with $i_{\text{last}}\leq
  \exp_{4d+2}^{d-2}(k)$, where $d=\SA(\Phi,Y)$.
\end{lemma}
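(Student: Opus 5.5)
We use the progress vector $f(\Phi^i)\in\mathbb{N}^{d+1}$ sketched in the overview: $f(\Phi^i)[a]$ is the number of disjuncts $\phi\in\Phi^i$ with $\ariti(\phi,Y)=a$. Throughout we count disjuncts after quantifier elimination of $A^{i+1}$, as in \Cref{lem:equisat_magic_application}. There are three steps. First, the arity of every disjunct never exceeds $d$. Second, each iteration makes $f$ lexicographically smaller from the top coordinate downward. Third, the total number of disjuncts grows in a controlled way. The algorithm stops once every coordinate above $1$ is zero, since then $\Phi^i_\mnu=\emptyset$ and \Cref{lem:greedy_long} returns $\beta$.

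For the second step, take the disjunct $\phi^i$ returned in case (2) of \Cref{lem:greedy_long}, with $a=\ariti(\phi^i,Y)\ge 2$. By \Cref{lem:equisat_magic_application}, every disjunct of $\magic(\phi^i,Y_{\phi^i})$ has arity $<a$. Quantifier elimination of the fresh universal variables $A^{i+1}$ replaces the disjunction by conjunctions of at most one disjunct of $\magic$ with copies of the old disjuncts; these copies are unchanged because $A^{i+1}$ does not occur in them. Redundant copies can be removed, so the arities of the old disjuncts are untouched. Since all constraints on $Y$ are clauses after the $\disj$ step, $\magic$ only ever adds unit clauses on $Y$ and never increases arity. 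Hence $f[a]$ drops by one and the coordinates above $a$ are unchanged. We also need $Y$ to stay $\Phi^i$-closed and universal, so that \Cref{lem:unit_clause} and \Cref{lem:greedy_long} still apply; this holds because $\magic$ only restricts constraints over $Y$ and adds unit clauses over $Y$.

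For the third step, the size bound $|Y_{\phi^i}|\le |\Phi^i|\,d$ from \Cref{lem:greedy_long} gives $|\magic(\phi^i,Y_{\phi^i})|=2|Y_{\phi^i}|+1\le 2d|\Phi^i|+1$. Interpreting the elimination so that new disjuncts simply replace $\phi^i$ gives $|\Phi^{i+1}|\le (2d+1)|\Phi^i|$, as claimed in the overview.

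The main obstacle is turning these facts into the stated tower bound, via a nested counting argument on the coordinates $d,d-1,\dots,2$. Let $N_a$ be the maximum number of iterations before all coordinates $\ge a$ vanish, starting from $m$ disjuncts. Each reduction at level $a$ can multiply the lower counts by at most $(2d+1)$. Once coordinate $a$ is fixed, the process on levels $<a$ runs for at most $N_{a-1}$ of the current size. So we expect a recursion of the form
\[
N_a(m)\le m\cdot N_{a-1}\bigl((2d+1)^{N_a}m\bigr).
\]
More carefully, the $m$ reductions at level $a$ are interleaved with full lower-level runs, each of which can inflate the size exponentially in $d$. This yields one exponential of base roughly $4d+2$ per level. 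Level $2$ is linear: each step removes an arity-$2$ disjunct and creates only arity-$\le1$ disjuncts, so it takes at most $m$ steps. Unrolling over the $d-2$ higher levels starting from $m=k$ gives $i_{\text{last}}\le\exp_{4d+2}^{d-2}(k)$. I would fix the exact base by bounding sizes crudely with $(2d+1)^{t}\le (4d+2)^{t}$ and absorbing the additive terms.
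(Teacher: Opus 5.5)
Your setup (the vector $f$, its lexicographic decrease, and $|\Phi^{i+1}|\le(2d+1)|\Phi^i|$ from $|Y_{\phi^i}|\le d|\Phi^i|$) is the same as the paper's. One correction: that size bound holds for $\Phi^{i+1}$ as built in \Cref{alg:beta}, with $A^{i+1}$ still universally quantified, so that $\magic(\phi^i,Y_{\phi^i})$ contributes $2|Y_{\phi^i}|+1$ disjuncts. Actually eliminating $A^{i+1}$ and redistributing into disjunctive form would not give the factor $2d+1$, so drop that detour.

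The genuine gap is the final count. Your recursion $N_a(m)\le m\cdot N_{a-1}((2d+1)^{N_a}m)$ has $N_a$ on both sides, inside an exponent, so it bounds nothing. The interleaved schedule you then describe also does not give one exponential per level. The reason is that $|Y_{\phi^i}|$ is controlled only through the whole formula: $|Y\setminus Y^i_\mnu|$ may be as large as $|\Phi^i\setminus\Phi^i_\mnu|$, the number of arity-$\le 1$ disjuncts. So after one level-$a$ split plus a full lower-level run, the formula is already exponential in $m$. The next level-$a$ split may then create exponentially many arity-$(a-1)$ disjuncts, and processing them gives a double exponential, and so on. Using only lexicographic decrease and the $(2d+1)$ growth, already for $d=3$ this schedule runs for a tower of height about $k$, far beyond $\exp_{14}^{1}(k)$. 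So ``one exponential of base roughly $4d+2$ per level'' is exactly the step that fails.

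The missing idea is to count level by level, with each level processed exactly once. The paper takes as the worst case the schedule in which $\phi^i$ always has the largest arity present and all new disjuncts get arity $\ariti(\phi^i,Y)-1$. Then there are indices $0=i_d\le i_{d-1}\le\dots\le i_1=i_{\max}$ such that $f(\Phi^{i_j})$ is supported on coordinate $j$ and $i_{j-1}=i_j+|\Phi^{i_j}|$. Combined with $|\Phi^{i}|\le(2d+1)^{i}k$, this gives $i_{j-1}\le i_j+(2d+1)^{i_j}k\le(2d+1)^{i_j}(i_j+k)\le(4d+2)^{i_j}$, using $i_j+k\le 2^{i_j}$. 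Starting from $i_{d-1}=k$, one gets $i_1\le\exp_{4d+2}^{d-2}(k)$.

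Your interleaved schedule is precisely what this excludes. Since \Cref{lem:greedy_long} does not pick $\phi^i$ by arity, following this route you also need an argument that interleaving cannot beat the level-by-level schedule, or that it cannot occur. As the computation above shows, the two abstract facts alone do not supply that.
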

\begin{proof}
    We start by defining our progress measure $f(\Phi^i)$ that is 
    the $(d+1)$-dimensional
    vector, whose every entry is equal to the number of disjuncts of the given arity,
    where $d =\ariti(\Phi^0,Y)$,
    i.e.,
    for every $0 \leq a \leq d$, $f(\Phi^i)[a] = |\SB \phi \SM \phi \in \Phi^i \land \ariti(\phi,Y)=a \SE|$.
    Clearly, the sum of all values in $f(\Phi^i)$ is equal to $|\Phi^i|$.
    Moreover,
    $f(\Phi^{i+1})$ is lexicographically smaller than $f(\Phi^{i})$,
    i.e.,
    there exists $d' \in [d]$ such that $f(\Phi^{i+1})[d'] < f(\Phi^{i})[d']$ and for every $d' < a  \leq d$, it holds that $f(\Phi^i)[a] = f(\Phi^{i+1})[a]$.

    Let $i_{\max}$ be a worst case scenario index, such that
    for every possible run of \Cref{alg:beta}, the obtained $i_{\text{last}}$ is smaller than 
    $i_{\max}$.
    To calculate $i_{\max}$, we assume 
    that, 
    $f(\Phi^0)[d] = k$ (since this yields the maximum number of iterations) and
    on each step $i \in [i_{\max}]$, the number $f(\phi^{i-1})$ corresponds 
    to the largest index with a nonzero value in $f(\Phi^i)$,
    and $f(\Phi^i)$ 
    is increased only at position $\ariti(\phi^i,Y) - 1$.
    Hence, at most two positions in $f(\Phi^i)$ are nonzero, and these are 
    consecutive.
    Moreover, $\Phi^{i_{\max}}_\mnu = \emptyset$, i.e., $f(\Phi^{i_{\max}})$ has zero value on positions larger than one, 
    since \Cref{lem:greedy_long} would always generate $\beta$.
    In this scenario, 
    there exist indices 
    $0 = i_d \leq i_{d-1} \leq \dots \leq i_1 = i_{\max}$ such that, 
    for each 
    $j \in [d]$, $f(\Phi^{i_{j}})$ contains a nonzero value only at position $j$, 
    i.e., $f(\Phi^{i_{j}})[j] = |\Phi_{i_j}|$, 
    and $i_{j-1} = i_{j} + |\Phi^{i_{j}}|$. 

    From  \Cref{lem:greedy_long}, we obtain that
    $|Y_{\phi^i}| \leq \ariti(\Phi^i,Y) | \Phi^i|$.
    Moreover, from \Cref{lem:equisat_magic_application},
    we obtain that
    $|\Phi^{i+1}| = |\Phi^i|  - 1 + 2|Y_{\phi^i}| + 1 \leq |\Phi^i| + 2\ariti(\Phi^i,Y) |\Phi^i| = (2\ariti(\Phi^i,Y)+1)|\Phi^i|$.
    Therefore, 
    for every $j \in [d]$
    it holds that
    $|\Phi^{i_{j-1}}| \leq (2j+1)^{i_{j} - i_{j+1}}|\Phi^{j_{i}}| \leq
    (\Pi_{j\leq l <d} (2l + 1)^{i_{l} - i_{l+1}})|\Phi^{i_d}|
    % (2j+1)^{i_{j} - i_{j+1}} \cdot (2j + 3)^{i_{j+1} - i_{j+2}} \cdot \dotsb \cdot (2d+1)^{i_d} |\Phi^{i_d}|
    \leq (2d+1)^{i_j} k$.
    Since $i_{j-1} = i_{j} + |\Phi^{i_{j}}|$,
    we obtain that 
    $i_{j-1} \leq i_j + (2d+1)^{i_j}k$.
    Since $i_{d-1} = k$ and $k\geq 2$,
    it holds that
    for every $j \in [d-1]$,
    $i_j + k\leq 2^{i_j}$,
    Therefore,
    $i_{j-1} \leq i_j  + (2d+1)^{i_j} k \leq 
    (2d+1)^{i_j} (i_j + k) \leq
    (2d+1)^{i_j} 2^{i_{j}} \leq 
    (4d+2)^{i_j}$
    and 
    $i_{\text{last}} \leq i_{\max} = i_1 \leq \exp_{4d+2}^{d-2}(k)$.
\end{proof}

Now we are ready to provide the proof of \Cref{th:new_unmixed_disj}.

\begin{proof}[Proof of \Cref{th:new_unmixed_disj}]
For simplicity, let $d = \ariti(\Phi,Y)$.
Let $\Phi'$ be a $k$-\textsc{Disjunct-QBF} obtained from $\Phi$ by
replacing every constraint $c$ containing only variables from $Y$ with
an equisatisfiable CNF formula.
Clearly, $\Phi$ and $\Phi'$ are equisatifable and $\sizeOf{\Phi'}\leq 2^{d}\sizeOf{\Phi}$.
We apply  \Cref{alg:beta} to $\Phi'$ and $Y$ and let $\beta$ be the partial assignment obtained from the algorithm.
From \Cref{lem:unit_clause}, we know that $Y_{\mnu}^{i_{\text{last}}}$  is $\Phi^{i_{\text{last}}}\setminus\Phi^{i_{\text{last}}}_{\mnu}$-redundant
and from \Cref{lem:greedy_long}, $\beta$ falsifies every conjunctive formula from $\Phi^{i_{\text{last}}}_{\mnu}$.
By applying
$\Phi^{i_{\text{last}}}$, $\Phi^{i_{\text{last}}}_{\mnu}$, $Y_{\mnu}^{i_{\text{last}}}$ and $\beta$ to \Cref{lem:beta_application}, we obtain that $\Phi^{i_{\text{last}}} \Leftrightarrow \Phi^{i_{\text{last}}}[\beta]$.

From \Cref{lem:equisat_magic_application}, we obtain that
$\Phi^{i}$ and the formula obtained from $\Phi^{i+1}$ after quantifier elimination of $A^{i+1}$ are equisatisable.
Therefore, $\Phi'$ and $\Phi^\star$ are equisatisable
and so are $\Phi$ and $\Phi^\star$,
where $\Phi^\star$ is a
formula obtained from $\Phi^{i_{\text{last}}}$ after quantifier elimination of $A=\bigcup_{i\in [i_{\text{last}}]}A^{i}$.
Since, $V(\beta) \cap A = \emptyset$ and $\Phi^{i_{\text{last}}} \Leftrightarrow \Phi^{i_{\text{last}}}[\beta]$, it holds that $\Phi^\star\Leftrightarrow \Phi^\star[\beta]$

Note that at each step we have the bound $|\Phi^i| \leq (2d+1)^i \cdot k$.  
Using \Cref{lem:new_limit_long}, we obtain the worst-case bound
$|\Phi^{i_{\text{last}}}| \leq (2d+1)^{\exp_{4d+2}^{d-2}(k)}k \leq \exp_{4d+2}^{d-1}(k)$.
Furthermore, since $|Y \setminus Y_{\mnu}^{i_{\text{last}}}| \leq |\Phi^{i_{\text{last}}}|$,
we conclude that $|Y\setminus V(\beta)| \leq \exp_{4d+2}^{d-1}(k)$.

Note that during the $i$-th iteration of the algorithm,  
there is no need to retain the previous formulas $\Phi^j$ for $j < i$,  
and thus the space used in the $i$-th iteration is $\bigoh(\sizeOf{\Phi^{i+1}})$.  
Consequently, the total space required by the algorithm is bounded by $\bigoh(\sizeOf{\Phi^{i_{\text{last}}}})$.
Therefore the space complexity of algorithm is $\bigoh(2^d\sizeOf{\Phi} + d(\exp_{4d+2}^{d-1}(k))^2)$.
Since every constraint in $C(\Phi^i, Y)$ is a disjunction of literals,  
by \Cref{lem:equisat_magic}, we obtain
$
\sizeOf{\Phi^{i+1}} \leq \sizeOf{\Phi^i} - \sizeOf{\phi^i} + \sizeOf{\phi^i} + 5|Y_{\phi^i}| + 2.
$
From \Cref{lem:greedy_long}, this simplifies to
$
\sizeOf{\Phi^{i+1}} \leq \sizeOf{\Phi^i} + 5 \cdot \ariti(\Phi^i, Y) \cdot |\Phi^i| + 2.
$.
Thanks to \Cref{lem:new_limit_long}, 
we have that for each $i \in [i_{\text{last}}]$,
it holds:
\begin{align*}
\sizeOf{\Phi^{i}} \leq \\
\sizeOf{\Phi'} + \sum_{j\in[i]}(5\ariti(\Phi^j,Y) |\Phi^j| + 2)  \leq \\
2^{d} \sizeOf{\Phi} + 2i +5\sum_{j\in[i]}d|\Phi^{j}| \leq\\ 
2^{d} \sizeOf{\Phi} + 2i +5\sum_{j\in[i]}d(2d+1)^j \leq\\ 
2^{d} \sizeOf{\Phi} + 2i +10d(2d+1)^i \leq\\ 
2^{d} \sizeOf{\Phi} +5(2d+1)^{i+1}
\end{align*}
Therefore the overall space complexity is $\bigoh(2^{d} \sizeOf{\Phi} + d\exp_{4d+2}^{d-1}(k)))$.

During $i$-th iteration we will spend time:
\begin{itemize}
    \item $\bigoh(|\Phi^i|^3 + \sizeOf{\Phi^i})$ in \Cref{lem:unit_clause}
    \item $\bigoh(\sizeOf{\Phi^i})$ in \Cref{lem:greedy_long}
    \item $\bigoh(|Y^i_{\phi^i}| \cdot \sizeOf{\phi^i}) = \bigoh(d |\Phi^i| \cdot \sizeOf{\Phi^0}) = \bigoh(d 2^d |\Phi^i|\cdot\sizeOf{\Phi})$ in \Cref{lem:greedy_long}, since $\sizeOf{\phi^i}\leq \sizeOf{\Phi^0}$
\end{itemize}
Moreover, we spend $\bigoh(2^d \sizeOf{\Phi})$ time for constructing $\Phi'$ ($\Phi^0$).
Since $|\Phi^i|\leq |\Phi^{i+1}|$, we obtain the following time complexity:
\begin{align*}
\bigoh(2^d \sizeOf{\Phi} + \sum_{i \in [i_{\text{last}}]}|\Phi^i|^3 + \sizeOf{\Phi^i} + \sizeOf{\Phi^i} + d 2^d |\Phi^i|\cdot\sizeOf{\Phi}) = \\
\bigoh(\sum_{i \in [i_{\text{last}}]}|\Phi^i|^3 + d 2^d |\Phi^i|\cdot\sizeOf{\Phi}) = \\
\bigoh(\sum_{i \in [i_{\text{last}}]}(2d+1)^{3i}k + d 2^d (2d+1)^i\sizeOf{\Phi}) = \\
\bigoh((2d+1)^{3i_{\text{last}}}k + d 2^d (2d+1)^i_{\text{last}}\sizeOf{\Phi}) = \\
\bigoh((\exp_{4d+2}^{d-1}(k))^3 + d 2^d \exp_{4d+2}^{d-1}(k)\sizeOf{\Phi})
\end{align*}
\end{proof}
 
\subsection{Applications for Enhanced (Deletion) Backdoors}\label{ssec:mappenh}

Here, we show the algorithmic consequences of the elimination of
 guarded universal sets method provided in
\Cref{th:new_unmixed_qbf} as overviewed in \Cref{ssec:enh}. We start
by providing the proofs for \Cref{thm:backdoorunmixed},
\Cref{the:extendableclasses}, and \Cref{the:extdel} from \Cref{ssec:enh}.
Recall that a class $\CCC$ of
$\QBF$ formulas is \emph{(deletion) extendable} if deciding the validity of
any $\QBF$ formula is
fixed-parameter tractable parameterized by the size of a given (deletion)
backdoor $B$ to $\CCC$.

\theextendableclasses*

\begin{proof}
  \ref{DCo} follows directly from \Cref{thm:introtractcnf}.

  Towards showing~\ref{DCt}, let $\Phi$ be a $\QBF$ formula with a
  backdoor $B$ to the class $\DDD_{\exists}(\HORN)$ of existential $\QBF(\HORN)$
  formulas. Then, using \Cref{lem:from-backdoor-to-disjunct-qe}, we can transform $\Phi$ into
  a $k$\hy\textsc{Disjunct}\hy$\QBF(\DDD_{\exists}(\HORN))$ (where $k$ is bounded in $|B|$) formula $\Phi'$ in FPT-time parameterized
  by $|B|$ such that every disjunct of $\Phi'$ is an existential
  $\QBF(\HORN)$ formula. We can then solve this in polynomial-time by
  solving every disjunct of $\Phi'$ using the algorithm for
  propositional \HORN{} formulas.
  
  Towards showing~\ref{DCh}, let $\Phi$ be a $\QBF$ formula with a
  deletion backdoor $B$ to the class of $\QBF$ formulas with at most
  $q$ quantifier alternations and incidence treewidth at most
  $\omega$. Then, the number of quantifier alternations of $\Phi$ is bounded by (a function of) $|B|+q$, and the incidence treewidth is at most $|B|+\omega$. It is therefore
  fixed-parameter tractable parameterized by $|B|$ because of the well
  known result that $\QBF$ is fixed-parameter tractable parameterized
  by the number of quantifier alternations plus the incidence
  treewidth~\cite{DBLP:conf/stacs/CapelliM19}.

  Towards showing~\ref{DCf}, let $\Phi$ be a $\QBF$ formula with a
  deletion backdoor $B$ to the class of $\QBF$ formulas with incidence
  vertex cover number at most $s$. Then, $\Phi$ has incidence vertex cover
  number at most $|B|+s$ and fixed-parameter tractability by $|B|$
  follows from a known result showing fixed-parameter tractability of
  $\QBF$ parameterized by incidence vertex cover number~\cite{DBLP:conf/lics/FichteGHSO23}.
\end{proof}

We now provide a proof for \Cref{the:extdel}, which uses our
elimination of guarded universal sets method to show that
enhanced (deletion) backdoor evaluation to any (deletion) extendable class
is fixed-parameter tractable
parameterized by the enhanced (deletion) backdoor size plus the
maximum number of variables from the guarded universal set appearing
in any constraint. Recall that $B \subseteq V(\Phi)$ is an \emph{enhanced
  (deletion) backdoor} to $\CCC$ for $\Phi$ if $B\cup
\RU(\Phi,B)$ is a (deletion) backdoor to $\CCC$ for $\Phi$, for a class $\CCC$ of $\QBF$
formulas. Here $\RU(\Phi,B)$ denotes the union of all 
components in $\primG(\Phi)-B$ that contain only universal variables.

\theextdel*

\begin{proof}
Let $Y=\RU(\Phi,B)$.
  Note that $\delta(Y) \subseteq B$ and therefore we can employ
  \Cref{th:new_unmixed_qbf} to compute a partial assignment $\beta$ of
  $Y$ such that $\Phi\Leftrightarrow\Phi[\beta]$ and
  $|Y\setminus V(\beta)|$ is bounded by a function, say $g$, of
  $|\delta(Y)|+\SA(\Phi,Y)$ in FPT-time parameterized by
  $|\delta(Y)|+\SA(\Phi,Y)$. But then, $B\cup (Y\setminus V(\beta))$
  is a (deletion) backdoor to $\CCC$ for $\Phi[\beta]$ of size at most
  $|B|+g(|\delta(Y)|+\SA(\Phi,Y))$, which because $\CCC$ is (deletion)
  extendable
  implies that deciding the validity of $\Phi[\beta]$ is
  fixed-parameter tractable parameterized by
  $|B|+|\delta(Y)|+\SA(\Phi,Y)$. Therefore, altogether, deciding the
  validity of $\Phi$ is fixed-parameter tractable parameterized
  by $|B|+|\delta(Y)|+\SA(\Phi,Y)$, which because $|\delta(Y)|\leq |B|$
  completes the proof of the theorem.
\end{proof}
In order to use enhanced (deletion) backdoors (for the classes
provided in \Cref{the:extendableclasses}), we also need to be able to
find these backdoors efficiently. Using standard branching in
combination with important separators~\cite[Section 8]{DBLP:books/sp/CyganFKLMPPS15} allows us to obtain the
following FPT-algorithm for detecting enhanced backdoors into
syntactically defined classes (see Section~\ref{ssec:mfindenh} for proof).

\begin{restatable}{theorem}{theenhdetcon}\label{the:enhdet-con}
  Let $\CCC\in \{\DDD_q(\TCNF),\DDD_q(d\hy\AFF),\DDD_{\exists}(\HORN)\}$.
  Let $\Phi$ be a $\QBF$ and let $k$ be an integer. There is an
  FPT-algorithm parameterized by $k$ that computes an enhanced
  backdoor $B$ to $\CCC$ for $\Phi$ of size at most $k$ if such an enhanced backdoor exists and
  otherwise returns no.
\end{restatable}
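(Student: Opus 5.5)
We need an FPT algorithm (parameter $k$) that detects an enhanced backdoor of size at most $k$ to $\CCC\in\{\DDD_q(\TCNF),\DDD_q(d\hy\AFF),\DDD_{\exists}(\HORN)\}$. The plan has three parts: branch on clauses whose variables need not all lie in the backdoor, reduce the remaining choices to cuts in the primal graph, and enumerate those cuts with important separators.

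\emph{Reformulation.} A set $B$ is an enhanced backdoor exactly when $S:=B\cup\RU(\Phi,B)$ is a (strong) backdoor to $\CCC$. For all three classes, being a backdoor is characterised by a family of small \emph{obstructions}:
\begin{itemize}
\item for $\TCNF$, a clause with at least $3$ literals, of which all but at most two variables must be in $S$;
\item for $\HORN$, a pair of positive literals in one clause, at least one of whose variables must be in $S$;
\item for $d\hy\AFF$, an atom that is not affine or has arity $>d$ after assignment; as in~\cite{DBLP:conf/dagstuhl/GaspersOS17}, this gives obstructions of size bounded by a function of the arity;
\item for $\DDD_{\exists}(\HORN)$ additionally every universal variable, which must be in $S$;
\item for $\DDD_q$, the quantifier alternation count among variables outside $S$, which must be at most $q$.
\end{itemize}
Each obstruction $O$ requires that some variable of $O$ lies in $S$. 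A variable $v$ lies in $S$ iff $v\in B$, or $v$ is universal and $B$ separates $v$ from every existential variable in $\primG(\Phi)$.

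\emph{Branching to separation.} For each obstruction we branch on which of its boundedly many variables is placed in $S$, and for that variable on whether it is put into $B$ or is required to be separated. A chosen existential variable must go into $B$. Every branch of the first kind decreases $k$. Branches of the second kind do not decrease $k$, so we collect the required separated universal variables in a set $T$. It then remains to find $B$ with $|B|\le k$ separating $T$ from $V_\exists(\Phi)$. This is the classical bounded-cut step.

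\emph{Bounding the branching.} Obstructions are handled iteratively: we always pick one not yet hit by $B\cup T$. For the alternation condition of $\DDD_q$, a minimal violating alternating chain of $q+2$ variables is an obstruction of bounded size, so the same scheme applies. The difficulty is that adding to $T$ is free, so we must show the total branching is bounded. The key observation is that if $t\in T$, then the whole component of $t$ in $\primG(\Phi)-B$ lies in $S$. So we replace $T$ by its \emph{furthest} important $(T,V_\exists)$-separator.

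\emph{Main obstacle.} The hard step is making the enumeration finite. Whenever we add a universal variable to $T$, we enumerate all important $(T,V_\exists(\Phi))$-separators of size at most $k$; there are at most $4^k$ by~\cite[Section 8]{DBLP:books/sp/CyganFKLMPPS15}. We then guess one as the current part of $B$ on the $T$-side and measure progress by the standard potential $2k-\lambda(T,V_\exists)$. A pushing argument should show that some optimal solution uses an important separator: replacing the solution's cut around $T$ by a furthest one only enlarges the $T$-side, hence enlarges $\RU$ and keeps every obstruction hit. Each branch either increases $\lambda$ or decreases the budget, so the depth is $O(k)$. Checking this pushing argument for the obstruction definitions, especially the alternation condition, is where I expect the real work to be. If it fails for $\DDD_q$, I would fall back on the randomized contraction / unbreakability framework~\cite{chitnis2016designing,cygan2020randomized,LokshtanovR0Z18}.
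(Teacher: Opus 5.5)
Your architecture is the paper's. Both build a search tree that repeatedly picks a bounded obstruction in $\Phi-(B'\cup\RU(\Phi,B'))$: three variables of a long clause, two positive literals, one variable from each of the outermost $q+2$ blocks (your alternating chain), or a remaining universal variable for $\DDD_{\exists}(\HORN)$. For each variable $x$ of the obstruction, you either add $x$ to $B'$ or, if $x$ is universal, add an important $x$--$V_\exists(\Phi)$ separator. Correctness rests on a pushing lemma, \Cref{lem:enhdetimsep}. As written, though, the proposal does not close, for two reasons.

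\textbf{The pushing step is only announced.} It is the entire correctness argument, yet you leave it as ``should show'', doubt it for the alternation condition, and keep unbreakability as a fallback. In fact it holds for every class and needs no per-obstruction check. Let $G=\primG(\Phi)$, let $B_0\subseteq B$ be the current label, let $x\in\RU(\Phi,B)$, and let $S\subseteq B\setminus B_0$ be an inclusion-wise minimal $x$--$V_\exists(\Phi)$ separator in $G-B_0$.
\begin{enumerate}
\item Take an important separator $S'$ with $|S'|\le|S|$ whose reach set $R=R_{G-(B_0\cup S')}(x)$ contains $R_{G-(B_0\cup S)}(x)$. Put $B''=(B\setminus S)\cup S'$, so $B''\supseteq B_0\cup S'$ and $|B''|\le|B|$.
\item By minimality, every vertex of $S\setminus S'$ has a neighbour in $R_{G-(B_0\cup S)}(x)\subseteq R$, so it lies in $R$. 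Also, $R$ contains no existential variable.
\item A component of $G-B''$ that meets $R$ lies inside $R$. A component that avoids $R$ is disjoint from $B$, so it lies inside a component of $G-B$.
\item Hence $B\cup\RU(\Phi,B)\subseteq B''\cup\RU(\Phi,B'')$.
\end{enumerate}
All classes here are closed under partial assignments, and assigning variables can only merge quantifier blocks. So a superset of a strong backdoor is again one, and $B''$ is an enhanced backdoor. This covers the alternation condition too, and no fallback to unbreakability is needed; the paper uses unbreakability only for $\TTT_q^\omega$ and the vertex-cover class.

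\textbf{Your progress measure is inconsistent.} There are two readings, and only one works.
\begin{itemize}
\item If the guessed important separator $S'$ is computed in $G-B'$ (not in $G$) and added to $B'$ immediately, then $S'\neq\emptyset$. This is because $x$ lies in an unhit obstruction, so $x\notin\RU(\Phi,B')$. Every branch therefore increases $|B'|$, the depth is at most $k$, and the fan-out is at most the obstruction size times $1+4^{k}$. The accumulated set $T$ and the potential $2k-\lambda(T,V_\exists)$ are then superfluous. This committed version is exactly the paper's.
\item If instead $x$ is merely added to $T$, your claim that such a branch increases $\lambda(T,V_\exists)$ is false. An arbitrary obstruction variable need not raise the min-cut value. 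That potential works only when the new terminal is taken from the furthest minimum cut, which obstruction branching does not provide, so the search would not be depth-bounded.
\end{itemize}
Finally, for $d\hy\AFF$ your obstructions are bounded only in terms of the arity of the atoms. That arity is not part of the parameter, so this case needs an additional argument.
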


\Cref{the:enhdet-con} together with
\Cref{the:extendableclasses,the:extdel} now allows us to obtain the following fixed-parameter tractability
results for \QBF{}.
\begin{corollary}\label{cor:enhdetcont}
  Let $\CCC$ be any class of $\QBF$ formulas in
  $\{\DDD_q(\TCNF),\DDD_q(d\hy\AFF)\}$.
  Then, $\QBF$ is fixed-parameter tractable parameterized by the
  minimum size of an enhanced backdoor to $\CCC$.
\end{corollary}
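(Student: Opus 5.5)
The corollary follows by chaining three earlier results: \Cref{the:enhdet-con} to find a backdoor, \Cref{the:extendableclasses} to know the class is extendable, and \Cref{the:extdel} to evaluate. Let $k$ be the minimum size of an enhanced backdoor to $\CCC$ for the input $\Phi$.

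\emph{Detection.} Run the FPT-algorithm of \Cref{the:enhdet-con} for $k'=0,1,2,\dots$ until it returns an enhanced backdoor $B$ with $|B|\le k'$. This happens at $k'=k$ at the latest, so the total time is at most $k+1$ times an FPT running time in $k$. The algorithm applies to $\DDD_q(d\hy\AFF)$ as well as to $\DDD_q(\TCNF)$.

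\emph{Extendability.} Item~\ref{DCo} of \Cref{the:extendableclasses} covers $\DDD_q(\TCNF)$ and $\DDD_q(\AFF)$. For $d\hy\AFF$, a backdoor to $\DDD_q(d\hy\AFF)$ is in particular a backdoor to $\DDD_q(\AFF)$. Hence evaluation given such a backdoor is FPT by \Cref{thm:introtractcnf}, or by \Cref{the:tract-aff} applied via \Cref{lem:from-backdoor-to-disjunct}. So both classes are extendable.

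\emph{Evaluation.} \Cref{the:extdel} gives FPT evaluation in $|B|+\SA(\Phi,\RU(\Phi,B))$. So it remains to bound $\SA(\Phi,Y)$ with $Y=\RU(\Phi,B)$ by a function of $k$. This is the one non-routine step, because the corollary, unlike \Cref{the:enhresover}, omits the universal-arity parameter.

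The idea is that $B\cup Y$ is a backdoor to a class with bounded clause size. Take any clause $c$ and let $\tau$ assign $B\cup Y$ so that $c$ is not satisfied; that is, falsify every literal of $c$ on $B\cup Y$, and for an affine equation pick values leaving it unsatisfied.
\begin{itemize}
\item For $\TCNF$: the residual clause $c[\tau]$ must have at most $2$ literals, so $c$ has at most $|B\cup Y|+2$ variables. This gives no bound in $k$ alone, since $Y$ is unbounded.
\item Refinement for $\TCNF$: if $c[\tau]$ is the empty clause for every such $\tau$, the formula is simply false under $\tau$. Empty clauses are allowed in $\TCNF$, so this does not bound $|V(c)\cap Y|$.
\end{itemize}

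So the bound must come from preprocessing instead. The fix I would try first is to replace any clause $c$ with $|V(c)\cap Y|\ge 2$ by its universal reduction on $Y$. Such a clause contains universal variables of $Y$, which lie in purely universal components and have no existential neighbour outside $B$. Then:
\begin{itemize}
\item For CNF, I would use standard universal reduction, removing universal literals that come after all existential literals. I would need to check that this applies, possibly after \Cref{th:new_unmixed_qbf}-style reasoning.
\item For affine, I would use the analogous fact that an equation containing an innermost-in-clause universal variable is falsifiable by the universal player.
\end{itemize}
If this preprocessing fails, the fallback is to argue directly that in \Cref{the:extdel} the dependence on $\SA$ can be replaced by $q$ and $d$. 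For $d\hy\AFF$ we have $\SA\le d$ trivially, since every atom has arity at most $d$. For $\TCNF$, after assigning $B$, every clause of $\Phi[\tau]$ restricted to $Y$ has size at most $2$. I would then run \Cref{th:new_unmixed_qbf} on the disjunctive form, where the relevant arity is the arity within each disjunct; by \Cref{th:new_unmixed_disj}, $\ariti$ is taken over the disjuncts after applying $\disj(\cdot,\delta(Y))$ with $\delta(Y)\subseteq B$, and that arity is at most $2$. I expect this last observation to be the actual argument: the arity seen by \Cref{th:new_unmixed_disj} is bounded by the base class, namely $2$ for $\TCNF$ and $d$ for $d\hy\AFF$.
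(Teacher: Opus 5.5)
Your detection step (iterating \Cref{the:enhdet-con} over $k'$) and your extendability step are fine and match the paper's. The gap is in the evaluation step, exactly where you located it, and none of your remedies closes it.

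\emph{Universal reduction.} It cannot drop a literal on $y\in\RU(\Phi,B)$ when the clause's existential variables, which lie in $B$, are quantified after $y$. A prefix of the form $\forall Y\exists Z$ already blocks it.

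\emph{The claim that $\SA\le d$ for $d\hy\AFF$ because every atom has arity at most $d$.} This confuses $\Phi$ with $\Phi[\tau]$. The atoms lie in $d\hy\AFF$ only after all of $B\cup\RU(\Phi,B)$ has been assigned, not in $\Phi$ itself.

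\emph{Your final argument via $\disj$.} The map $\disj(\cdot,\delta(Y))$ instantiates only $\delta(Y)\subseteq B$. Each disjunct therefore keeps every $Y$-literal of every clause, so the $\ariti$ seen by \Cref{th:new_unmixed_disj} is exactly $\SA(\Phi,Y)$. The base class places no restriction on clauses inside $B\cup Y$: once $B\cup Y$ is assigned, such a clause becomes true or empty, whatever its width.

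The paper's proof handles this step with the bare assertion that every constraint of $\Phi$ has arity at most $|B|+2$ (resp.\ $|B|+d$). That is precisely the inference you rightly rejected, and it is false. It already fails for $\forall y_1\cdots y_n\exists e.\,(y_1\lor\cdots\lor y_n\lor e)$ with $B=\{e\}$, where $|B|=1$ but the clause has $n+1$ literals.

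Worse, no argument can close the step. Take a \textsc{Multicolored Independent Set} instance $(G,k)$ with classes $V_1,\dots,V_k$. For every edge $uv$ between different classes (oriented arbitrarily), introduce a universal variable $y_{uv}$, and put $y_{uv}$ into a clause $c_u$ and $\lnot y_{uv}$ into $c_v$. Let $\phi_i=\bigwedge_{v\in V_i}c_v$. An assignment of $Y$ falsifies every $\phi_i$ iff it falsifies some $c_{v_i}$ for each $i$, and $c_u,c_v$ can be falsified together iff $uv\notin E(G)$. Hence $\forall Y.\bigvee_i\phi_i$ is false iff $(G,k)$ is a yes-instance. \Cref{lem:from-disjunct-to-backdoor} turns this into an equivalent CNF formula $\Phi'=\forall Y\exists Z.\,\phi'$ with $|Z|=\lceil\log k\rceil$. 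Deleting $Z$ from the primal graph leaves only universal variables, so $Z\cup\RU(\Phi',Z)$ is the set of all variables. After assigning everything, each clause is true or empty, so $Z$ is an enhanced backdoor to both $\DDD_0(\TCNF)$ and $\DDD_0(d\hy\AFF)$.

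The statement as written therefore implies $\mathrm{FPT}=\mathrm{W}[1]$. It needs the extra parameter ``maximum number of universal variables in a constraint'', as in the other items of \Cref{the:enhresover}. With that parameter your three-step chain, which is also the paper's, goes through verbatim, because $\SA(\Phi,\RU(\Phi,B))$ is bounded by it.
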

\begin{proof}
  Let $\Phi$ be a \QBF{}
  formula. Using \Cref{the:enhdet-con},
  we can compute an enhanced backdoor $B$ of $\Phi$ to $\CCC$ of minimum size in FPT-time
  parameterized by the size of $B$. Moreover, because of \Cref{the:extendableclasses}, it follows that $\CCC$
  is extendable and therefore using \Cref{the:extdel}, we can evaluate
  $\Phi$ in FPT-time with respect to $|B|+\SA(\Phi,Y)$, where
  $Y=\RU(\Phi,B)$. Because $\SA(\Phi,Y)$ is bounded by the maximum
  arity of any constraint in $\Phi$ and the maximum arity of any constraint in
  $\Phi$ is at most $|B|+2$ for the case that $\CCC=\DDD_q(\TCNF)$ and at most
  $|B|+d$ for the case that $\CCC=\DDD_q(d-\AFF)$, the result follows.
\end{proof}

\begin{corollary}\label{cor:enhdetconh}
  $\QBF$ is fixed-parameter tractable parameterized by the
  minimum size of an enhanced backdoor to
  $\DDD_{\exists}(\HORN)$ plus the maximum number of universal
  variables in any constraint.
\end{corollary}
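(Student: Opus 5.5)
The proof follows the same template as \Cref{cor:enhdetcont}: detect a small enhanced backdoor, then evaluate it via \Cref{the:extdel}.

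Let $\Phi$ be the input formula and let $r$ denote the maximum number of universal variables in any constraint of $\Phi$.

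First, by \Cref{the:enhdet-con} applied with $\CCC=\DDD_{\exists}(\HORN)$, we compute an enhanced backdoor $B$ to $\DDD_{\exists}(\HORN)$ of minimum size. We do this by running the detection algorithm for $k=0,1,2,\dots$ until it succeeds. The total time is FPT in the minimum size $k^*$, since we make at most $k^*+1$ calls, each FPT in its own parameter $k\le k^*$.

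Second, by item~\ref{DCt} of \Cref{the:extendableclasses}, the class $\DDD_{\exists}(\HORN)$ is extendable. So \Cref{the:extdel} decides the validity of $\Phi$ in FPT-time parameterized by $|B|+\SA(\Phi,\RU(\Phi,B))$.

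It remains to bound this parameter by a function of $k^*+r$. We have $|B|=k^*$. Every variable of $\RU(\Phi,B)$ is universal, because $\RU(\Phi,B)$ is by definition a union of purely universal components of $\primG(\Phi)-B$. Hence every constraint $c$ satisfies $|V(c)\cap \RU(\Phi,B)|\le r$, and therefore $\SA(\Phi,\RU(\Phi,B))\le r$.

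Combining the two steps gives running time $f(k^*+r)\cdot|\Phi|^{O(1)}$, which proves the corollary.

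The only point needing care is that, unlike the $\TCNF$ and $d$-$\AFF$ cases, Horn clauses have unbounded arity. This is why the arity term cannot be absorbed into $|B|$, and why the extra parameter $r$ is needed. Beyond that, the argument is just composition of the earlier results.
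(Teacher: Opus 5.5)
Your proof is correct and follows the paper's argument: compute a minimum enhanced backdoor via \Cref{the:enhdet-con}, use extendability of $\DDD_{\exists}(\HORN)$ together with \Cref{the:extdel}, and bound $\SA(\Phi,\RU(\Phi,B))$ by the maximum number of universal variables per constraint, since $\RU(\Phi,B)\subseteq V_\forall(\Phi)$. Your iteration over $k=0,1,2,\dots$ simply spells out the minimum-size computation that the paper leaves implicit.
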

\begin{proof}
  The proof is analogous to the proof of \Cref{cor:enhdetcont} after
  observing that $\SA(\Phi,Y)$ is at most equal to the maximum number
  of universal variables in any constraint.
\end{proof}

We are now able to show the algorithmic results of \Cref{thm:backdoorunmixed} from
\Cref{ssec:enh}, which are an almost immediate consequence of~\Cref{cor:enhdetcont,cor:enhdetconh}.

\thmbackdoorunmixed*
\begin{proof}
  Let $\Phi$ be a \QBF{} formula and $B$ be a backdoor to unmixed
  $\CCC$. We first show that $\Phi-B$ is an unmixed formula. Suppose
  not, and let $c$ be a constraint (or clause) of $\Phi$ such that
  $V(c)-B$ is not unmixed. Let $\alpha$ be any assignment of the
  variables in $B$ such that $c[\alpha]$ is not trivially satisfied
  (which exists because otherwise $c$ would have been trivial and
  could therefore have been removed from $\Phi$ in the
  beginning). Then, $\Phi[\alpha]$ is not unmixed, because it contains
  $c[\alpha]$. It now follows that $\RU(\Phi,B)$ contains all
  universal variables of $\Phi$ and therefore $B$ is an enhanced
  backdoor for $\Phi$ to existential $\CCC$. In other words, we have
  just shown that $B$ is an enhanced backdoor to \QBF{} formulas in $\CCC$ with $0$
  quantifier alternations. Therefore, the result of the
  theorem for the cases $\CCC \in \{\TCNF,d\hy\AFF\}$ follows from
  \Cref{cor:enhdetcont} and for the case that $\CCC=d\hy\HORN$, the
  result follows from \Cref{cor:enhdetconh} together with the
  observation that any constraint in $\Phi$ has size at most $|B|+d$,
  because $B$ is a backdoor to $d\hy\HORN$.
\end{proof}

Compared to the detection of enhanced backdoors to classes defined via
constraint languages, the detection of enhanced deletion backdoors to
the two structural base classes $\TTT_q^\omega$ and $\VC_s$ turns out
to be quite a bit more involved. In particular, in addition to
standard branching and important separators (both are mainly used to
reduce the number of quantifier alternations for the base class
$\TTT_q^\omega$), we now also need to employ the framework of
unbreakability~\cite{LokshtanovR0Z18}. Using this we obtain the following
results for detecting enhanced deletion backdoors
into these classes (see Section~\ref{ssec:mfindenh} for proofs).
\begin{restatable}{theorem}{theenhdetstrutw}\label{the:enhdet-stru-tw}
  Let $\Phi$ be a $\QBF$ formula and let $k$ be an integer. There is an
  FPT-algorithm parameterized by $k$ that either computes an enhanced
  deletion backdoor $B$ to $\TTT_q^\omega$ for $\Phi$ of size at most $2k$ if such
  an enhanced backdoor of size at most $k$ exists and
  otherwise returns no.
\end{restatable}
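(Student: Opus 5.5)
The plan is to detect a set $B$ with $|B|\le 2k$ such that $B\cup \RU(\Phi,B)$ is a deletion backdoor to $\TTT_q^\omega$. This means two things. First, $\Phi-(B\cup \RU(\Phi,B))$ must have incidence treewidth at most $\omega$. Second, the quantifier prefix restricted to the remaining variables must have at most $q$ alternations.

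Write $S$ for an optimal enhanced deletion backdoor of size $k$. The remaining variables $V\setminus(S\cup \RU(\Phi,S))$ are exactly those lying in components of $\primG(\Phi)-S$ that contain an existential variable. The difficulty is that these components, and hence the treewidth condition, depend on $S$ in a non-local way.

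I would split the task into an alternation part and a treewidth part, and handle them in that order.

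\emph{Alternation part.} View the prefix as a sequence of blocks. Surviving universal variables must lie in a component of $\primG(\Phi)-S$ together with some existential variable. To keep the number of alternations at most $q$, at most $q+1$ groups of blocks may contain survivors. I would first guess, in FPT-time in $k+q$, the block indices where the surviving prefix changes quantifier. Which universal variables survive is then determined by separation properties. In the forbidden blocks, every universal variable must be separated by $S$ from all existential variables, and every existential variable must be put into $S$ outright. Since $|S|\le k$, the latter forces at most $k$ existential variables in forbidden regions, so we can branch on them.

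The universal part is a vertex-cut constraint in $\primG(\Phi)$ between a terminal set of universal vertices $U$ and a terminal set of existential vertices $E$. Here I would use important separators (\cite[Section 8]{DBLP:books/sp/CyganFKLMPPS15}). Among all $(U,E)$-separators of size at most $k$, we may take one that is furthest from $U$. Pushing the separator towards $E$ only enlarges the universal-only side, and so only deletes more. This is the standard pushing argument. Branching over the at most $4^k$ important separators then enumerates a bounded set of candidates. This yields a bounded number of branches, in each of which part of $S$ is fixed. The alternation condition then becomes automatic for the remaining freedom, provided we keep the invariant that the chosen separator stays in $S$.

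\emph{Treewidth part, which I expect to be the main obstacle.} We still need $S' \supseteq$ (current partial solution), with $|S'|\le k$, such that the graph obtained by deleting $S'$ together with all universal-only components of the incidence graph has treewidth at most $\omega$.

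This is a treewidth-deletion problem with an extra "free deletion of universal-only components" rule, and it is not directly covered by known treewidth-$\eta$ deletion algorithms. I would use the unbreakability framework \cite{LokshtanovR0Z18,chitnis2016designing,cygan2020randomized}. Either the incidence graph has a balanced separation of order at most $f(k,\omega)$, which we recurse on via randomized contractions and boundaried dynamic programming, or the graph is $(f(k),k)$-unbreakable.

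In the unbreakable case, after deleting $S'$ all but a small number of vertices lie in one giant component. If that component contains an existential variable, it must itself have treewidth at most $\omega$. In that case the problem reduces to treewidth-$\omega$ deletion with a bounded exceptional part, which we can guess, and which is FPT. If the giant component is purely universal, it is removed entirely. Then only small pieces remain, and they can be brute-forced.

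The factor $2$ in the output size comes from combining the separator fixed in the alternation step with the solution found in the treewidth step. I would tolerate this loss rather than insist on exactness, in keeping with the approximate guarantee in the statement. The subtle point to verify is that the boundary-gluing in the recursive step respects the non-local component rule. I would handle this by recording, for each boundary vertex, whether its component contains an existential variable as part of the DP state.
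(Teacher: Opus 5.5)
Your overall architecture matches the paper's proof. You find an enhanced backdoor for the alternation condition alone (branching plus important separators) and one for the treewidth condition alone (unbreakability), and output their union. The union is valid because $B_1\cup B_2\cup U(\Phi,B_1\cup B_2)\supseteq B_i\cup U(\Phi,B_i)$, and both conditions survive deleting more variables.

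The genuine gap is the treewidth step. Reducing a general instance to an unbreakable one is where all the difficulty lies, and you leave it as ``the subtle point to verify''. The state you propose, one flag per boundary vertex saying whether its component contains an existential variable, would not suffice, for two reasons:
\begin{itemize}
\item Whether an interior piece is deleted as purely universal depends on which boundary vertices are joined through the interior and on what lies outside. A per-vertex flag records neither.
\item The interior's effect on the treewidth of the glued surviving graph is not a per-vertex quantity either.
\end{itemize}
The missing idea is that the whole condition is expressible in MSO on $\primincG(\Phi)$: $\exists B\,\exists Y$ with $|B|\le k$, $Y\subseteq V_\forall\setminus B$, no primal edge between $Y$ and $V(\Phi)\setminus(Y\cup B)$, and the incidence part minus $Y\cup B$ of treewidth at most $\omega$. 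The last property is minor-closed, hence MSO-definable. By monotonicity you need not force $Y=U(\Phi,B)$. Theorem~\ref{thm:MSOmodelchecking} then performs the recursion for you, and only the $(s(k),k)$-unbreakable case needs an algorithm.

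Even in the unbreakable case, ``guessing the bounded exceptional part'' and ``brute-forcing the small pieces'' cost $n^{s(k)}$ and are not FPT. The paper (Lemma~\ref{lem:FPT-UNB}) uses random separation instead. It colours the variables red or blue, with $R$ the red set, so that with probability $2^{-(s(k)+k)}$ the backdoor $B$ is red and the small side is blue. Then:
\begin{itemize}
\item If the small side is the surviving part, it equals the union $X$ of the components of $\primG(\Phi)-R$ that contain an existential variable. In that case $\delta(X)$ together with the remaining existential variables is a solution.
\item If the small side is $U(\Phi,B)$, you take the boundary of the purely universal components of $\primG(\Phi)-R$ whose incidence treewidth exceeds $\omega$. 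You then complete it with a treewidth-$\omega$ modulator.
\end{itemize}

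A smaller point concerns your alternation step. Guessing where the surviving prefix changes quantifier ranges over $n^{O(q)}$ choices, because the number of blocks is unbounded. This is polynomial for the fixed class $\TTT_q^\omega$, so the theorem is unaffected, but it is not ``FPT in $k+q$'' as you claim. Lemma~\ref{lem:EB-QA-B} obtains $((q+2)(2+4^k))^k$ branches instead. It observes that one of the outermost $q+2$ blocks of $\Phi-(B'\cup U(\Phi,B'))$ must vanish entirely, and branches on any one of its variables. That variable either goes into $B$ or, if it is universal, is cut off from $V_\exists$ by one of at most $4^k$ important separators.
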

\begin{restatable}{theorem}{theenhdetstruvc}\label{the:enhdet-stru-vc}
  Let $\Phi$ be a $\QBF$ formula and let $k$ be an integer. There is an
  FPT-algorithm parameterized by $k$ that either computes an enhanced
  deletion backdoor $B$ to $\VC_s$ for $\Phi$ of size at most $k$ if such
  an enhanced backdoor of size at most $k$ exists and
  otherwise returns no.
\end{restatable}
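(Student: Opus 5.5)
Detection of enhanced deletion backdoors to $\VC_s$ reduces to two conditions on $B$. First, the set $Y=\RU(\Phi,B)$ must be the union of the purely universal components of $\primG(\Phi)-B$. Second, $\Phi-(B\cup Y)$ must have incidence vertex cover number at most $s$. Let $R$ be the union of the components of $\primG(\Phi)-B$ that contain an existential variable. Then $\Phi-(B\cup Y)$ is the formula restricted to $R$, with the constraints touching $R$ trimmed. Its incidence graph has a vertex cover of size at most $s$, so it has at most $s$ "big" vertices, and every other edge is covered by them. In particular, each component of $R$ is either tiny or attached to one of the cover vertices. Since each cover vertex is one of at most $s$ variables or clauses, $R$ consists of at most $s$ nontrivial pieces. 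All other components of $R$ are isolated variables lying in no remaining constraint of arity at least one within $R$.

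I would first guess, by branching in $2^{\bigoh(s)}$ ways, the "shape" of the cover. In the incidence graph of the residual formula, the cover is a set $S_V$ of at most $s$ variables together with a set $S_C$ of at most $s$ trimmed clauses. Every other clause, after deleting the variables of $B\cup Y$, may only contain variables from $S_V$. Every variable outside $S_V$ appears (within $R$) only in clauses of $S_C$. The goal is to choose $B$ of size at most $k$ such that this holds. I would design a branching algorithm with measure $k+s$. Take any clause $c$ whose residual (variables not yet in the backdoor and not universal-separated) contains two variables $u,v$ that are both existential-reachable. At least one of the following must happen: $u\in B$, $v\in B$, $u\in S_V$, $v\in S_V$, or $c\in S_C$. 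Each option decreases $k$ or the remaining cover budget $s$, giving a bounded search tree of size $5^{k+s}$. The subtle part is the phrase "existential-reachable". A variable may later end up in $Y$ if $B$ separates it from all existential variables, and then it is deleted for free. So the branching must only be applied to obstructions that survive regardless of future separation.

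To handle this I would use the unbreakability framework. First, apply the randomized contractions / recursive understanding of Chitnis et al.\ and Lokshtanov et al.\ to reduce to the case where $\primG(\Phi)$ is $(q,k)$-unbreakable for $q$ bounded in $k+s$. In that case any separator $B$ of size at most $k$ leaves one huge side and only small sides of total size at most $q$. Two cases arise. If the huge side is purely universal, it becomes part of $Y$; the remainder has size bounded by $q+k$ and can be brute-forced. If the huge side contains an existential variable, it belongs to $R$, and only the at most $q$ small vertices can be universal components. These can be guessed by color coding: randomly separate the vertices, or use a splitter family, so that the small universal components are colored and then removed. After that step "existential-reachable" is fixed, and the branching above on the residual formula is straightforward. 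Finally, I would lift solutions from unbreakable parts through the recursive understanding procedure in the standard way. Boundaried pieces are replaced by representatives, keeping for each boundary-and-budget type the best partial solution, which is valid because the property "residual vertex cover at most $s$, pieces pure universal" is finite-state under gluing along small boundaries.

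The main obstacle will be this gluing step. I need to verify that the problem is expressible, or at least has finite index, for boundaried graphs. The universal/existential label of a component depends on the global structure, so the boundary type must record which boundary vertices are connected to existential variables inside the piece, and how much of the cover budget each piece consumes. Recording connectivity patterns and budget usage over at most $k$ boundary vertices gives a bounded number of types, which is what I would check carefully first. If that works, the overall algorithm runs in FPT time in $k$ (with $s$ fixed), returning a backdoor of size at most $k$ or correctly reporting that none exists.
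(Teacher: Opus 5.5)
Your framework is the paper's: reduce to unbreakable instances, then use color coding. The gap is exactly where you flag it, in lifting the algorithm from unbreakable to general inputs. You leave the finite-index/gluing claim unproved. The boundary type you sketch (which boundary vertices reach an existential variable, how much cover budget a piece uses) is not evidently sufficient. One also has to record how boundary vertices are connected through the piece, and how boundary variables and clause vertices split among $B$, $Y$ and the residual cover. One must also show that a representative preserves all of this. Moreover, you make $\primG(\Phi)$ unbreakable, but the vertex-cover condition refers to clauses, so recursive understanding on the primal graph alone does not capture it.

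The missing idea is that none of this needs to be done by hand. The property is expressible by a short MSO sentence on $\primincG(\Phi)$, using unary labels for variable/clause and universal/existential. It asserts that there are sets $B$ and $Y$ with $|B|\le k$, $Y\subseteq V_\forall\setminus B$, no primal edge between $Y$ and the variables outside $Y\cup B$, and a vertex cover of size at most $s$ in the incidence graph minus $Y\cup B$. Requiring only $Y\subseteq \RU(\Phi,B)$ suffices, since the class is closed under subformulas. Then \Cref{thm:MSOmodelchecking} lifts any FPT algorithm on $(s(k),k)$-unbreakable graphs to general graphs. \Cref{lem:primal_unbreakable} transfers unbreakability from $\primincG(\Phi)$ to $\primG(\Phi)$, and self-reducibility turns the decision procedure into one that outputs $B$.

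Two steps of your unbreakable case also fail as written.

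\emph{Universal side large.} The remainder $V(\Phi)\setminus\RU(\Phi,B)$ is small but unknown, and brute force over candidate sets of size $q+k$ is only XP. Instead, color randomly so that $B$ is red and $V(\Phi)\setminus(\RU(\Phi,B)\cup B)$ is blue. Let $X$ be the union of the components of $\primG(\Phi)-\mathrm{Red}$ that contain an existential variable; then $X$ is exactly the existential side. Hence $\delta(X)$, together with the existential variables outside $X\cup\delta(X)$, is an enhanced deletion backdoor contained in $B$.

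\emph{Existential side large.} A coloring with $B$ red and $\RU(\Phi,B)$ blue does not let you remove every purely universal component of $\primG(\Phi)-\mathrm{Red}$. Red vertices outside $B$ can cut existential components into purely universal fragments whose neighbourhoods are not in $B$. Removing such a fragment is either unsound or costs vertices outside $B$. You may remove only components provably forced into $\RU(\Phi,B)$, and then solve a deletion-to-vertex-cover problem on the rest.

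Vertex cover, unlike treewidth, adds up over components, so "forced" must be argued differently. A clause meeting two components of $\primG(\Phi)-\mathrm{Red}$ would join them, so each such component with an incident clause that stays in the residual needs its own cover vertex. Hence at most $s$ of them can lie outside $\RU(\Phi,B)$. This bounds the candidates by $s+|\RU(\Phi,B)|$, so the right ones can be guessed.
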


As a corollary of \Cref{the:enhdet-stru-tw,the:enhdet-stru-vc} together with
\Cref{the:extendableclasses,the:extdel}, we obtain the following
tractability results for \QBF.
\begin{corollary}\label{cor:enhdetstruc}
  Let $\CCC$ be any class of $\QBF$ formulas in $\{\TTT_q^\omega,\VC_s\}$.
  Then, $\QBF$ is fixed-parameter tractable parameterized by the
  minimum size of an enhanced deletion backdoor to $\CCC$ plus the
  maximum number of universal variables in any constraint.
\end{corollary}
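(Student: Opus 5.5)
The plan is to follow the proof of \Cref{cor:enhdetcont} essentially verbatim, with the detection step replaced by \Cref{the:enhdet-stru-tw} or \Cref{the:enhdet-stru-vc}. Let $\Phi$ be the input \QBF{} formula, let $k$ be the minimum size of an enhanced deletion backdoor to $\CCC$, and let $r$ be the maximum number of universal variables in any constraint of $\Phi$. We do not know $k$ in advance, so we try $k'=0,1,2,\dots$ in turn. For each $k'$ we run the detection algorithm: \Cref{the:enhdet-stru-tw} if $\CCC=\TTT_q^\omega$, and \Cref{the:enhdet-stru-vc} if $\CCC=\VC_s$. We stop at the first $k'$ for which it does not answer no. The correctness guarantee (``returns no only if no enhanced deletion backdoor of size at most $k'$ exists'') ensures that we stop at some $k'\le k$. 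At that point we hold an enhanced deletion backdoor $B$ of size at most $2k$ in the treewidth case and at most $k$ in the vertex cover case. The total detection time is at most $k+1$ calls, each FPT in $k'\le k$, so the detection phase is FPT in $k$.

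Next, we evaluate $\Phi$ using $B$. By \Cref{the:extendableclasses} (items \ref{DCh} and \ref{DCf}), both $\TTT_q^\omega$ and $\VC_s$ are deletion extendable, with $q,\omega,s$ treated as fixed constants of the class. \Cref{the:extdel} therefore decides the validity of $\Phi$ in time FPT in $|B|+\SA(\Phi,Y)$, where $Y=\RU(\Phi,B)$.

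It remains to bound the second term of this parameter. Since $Y$ consists only of universal variables, $\SA(\Phi,Y)=\max_{c}|V(c)\cap Y|$ is at most the number of universal variables in $c$, maximised over constraints $c$. Hence $\SA(\Phi,Y)\le r$. Combining both phases, the parameter is at most $2k+r$, which gives an FPT running time in $k+r$.

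There is no real obstacle here; the only point needing care is the approximation in \Cref{the:enhdet-stru-tw}. The backdoor it returns may have size $2k$ rather than $k$, but it is still a genuine enhanced deletion backdoor, so \Cref{the:extdel} applies to it unchanged, and the factor $2$ only affects the function of the parameter.
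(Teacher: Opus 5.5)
Your proposal is correct and follows the paper's proof. Like the paper, you detect a (possibly $2$-approximate) enhanced deletion backdoor via \Cref{the:enhdet-stru-tw} or \Cref{the:enhdet-stru-vc}, combine deletion extendability from \Cref{the:extendableclasses} with \Cref{the:extdel}, and bound $\SA(\Phi,\RU(\Phi,B))$ by the maximum number of universal variables in any constraint; your explicit loop over $k'=0,1,2,\dots$ only makes precise the step the paper leaves implicit, namely how a backdoor of size at most twice the minimum is obtained without knowing the minimum in advance.
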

\begin{proof}
  Let $\Phi$ be a \QBF{} formula. Using \Cref{the:enhdet-stru-tw} (or \Cref{the:enhdet-stru-vc}),
  we can compute an enhanced deletion backdoor $B$ of $\Phi$ to
  $\CCC$, whose size is at most two times the size of a minimum one, in FPT-time
  with respect to $|B|$. Moreover, because of \Cref{the:extendableclasses}, it follows that $\CCC$
  is deletion extendable and therefore using \Cref{the:extdel}, we can evaluate
  $\Phi$ in FPT-time with respect to $|B|+\SA(\Phi,Y)$, where
  $Y=\RU(\Phi,B)$. Because $\SA(\Phi,Y)$ is bounded by the maximum
  number of universal variables occurring in any constraint of $\Phi$, the result follows.
\end{proof}
Note that \Cref{the:enhresover} restated below
is now an immediate consequence of \Cref{cor:enhdetcont,cor:enhdetconh,cor:enhdetstruc}.

\theenhresover*

It remains to show how to efficiently find enhanced (deletion)
backdoors, i.e., \Cref{the:enhdet-con,the:enhdet-stru-tw,the:enhdet-stru-vc}, which we do in
the next subsection.

\subsection{Finding Enhanced (Deletion) Backdoors}\label{ssec:mfindenh}

\newcommand{\nois}[1]{4^{#1}}

This subsection is devoted to a proof of
\Cref{the:enhdet-con,the:enhdet-stru-tw,the:enhdet-stru-vc}. We start
by showing \Cref{the:enhdet-con}, which is restated below.

\theenhdetcon*

We start with introducing some notation and known results about
important separators. Let $G$ be an undirected graph and let $X$ and
$Y$ be two subsets of $V(G)$. An \emph{$X$-$Y$-separator} is
a set $S \subseteq V(G)$ such that the graph $G-S$ has no path connecting
a vertex in $X\setminus S$ and a vertex in $Y\setminus S$. 
Let $R_{G'}(Z)$ denote the set of
vertices reachable from a set $Z$ of vertices in a graph $G'$.
An
\emph{important $X$-$Y$-separator} is
an inclusion-wise minimal $X$-$Y$-separator $S$ such that there is no
$X$-$Y$-separator $S'$ with $|S'|\leq |S|$ and
$R_{G-S}(X\setminus S)\subset R_{G-S'}(X\setminus S')$.
\begin{theorem}[{\cite[Proposition 5.20 and 5.21]{DBLP:books/sp/CyganFKLMPPS15}}]\label{the:impsep}
  Let $G$ be a graph and let $X$ and $Y$ be two subsets of $V(G)$ such
  that there is an inclusion-wise minimal $X$-$Y$-separator $S$ in $G$. Then, the set
  $\mathcal{S}$ of all important $X$-$Y$-separators of
  size at most $|S|$ is non-empty, contains at most $\nois{|S|}$ elements, and can be
  computed in time $\bigoh(|\mathcal{S}||S|^2(|V(G)|+|E(G)|))$.
\end{theorem}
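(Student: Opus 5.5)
This is the classical bound on important separators, and I would prove it with the standard submodularity-plus-branching argument rather than anything specific to QBF. Throughout, write $k=|S|$ and $\lambda=\lambda_G(X,Y)$ for the minimum size of an $X$-$Y$-separator. For a separator $S'$, write $R(S')=R_{G-S'}(X\setminus S')$. Then the boundary $\partial R(S')$, the set of vertices outside $R(S')$ with a neighbour in it, is contained in $S'$, with equality when $S'$ is inclusion-wise minimal.

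\emph{Non-emptiness.} Among all $X$-$Y$-separators of size at most $k$ (this family contains $S$), I would take one, $S^*$, maximising $|R(S^*)|$, and then shrink it to an inclusion-wise minimal subset that is still a separator. Shrinking never decreases the reachable set, so the result is again maximal and is an important separator of size at most $k$. Hence $\mathcal{S}\neq\emptyset$.

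\emph{Key structural step.} I would show that there is a unique minimum separator $S_{\max}$ whose reachable set contains the reachable set of every other minimum separator. The tool is submodularity of $f(R)=|\partial R|$ on sets $R\supseteq X$ with $R\cap Y=\emptyset$, namely $f(A\cup B)+f(A\cap B)\le f(A)+f(B)$. Applying it to the reachable sets of two minimum separators shows that their union again has boundary of size exactly $\lambda$. This $S_{\max}$ is computable in polynomial time: run Ford--Fulkerson on the vertex-split network and take the vertices not able to reach $Y$ in the residual graph. A second submodularity argument then gives the lemma I need: every important separator $S'$ satisfies $R(S_{\max})\subseteq R(S')$. Otherwise replacing $R(S')$ by $R(S')\cup R(S_{\max})$ does not increase the boundary and strictly enlarges the reachable set, contradicting importance.

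\emph{Branching and the $4^k$ bound.} If $S_{\max}=\emptyset$ or $\lambda>k$, the situation is trivial. Otherwise I pick any $v\in S_{\max}$ with $v\notin Y$ and branch. In the first branch $v$ is put into the separator: recurse on $G-v$ with budget $k-1$. In the second branch $v$ is added to $X$: recurse with $X\cup\{v\}$ and budget $k$. By the lemma, every important separator either contains $v$ or has $v$ strictly inside its reachable set, and in each case it remains an important separator of the corresponding subinstance, so the two branches together cover everything. The measure $\mu=2k-\lambda$ strictly decreases in both branches. In the first branch $k$ drops by one and $\lambda$ drops by at most one. In the second branch $\lambda$ strictly increases, because by maximality of $S_{\max}$ no minimum separator can have $v$ in its reachable set. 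Since $\mu\le 2k$ and every leaf outputs at most one separator, there are at most $2^{2k}=4^k$ leaves, giving $|\mathcal{S}|\le 4^{|S|}$. Each node costs $O(k)$ augmenting-path computations, each $O(|V(G)|+|E(G)|)$, plus a final filter discarding non-important outputs, which gives the stated running time.

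The main obstacle is the second submodularity step: proving that all important separators extend $R(S_{\max})$, and hence that the second branch strictly increases $\lambda$. The rest is bookkeeping.
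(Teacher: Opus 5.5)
The paper does not prove this statement; it cites it. The standard textbook proof follows essentially your route: submodularity, the furthest minimum separator $S_{\max}$, and branching on $v\in S_{\max}$ with measure $2k-\lambda$. Your non-emptiness argument is fine as written. The problem is that you implicitly assume terminals can never be deleted. The paper's definition explicitly lets an $X$-$Y$-separator contain vertices of $X$ and $Y$, with reachability measured from $X\setminus S$, and under that definition several of your steps break.

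Take $X=\{x\}$, $Y=\{y\}$ and the single edge $xy$.
\begin{itemize}
\item $\{x\}$ is a minimal separator with $R(\{x\})=\emptyset$, so $\partial R(S')=S'$ fails for minimal $S'$.
\item Reachable sets therefore need not contain $X$, so they fall outside the lattice $\{R\supseteq X,\ R\cap Y=\emptyset\}$ on which you apply submodularity.
\item Here $S_{\max}=\{y\}\subseteq Y$. Your rule ``pick $v\in S_{\max}$ with $v\notin Y$'' cannot handle this, and it is not among your trivial cases.
\end{itemize}
Worse, the branching can stall. Take $X=\{x\}$, $Y=\{y_1,y_2\}$ with edges $xy_1$ and $xy_2$. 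The unique minimum separator is $\{x\}$, so $S_{\max}=\{x\}$ and the only admissible choice is $v=x$. The branch that adds $v$ to $X$ then returns the identical instance. Since separators may still contain $v$, $\lambda$ does not increase and $2k-\lambda$ does not drop.

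What is missing is a reduction to undeletable terminals. Add a vertex $x^\ast$ adjacent to all of $X$ and a vertex $y^\ast$ adjacent to all of $Y$. The $X$-$Y$-separators of $G$ are then exactly the $x^\ast$-$y^\ast$-separators avoiding $\{x^\ast,y^\ast\}$, with reachable set $\{x^\ast\}\cup R_{G-S}(X\setminus S)$. So minimality and importance are preserved. In this setting your second branch must make $v$ an \emph{undeletable} source vertex, and only then does $\lambda$ increase.

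Even there your one-line reason for the increase is incomplete. A size-$\lambda$ separator $T$ of the new instance need not leave $v$ reachable from the source. You should apply submodularity to $R_{G-T}(X\cup\{v\})$ and $R(S_{\max})$, and use that $v$ has a neighbour in $R(S_{\max})$. This yields a minimum separator whose reachable set strictly contains $R(S_{\max})$, contradicting maximality.

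Separately, your running-time paragraph only gives $O(4^{k}k(|V(G)|+|E(G)|))$, not the output-sensitive $O(|\mathcal S|\,k^2(|V(G)|+|E(G)|))$ in the statement. The first branch can return sets $T\cup\{v\}$ that are not important, not even minimal, in $G$; this happens whenever $T$ alone already separates $X$ from $Y$ in $G$. So leaves do not correspond to members of $\mathcal S$, and you have not bounded the recursion tree by $O(k|\mathcal S|)$ nodes.
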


We are now ready to show the following auxiliary lemma, which informally allows
us to assume that any enhanced (deletion) backdoor $B$ for a \QBF
formula $\Phi$ contains an important $y$-$V_{\exists}(\Phi)$-separator
for every $y \in \RU(\Phi,Y)$, which is important to reduce the number
of branches that have to be considered. Recall that we always assume
that every class of \QBF{} formulas is closed under taking
subformulas.
\begin{lemma}\label{lem:enhdetimsep}
  Let $\CCC$ be a class of $\QBF$ formulas, let
  $\Phi$ be a $\QBF$ formula with an enhanced (deletion) backdoor $B$ to $\CCC$ for $\Phi$
  with $Y=\RU(\Phi,B)$, let $B_0 \subseteq B$, $y \in Y$, and let
  $S\subseteq B\setminus B_0$ be an
  inclusion-wise minimal $y$-$V_{\exists}(\Phi)$-separator  in $\primG(\Phi-B_0)$. Then, there is an important
  $y$-$V_{\exists}(\Phi)$ separator $S'$ in $\primG(\Phi-B_0)$ of size at most $|S|$
  such that
  $B'=(B\setminus S)\cup S'$ is also an enhanced (deletion) backdoor to $\CCC$ for $\Phi$.
\end{lemma}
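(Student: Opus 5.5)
We take $S'$ to be an important $y$-$V_{\exists}(\Phi)$-separator in $G'=\primG(\Phi-B_0)$ that \emph{dominates} $S$. Since $S$ is an inclusion-wise minimal separator, the standard fact behind \Cref{the:impsep} gives an important separator $S'$ with $|S'|\leq |S|$ and $R_{G'-S}(y)\subseteq R_{G'-S'}(y)$. Let $R=R_{G'-S}(y)$ and $R'=R_{G'-S'}(y)$. Since both are $y$-$V_\exists$-separators, $R$ and $R'$ contain only universal variables, and $S'\subseteq N_{G'}(R')$. Set $B'=(B\setminus S)\cup S'$. Then $|B'|\leq |B|$, and $B_0\subseteq B'$ because $S\cap B_0=\emptyset$ and $S'$ lives in $G'$. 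It remains to show that $B'\cup \RU(\Phi,B')\supseteq B\cup \RU(\Phi,B)$. Once this holds, $B'\cup\RU(\Phi,B')$ is a superset of a (deletion) backdoor. Because $\CCC$ is closed under taking subformulas, a superset of a (deletion) backdoor is again one: deleting more variables, or assigning more variables (which only removes clauses and literals), stays inside $\CCC$. Hence $B'$ is an enhanced (deletion) backdoor.

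\textbf{Step 1: $B\setminus B'\subseteq \RU(\Phi,B')$.} The vertices to cover are those of $S\setminus S'$. Every $s\in S$ is adjacent to $R$, by minimality of $S$. Since $R\subseteq R'$ and $s\notin S'$, we get $s\in R'$. Thus $s$ is universal and lies in the component of $y$ in $\primG(\Phi)-B'$.

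\textbf{Step 2: this component is purely universal.} We must check that the component of $y$ in $\primG(\Phi)-B'$ contains no existential variable. Working in $\primG(\Phi)$ rather than $G'$ is harmless, since $B_0\subseteq B'$ means removing $B'$ from $\primG(\Phi)$ is the same as removing $B'\setminus B_0$ from $G'$. In $G'-S'$ the component of $y$ is exactly $R'$, because $S'$ separates $y$ from $V_\exists$. Removing the additional vertices of $B\setminus S$ only shrinks this component, so it is contained in $R'$ and is purely universal. Hence the component of $y$ in $\primG(\Phi)-B'$, which contains $S\setminus S'$, lies in $\RU(\Phi,B')$.

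\textbf{Step 3: $\RU(\Phi,B)\subseteq B'\cup\RU(\Phi,B')$.} Take a purely universal component $K$ of $\primG(\Phi)-B$ and a vertex $v\in K\setminus B'$. Consider the component $K'$ of $v$ in $\primG(\Phi)-B'$; we must show it contains no existential variable. Suppose some path from $v$ avoiding $B'$ reaches an existential vertex. This path must leave $K$, and since $K$ is a component of $\primG(\Phi)-B$, it does so through a vertex of $B\setminus B'\subseteq S\setminus S'\subseteq R'$. From that point the path continues inside $G'-S'$ starting in $R'$. But $R'$ is the component of $y$ in $G'-S'$ and contains no existential vertex, and a path avoiding $S'$ cannot leave $R'$. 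So the path can never reach an existential vertex, a contradiction.

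I expect Step 3 to be the main obstacle: making the path-tracing rigorous, carefully distinguishing $G'$ from $\primG(\Phi)$ and the roles of $B_0$, $S$, $S'$ and $B\setminus S$. A cleaner formulation is to show that every vertex of $K'$ is either in $K$ or in $R'$, both of which are purely universal.
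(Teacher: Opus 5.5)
Your route is the paper's: take an important $y$-$V_{\exists}(\Phi)$-separator $S'$ in $G'=\primG(\Phi-B_0)$ that dominates $S$, show $B\cup\RU(\Phi,B)\subseteq B'\cup\RU(\Phi,B')$, and conclude by closure under subformulas. The paper only asserts this inclusion, so your Steps 1--3 supply the missing argument. You also use the standard domination direction $R_{G'-S}(y)\subseteq R_{G'-S'}(y)$, which is the one the inclusion argument needs. The paper's proof writes this inclusion the other way round, which appears to be a typo.

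\textbf{A false intermediate claim in Steps 1--2.} There you assert that every $s\in S\setminus S'$ lies in the component of $y$ in $\primG(\Phi)-B'$. This does not follow. You only know that $s$ is adjacent to $R=R_{G'-S}(y)$, and $R$ may contain vertices of $B\setminus S$, which are still deleted in $B'$.

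For a counterexample, let $\primG(\Phi)$ be the path $y,b,s,t$ of universal variables, together with two existential variables adjacent to $t$. Take $B_0=\emptyset$, $B=\{b,s\}$ and $S=\{s\}$. The only important separator of size at most $1$ dominating $S$ is $S'=\{t\}$. Hence $B'=\{b,t\}$, and $y$ and $s$ lie in different components of $\primG(\Phi)-B'$.

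\textbf{The fix.} The conclusion $S\setminus S'\subseteq\RU(\Phi,B')$ still holds, by reasoning you already use in Step 3. Since $B_0\cup S'\subseteq B'$, the component of $s$ in $\primG(\Phi)-B'$ is a connected subgraph of $G'-S'$ containing $s\in R'$. It therefore lies inside $R'$ and is purely universal. Argue about the component of $s$ rather than that of $y$, and the proof is complete; Step 3 is correct as written.
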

\begin{proof}
  Let $S'$ be any important
  $y$-$V_{\exists}(\Phi)$ separator in $\primG(\Phi-B_0)$ of size at most $|S|$ such that $R_{\primG(\Phi-(B_0\cup S'))}(y)\subseteq
  R_{\primG(\Phi-(B_0\cup S))}(y)$. Note that $S'$ exists
  because of \Cref{the:impsep} together with the fact that $S$ is an inclusion-wise minimal $y$-$V_{\exists}(\Phi)$ separator in
  $\primG(\Phi-B_0)$ and $|S| \geq |S'|$. Then, $B\cup Y\subseteq B'\cup Y'$, where
  $Y'=\RU(\Phi,B')$. Therefore, it holds that
  $B'\cup Y'$ is also a (deletion) backdoor to $\CCC$ for $\Phi$
  (because $\CCC$ is closed under taking subformulas),
  which implies that $B'$ is also an enhanced (deletion) backdoor to
  $\CCC$ for $\Phi$.
\end{proof}

\newcommand{\QQQ}{\mathcal{Q}}
We are now ready to provide an algorithm to find enhanced (deletion)
backdoors to the class $\QQQ_q$ of all \QBF{} formulas with at most $q$
quantifier alternations.
\begin{lemma}\label{lem:EB-QA-B}
  Let $\Phi$ be a $\QBF$ formula and let $k$ and $q$ be integers. There is an
  FPT-algorithm parameterized by $k+q$ that outputs an enhanced (deletion)
  backdoor $B$ to $\QQQ_q$ for $\Phi$ of size at most $k$ if it exists
  and otherwise outputs no.
\end{lemma}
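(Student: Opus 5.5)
We design a bounded search tree that maintains a partial solution $B_0$ with $|B_0|\le k$, starting from $B_0=\emptyset$. Since deleting variables from a prefix, or assigning them, can only reduce the number of quantifier alternations, the strong and deletion variants coincide for $\QQQ_q$. So in both cases $B$ is an enhanced backdoor exactly when the prefix of $\Phi$ restricted to $V(\Phi)\setminus(B\cup \RU(\Phi,B))$ has at most $q$ alternations. The remaining variables are the existential variables outside $B$, together with the universal variables outside $B$ that are connected in $\primG(\Phi-B)$ to some existential variable. Our goal is a branching procedure whose depth and branching factor are both bounded by functions of $k+q$.

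At a node of the search tree with current set $B_0$, we first test whether $B_0$ is already a solution. Computing $\RU(\Phi,B_0)$ takes linear time, and counting the alternations of the restricted prefix is then immediate. If $B_0$ is not a solution, we look at the restricted prefix and pick a witness for the violation. Scanning the remaining variables in prefix order and compressing them into blocks, we take the first $q+2$ blocks, which give $q+1$ alternations, and choose one variable $v_j$ from each of these blocks. Any solution $B\supseteq B_0$ must eliminate at least one $v_j$. An existential $v_j$ can only be eliminated by putting it into $B$. A universal $v_j$ can be eliminated either by putting it into $B$ or by separating it from $V_\exists(\Phi)$, which means that $B\setminus B_0$ contains a $v_j$-$(V_\exists(\Phi)\setminus B)$ separator in $\primG(\Phi-B_0)$. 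We therefore branch over $j\in[q+2]$. In each branch we either add $v_j$ to $B_0$, or, when $v_j$ is universal, we guess a size $s\le k-|B_0|$ and add one important $v_j$-$V_\exists(\Phi)$ separator of size $s$ in $\primG(\Phi-B_0)$.

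The second option is justified by \Cref{lem:enhdetimsep}. Take a minimal separator $S\subseteq B\setminus B_0$ between $v_j$ and $V_\exists(\Phi)$. The lemma lets us replace $S$ by an important separator $S'$ of size at most $|S|$ and still have an enhanced backdoor, which stays consistent with $B_0$. There is one subtlety: the separator has to avoid existential vertices that already lie in $B$. We treat $V_\exists(\Phi)\setminus B_0$ as the target set and note that existential vertices inside $S$ are allowed, since they are simply added to the backdoor. By \Cref{the:impsep} there are at most $\nois{s}$ candidate separators for each $s$, so the branching factor is at most $(q+2)(1+\sum_{s\le k}\nois{s})$. Each branch strictly increases $|B_0|$, because $v_j\notin B_0$ and a nonempty separator is added; if the separator were empty, $v_j$ would already be in $\RU$. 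Hence the depth is at most $k$, and the total running time is $((q+2)4^{k+1})^{k}\cdot\mathrm{poly}(|\Phi|)$.

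We expect the main obstacle to be the exchange argument in the separator branch. It must show that replacing $S$ by $S'$ does not reintroduce any of the earlier violations, and that the remaining branches cover every solution. \Cref{lem:enhdetimsep} handles the first point, because it guarantees $B\cup\RU(\Phi,B)\subseteq B'\cup\RU(\Phi,B')$ and the class is closed under subformulas. For the second point, we have to check that in every solution at least one of the chosen witnesses $v_j$ is eliminated. Otherwise $q+2$ variables from distinct alternating blocks would all survive, in prefix order, and would give $q+1$ alternations in the restricted prefix of $B$, a contradiction.
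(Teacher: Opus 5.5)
Your proposal is correct and follows the paper's proof. Both use a depth-$k$ search tree: at a non-solution node you take one variable from each of the outermost $q+2$ blocks of $\Phi-(B_0\cup\RU(\Phi,B_0))$, and branch on either adding it to the backdoor or, for a universal variable, adding an important separator to $V_\exists(\Phi)$ in $\primG(\Phi-B_0)$; correctness rests on the exchange argument of \Cref{lem:enhdetimsep}. The only cosmetic difference is that the paper argues some entire block must be eliminated and then picks an arbitrary representative, while you argue directly that one of the chosen representatives must be eliminated, which amounts to the same thing.
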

\begin{proof}
  We employ a branching algorithm that builds a depth-bounded search
  tree, where every inner node is labeled with a potential enhanced (deletion)
  backdoor $B'$ of size at most $k$ and every leaf is additionally
  labeled yes if $B'$ is an enhanced (deletion) backdoor to $\QQQ_q$
  for $\Phi$ and otherwise with no. We build
  the tree iteratively starting at the root, which we label with
  $\emptyset$. We then expand the tree at every inner node, which is
  not yet labeled yes or no as follows.

  Namely, consider any inner node $n$ with label $B'$ (that is not yet
  labeled yes or no) and let
  $\Phi'=\Phi-(B'\cup \RU(\Phi,B'))$. If $\Phi'$ has at most $q$
  quantifier alternations, we update the label of $n$ to
  yes. Moreover, if $|B'|=k$, we update the label of $n$ to no and stop.
  Otherwise,
  let $X_1,\dotsc, X_{q+2}$ be the sets of variables within the outermost $q+2$ quantifier blocks
  of $\Phi'$, which exists because $\Phi'$ has more than $q$
  quantifier alternations. Then, for every enhanced backdoor $B$ with
  $Y=\RU(\Phi',B)$ of
  $\Phi'$ to $\QQQ_q$, it holds that there is an $X_i$ such that
  $X_i \subseteq B\cup Y$. Moreover, if $X_i \subseteq B\cup Y$, then
  for every variable $x \in X_i$ either:
  \begin{itemize}
  \item $x$ is existential and $x \in B$,
  \item $x$ is universal and $x \in B$, or
  \item $x$ is universal and $x \in Y$.
  \end{itemize}
  These observations allow us to exhaustively branch as follows.
  Namely, for every $i$ with $1 \leq i \leq q+2$, we do the
  following. Let $x$ be an arbitrary variable in $X_i$. We add a
  new child to $n$ with label $B'\cup \{x\}$. Moreover, if $x$ is
  universal, then for every $k'\leq k-|B'|$ and every important
  $x$-$V_{\exists}(\Phi)$-separator $S$ in $\primG(\Phi-B')$ of size at most $k'$, we add
  to $n$ a child with label $B'\cup S$.
  This completes the description of the algorithm.

  Towards showing the correctness of the algorithm, let $B$ be an
  enhanced (deletion) backdoor to $\QQQ_q$ for
  $\Phi$ of size at most $k$. We show by induction on the depth of the
  search tree that for every level $i$, there is a node $n$ of the search
  tree at level $i$ whose label $B'$ is a subset of some enhanced
  (deletion) backdoor to $\QQQ_q$ for $\Phi$ of size at most $k$, which
  completes the correctness proof. For the
  induction start, this clearly holds for the root of the search tree
  since its label $B'=\emptyset\subseteq B$. Now let $n$ be an inner
  node of the search tree with label $B'$ such that $B'\subseteq
  B$, which exists due to the induction hypothesis. Moreover, let
  $\Phi'=\Phi-(B'\cup\RU(\Phi,B'))$ and let $X_1,\dotsc,X_{q+2}$ be
  the sets of variables within the outermost $q+2$ quantifier blocks
  of $\Phi'$. Because $B$ is an enhanced (deletion)
  backdoor, there is an $i$ with $1\leq i \leq q+2$
  such that $X_i\subseteq (B\cup\RU(\Phi,B))$. Let $x \in X_i$ be the
  variable chosen by the algorithm to create the children of $n$ for
  $X_i$. Since $n$ has a child with label $B'\cup \{x\}$, we have shown
  the induction step for the case that $x \in B$. So suppose that $x
  \notin B$, then $x \in \RU(\Phi,B)$.
  Let $S$ be an inclusion-wise minimal subset of $B\setminus B'$
  separating $x$ from $V_{\exists}(\Phi)$ in
  $\primG(\Phi-B')$. Because of \Cref{lem:enhdetimsep}, there is an
  important $x$-$V_{\exists}(\Phi)$-separator $S'$ in $\primG(\Phi-B')$
  such that $B''=(B\setminus S)\cup S'$ is also an
  enhanced (deletion) backdoor to $\QQQ_q$ for $\Phi$. Moreover, by
  construction of the search tree, $n$ has a child $n'$ with label $B'\cup S' \subseteq B''$,
  as required; note that $B$ is then replaced by $B''$ for the later steps
  of the induction.

  Towards analyzing the runtime of the algorithm first note that its
  runtime is at most equal to the number of nodes of the
  search tree times the maximum time spend on any of its nodes.
  Since, the search tree has height at most $k$ and every of its nodes
  has at most $(q+2)(2+\nois{k})$ many children (because of \Cref{the:impsep}), we obtain that the
  search tree has at most $((q+2)(2+\nois{k}))^k$ many
  nodes. Moreover, since the time spend at any node is at most
  $\bigoh(\nois{k}k^2\|\Phi\|)$ (because of \Cref{the:impsep}), we obtain that the algorithm is fixed-parameter tractable
  parameterized by $k+q$.
\end{proof}

We are now ready to show \Cref{the:enhdet-con}.

\newcommand{\BBB}{\mathcal{B}}
\begin{proof}[Proof of \Cref{the:enhdet-con}]
  Let $\ell=q$ if $\CCC \in \{\DDD_q(\TCNF),\DDD_q(d\hy \AFF)\}$ and
  $\ell=0$ otherwise. We start by building the search tree $T$ used in the
  proof of \Cref{lem:EB-QA-B} to find an enhanced backdoor to $\QQQ_\ell$ for
  $\Phi$ of size at most $k$ in FPT-time with respect to $k+\ell$. Then, for
  every leaf $l$ of $T$ labeled yes, we change its second label to
  unknown and iteratively expand $T$ at every inner node, which is
  not yet labeled yes or no as follows.

  Namely, consider any inner node $n$ with label $B'$ (that is not yet
  labeled yes or no) and let
  $\Phi'=\Phi-(B'\cup \RU(\Phi,B'))$. If $\Phi' \in \CCC$, we update the label of $n$ to
  yes. Moreover, if $|B'|=k$, we update the label of $n$ to no and stop.
  Otherwise, we distinguish the following cases.

  If $\CCC=\DDD_q(\TCNF)$, consider any constraint $c \in \Phi'$ with
  $|c|>2$ and let $c'$ be any subset of $c$ with $|c'|=3$. Then, for every enhanced backdoor $B$ with
  $Y=\RU(\Phi',B)$ of
  $\Phi'$ to $\CCC$, it holds that there is a variable $x \in V(c')$
  such that $x \in B\cup Y$. Moreover, if $x \in B\cup Y$, then either:
  \begin{itemize}
  \item $x$ is existential and $x \in B$,
  \item $x$ is universal and $x \in B$, or
  \item $x$ is universal and $x \in Y$.
  \end{itemize}
  These observations allow us to exhaustively branch as follows.
  Namely, for every $x \in V(c')$, we add a
  new child to $n$ with label $B'\cup \{x\}$. Moreover, if $x$ is
  universal, then for every $k'\leq k-|B'|$ and every important
  $x$-$V_{\exists}(\Phi)$-separator $S$ in $\primG(\Phi-B')$ of size
  at most $k'$, we add
  to $n$ a child with label $B'\cup S$.

  If $\CCC=\DDD_{\exists}(\HORN)$, we first check whether $\Phi'$ is
  existential and if not, we simply label $n$ with no. Otherwise,
  the case for $\CCC=\DDD_q(\HORN)$ is analogously to the case of
  $\CCC=\DDD_q(\TCNF)$ only that now we consider any subset $c'$ of a
  constraint $c \notin \HORN$ that contains exactly two positive
  literals to branch on.

  If $\CCC=\DDD_q(d\hy\AFF)$, we first observe that $\Phi'$ cannot
  contain a constraint of arity more than $k-|B'|+d$, because
  otherwise we can correctly label $n$ with no. So assume that this is
  the case, then the case is analogous to the case for
  $\CCC=\DDD_q(\TCNF)$ only that now we branch on all variables of a
  constraint $c \notin \CCC$.

  This completes the description of the algorithm. The correctness of
  the algorithm is shown in the same manner as the correctness for the
  algorithm in \Cref{lem:EB-QA-B}.

  Towards analyzing the runtime of the algorithm first note that
  the search $T$ from \Cref{lem:EB-QA-B} can be build in FPT-time with respect to the
  parameter $k+\ell$ and that $T$ has at most $((q+2)(2+\nois{k}))^k$
  leaves. Moreover, for every leaf of $T$ we build (add) a search tree
  with at most:
  \begin{itemize}
  \item $(3\cdot \nois{k})^k$ nodes (in the case that $\CCC=\DDD_q(\TCNF)$),
  \item $((k+d)\nois{k})^k$ nodes (in the case that $\CCC=\DDD_q(d\hy\AFF)$),    
  \item $(2\cdot \nois{k})^k$ nodes (in the case that $\CCC=\DDD_{\exists}(\HORN)$).    
  \end{itemize}
  Since the time spent for each such new node is at most
  $\bigoh(k^2\nois{k}\|\Phi\|)$ (because of \Cref{the:impsep}), we obtain that the algorithm is
  fixed-parameter tractable parameterized by $k+\ell+d$, which completes
  the proof of the theorem.
\end{proof}

Next, we want to compute enhanced backdoors when $\CCC \in
\{\TTT_q^\omega,\VC_s\}$, i.e., we want to show \Cref{the:enhdet-stru-tw,the:enhdet-stru-vc} restated
below.
\theenhdetstrutw*

\theenhdetstruvc*

Towards that we observe that the problem of finding enhanced backdoors can be encoded as a Monadic Second Order (MSO) sentence and use a result on model checking~\cite{LokshtanovR0Z18}. To state this result, we first define MSO of graphs and then {\em unbreakable} graphs. Briefly, the syntax and semantics of MSO of graphs includes the following. 

\begin{itemize}
    \item Variables for vertices, edges, sets of vertices and sets of
      edges. 
    \item Universal and existential quantifiers that can be applied on the variables. 
    \item Logical connectives $\neg$, $\vee$, and $\wedge$. 
    \item $v \in U$ for a vertex variable $v$ and a vertex set
      variable $U$,
    \item $e \in F$ for an edge variable $e$ and an edge set variable $F$,
    \item The following binary relations.
    \begin{itemize}
        \item ${\sf inc}(e,u)$ where $e$ is an edge variable and $u$
          is a vertex variable. The semantics is that ${\sf inc}(e,u)$
          is true if and only if $u$ is an endpoint of $e$. 
        \item ${\sf adj}(u,v)$ where both $u$ and $v$ are vertex variables and the semantics is that ${\sf adj}(u,v)$ if and only if $u$ and $v$ are adjacent. 
         \item For any two variables $x$ and $y$ of same type, $x=y$ interprets that they are equal.    
    \end{itemize}
\end{itemize}
If an MSO sentence $\psi$ is true on a graph $G$, then we write $G\models \psi$. For an MSO sentence $\psi$, {\sc MSO[$\psi$]} is the algorithmic problem where the input is a graph $G$ and the output is yes if and only if $G\models \psi$. Additionally, if there is a parameter $k$ associated with $\psi$ and {\sc MSO[$\psi$]} is solvable in time $f(k)n^{O(1)}$, then we say that {\sc MSO[$\psi$]} parameterized by $k$ is non-uniform FPT (because the parameter $k$ is part of the problem and we can have different algorithms for different values of $k$),  where $f$ is a function and $n$ is the input length.

\begin{definition}
 For a graph $G$, a pair $(X,Y)$ with $X\cup Y=V(G)$ is called a {\em separation} if there is no edge with one endpoint in $X\setminus Y$ and the other in $Y\setminus X$. The order of $(X,Y)$ is $|X\cap Y|$.
\end{definition}

\begin{definition}
Let $G$ be a graph, and $s$ and $c$ be two non-negative integers. If there is a separation $(X,Y)$ of order at most $c$ with $|X\setminus Y|,|Y\setminus X|>s$, then $G$ is {\em $(s,c)$-breakable}, and such a separation is called an {\em $(s,c)$-witnessing separation}. If no such separation exists in $G$, then $G$ is {\em $(s,c)$-unbreakable}.     
\end{definition}

\begin{theorem}[\cite{LokshtanovR0Z18}]
\label{thm:MSOmodelchecking}
Let $\psi$ be an MSO sentence and  $c\colon {\mathbb{N}} \rightarrow {\mathbb{N}}$ be a function. There exists $s\colon {\mathbb{N}} \rightarrow {\mathbb{N}}$ such that if
{\sc MSO[$\psi$]} parameterized by $k$ is FPT on $(s(k), c(k))$-unbreakable graphs, then it is FPT on general graphs. 
\end{theorem}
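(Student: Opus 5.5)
The plan is to follow the \emph{recursive understanding} paradigm. An unbreakable input is handed to the assumed algorithm. A breakable input is split along a small separation, one side is ``understood'' and replaced by a constant-size equivalent gadget, and we recurse. The whole argument works with $\psi$ itself; no auxiliary sentences are needed, which is what makes the non-uniform statement go through.

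\emph{Step 1 (finite index).} For a fixed $k$, set $c=c(k)$ and consider $c$-boundaried graphs, i.e.\ graphs with at most $c$ labelled boundary vertices. Write $G_1\oplus H$ for the gluing along the boundary. Define $G_1\equiv_\psi G_2$ if, for every boundaried $H$, we have $G_1\oplus H\models\psi$ iff $G_2\oplus H\models\psi$. Since $\psi$ has fixed quantifier rank $q$, the Feferman--Vaught composition theorem shows that $\equiv_\psi$ is refined by equality of rank-$q$ MSO types of boundaried graphs. That relation has finite index, so $\equiv_\psi$ has finitely many classes. Fix one minimum-size representative per class and a finite set $\mathcal H$ of test graphs that distinguishes all pairs of classes. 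Let $r(k)$ bound the sizes of all of these. They depend only on $\psi$ and $k$, are hard-wired into the algorithm, and are not computed; this is the source of non-uniformity. Finally set $s(k)> r(k)+c$.

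\emph{Step 2 (finding a separation).} Given a general graph $G$, we decide whether $G$ is $(s,c)$-unbreakable. If it is breakable, we find an $(s,c)$-witnessing separation $(X,Y)$. This takes FPT time in $s+c$ by the randomized contractions framework of Chitnis et al.~\cite{chitnis2016designing}, derandomized via splitters. If $G$ is unbreakable, we simply run the assumed FPT algorithm.

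\emph{Step 3 (understanding and replacing).} In the breakable case, view $G[X]$ as a boundaried graph with boundary $X\cap Y$, and assume without loss of generality that $|X|\le|Y|$. We determine its $\equiv_\psi$-class by evaluating $G[X]\oplus H\models\psi$ for every $H\in\mathcal H$, each by a recursive call on $\psi$. Since $|X\setminus Y|>s$ implies $|Y\setminus X|> s>r$, every such glued graph is strictly smaller than $G$. We then replace $G[X]$ in $G$ by the representative $R$ of its class, obtaining $G'=R\oplus G[Y]$. We have $G'\models\psi$ iff $G\models\psi$, and $|G'|\le |G|-(|X\setminus Y|-r)<|G|$. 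Finally we recurse on $G'$.

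\emph{Step 4 (running time).} With $n_1=|X|$, the cost satisfies
\[
T(n)\le |\mathcal H|\, T(n_1+r)+T(n-n_1+r+c)+f(k)n^{O(1)},
\qquad s<n_1\le n/2+c .
\]
I expect this recurrence to be the main obstacle. The first term branches by a constant factor $|\mathcal H|$ on pieces of size up to roughly $n/2$, so a naive bound is only quasi-polynomial. To fix this, I would first try choosing $s$ much larger than $r+c$, so that every branch strictly shrinks the instance. I would then use a potential argument of the form $T(n)\le g(k)n^{d}$ for a large exponent $d$, exploiting that $n_1^{d}+(n-n_1)^{d}\le n^{d}-\Omega(s)\,n^{d-1}$ when $n_1>s$. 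If that is still too weak, the fallback is the approach of~\cite{LokshtanovR0Z18}: compute the class of $G[X]$ through a single call on an annotated graph rather than $|\mathcal H|$ separate calls, so that the recurrence has just two terms and solves to a polynomial.
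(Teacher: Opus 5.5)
The paper does not prove this statement; it imports it from \cite{LokshtanovR0Z18}, so your proposal has to stand on its own. Steps 1--3 are sound: finite index of $\equiv_\psi$, hard-wired representatives and test graphs, finding a witnessing separation, and the replacement argument. The proof breaks at Step 4, the step you flag yourself, and neither of your remedies repairs it.

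Each separation spawns $|\mathcal H|$ independent recursive calls on closed instances $G[X]\oplus H$ of size up to about $n/2$. Step 2 gives you no control over how balanced the returned separation is. Take a complete binary tree of cliques of size $>s$, with adjacent cliques joined by a single edge. The oracle may return the halving cut at every level, including inside every $G[X]\oplus H$, which gives $(|\mathcal H|+1)^{\Theta(\log n)}=n^{\Theta(\log|\mathcal H|)}$ calls. Here $|\mathcal H|$ depends on $k$ through $c(k)$ (and, in the paper's application, through $\psi$ itself) and is unbounded in general, so this is only an XP bound.

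Your potential argument cannot help. At $n_1\approx n/2$ the required inequality $|\mathcal H|(n_1+r)^d+(n-n_1+r+c)^d+f(k)n^{O(1)}\le n^d$ forces $2^d\ge|\mathcal H|$, i.e.\ an exponent depending on $k$. The convexity gain $\Omega(s)n^{d-1}$ is negligible against the factor $|\mathcal H|$. The fallback is not available as stated either. Your only solver is the black box for plain \textsc{MSO}[$\psi$] on unbreakable graphs. One call to it returns a single bit, whereas identifying the class needs logarithmically many bits in the number of classes. An annotated graph is neither a plain instance nor known to be unbreakable.

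The missing idea is to change \emph{what} the recursion computes, not the analysis.
\begin{itemize}
\item Let the recursive procedure take a boundaried graph $(G,T)$ with $|T|\le 2c$ and return its $\equiv_\psi$-class. Let $r$ now bound representatives and test graphs for boundaries of size at most $2c$.
\item If $G$ is $(s',c)$-unbreakable, determine its class by evaluating $G\oplus H\models\psi$ for each $H\in\mathcal H$ with the black box. This is legitimate because restricting an order-$\le c$ separation of $G\oplus H$ to $V(G)$ gives a separation of $G$. Hence $G\oplus H$ is $(s'+r,c)$-unbreakable, and you set $s(k)=s'+r$.
\item Otherwise take an $(s',c)$-witnessing separation $(X,Y)$. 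Choose the side whose private part contains at most $c$ vertices of $T$, say $X$.
\item Make \emph{one} recursive call on $G[X]$ with boundary $(X\cap T)\cup(X\cap Y)$, which has size at most $2c$.
\item Substitute the returned representative, keeping $T$ as the boundary, and continue on the resulting graph. It is strictly smaller as long as $s'>r+2c$.
\end{itemize}
The recurrence becomes $T(n)\le T(n_1)+T(n-n_1+r+2c)+f(k)n^{O(1)}$ with $s'<n_1<n-s'$. This solves to $g(k)n^{O(1)}$ with an exponent independent of $k$. The factor $|\mathcal H|$ now appears only at unbreakable leaves, as a multiplicative function of $k$. Your Step 3 replacement argument carries over verbatim to the boundaried setting.
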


Thus, to check the existence of an enhanced deletion backdoor to
$\CCC{}$, where $\CCC\in \{\TTT^w,\VC_s\}$, it is sufficient to
design an MSO sentence with parameter $k$, a function $c$, and an FPT
algorithm on $(s(k), c(k))$-unbreakable graphs. Next we explain this
task, which is similar to the $t$-{\sc Pendant} problem considered in \cite{LokshtanovR0Z18} as an application of Theorem~\ref{thm:MSOmodelchecking}.

Now, we explain how to encode our problem as an MSO sentence. Recall that for a QBF formula $\Phi$, 
$\primG(\Phi)$ is the primal graph of $\Phi$. We define a graph
$\primincG(\Phi)$, which is a super graph of $\primG(\Phi)$, as
follows: $\primincG(\Phi)$ has a vertex for every variable (for simplicity, sometimes called just ``variable'' rather than a vertex) and a
vertex for every constraint in $\Phi$. There is an edge between two
variables if they occur together in a constraint. Also, there is an edge
between a variable and a constraint if the variable occurs in the
constraint. Let $E_1$ be the set of edges between the variables and let
$E_2$ be the set of edges between variables and constraints. Notice that
the subgraph induced on $E_1$ is the primal graph $\primG(\Phi)$ and
the subgraph induced on $E_2$ is the incidence graph of $\Phi$. For
the ease of presentation, let us denote the graph $\primincG(\Phi)$ by
$G$ and the subgraphs of $G$ induced on $E_1$ and $E_2$ by $G_1$ and
$G_2$, respectively. Let $V_{\forall}$ and $V_{\exists}$ denote the
sets of vertices of $G$ corresponding to the universal and existential
variables of the \QBF formula, respectively. 
Define an MSO formula $\psi$  as follows: 

\begin{eqnarray*}
\psi=\exists B \exists Y && [(|B|\leq k) \wedge (Y\subseteq V_{\forall}\setminus B)\wedge (\forall y \in Y \forall u\in V(G_1)\setminus (Y\cup B)\ \neg {\sf adj}(y,u))\\
&&\wedge (G_2-(Y\cup B) \models {\sf tw}_w)].
\end{eqnarray*}

Here, $B$ corresponds to the enhanced deletion backdoor and $Y$ corresponds to the set of vertices in $U(\Phi,B)$. Also, ${\sf tw}_w$ is a standard MSO sentence that tests whether the treewidth of a graph is at most $w$ \cite{FominLMS12} and this follows from the fact that the set of graphs with treewidth at most $w$ is minor closed. That is, there is a finite set ${\cal F}$ of graphs such that a graph $H$ has treewidth at most $w$ if and only if all the graphs in ${\cal F}$ are not minors of $H$. Also, $|B|\leq k$ can be encoded using the following MSO sentence: 
$\exists v_1,\ldots,v_k [\forall u\in V(G)  \
u=v_1\vee u=v_2 \vee \ldots \vee v_k=u\vee \neg (u\in B)]$. So, we have the following observation. 

\begin{observation}
$G\models \psi$ if and only if $\Phi$ has an enhanced deletion backdoor of size $k$ to $\TTT^w$.
% \todo{G: Could we just say "if and only if $\Phi$ has an enhanced backdoor to ... of size $k$? i.e. not mention $G$ after iff?}
\end{observation}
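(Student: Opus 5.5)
The plan is to unfold both directions of the equivalence directly from the semantics of $\psi$ over the graph $G=\primincG(\Phi)$. We first fix what $\psi$ expresses. A pair $(B,Y)$ witnesses $\psi$ exactly when four conditions hold: $|B|\le k$; $Y$ consists of universal variables outside $B$; no variable in $Y$ is adjacent in $G_1=\primG(\Phi)$ to a variable outside $Y\cup B$; and the incidence graph of the formula left after deleting $Y\cup B$ has treewidth at most $w$. We read $G_2-(Y\cup B)$ as the incidence graph of $\Phi-(B\cup Y)$. This matches the paper's deletion semantics, where variables are removed from constraints and the constraint vertices stay. To make this literal we must restrict the neighbourhood-closure condition to variable vertices, and relativize the quantifiers to $V(G_1)$, which is definable as the set of vertices with at least one $E_1$-edge, or via a unary predicate. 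We will also note that the quantifier-alternation requirement of $\TTT^w_q$ has to be checked separately, or handled by the branching of \Cref{lem:EB-QA-B}. Here we treat $\TTT^w$ as the class defined by treewidth only, as in the observation.

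For the direction ($\Leftarrow$), suppose $B$ is an enhanced deletion backdoor of size at most $k$, and put $Y=\RU(\Phi,B)$. Then $Y$ is a union of components of $\primG(\Phi)-B$ that contain only universal variables. So $Y\subseteq V_\forall\setminus B$, and every primal neighbour of a vertex of $Y$ lies in $Y\cup B$, which gives the third conjunct. Since $B\cup Y$ is a deletion backdoor to $\TTT^w$, the incidence graph of $\Phi-(B\cup Y)$ has treewidth at most $w$, which gives the last conjunct. Hence $G\models\psi$.

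For the direction ($\Rightarrow$), take witnesses $(B,Y)$. The closure condition says that $Y$ is a union of components of $\primG(\Phi)-B$. These components are purely universal, so $Y\subseteq \RU(\Phi,B)$. This inclusion is where we expect the only real friction, because $\psi$ does not force $Y$ to be all of $\RU(\Phi,B)$. We resolve it with monotonicity. Deleting more variables only takes a subgraph of the incidence graph, and treewidth is closed under subgraphs; equivalently, we use the standing assumption that classes are closed under taking subformulas. Therefore $\Phi-(B\cup\RU(\Phi,B))$ also has incidence treewidth at most $w$. So $B\cup\RU(\Phi,B)$ is a deletion backdoor, $B$ is an enhanced deletion backdoor, and its size is at most $k$. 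The size bound itself is encoded by the displayed MSO subformula, completing the equivalence.
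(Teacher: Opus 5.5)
Your proof is correct and matches the correspondence the paper intends ($B$ is the backdoor and $Y$ plays the role of $\RU(\Phi,B)$). The paper states the observation without argument, and your monotonicity step for witnesses with $Y\subsetneq\RU(\Phi,B)$ is the detail it leaves implicit.

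Your insistence on restricting $B$ to variable vertices is also genuinely necessary. Take $\exists x_1\dots x_m.(x_1\lor\dots\lor x_m)$ with $m\ge 2$, $k=1$ and $w=0$: the clause vertex alone witnesses $\psi$, yet no enhanced deletion backdoor of size $1$ exists. For $V(G_1)$ and $V_\forall$, rely on the unary-predicate option rather than ``vertices with an $E_1$-edge''. In the unlabelled graph $G$, $E_1$-edges and $E_2$-edges cannot be told apart.
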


Similarly, we can define an MSO formula $\psi'$ that encodes enhanced
deletion backdoors to $\VC_s$ for $\Phi$. To obtain $\psi'$, replace
${\sf tw}_s$ with an MSO sentence ${\sf vc}_s$ that tests whether the vertex cover number of a graph is at most $k$.

\begin{observation}
$G\models \psi'$ if and only if $\Phi$ has an enhanced deletion backdoor of size $k$ to $\VC_s$ .     
% \todo{G: Same comment as above.}
\end{observation}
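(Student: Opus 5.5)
The plan is to unpack both sides of the equivalence and match them clause by clause, using $G=\primincG(\Phi)$ with $G_1=\primG(\Phi)$ and $G_2$ the incidence graph. I read $\psi'$ as $\psi$ with the last conjunct replaced by $(G_2-(Y\cup B)\models {\sf vc}_s)$, where ${\sf vc}_s$ expresses ``vertex cover number at most $s$''. I also read ``size $k$'' as ``size at most $k$'', matching the conjunct $|B|\le k$.

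First I would record two basic facts.

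\begin{enumerate}
\item For any set $S$ of variables, the incidence graph of $\Phi-S$ is $G_2-S$. Removing variables from constraints deletes exactly their incidence edges. Constraints that become empty stay behind as isolated vertices, or are dropped, and neither affects the vertex cover number.
\item The incidence vertex cover number is monotone under vertex deletion: if $H'$ is an induced subgraph of $H$, then ${\sf vc}(H')\le {\sf vc}(H)$. Hence, since $\VC_s$ is defined by this number and is closed under subformulas, if $S\subseteq S'$ and $\Phi-S\in\VC_s$, then $\Phi-S'\in\VC_s$.
\end{enumerate}

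Next I would characterise the sets $Y$ allowed by $\psi'$. The conjunct $Y\subseteq V_\forall\setminus B$ together with ``no $G_1$-edge from $Y$ to $V(G_1)\setminus(Y\cup B)$'' says precisely the following: $Y$ consists of universal variables and is closed under adjacency in $\primG(\Phi)-B$. So $Y$ is a union of connected components of $\primG(\Phi)-B$, each made up only of universal variables. Equivalently, $Y\subseteq \RU(\Phi,B)$. Conversely, $\RU(\Phi,B)$ itself satisfies both conjuncts.

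($\Leftarrow$ direction of the Observation.) Suppose $B$ is an enhanced deletion backdoor with $|B|\le k$. Choose $Y=\RU(\Phi,B)$. By the previous paragraph the first three conjuncts hold. By fact 1, $G_2-(Y\cup B)$ is the incidence graph of $\Phi-(B\cup \RU(\Phi,B))$, which lies in $\VC_s$ by definition. Hence $G\models\psi'$.

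($\Rightarrow$ direction.) Suppose $B,Y$ witness $G\models\psi'$. Then $Y\subseteq \RU(\Phi,B)$, so $B\cup Y\subseteq B\cup\RU(\Phi,B)$. Fact 1 gives $\Phi-(B\cup Y)\in\VC_s$, and fact 2 then gives $\Phi-(B\cup\RU(\Phi,B))\in\VC_s$. So $B$ is an enhanced deletion backdoor of size at most $k$.

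The only delicate step is this last monotonicity step: $\psi'$ lets $Y$ be a proper subset of $\RU(\Phi,B)$, and we need that enlarging the deleted set cannot leave $\VC_s$. Fact 2 settles this. Beyond that, the proof is essentially a careful translation of the definitions.
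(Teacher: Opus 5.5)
\paragraph{Overall.} Your approach matches the paper's. The paper gives no argument beyond stating that $B$ is the backdoor and $Y$ plays the role of $\RU(\Phi,B)$. Your monotonicity step correctly fills in a detail the paper leaves implicit: a witness $Y$ may be a proper sub-union of $\RU(\Phi,B)$, and enlarging the deleted set cannot increase the vertex cover number. A minor point: being a union of all-universal components of $\primG(\Phi)-B$ implies $Y\subseteq\RU(\Phi,B)$ but is not equivalent to it. You only use the implication, so this is harmless.

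\paragraph{The gap.} One step fails for $\psi'$ read literally, which is how you read it. In $\psi$, the set $B$ is quantified over all vertices of $G=\primincG(\Phi)$, and these include constraint vertices. An enhanced deletion backdoor, however, is a set of variables. Your $(\Rightarrow)$ direction tacitly assumes $B\subseteq V(G_1)$. Fact~1 is stated only for sets of variables, and your concluding sentence, ``$B$ is an enhanced deletion backdoor'', needs this assumption.

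Without the restriction, the implication is false. Take the purely existential formula with clauses $c=(x_1\lor x_2)$, $c_1=(x_1\lor z_1)$, $c_2=(x_2\lor z_2)$, and set $k=1$, $s=2$.
\begin{itemize}
\item Since there are no universal variables, $Y=\emptyset$ is forced.
\item $G_2$ is the path $z_1,c_1,x_1,c,x_2,c_2,z_2$, whose vertex cover number is $3$.
\item Deleting the constraint vertex $c$ leaves two paths on three vertices, with vertex cover number $2$. So $B=\{c\}$ witnesses $G\models\psi'$.
\item Deleting nothing, or any single variable, leaves vertex cover number $3$. So no enhanced deletion backdoor of size at most $1$ exists.
\end{itemize}

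\paragraph{Fix.} Add the conjunct $B\subseteq V(G_1)$ to $\psi'$. This is expressible in MSO, since $\psi$ already refers to $V(G_1)$. It is clearly what the paper intends, and $\psi$ needs the same fix. With that conjunct, your argument goes through verbatim.
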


Now, we prove the following two simple lemmas. Recall the definitions of $G$, $G_1$, and $G_2$.  

\begin{lemma}
\label{lem:sep_in_primal}
Let $(X,Y)$ be a separation of $G_1$. Then, there exist $X'\supseteq X$ and $Y'\supseteq Y$ such that $(X',Y')$ is a separation of $G$ of order $|X\cap Y|$.    
\end{lemma}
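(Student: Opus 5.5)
The plan is to extend the separation of $G_1$ to $G$ by putting each constraint vertex on the appropriate side. Recall that $V(G_1)$ is the set of variable vertices, while $V(G)$ additionally contains one vertex per constraint. In $G$, such a constraint vertex is adjacent exactly to the variables occurring in it. Moreover, those variables form a clique in $G_1$, since any two variables occurring together in a constraint are adjacent in the primal graph.

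The key observation is as follows. Because $(X,Y)$ is a separation of $G_1$, no edge joins $X\setminus Y$ to $Y\setminus X$. Hence a clique of $G_1$ cannot meet both $X\setminus Y$ and $Y\setminus X$. So for every constraint $c$, the variable set $V(c)$ satisfies $V(c)\subseteq X$ or $V(c)\subseteq Y$: if $V(c)\not\subseteq X$, it contains a vertex of $Y\setminus X$, and then it contains no vertex of $X\setminus Y$, so $V(c)\subseteq Y$.

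Define $X'$ as $X$ together with all constraint vertices $c$ with $V(c)\subseteq X$. Define $Y'$ as $Y$ together with all remaining constraint vertices, all of which satisfy $V(c)\subseteq Y$. Then $X'\cup Y'=V(G)$. Since each constraint vertex is placed in exactly one side, $X'\cap Y'=X\cap Y$, so the order is unchanged.

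It remains to check that $(X',Y')$ is a separation of $G$. Edges between two variable vertices are edges of $G_1$ and are handled by the original separation, using $X'\setminus Y'\cap V(G_1)=X\setminus Y$ and similarly for $Y$. An edge between a constraint vertex $c\in X'\setminus Y'$ and a variable $v$ has $v\in V(c)\subseteq X\subseteq X'$, so $v\notin Y'\setminus X'$. The symmetric case $c\in Y'\setminus X'$ is identical. There are no edges between constraint vertices.

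The only mildly delicate step is the clique argument. It also covers constraints with a single variable and constraints with no variables; the latter can be placed on either side.
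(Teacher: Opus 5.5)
Your proof is correct and follows the paper's argument: each constraint's variables form a clique in $G_1$, so they cannot meet both $X\setminus Y$ and $Y\setminus X$, and each constraint vertex is placed on the side that contains its variables, which leaves the intersection, and hence the order, unchanged. The only cosmetic difference is the assignment rule: the paper puts a constraint into $X'$ exactly when it contains a variable of $X\setminus Y$, so constraints lying entirely in $X\cap Y$ go to $Y'$, whereas you put $c$ into $X'$ whenever $V(c)\subseteq X$; both choices work, and your explicit check of the variable--constraint edges is more complete than the paper's.
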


\begin{proof}
For any constraint $c'$, there do not exist two variables $x,y\in c'$ such that $x\in X\setminus Y$ and $y\in Y\setminus X$. If there are two such variables, then there will be an edge between $x$ and $y$ in $G_1$, contradicting the fact that $(X,Y)$ is a separation in $G_1$. Let $X'\supseteq X$ be the set such that we add all the vertices corresponding to the constraints that contain a variable from $X\setminus Y$, to $X'$. We add the remaining vertices corresponding to the constraints to $Y$ and obtain $Y'$. Then, 
$(X',Y')$ is a separation of $G$ of order $|X\cap Y|$.      
\end{proof}

\begin{lemma}
\label{lem:primal_unbreakable}
If $G$ is $(s(k),k)$-unbreakable, then $G_1$ is $(s(k),k)$-unbreakable.      
\end{lemma}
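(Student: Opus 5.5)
We argue by contraposition and use \Cref{lem:sep_in_primal}. Suppose $G_1$ is $(s(k),k)$-breakable. Then there is a separation $(X,Y)$ of $G_1$ of order at most $k$ with $|X\setminus Y|>s(k)$ and $|Y\setminus X|>s(k)$. Since $V(G_1)\subseteq V(G)$ is exactly the set of variable vertices, we may view $X$ and $Y$ as sets of variables covering all variables.

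By \Cref{lem:sep_in_primal} we obtain $X'\supseteq X$ and $Y'\supseteq Y$ such that $(X',Y')$ is a separation of $G$ of order $|X\cap Y|\le k$. It remains to check that both sides stay large. In the construction, only constraint vertices are added to $X'$ and $Y'$, and each constraint vertex goes to exactly one side. Hence $X'\cap Y'=X\cap Y$, and every variable of $X\setminus Y$ still lies in $X'$ but not in $Y'$. Therefore $X'\setminus Y'\supseteq X\setminus Y$, and symmetrically $Y'\setminus X'\supseteq Y\setminus X$. So both differences have size greater than $s(k)$, and $(X',Y')$ is an $(s(k),k)$-witnessing separation of $G$, contradicting the assumption that $G$ is unbreakable.

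The one delicate point is that the statement of \Cref{lem:sep_in_primal} only asserts the order bound. The argument also needs the extra facts that $X'\cap Y'=X\cap Y$ and that no variable changes sides. Both follow directly from the construction in its proof, where constraint vertices are partitioned and none is added to both sides. I would state these facts explicitly, for instance by strengthening the lemma statement, rather than rely on them implicitly. A second minor check is that $X'\cup Y'=V(G)$, which also holds by construction.
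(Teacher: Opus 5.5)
Your proof is correct and is the same argument as the paper's: argue by contraposition and lift the $(s(k),k)$-witnessing separation of $G_1$ to one of $G$ via \Cref{lem:sep_in_primal}. The ``delicate point'' you flag needs no strengthening of that lemma. Since $X\cap Y\subseteq X'\cap Y'$ and $|X'\cap Y'|=|X\cap Y|$, we get $X'\cap Y'=X\cap Y$, and then $X\setminus Y\subseteq X'\setminus Y'$ and $Y\setminus X\subseteq Y'\setminus X'$ follow from $X\subseteq X'$ and $Y\subseteq Y'$ alone.
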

\begin{proof}
Suppose $G_1$ is $(s(k),k)$-breakable and let $(X,Y)$ be an $(s(k),k)$-witnessing separation. Then, by Lemma~\ref{lem:sep_in_primal}, there exist $X'\supseteq X$ and $Y'\supseteq Y$ such that $(X',Y')$ is a separation of $G$ of order $|X\cap Y|$. This implies that $G$ is $(s(k),k)$-breakable.
\end{proof}

Next, we prove that {\sc MSO[$\psi$]} and  {\sc MSO[$\psi'$]} parameterized by $k$ are FPT on unbreakable graphs. Let $c$ be the function defined as follows. For all $k\in \mathbb{N}_0$, $c(k)=k$. Let $s$ be the function in Theorem~\ref{thm:MSOmodelchecking}.

\begin{lemma}
\label{lem:unbreakable_size_tw}
Suppose $G\models \psi$ and $G$ 
is $(s(k),k)$-unbreakable. Then, there is an enhanced deletion backdoor $B$ to $\TTT^w$ for $\Phi$ of size at most $k$ such that either 
$|U(\Phi,B)|\leq s(k)$ or $|V(G_1)\setminus (U(\Phi,B) \cup B)| \leq s(k)$. 
\end{lemma}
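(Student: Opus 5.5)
Since $G\models\psi$, I will fix an enhanced deletion backdoor $B$ to $\TTT^w$ of size at most $k$ and write $Y=U(\Phi,B)$. The plan is to look at the natural separation of the primal graph that $B$ induces and then apply unbreakability, transferred to $G_1$ by Lemma~\ref{lem:primal_unbreakable}. Concretely, set $X_1 = Y\cup B$ and $X_2 = V(G_1)\setminus Y$. Then $X_1\cup X_2=V(G_1)$ and $X_1\cap X_2 = B$.

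The first step is to check that $(X_1,X_2)$ is a separation of $G_1$ of order at most $k$. By definition, $Y$ is a union of connected components of $\primG(\Phi)-B$. Hence no edge of $G_1$ joins a vertex of $X_1\setminus X_2 = Y$ to a vertex of $X_2\setminus X_1 = V(G_1)\setminus(Y\cup B)$. The order is $|X_1\cap X_2|=|B|\le k$.

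The second step uses unbreakability. By Lemma~\ref{lem:primal_unbreakable}, $G_1$ is $(s(k),k)$-unbreakable. So $(X_1,X_2)$ cannot be an $(s(k),k)$-witnessing separation, which means $|X_1\setminus X_2|\le s(k)$ or $|X_2\setminus X_1|\le s(k)$. The first case gives $|U(\Phi,B)|\le s(k)$. The second gives $|V(G_1)\setminus(U(\Phi,B)\cup B)|\le s(k)$. That is exactly the dichotomy in the statement, and the same $B$ works, so no modification of the backdoor should be needed.

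I expect only minor bookkeeping issues. One is whether $B$ might contain constraint vertices of $G$. I would handle this by taking $B$ (and $Y$) within the variable vertices, as intended by the encoding, or by discarding constraint vertices from $B$, which only shrinks it. The main point to verify carefully is the claim that $Y$ is closed under adjacency in $G_1-B$, which follows directly from the component definition of $U(\Phi,B)$. Once that holds, the argument is a one-line application of unbreakability.
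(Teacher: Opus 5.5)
Your proposal is correct and follows the same argument as the paper. The paper argues by contradiction: if both $|U(\Phi,B)|$ and $|V(G_1)\setminus(U(\Phi,B)\cup B)|$ exceeded $s(k)$, then $G_1$ would be $(s(k),k)$-breakable, and hence $G$ would be too by Lemma~\ref{lem:primal_unbreakable}. Your version makes explicit the witnessing separation $(U(\Phi,B)\cup B,\ V(G_1)\setminus U(\Phi,B))$ and its order $|B|\le k$, which the paper leaves implicit.
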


\begin{proof}
Suppose for any enhanced deletion backdoor $B$ to $\TTT^w$ for $\Phi$ of size at most $k$, if both  $|U(\Phi,B)|$ and  $|V(G_1)\setminus (U(\Phi,B) \cup B)|$ are greater than $s(k)$, then $G_1$ is $(s(k),k)$-breakable. Then, by Lemma~\ref{lem:primal_unbreakable}, $G$ is $(s(k),k)$-breakable which is a contradiction.     
\end{proof}

Similarly, one can prove the following lemma. 

\begin{lemma}
\label{lem:unbreakable_size_vs}
Suppose $G\models \psi'$ and $G$ 
is $(s(k),k)$-unbreakable. Then, there is an enhanced deletion backdoor $B$ to $\VC_s$ for $\Phi$ of size at most $k$ such that either 
$|U(\Phi,B)|\leq s(k)$ or $|V(G_1)\setminus (U(\Phi,B) \cup B)| \leq s(k)$. 
\end{lemma}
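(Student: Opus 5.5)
The plan is to argue exactly as in \Cref{lem:unbreakable_size_tw}, by contradiction via \Cref{lem:primal_unbreakable}. The only new ingredient is to check that every enhanced deletion backdoor to $\VC_s$ induces a separation of the primal graph $G_1$ of order at most $k$. Since $G\models\psi'$, there is an enhanced deletion backdoor $B$ to $\VC_s$ for $\Phi$ with $|B|\le k$. Fix any such $B$ and set $U=U(\Phi,B)$ and $W=V(G_1)\setminus(U\cup B)$. Suppose, towards a contradiction, that $|U|>s(k)$ and $|W|>s(k)$.

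The candidate separation of $G_1$ is $(X,Y)=(U\cup B,\;W\cup B)$. First, $X\cup Y=V(G_1)$ by definition of $W$. Second, $X\cap Y=B$, since $U$ and $W$ are disjoint from $B$ and from each other; hence the order is $|B|\le k$. Third, there is no edge of $G_1$ between $X\setminus Y=U$ and $Y\setminus X=W$. This holds because $U$ is a union of connected components of $\primG(\Phi)-B$, so every $G_1$-neighbour of a vertex in $U$ lies in $U\cup B$. Equivalently, this is the clause $\forall y\in Y\,\forall u\in V(G_1)\setminus(Y\cup B)\ \neg{\sf adj}(y,u)$ encoded in $\psi'$. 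Finally, $|X\setminus Y|=|U|>s(k)$ and $|Y\setminus X|=|W|>s(k)$.

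Thus $(X,Y)$ is an $(s(k),k)$-witnessing separation, so $G_1$ is $(s(k),k)$-breakable. By \Cref{lem:primal_unbreakable}, or directly by lifting the separation through \Cref{lem:sep_in_primal}, $G$ is then $(s(k),k)$-breakable, contradicting the hypothesis. Hence $B$ itself satisfies $|U(\Phi,B)|\le s(k)$ or $|V(G_1)\setminus(U(\Phi,B)\cup B)|\le s(k)$; indeed every such $B$ does, which is stronger than the existential claim.

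There is no real obstacle. The only points to watch are that nothing in the argument depends on the base class (the vertex-cover condition is never used, so the proof for $\TTT^w$ transfers verbatim), and that the set $Y$ witnessing $\psi'$ may in principle differ from $U(\Phi,B)$. For the latter, I would simply work with $U(\Phi,B)$ as defined, using that $B$ is a genuine enhanced deletion backdoor by the preceding observation for $\psi'$.
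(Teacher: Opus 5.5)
Your proof is correct and follows the paper's argument. The paper proves this lemma "similarly" to \Cref{lem:unbreakable_size_tw}: if both $U(\Phi,B)$ and $V(G_1)\setminus(U(\Phi,B)\cup B)$ have more than $s(k)$ vertices, then $G_1$ is $(s(k),k)$-breakable, and \Cref{lem:primal_unbreakable} gives a contradiction. You make explicit what the paper leaves implicit, namely the witnessing separation $(U\cup B, W\cup B)$ of order $|B|\le k$, and the fact that the argument never uses the base class.
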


\begin{lemma}
\label{lem:FPT-UNB}
{\sc MSO[$\psi$]} and {\sc MSO[$\psi'$]} are  FPT on $(s(k),k)$-unbreakable graphs.         
\end{lemma}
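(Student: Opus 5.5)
The plan is to use the dichotomy from \Cref{lem:unbreakable_size_tw} (resp.\ \Cref{lem:unbreakable_size_vs}). If $G\models\psi$ and $G$ is $(s(k),k)$-unbreakable, then there is a solution $B$ of size at most $k$ whose universal part $Y=U(\Phi,B)$ satisfies one of two conditions: either $|Y|\le s(k)$ (Case~1), or the remaining variable part $R=V(G_1)\setminus(Y\cup B)$ has at most $s(k)$ vertices (Case~2). The algorithm tries both cases and answers yes if either succeeds. Soundness is immediate, since each case checks an actual witness for $\psi$. Completeness follows from the two lemmas.

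\emph{Case~1.} Here $B\cup Y$ is an ordinary deletion backdoor of size at most $k+s(k)$. It must satisfy two conditions: $Y\subseteq V_\forall$, and $Y$ is a union of components of $G_1-B$ that contain only universal vertices. The second condition is equivalent to $N_{G_1}(Y)\subseteq B$.

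\begin{itemize}
\item I would use the known FPT algorithms for deletion backdoors of size $k+s(k)$. For bounded incidence treewidth, one can instead evaluate the MSO sentence directly via Courcelle-type arguments combined with standard treewidth-modulator techniques. For $\VC_s$, a bounded vertex-cover modulator of the incidence graph can be found by branching on edges of the incidence graph.
\item The extra constraints on $Y$ can be enforced inside this search. I would branch over which of the at most $k+s(k)$ chosen vertices go into $Y$, and check the two conditions above.
\item A cleaner route, which I would try first, is to note that $B\cup Y$ is a set of bounded size. Then $\psi$ with $B\cup Y$ guessed is an MSO property with a bounded number of free vertex variables. For $\VC_s$ one can branch directly: an edge of the incidence graph remaining outside the cover forces one of its endpoints into $B\cup Y\cup C$, where $C$ is the vertex cover.
\item For treewidth, I would use the FPT algorithm for treewidth-$\omega$ vertex deletion parameterized by the deletion size. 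I would adapt it with annotations that require deleted vertices to be labeled as "$B$" or "$Y$", together with the adjacency condition.
\end{itemize}

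\emph{Case~2.} Here all but at most $s(k)+k$ variables lie in $B\cup Y$. The existential variables not in $B$ all lie in $R$, so $|V_\exists\setminus B|\le s(k)$.

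\begin{itemize}
\item Hence either $|V_\exists|\le s(k)+k$, or we reject this case.
\item I would guess $B\cap V_\exists$ and $R\cap V_\exists$; there are boundedly many choices.
\item Every component of $G_1-B$ containing an existential variable lies in $R$. So $B$ must separate $R_\exists = V_\exists\setminus B$ from the rest, and $R$ is exactly the union of components of $G_1-B$ meeting $R_\exists$.
\item Using important separators (\Cref{the:impsep}), I would enumerate candidates for the universal part of $B$. Among all $R_\exists$-side separators, I would argue that we may take ones whose side containing $R_\exists$ is inclusion-wise minimal, i.e.\ the important separators in the reverse direction. Shrinking $R$ only moves constraints into the deleted part, and the classes are closed under subformulas. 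There are at most $4^k$ such separators.
\item Once $B$ is fixed, $R$ has bounded size. I would then check the treewidth or vertex-cover condition on $G_2-(Y\cup B)$, which consists of $R$ plus its constraint vertices. This is polynomial for fixed $\omega$ or $s$. Deciding this check is routine, although constraint vertices can be numerous, since only their incidences to $R$ matter.
\end{itemize}

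The main obstacle is Case~1 for $\TTT_q^\omega$. There one must find a bounded-size treewidth modulator with the extra structural condition on $Y$, rather than just any modulator. I expect to handle it via the standard FPT algorithm for treewidth-$\omega$ deletion combined with MSO annotations, or by invoking Courcelle on the graph after guessing the bounded set with the help of an iterative-compression argument.
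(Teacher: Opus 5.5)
Your two-case skeleton from \Cref{lem:unbreakable_size_tw} and \Cref{lem:unbreakable_size_vs} is the same as the paper's. However, neither of your case algorithms is actually established, and the paper handles both cases with one idea you do not use: random separation.

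\paragraph{What the paper does.} Colour $V(G_1)$ red/blue uniformly at random. With probability at least $2^{-(s(k)+k)}$, all of $B$ is red and the small side is blue. If the existential side $Z=V(G_1)\setminus(U(\Phi,B)\cup B)$ is small, then $Z$ is exactly the union $X$ of the components of $G_1-\mathrm{Red}$ that meet $V_\exists$. Then $\delta_\Phi(X)$ together with the existential variables outside $X$ is an enhanced backdoor contained in $B$. If instead $U(\Phi,B)$ is small, let $Y$ be the union of the all-universal components of $G_1-\mathrm{Red}$ with incidence treewidth $>w$. Then $Y\subseteq U(\Phi,B)$ and $P_1=\delta_\Phi(Y)\subseteq B$. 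An \emph{ordinary} treewidth-$w$ modulator $P_2$ of $G_2-(Y\cup P_1)$ of size $k-|P_1|$ then gives the enhanced backdoor $P_1\cup P_2$. The key point is that only the high-treewidth universal components must go into $Y$, so no annotated modulator search is ever needed.

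\paragraph{Your Case~1 ($|Y|\le s(k)$) is left unresolved, as you acknowledge.}
\begin{itemize}
\item Running a treewidth-deletion algorithm and then branching over how its output splits into $B$ and $Y$ is unsound. That algorithm returns one modulator, which need not admit a split with $Y\subseteq V_\forall$ and $N_{G_1}(Y)\subseteq B$, even when another modulator does.
\item Guessing $B\cup Y$ directly costs $n^{k+s(k)}$.
\item The annotated deletion algorithm and the iterative-compression step are never specified.
\end{itemize}
A deterministic repair in your spirit is available. In this case $G_2$ has treewidth at most $w+k+s(k)$, so you can test this bound and, if it holds, apply Courcelle's theorem to $\psi$ rewritten over the labelled incidence graph $G_2$. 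It must be $G_2$ and not $\primincG$, since the primal cliques in $\primincG$ can make its treewidth unbounded. If the treewidth test fails, Case~1 is impossible and you proceed to Case~2. You gesture at Courcelle but never make this observation.

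\paragraph{Your Case~2 ($|R|\le s(k)$) rests on a false count.} Important separators require a fixed target set, and here there is none. Without one, the inclusion-minimal $R_\exists$-sides of separators of size at most $k$ are not bounded by $4^k$. For example, if $R_\exists=\{v\}$ is the centre of a star with $n$ universal leaves, every $k$-subset of leaves gives a different minimal side. What actually bounds this case is $|R|\le s(k)$ itself. Each component of $R$ is a connected set of at most $s(k)$ vertices that contains a vertex of $R_\exists$ and has at most $k$ neighbours. There are at most $\binom{s(k)+k}{k}$ such sets per anchor vertex, and they can be enumerated by branching each boundary vertex into $R$ or into $B$. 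Alternatively, the paper's random colouring handles this case directly, as described above.
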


\begin{proof}
We give the proof for {\sc MSO[$\psi$]} and the proof for {\sc MSO[$\psi'$]} has similar arguments. We use the color coding technique to design the required FPT-algorithm. For each vertex in $G_1$, color it with red or blue uniformly at random. Let $R$ be the set of vertices colored red and $W$ be the vertices colored blue. 
Let $X$ be the set of vertices in the union of all components in $G_1-R$ that contain at least one existential variable.
Let $B_1=\delta_{\Phi} (X)$ 
and $B_2=V_{\exists}(\Phi)\setminus (B_1\cup X)$. If $B_1\cup B_2$ is an enhanced deletion backdoor to $\TTT^w$ for $\Phi$ of size at most $k$, then output yes. Otherwise, the algorithm tries to compute an enhanced deletion backdoor as follows. Recall that $U(\Phi,R)$ denotes the union of all components in $G_1-R$ that only contain universal variables.
Let ${\mathcal K}$ be the set of components with the following property. Any component $C$ in $U(\Phi,R)$ with incidence treewidth strictly more than $w$ belongs to ${\mathcal K}$. 
Let $Y$ be the set of vertices in the union of all components in ${\mathcal K}$. 
Let $P_1=\delta_{\Phi}(Y)$. If 
$G_2-(Y\cup P_1)$ has a treewidth $w$-modulator $P_2\subseteq V(G_1)$ (i.e., treewidth of $G_2-(Y\cup P_1\cup P_2)$ is at most $w$) of size $k-|P_1|$, then $P_1\cup P_2$ is an enhanced deletion backdoor of size $k$  (we will prove this statement later in the proof of the lemma) and the algorithm outputs yes. 
Otherwise, the algorithm outputs no.

Next, we prove that if {\sc MSO[$\psi$]} is a yes-instance, then the above algorithm outputs yes 
with probability at least $2^{-(s(k)+k)}$. Suppose there is an enhanced deletion backdoor $B$ of size at most $k$. Then by Lemmas~\ref{lem:unbreakable_size_tw} and \ref{lem:unbreakable_size_vs} either $U(\Phi,B)\leq s(k)$ or $|V(G_1)\setminus (U(\Phi,B)\cup B)|\leq s(k)$.

\medskip
\noindent
\textbf{Case 1: $|V(G_1)\setminus (U(\Phi,B)\cup B)|\leq s(k)$. }
In this case with probability at least $2^{-(s(k)+k)}$, all the vertices in $Z=V(G_1)\setminus (U(\Phi,B)\cup B)$ are colored blue and all vertices in $B$ are  colored red. This implies that all the connected components of $G_1[Z]$ are connected components in $G_1-R$. Moreover, by definition of $U(\Phi,B)$, we know that each connected component of $G_1-(B\cup U(\Phi,B))=G_1[Z]$ contains a variable from $V_{\exists}(\Phi)$. This implies that the set $X$ constructed in the algorithm is equal to $Z$. Therefore, $B'=B_1\cup B_2 \subseteq B$ and $G_2-(U(\Phi,B)\cup B)=G_2-(U(\Phi,B')\cup B')$. So, $B'$ is an enhanced deletion backdoor to $\TTT^w$ for $\Phi$ of size at most $|B|$.

\medskip
\noindent
\textbf{Case 2: $U(\Phi,B)\leq s(k)$. }
In this case, with probability at least $2^{-(s(k)+k)}$, all the vertices in $U(\Phi,B)$ are colored blue and all vertices in $B$ are colored with red. 
Then, any component in $U(\Phi,B)$ is also a component in $U(\Phi,R)$.  
That is, $U(\Phi,B)\subseteq U(\Phi,R)$. Since any component in ${\mathcal K} \subseteq U(\Phi,R)$ has treewidth strictly more than $w$ and $B\subseteq R$, we have that ${\mathcal K} \subseteq U(\Phi,B)$. Notice that any component in $G_1-B$
that has incidence treewidth strictly more than $w$, belongs to $U(\Phi,B)$, and hence it also belongs to $U(\Phi,R)$. This implies that ${\mathcal K}$ contains all the components of $G_1-B$ that have incidence treewidth strictly more than $w$.   
Since, 
${\mathcal K} \subseteq U(\Phi,B)$, 
the set $P_1=\delta_{\Phi}(Y)$ constructed in the algorithm is a subset of $B$.
Now consider the set $B\setminus P_1$. Since  
${\mathcal K}$ contains all the components of $G_1-B$ that have incidence treewidth strictly more than $w$, 
$B\setminus P_1$ is a treewidth $w$-modulator of $G_2-(Y\cup P_1)$ and $|B\setminus P_1|\leq k-|P_1|$. Now, consider a  treewidth $w$-modulator $P_2$ of $G_2-(Y\cup P_1)$ of size $k-|P_1|$. We claim that $P_1\cup P_2$ is an enhanced deletion backdoor to $\TTT^w$ for $\Phi$. 
Consider a connected component $C$ in $G_1-(P_1\cup P_2)$ that contains at least one variable from $V_{\exists}(\Phi)$. Clearly, $V(C)$ is disjoint from $Y$.  
Then, $V(C)$ is a subset of $V(G_2)\setminus (Y \cup P)$. This implies that $V(G_1)\setminus (U(\Phi,P_1\cup P_2)\cup P_1\cup P_2)$ is a subset of $G_2-(Y\cup P_1)$. This implies that 
$G_2-(U(\Phi,P_1\cup P_2)\cup P_1\cup P_2)$
is a subgraph of $G_2-(Y\cup P_1 \cup P_2)$. 
Since, 
$G_2-(Y\cup P_1 \cup P_2)$ has treewidth at most $w$, the subgraph  $G_2-(U(\Phi,P_1\cup P_2)\cup P_1\cup P_2)$ has treewidth at most $w$.

Clearly, when the algorithm outputs yes, {\sc MSO[$\psi$]} is a yes-instance. That is, if {\sc MSO[$\psi$]} is a no-instance, then the algorithm outputs no with probability one. So if we repeat the above algorithm $2^{s(k)+k} \log n$ times and output yes if the algorithm outputs yes at least once, then the success probability will be at least $1-\frac{1}{n}$. Moreover the running time will be upper bounded by $f(k)n^{O(1)}$ for some function $f$. 
This algorithm can be derandomized using standard techniques using perfect hash families. 
\end{proof}

Now, we prove Theorems~\ref{the:enhdet-stru-tw} and \ref{the:enhdet-stru-vc}. 

\begin{proof}[Proof of Theorem~\ref{the:enhdet-stru-tw}]
Let $\Phi$ be the given \QBF formula. Suppose 
there exists an enhanced deletion backdoor to $\TTT^w_q$ for $\Phi$ of size at most $k$.   
We apply Lemma~\ref{lem:EB-QA-B} and obtain 
an enhanced deletion backdoor $B_1$ to $\QQQ_q$ for $\Phi$ of size at most $k$. By Theorem~\ref{thm:MSOmodelchecking} and Lemma~\ref{lem:FPT-UNB}, 
we know that the existence of an enhanced deletion backdoor to $\CCC$ for $\Phi$ of size at most can be tested in FPT-time. One can use self-reducibility to find an   enhanced deletion backdoor $B_2$ to $\TTT^w$ of size at most $k$. Then, $B_1\cup B_2$ is the required enhanced deletion backdoor to $\TTT_q^w$ for $\Phi$ of size at most $2k$.       
\end{proof}

\begin{proof}[Proof of Theorem~\ref{the:enhdet-stru-vc}]
By Theorem~\ref{thm:MSOmodelchecking} and Lemma~\ref{lem:FPT-UNB}, 
we know that existence of an enhanced deletion backdoor to $\VC_s$ for $\Phi$ of size at most can be tested in FPT-time. One can use self-reducibility to find the required enhanced deletion backdoor of size at most $k$.      
\end{proof}

\section{Comparison to Dependency Backdoors}
\label{sec:comp-dep-back}

Here, we will compare our notion of backdoors to the backdoors
introduced by Samer and Szeider~\cite{SamerSzeider07a} that take into
account the dependencies among variables in the quantifier prefix of a \QBF{}
formula. To the best of our knowledge, there have not been any other successful attempts in the literature. To distinguish these backdoors from our backdoors, we will
refer to them as dependency backdoors. Informally, a set of variables $B$
is a dependency backdoor to a class $\CCC$ of \QBF{} formulas if $B$
is a backdoor to $\CCC$ and additionally $B$
contains every variable that is before a variable $b \in B$ in the
quatifier prefix and that $b$ depends on; in other words $B$ is closed
under adding all variables that any variable in $B$ depends on. Note
that since every dependency backdoor is also a backdoor, it follows
that dependency backdoors are always at least as large as backdoors.
It therefore only remains to be asked whether the difference in their
sizes can be arbitrary large, which we answer affirmatively in this section.
In order to define the notion of dependence between variables, we first need to
introduce the notion of dependency schemes (introduced by Samer and
Szeider~\cite{SamerSzeider07a}).

For a binary relation $\RR$ over some set $V$ we write
$\overline{\RR}$ to denote its inverse, i.e., $\overline{\RR}=\{(y,x):
(x,y)\in \RR\}$, and we write $\RR^{*}$ to denote the reflexive and
transitive closure of $\RR$ i.e., the smallest set $\RR^{*}$
such that $\RR^{*} = \RR \cup \{(x, x) : x \in V \}\cup\{(x, y) :
\exists z$ such that $(x, z) \in \RR^{*}$ and $(z, y) \in
\RR\}$.
Moreover, we let $\RR(x) = \{y : (x, y) \in \RR\}$ for $x \in V$ and
$\RR(X) = \cup_{x\in X} \RR(x)$ for $X \subseteq V$.
Given a \QBF{} formula $\Phi$ we will also need the
following binary relation over $\var(\Phi)$:
$\RR_{\Phi}=\SB (x,y) \SM x,y\in \var(\Phi),$ $x$ is before (to the
left of) $y$ in the quantifier prefix of $\Phi\SE$.

We also need the notion of
\emph{shifting}, which takes
some subset of the variables of $\Phi$ and puts them
together with their quantifiers, in the same relative order, to the
end (\emph{down-shifting}) or to the beginning (\emph{up-shifting}) of
the prefix.
\begin{definition}[Shifting,~\cite{SamerSzeider07a}] Let $\Phi$ be a
  \QBF{} formula
  and $A\subseteq \var(\Phi)$. We say that 
  $\Phi'$ is obtained from $\Phi$ by \emph{down-shifting}
  (\emph{up-shifting}) $A$, in symbols $\Phi' = S^{\downarrow}(\Phi,A)$ ($\Phi'
  = S^{\uparrow}(\Phi,A)$), if $\Phi'$ is obtained from $\Phi$ by reordering
  quantifiers together with their variables in the prefix such that the following holds:
  \begin{enumerate}
  \item $A=\RR_{\Phi'}(x)$ ($A=\overline{\RR}_{\Phi'}(x)$) for some $x\in \var(\Phi)$ and
  \item $(x,y)\in \RR_{\Phi'}$ iff $(x,y)\in \RR_{\Phi}$ for all $x,y\in A$ and
  \item $(x,y)\in \RR_{\Phi'}$ iff $(x,y)\in \RR_{\Phi}$ for all $x,y\in \var(\Phi)\setminus A$.
  \end{enumerate}
\end{definition}

We are now ready go define dependency schemes, in particular,
cumulative and tractable dependency schemes, which are required for
the application of dependency backdoors.
\begin{definition}[(Cumulative) Dependency Scheme,~\cite{SamerSzeider07a}]
  A  \emph{dependency scheme} $D$ assigns to each \QBF{} formula $\Phi$ a
  binary relation $D_{\Phi}\subseteq \RR_{\Phi}$ such that $\Phi$ and
  $S^{\downarrow}(\Phi,D_{\Phi}^{*}(x))$ are satisfiability-equivalent
  for every variable $x$ of $\Phi$. Moreover, $D$ is
  \emph{cumulative}, if $\Phi$ and
  $S^{\downarrow}(\Phi,D_{\Phi}^{*}(X))$ are satisfiability-equivalent
  for every subset $X$ of variables $\Phi$. Finally, $D$ is
  \emph{tractable} if it can be computed in time polynomial in the
  size of $\Phi$.
\end{definition}
In the following, we assume that we are given a cumulative and
tractable dependency scheme $D$ (see~\cite{SamerSzeider07a} for
details on suitable dependency schemes).

Let $\CCC$ be a tractable class of \QBF{} formulas and let $\Phi$ be a
\QBF{} formula. A set $B$ of variables of $\Phi$ is a \emph{dependency
  backdoor} of $\Phi$ to $\CCC$ if $B$ is a backdoor of $\Phi$ to
$\CCC$ and $\overline{D_\Phi}^*(B)=B$. Observe that the exact definition of
dependency backdoor therefore depends on the employed dependency
scheme (which is required to be cumulative and tractable in order to
ensure that the backdoor approach is amenable). Note also that since
dependency backdoors are backdoors, the former are always at least as
large as the latter. The following theorem now shows that dependency
backdoors can be arbitrarily larger, even on \QBF{} formulas with only one
quantifier alternation that have a backdoor of size $1$.
We show the theorem for all of the dependency schemes
introduced in~\cite{SamerSzeider07a}; we remark, however, that
the reduction works for all known dependency schemes.

\begin{theorem}
  Let $\CCC \in \{\TCNF,\HORN\}$ and let $n$ be an integer. There is a
  \QBF{} formula $\Phi_n$ that has a backdoor into $\CCC$ of size $1$,
  but no dependency backdoor of size less than $n$ under any
  dependency scheme introduced in~\cite{SamerSzeider07a}, i.e., the
  trivial, standard and triangle dependency scheme. Moreover,
  $\Phi_n$ has only one quantifier alternation.
\end{theorem}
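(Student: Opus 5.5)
We will build an explicit family $\Phi_n$ with prefix $\forall X \exists Y$, a single alternation, and argue separately about the small backdoor and about dependency backdoors. Take $X=\{x_1,\dots,x_n\}$ universal and $Y=\{y,y_1,\dots,y_n\}$ existential. The matrix is
\[
  \Phi_n=\forall X\,\exists Y.\ \bigwedge_{i\in[n]}\bigl(x_i\lor y_i\lor y\bigr)\land\bigl(\neg y_i\lor \neg x_i\bigr),
\]
and for \HORN\ we would use the polarity-adjusted variant with clauses $(\neg x_i\lor\neg y_i\lor y)$. The roles are as follows. The variable $y$ is the only non-$\CCC$ ingredient, so $B=\{y\}$ is a backdoor of size $1$. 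Every $y_i$ shares a clause with $x_i$, which is to its left, with \emph{both} polarities connected through the resolution path that the chosen dependency scheme detects.

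First, the backdoor claim. Assigning $y$ either satisfies every ternary clause or shortens it to a binary clause. In the \HORN\ variant, $y=1$ removes the clauses and $y=0$ leaves $(\neg x_i\lor\neg y_i)$, which is Horn. So $\Phi_n[\tau]\in\CCC$ for both $\tau$. This step is routine.

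Second, the lower bound for dependency backdoors. Any backdoor $B'$ must contain $y$, or else contain for each $i$ a variable of the $i$-th ternary clause; the latter already gives $|B'|\ge n$. Since $y$ comes after all of $X$ in the prefix, closure under $\overline{D_\Phi}^*$ forces every $x_i$ with $(x_i,y)\in D_{\Phi_n}$ into $B'$. We therefore show, for each scheme in turn, that $(x_i,y)\in D_{\Phi_n}$ for all $i$:
\begin{itemize}
  \item For the trivial scheme this is immediate, because $D=\RR_\Phi$ restricted to pairs of different quantifiers.
  \item For the standard scheme we need $x_i$ and $y$ to be connected via a path through existential variables to the right of $x_i$. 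The clause $(x_i\lor y_i\lor y)$ itself contains both $x_i$ and $y$, so this holds.
  \item For the triangle scheme we need an existential path from $x_i$ to $y$ both in positive and in negative occurrence. The positive one goes $x_i$ to $y$ directly. The negative one uses $\neg x_i$, then $\neg y_i$ in $(\neg y_i\lor\neg x_i)$, and then $y_i$ in the ternary clause to reach $y$.
\end{itemize}
Hence $X\cup\{y\}\subseteq B'$, and so $|B'|>n$.

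The main obstacle is the triangle scheme (and, for \HORN, keeping the clauses Horn while preserving these connecting paths in both polarities). This is why the auxiliary binary clauses $(\neg y_i\lor\neg x_i)$ are added. If the scheme requires the connecting paths to use clauses of both polarities in a specific way, we would first double each link, adding both $(x_i\lor y_i')$-type and $(\neg x_i\lor y_i'')$-type binary clauses with fresh existentials attached to $y$. These extra clauses are 2-CNF, and Horn after polarity adjustment, so the backdoor of size one is preserved.
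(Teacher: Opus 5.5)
Your \HORN{} variant does not work. The clauses $(\neg x_i\lor\neg y_i\lor y)$ have exactly one positive literal, so they are already Horn, and so are the binary clauses $(\neg y_i\lor\neg x_i)$. Hence the whole matrix lies in \HORN{}. The empty set is then a backdoor, and it is trivially closed under $\overline{D_\Phi}^*$. So there is a dependency backdoor of size $0<n$, and your lower bound collapses.

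Making a binary clause non-Horn instead does not help. A non-Horn clause without $y$ survives both assignments of $y$, so $\{y\}$ would no longer be a backdoor. The backdoor variable has to sit in clauses with \emph{two} positive literals, e.g.\ $(y\lor x_i\lor\neg y_i)$. Setting $y=0$ leaves a Horn clause, while any Horn backdoor avoiding $y$ must contain every $x_i$.

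This is what the paper's formula provides. It takes prefix $\forall y_1\dots y_{n+2}\exists x$ and all clauses consisting of $x$ or $\neg x$ plus two universal literals with at most one positive. The single variable $x$ is then a backdoor to both \TCNF{} and \HORN{}, while any backdoor avoiding $x$ needs at least $n$ universals.

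Your treatment of the triangle scheme is also fragile. In your formula $y$ occurs only positively (it is a pure literal), and you only check that both polarities of the universal $x_i$ reach $y$. The paper's argument instead uses that every universal shares a clause with $x$ and a clause with $\neg x$. That is, it connects \emph{both polarities of the dependent existential} to each universal.

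If the triangle condition is read on that side (paths from $x_i$ to $y$ and to $\neg y$ not passing through $y$), then $(x_i,y)\notin D^{\mathrm{tri}}_{\Phi_n}$ for every $i$. In that case $\{y\}$ is itself a dependency backdoor of size $1$. Your fallback of doubling links with fresh existentials never introduces $\neg y$, so it does not address this.

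The paper sidesteps the question because all four polarity combinations of $x$ and each $y_j$ co-occur in clauses. This forces the dependency under every scheme considered, and indeed under any known scheme. You can make your formulas equally robust by adding the clauses $(\neg y\lor\neg x_i)$. These lie in $\TCNF\cap\HORN$ and leave both the size-$1$ backdoor and the counting argument intact, for your \TCNF{} variant and for the corrected \HORN{} variant.
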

\begin{proof}
  Let $\Phi_n$ be the \QBF{} formula with quantifier prefix $\forall
  y_1,\dotsc,y_{n+2}\exists x$, whose matrix has all possible clauses
  $c$ with three literals such that:
  \begin{itemize}
  \item $x \in \var(c)$,
  \item $|\var(c)\cap Y|=2$ with $Y=\{y_1,\dotsc,y_{n+2}\}$, and
  \item $c$ contains at most one positive literal on the variables in
    $Y$.
  \end{itemize}
  Then, $\{x\}$ is a backdoor for $\Phi_n$ to $\CCC$ for every $\CCC \in
  \{\TCNF,\HORN\}$. Moreover, any backdoor for $\Phi_n$ to $\CCC$ that does not
  contain $x$ has to contain at least $n$ variables from $Y$. Finally,
  because every variable $y \in Y$ is contained in a clause containing
  $x$ and a clause containing $\lnot x$, we obtain that $Y \subseteq \overline{D_\Phi}^*(x)$ under
  any dependency scheme $D$ introduced
  in~\cite{SamerSzeider07a}, namely, the
  trivial, standard and triangle dependency scheme
  (see~\cite{SamerSzeider07a} for more details on these dependency schemes).
  Therefore, any dependency backdoor of
  $\Phi$ to $\CCC$ has size at least $n$.
\end{proof}

\section{Concluding Remarks}
\label{sec:discussion}

We have proven that the classical notion of a backdoor, perhaps contrary to belief, {\em is} applicable to $\QBF$, and discovered the first non-trivial backdoor evaluation algorithms. We complemented this by novel hardness results with the squishing technique. To obtain even more general results and to be able to combine the backdoor approach with existing FPT-classes that fall outside the classical tractable $\QBF$ fragments (e.g., bounded treewidth) we defined the notion of {\em guarded universal sets} and {\em enhanced} backdoors which led to a wealth of new FPT-results. Let us now discuss some future research questions.

\paragraph*{Strong backdoors versus deletion backdoors.} In line with the backdoor literature for SAT we defined our backdoor notion as a strong backdoor where each instantiation $\Psi[\tau]$ has to be included in the backdoor class. As remarked, strong backdoors coincide with variable deletion backdoors in the clausal setting, but when constraints can be given via arbitrary relations this is no longer the case. For enhanced backdoors the difference between strong and deletion backdoors diverges even more. We considered both but concentrated on deletion backdoors for the structural classes (e.g., treewidth). Here, the strong backdoor formulation is an intriguing open question, and we remark that the detection problem can be handled with the corresponding tools for SAT~\cite{GaspersSzeider13}. The evaluation problem, however, seems vastly different, and the main complication is that treewidth is not stable under partial assignments, in the sense that each $\Psi[\tau]$ might be in the class even though $\Psi$ is outside. We thus conjecture that enhanced strong backdoor evaluation into classes of bounded treewidth is not FPT, but explicitly proving hardness is interesting, and could lead to new general techniques.

\paragraph*{Broader tractable classes.} We have exhausted the classically tractable classes of $\QBF$. However, if one considers the \emph{quantified constraint satisfaction problem} allowing atoms to use relations over arbitrary finite domains, then one could borrow the tractable classes from the {\em CSP dichotomy theorem}~\cite{bulatov17,zhuk17}. For example, $\AFF$ could conceivably be generalized to arbitrary equations over finite groups, but further generalizing it to e.g.\ {\em Maltsev} constraints appears much more challenging. In a similar vein, $\TCNF$ could first be generalized to relations invariant under {\em majority}, and, if successful, then to {\em near unanimity operations}, or even {\em edge} operations. It should be noted that such generalizations likely need significant new ideas since our proofs make extensive use of properties of Boolean clauses, but could open up large new tractable classes.

\paragraph*{A complete classification.}
We obtained an almost complete classification with respect to the
classical tractable classes of $\QBF$ and only a few open cases
remain. First, backdoor evaluation to $\HORN$ turned out to be
W[1]-hard even in the presence of bounded quantifiers and bounded
arity, but we currently lack complementary upper bounds, such as XP
membership. Second, is it possible to drop the parameter
$\SA(\Phi,Y)$, i.e., the maximum number of universal variables
occurring in any clause, while still maintaining fixed-parameter
tractability for enhanced (deletion) backdoors?

Another question concerns the complexity of $\QBF$ with backdoors of size $1$ to $\CCC \in \{\TCNF, \AFF\}$. The faster squishing procedure allows us to reduce from $\QBF$ on general $\CNF$ formulas to $4$-\textsc{Disjunct}-$QBF(\CCC)$ formulas, proving that the problem is PSpace-hard in the latter case. Combined with \Cref{lem:from-disjunct-to-backdoor}, this implies that $\QBF$ is PSpace-hard on formulas with a backdoor of size $2 = \log_2 4$. The question about backdoors of size $1$ is equivalent to asking about the polynomial-time complexity of $2$-\textsc{Disjunct}-$\QBF(\CCC)$.

\paragraph*{New problems.} We have demonstrated that the classical backdoor notion is applicable to $\QBF$, and it is thus tempting to see whether there are any more PSpace-hard problems which have been omitted, or whether the FPT-results could propagate to even larger complexity classes. For example, Boolean formulas equipped with Henkin quantifiers, {\em dependency quantified Boolean formulas}, is NEXPTIME-complete but could conceivably be attacked with similar methods.

\paragraph*{Analysis of $k$-\textsc{Disjunct}-$\QBF(\CCC)$ formulas.}
Finally, we want to note here that our work (together
with~\cite{DBLP:conf/ijcai/ErikssonLOOPR24}) has laid the foundation
towards a deeper understanding of the parameterized complexity of
$k$-\textsc{Disjunct}-$\QBF(\CCC)$ formulas and such an understanding
seems to be crucial to make any progress on the notoriously difficult
\QBF problem. Apart from the applications for \QBF obtained here and
in~\cite{DBLP:conf/ijcai/ErikssonLOOPR24}, another natural candidate
where a deeper understanding of $k$-\textsc{Disjunct}-$\QBF(\CCC)$ formulas
can likely be useful are the prominent open problems concerning the
parameterized complexity of \QBF with respect to  structural parameters
in between vertex cover and treewidth, e.g., feedback vertex set, vertex integrity, and
treedepth~\cite{DBLP:conf/lics/FichteGHSO23}. In particular, since
both feedback vertex set and vertex integrity are defined via variable
deletion sets (i.e., deletion backdoors), the corresponding problems
can also be naturally formulated on $k$-\textsc{Disjunct}-$\QBF(\CCC)$
formulas, which offers a novel perspective on these problems and
further motivates the study of $k$-\textsc{Disjunct}-$\QBF(\CCC)$ formulas.

\section*{Acknowledgements}

The second author was partially supported by 
the Swedish Research Council (VR)
under grant 2021-04371 and 2025-04487.
The fourth author was supported by VR
under grant 2024-00274.

\newpage

\bibliographystyle{alpha}
\bibliography{references}

\end{document}